%% file: main.tex
\documentclass{article}
\usepackage[margin=2cm]{geometry}
\usepackage[T1]{fontenc}
\usepackage[utf8]{inputenc}
\usepackage{amsmath,amssymb,amsthm}
\usepackage[shortlabels]{enumitem}
\usepackage{color,url}
\usepackage{hyperref}
\usepackage[capitalise]{cleveref}
\usepackage{xspace}
\usepackage{todonotes}
\usepackage{thm-restate}
\usepackage{comment}
\usepackage{xcolor}
\usepackage{colortbl}
\usepackage{authblk}
\usepackage{libertine}
\usepackage{cite}
\usetikzlibrary{decorations.pathreplacing,calligraphy}

\theoremstyle{plain}
\newtheorem{theorem}{Theorem}[section]
\newtheorem{lemma}[theorem]{Lemma}

\newtheorem{observation}[theorem]{Observation}
\newtheorem{corollary}[theorem]{Corollary}

\newtheorem{definition}[theorem]{Definition}
\newtheorem{claim}{Claim}[theorem]
\newenvironment{claimproof}{\noindent\textit{Proof of Claim \theclaim:}}{\hfill$\lrcorner$\medskip} 

\usepackage{todonotes}

\input{macros}
\input{icons}

\begin{document}
\title{Maximum Weight Independent Set in Hereditary Classes of Ordered Graphs}
\author[1]{Paweł Rafał Bieliński}
\author[2]{Marta Piecyk}
\author[1]{Pawe{\l} Rz\k{a}\.zewski}

\affil[1]{Warsaw University of Technology}
\affil[2]{CISPA Helmholtz Center for Information Security}
\date{}

\maketitle

\begin{abstract}
    The complexity of classical computational problems in graph classes defined by forbidding induced subgraphs is one of the central topics of algorithmic graph theory.
    Recently, there has been a growing interest in the complexity of such problems in \emph{ordered graphs}, i.e., graphs with a fixed linear ordering of vertices.
    Such an approach allows us to investigate the boundary of tractability more closely.
    However, most results so far concern coloring problems.    
    
    In this paper, we focus on the complexity of the \textsc{Maximum Weight Independent Set} (\textsc{MWIS}) problem in classes of ordered graphs.
    For every ordered graph $H$, we classify the complexity of \textsc{MWIS} in ordered graphs that exclude $H$ as an induced subgraph into one of the following cases:
    \begin{enumerate}[(1)]
        \item solvable in polynomial time,
        \item solvable in quasipolynomial time,
        \item solvable in subexponential time,
        \item \textsf{NP}-hard.
    \end{enumerate}
    Notably, case (3) contains only one well-structured family of $H$ obtained from two nested edges by adding isolated vertices in a specific way.
    Thus, our results yield an almost complete complexity dichotomy for \textsc{MWIS} in classes of ordered graphs defined by a single forbidden induced subgraph into cases solvable in quasipolynomial time and those that are \textsf{NP}-hard.
\end{abstract}

\section{Introduction}
\blfootnote{PRB and PRz were supported by the National Science Centre grant number 2024/54/E/ST6/00094.}
\input{intro}\label{sec:intro}

\section{Preliminaries}\label{sec:prelim}
\input{prelim}

\section{Algorithms}\label{sec:algos}
\input{algos}
\subsection[]{Polynomial-time cases: excluding \pthree, \chord, or \oneedgek}\label{sec:poly}
\input{poly}

\subsection[]{Excluding \aabb}\label{sec:aabb}
\input{aabb}

\subsection[]{Excluding \aakbb}\label{sec:aacbb}
\input{aacbb}

\subsection[]{Excluding \ababk}\label{sec:ababk}
\input{abab}

\subsection[]{Excluding \abbak}\label{sec:abbak}
\input{abba}

\section{Hardness}\label{sec:hardness}
\input{hardness-reduction-3sat}
\input{hardness-2-subdivision}
\input{hardness-long-subdivision}
\input{hardness-chain-reduction}
\input{hardness-summary}

\section{Conclusion}\label{sec:outro}
\input{outro.tex}

\paragraph{Acknowledgement.} Part of this work was done during the workshop Homonolo 2025.
We are grateful to the organizers and other participants for the inspiring atmosphere and fruitful discussions.

\bibliographystyle{plain}
\bibliography{main}
\end{document}

%% file: macros.tex
\usepackage[mathscr]{euscript}

\newcommand\blfootnote[1]{%
  \begingroup
  \renewcommand\thefootnote{}\footnote{#1}%
  \addtocounter{footnote}{-1}%
  \endgroup
}

\renewcommand{\phi}{\varphi}

\renewcommand{\epsilon}{\varepsilon}

\newcommand{\Oh}{\mathcal{O}}

\newcommand{\cC}{\mathcal{C}}

\newcommand{\cI}{\mathcal{I}}
\newcommand{\cJ}{\mathcal{J}}

\newcommand{\cP}{\mathcal{P}}
\newcommand{\cQ}{\mathcal{Q}}

\newcommand{\cS}{\mathcal{S}}

\newcommand{\cX}{\mathcal{X}}
\newcommand{\cY}{\mathcal{Y}}
\newcommand{\cZ}{\mathcal{Z}}

\newcommand{\vphi}{\varphi}

\newcommand{\wei}{\mathfrak{w}}

\newcommand{\MIS}{\textsc{MIS}\xspace}
\newcommand{\MWIS}{\textsc{MWIS}\xspace}

\newcommand{\sat}[1]{{#1}-\textsc{Sat}\xspace}

\newcommand{\Left}{\mathsf{Left}}
\newcommand{\Right}{\mathsf{Right}}

\newcommand{\NP}{\textsf{NP}\xspace}

\newcommand{\tsd}{^{(2)}}

\newcommand{\true}{\mathsf{true}}
\newcommand{\false}{\mathsf{false}}

\newcommand{\mathbi}[1]{\mathcal{#1}}

%% file: icons.tex
\newcommand{\pthree}{
\begin{tikzpicture}[every node/.style={fill=black, circle, inner sep=0.5pt, outer sep = 0}, scale=0.2,baseline= -0.5mm]
    \node (a) at (0,0) {};
    \node (b) at (1,0) {};
    \node (c) at (2,0) {};    
    \path[draw] (a) to[bend left = 60] (b);
    \path[draw] (b) to[bend left = 60] (c);
\end{tikzpicture}
}

\newcommand{\extpthree}{
\begin{tikzpicture}[every node/.style={fill=black, circle, inner sep=0.5pt, outer sep = 0}, scale=0.2,baseline= -0.5mm]
    \node[fill=none] (start) at (-1,0) {\footnotesize{$_{(k)}$}};   
    \node[fill=none] (end) at (3,0) {\footnotesize{$_{(k)}$}};   
    \node (a) at (0,0) {};
    \node (b) at (1,0) {};
    \node (c) at (2,0) {};        
    \path[draw] (a) to[bend left = 60] (b);
    \path[draw] (b) to[bend left = 60] (c);
\end{tikzpicture}
}

\newcommand{\chord}{
\begin{tikzpicture}[every node/.style={fill=black, circle, inner sep=0.5pt, outer sep = 0}, scale=0.2,baseline= -0.5mm]
    \node (a) at (0,0) {};
    \node (b) at (1,0) {};
    \node (c) at (2,0) {};    
    \path[draw] (a) to[bend left = 60] (b);
    \path[draw] (a) to[bend left = 60] (c);
\end{tikzpicture}
}

\newcommand{\chordrev}{
\begin{tikzpicture}[every node/.style={fill=black, circle, inner sep=0.5pt, outer sep = 0}, scale=0.2,baseline= -0.5mm]
    \node (c) at (0,0) {};
    \node (b) at (1,0) {};
    \node (a) at (2,0) {};    
    \path[draw] (a) to[bend left = -60] (b);
    \path[draw] (a) to[bend left = -60] (c);
\end{tikzpicture}
}

\newcommand{\extchord}{
\begin{tikzpicture}[every node/.style={fill=black, circle, inner sep=0.5pt, outer sep = 0}, scale=0.2,baseline= -0.5mm]
    \node[fill=none] (start) at (-1,0) {\footnotesize{$_{(k)}$}};   
    \node[fill=none] (end) at (3,0) {\footnotesize{$_{(k)}$}};   
    \node (a) at (0,0) {};
    \node (b) at (1,0) {};
    \node (c) at (2,0) {};        
    \path[draw] (a) to[bend left = 60] (b);
    \path[draw] (a) to[bend left = 60] (c);
\end{tikzpicture}
}

\newcommand{\extchordrev}{
\begin{tikzpicture}[every node/.style={fill=black, circle, inner sep=0.5pt, outer sep = 0}, scale=0.2,baseline= -0.5mm]
    \node[fill=none] (start) at (-1,0) {\footnotesize{$_{(k)}$}};   
    \node[fill=none] (end) at (3,0) {\footnotesize{$_{(k)}$}};   
    \node (c) at (0,0) {};
    \node (b) at (1,0) {};
    \node (a) at (2,0) {};        
    \path[draw] (a) to[bend left = -60] (b);
    \path[draw] (a) to[bend left = -60] (c);
\end{tikzpicture}
}

\newcommand{\oneedgek}{
\begin{tikzpicture}[every node/.style={fill=black, circle, inner sep=0.5pt, outer sep = 0}, scale=0.2,baseline= -0.5mm]
    \node (a1) at (0,0) {};
    \node[fill=none] (k1) at (1.5,0) {\footnotesize{$_{(k)}$}};   
    \node (a2) at (3,0) {};   
    \path[draw] (a1) to[bend left = 70] (a2);
\end{tikzpicture}
}

\newcommand{\extoneedgek}{
\begin{tikzpicture}[every node/.style={fill=black, circle, inner sep=0.5pt, outer sep = 0}, scale=0.2,baseline= -0.5mm]
    \node[fill=none] (start) at (-1,0) {\footnotesize{$_{(k)}$}};   
    \node[fill=none] (end) at (4,0) {\footnotesize{$_{(k)}$}};   
    \node (a1) at (0,0) {};
    \node[fill=none] (k1) at (1.5,0) {\footnotesize{$_{(k)}$}};   
    \node (a2) at (3,0) {};       
    \path[draw] (a1) to[bend left = 70] (a2);
\end{tikzpicture}
}

\newcommand{\abab}{
\begin{tikzpicture}[every node/.style={fill=black, circle, inner sep=0.5pt, outer sep = 0}, scale=0.2,baseline= -0.5mm]
    \node (a1) at (0,0) {};
    \node (b1) at (1,0) {};
    \node (a2) at (2,0) {};    
    \node (b2) at (3,0) {};    
    \path[draw] (a1) to[bend left = 60] (a2);
    \path[draw] (b1) to[bend left = 60] (b2);
\end{tikzpicture}
}

\newcommand{\ababk}{
\begin{tikzpicture}[every node/.style={fill=black, circle, inner sep=0.5pt, outer sep = 0}, scale=0.2,baseline= -0.5mm]
    \node (a1) at (0,0) {};
    \node[fill=none] (k1) at (1,0) {\footnotesize{$_{(k)}$}};   
    \node (b1) at (2,0) {};
    \node[fill=none] (k2) at (3,0) {\footnotesize{$_{(k)}$}};   
    \node (a2) at (4,0) {};    
    \node[fill=none] (k3) at (5,0) {\footnotesize{$_{(k)}$}};   
    \node (b2) at (6,0) {};    
    \path[draw] (a1) to[bend left = 70] (a2);
    \path[draw] (b1) to[bend left = 70] (b2);
\end{tikzpicture}
}

\newcommand{\abkab}{
\begin{tikzpicture}[every node/.style={fill=black, circle, inner sep=0.5pt, outer sep = 0}, scale=0.2,baseline= -0.5mm]
    \node (a1) at (1,0) {};
    \node (b1) at (2,0) {};
    \node[fill=none] (k2) at (3,0) {\footnotesize{$_{(k)}$}};   
    \node (a2) at (4,0) {};    
    \node (b2) at (5,0) {};    
    \path[draw] (a1) to[bend left = 70] (a2);
    \path[draw] (b1) to[bend left = 70] (b2);
\end{tikzpicture}
}

\newcommand{\akbab}{
\begin{tikzpicture}[every node/.style={fill=black, circle, inner sep=0.5pt, outer sep = 0}, scale=0.2,baseline= -0.5mm]
    \node (a1) at (0,0) {};
    \node[fill=none] (k1) at (1,0) {\footnotesize{$_{(k)}$}};   
    \node (b1) at (2,0) {};
    \node (a2) at (3,0) {};      
    \node (b2) at (5,0) {};    
    \path[draw] (a1) to[bend left = 70] (a2);
    \path[draw] (b1) to[bend left = 70] (b2);
\end{tikzpicture}
}

\newcommand{\extababk}{
\begin{tikzpicture}[every node/.style={fill=black, circle, inner sep=0.5pt, outer sep = 0}, scale=0.2,baseline= -0.5mm]
    \node[fill=none] (start) at (-1,0) {\footnotesize{$_{(k)}$}};   
    \node[fill=none] (end) at (7,0) {\footnotesize{$_{(k)}$}};   
    \node (a1) at (0,0) {};
    \node[fill=none] (k1) at (1,0) {\footnotesize{$_{(k)}$}};   
    \node (b1) at (2,0) {};
    \node[fill=none] (k2) at (3,0) {\footnotesize{$_{(k)}$}};   
    \node (a2) at (4,0) {};    
    \node[fill=none] (k3) at (5,0) {\footnotesize{$_{(k)}$}};   
    \node (b2) at (6,0) {};    
    \path[draw] (a1) to[bend left = 70] (a2);
    \path[draw] (b1) to[bend left = 70] (b2);
\end{tikzpicture}
}

\newcommand{\aabb}{
\begin{tikzpicture}[every node/.style={fill=black, circle, inner sep=0.5pt, outer sep = 0}, scale=0.2,baseline= -0.5mm]
    \node (a1) at (0,0) {};
    \node (a2) at (1,0) {};
    \node (b1) at (2,0) {};    
    \node (b2) at (3,0) {};    
    \path[draw] (a1) to[bend left = 60] (a2);
    \path[draw] (b1) to[bend left = 60] (b2);
\end{tikzpicture}
}

\newcommand{\aakbb}{
\begin{tikzpicture}[every node/.style={fill=black, circle, inner sep=0.5pt, outer sep = 0}, scale=0.2,baseline= -0.5mm]
    \node (a1) at (0,0) {};
    \node (a2) at (1,0) {};
    \node[fill=none] (k) at (2,0) {\footnotesize{$_{(k)}$}};    
    \node (b1) at (3,0) {};    
    \node (b2) at (4,0) {};    
    \path[draw] (a1) to[bend left = 60] (a2);
    \path[draw] (b1) to[bend left = 60] (b2);
\end{tikzpicture}
}

\newcommand{\extaakbb}{
\begin{tikzpicture}[every node/.style={fill=black, circle, inner sep=0.5pt, outer sep = 0}, scale=0.2,baseline= -0.5mm]
    \node[fill=none] (start) at (-1,0) {\footnotesize{$_{(k)}$}};   
    \node[fill=none] (end) at (5,0) {\footnotesize{$_{(k)}$}};   
    \node (a1) at (0,0) {};
    \node (a2) at (1,0) {};
    \node[fill=none] (k) at (2,0) {\footnotesize{$_{(k)}$}};    
    \node (b1) at (3,0) {};    
    \node (b2) at (4,0) {};    
    \path[draw] (a1) to[bend left = 60] (a2);
    \path[draw] (b1) to[bend left = 60] (b2);
\end{tikzpicture}
}

\newcommand{\abba}{
\begin{tikzpicture}[every node/.style={fill=black, circle, inner sep=0.5pt, outer sep = 0}, scale=0.2,baseline= -0.5mm]
    \node (a1) at (0,0) {};
    \node (b1) at (1,0) {};
    \node (b2) at (2,0) {};    
    \node (a2) at (3,0) {};    
    \path[draw] (a1) to[bend left = 70] (a2);
    \path[draw] (b1) to[bend left = 45] (b2);
\end{tikzpicture}
}

\newcommand{\abbak}{
\begin{tikzpicture}[every node/.style={fill=black, circle, inner sep=0.5pt, outer sep = 0}, scale=0.2,baseline= -0.5mm]
    \node (a1) at (0,0) {};
    \node[fill=none] (k1) at (1,0) {\footnotesize{$_{(k)}$}};  
    \node (b1) at (2,0) {};
    \node (b2) at (3,0) {};    
    \node[fill=none] (k2) at (4,0) {\footnotesize{$_{(k)}$}};  
    \node (a2) at (5,0) {};    
    \path[draw] (a1) to[bend left = 70] (a2);
    \path[draw] (b1) to[bend left = 45] (b2);
\end{tikzpicture}
}

\newcommand{\extabbak}{
\begin{tikzpicture}[every node/.style={fill=black, circle, inner sep=0.5pt, outer sep = 0}, scale=0.2,baseline= -0.5mm]
    \node[fill=none] (start) at (-1,0) {\footnotesize{$_{(k)}$}};   
    \node[fill=none] (end) at (6,0) {\footnotesize{$_{(k)}$}};  
    \node (a1) at (0,0) {};
    \node[fill=none] (k1) at (1,0) {\footnotesize{$_{(k)}$}};  
    \node (b1) at (2,0) {};
    \node (b2) at (3,0) {};    
    \node[fill=none] (k2) at (4,0) {\footnotesize{$_{(k)}$}};  
    \node (a2) at (5,0) {};    
    \path[draw] (a1) to[bend left = 70] (a2);
    \path[draw] (b1) to[bend left = 45] (b2);
\end{tikzpicture}
}

\newcommand{\bad}{
\begin{tikzpicture}[every node/.style={fill=black, circle, inner sep=0.5pt, outer sep = 0}, scale=0.2]
    \node (s) at (0,0) {};
    \node (p) at (1,0) {};
    \node (q) at (2,0) {};
    \node (r) at (3,0) {};
    \path[draw] (s) to[bend left = 60] (p);
    \path[draw] (s) to[bend left = 60] (r);
\end{tikzpicture}
}

\newcommand{\cad}{
	\begin{tikzpicture}[every node/.style={fill=black, circle, inner sep=0.5pt, outer sep = 0}, scale=0.2]
		\node (s) at (0,0) {};
		\node (p) at (1,0) {};
		\node (q) at (2,0) {};
		\node (r) at (3,0) {};
		\path[draw] (s) to[bend left = 60] (q);
		\path[draw] (s) to[bend left = 60] (r);
	\end{tikzpicture}
}

\newcommand{\adb}{
\begin{tikzpicture}[every node/.style={fill=black, circle, inner sep=0.5pt, outer sep = 0}, scale=0.2]
    \node (s) at (0,0) {};
    \node (p) at (1,0) {};
    \node (q) at (2,0) {};
    \node (r) at (3,0) {};
    \path[draw] (s) to[bend left = 60] (r);
    \path[draw] (p) to[bend left = 60] (r);
\end{tikzpicture}
}

\newcommand{\adbp}{
\begin{tikzpicture}[every node/.style={fill=black, circle, inner sep=0.5pt, outer sep = 0}, scale=0.2]
    \node (s) at (0,0) {};
    \node (p) at (1,0) {};
    \node (q) at (2,0) {};
    \node (r) at (3,0) {};
    \path[draw] (s) to[bend left = 60] (r);
    \path[draw] (p) to[bend left = 60] (r);
    \path[draw] (q) to[bend left = 60] (r);
\end{tikzpicture}
}

\newcommand{\abd}{
\begin{tikzpicture}[every node/.style={fill=black, circle, inner sep=0.5pt, outer sep = 0}, scale=0.2]
    \node (s) at (0,0) {};
    \node (p) at (1,0) {};
    \node (q) at (2,0) {};
    \node (r) at (3,0) {};
    \path[draw] (s) to[bend left = 60] (p);
    \path[draw] (p) to[bend left = 60] (r);
\end{tikzpicture}
}

\newcommand{\abdp}{
\begin{tikzpicture}[every node/.style={fill=black, circle, inner sep=0.5pt, outer sep = 0}, scale=0.2]
    \node (s) at (0,0) {};
    \node (p) at (1,0) {};
    \node (q) at (2,0) {};
    \node (r) at (3,0) {};
    \path[draw] (s) to[bend left = 60] (p);
    \path[draw] (p) to[bend left = 60] (r);
    \path[draw] (p) to[bend left = 60] (q);
\end{tikzpicture}
}

\newcommand{\abdc}{
\begin{tikzpicture}[every node/.style={fill=black, circle, inner sep=0.5pt, outer sep = 0}, scale=0.2]
    \node (s) at (0,0) {};
    \node (p) at (1,0) {};
    \node (q) at (2,0) {};
    \node (r) at (3,0) {};
    \path[draw] (s) to[bend left = 60] (p);
    \path[draw] (p) to[bend left = 60] (r);
    \path[draw] (q) to[bend left = 60] (r);
\end{tikzpicture}
}

\newcommand{\abcd}{
\begin{tikzpicture}[every node/.style={fill=black, circle, inner sep=0.5pt, outer sep = 0}, scale=0.2]
    \node (s) at (0,0) {};
    \node (p) at (1,0) {};
    \node (q) at (2,0) {};
    \node (r) at (3,0) {};
    \path[draw] (s) to[bend left = 60] (p);
    \path[draw] (p) to[bend left = 60] (q);
    \path[draw] (q) to[bend left = 60] (r);
\end{tikzpicture}
}

\newcommand{\abc}{
\begin{tikzpicture}[every node/.style={fill=black, circle, inner sep=0.5pt, outer sep = 0}, scale=0.2]
    \node (s) at (0,0) {};
    \node (p) at (1,0) {};
    \node (q) at (2,0) {};
    \path[draw] (s) to[bend left = 60] (p);
    \path[draw] (p) to[bend left = 60] (q);
\end{tikzpicture}
}

\newcommand{\bac}{
	\begin{tikzpicture}[every node/.style={fill=black, circle, inner sep=0.5pt, outer sep = 0}, scale=0.2]
		\node (s) at (0,0) {};
		\node (p) at (1,0) {};
		\node (q) at (2,0) {};
		\path[draw] (s) to[bend left = 60] (p);
		\path[draw] (s) to[bend left = 60] (q);
	\end{tikzpicture}
}

\newcommand{\acbd}{
\begin{tikzpicture}[every node/.style={fill=black, circle, inner sep=0.5pt, outer sep = 0}, scale=0.2]
    \node (s) at (0,0) {};
    \node (p) at (1,0) {};
    \node (q) at (2,0) {};
    \node (r) at (3,0) {};
    \path[draw] (s) to[bend left = 60] (q);
    \path[draw] (p) to[bend left = 40] (q);
    \path[draw] (p) to[bend left = 60] (r);
\end{tikzpicture}
}

\newcommand{\acb}{
\begin{tikzpicture}[every node/.style={fill=black, circle, inner sep=0.5pt, outer sep = 0}, scale=0.2]
    \node (s) at (0,0) {};
    \node (p) at (1,0) {};
    \node (q) at (2,0) {};
    \path[draw] (s) to[bend left = 60] (q);
    \path[draw] (p) to[bend left = 40] (q);
\end{tikzpicture}
}

\newcommand{\adcb}{
\begin{tikzpicture}[every node/.style={fill=black, circle, inner sep=0.5pt, outer sep = 0}, scale=0.2]
    \node (s) at (0,0) {};
    \node (p) at (1,0) {};
    \node (q) at (2,0) {};
    \node (r) at (3,0) {};
    \path[draw] (s) to[bend left = 60] (r);
    \path[draw] (p) to[bend left = 60] (q);
    \path[draw] (q) to[bend left = 60] (r);
\end{tikzpicture}
}

\newcommand{\badc}{
\begin{tikzpicture}[every node/.style={fill=black, circle, inner sep=0.5pt, outer sep = 0}, scale=0.2]
    \node (s) at (0,0) {};
    \node (p) at (1,0) {};
    \node (q) at (2,0) {};
    \node (r) at (3,0) {};
    \path[draw] (s) to[bend left = 60] (p);
    \path[draw] (q) to[bend left = 60] (r);
    \path[draw] (s) to[bend left = 60] (r);
\end{tikzpicture}
}

\newcommand{\abnce}{
\begin{tikzpicture}[every node/.style={fill=black, circle, inner sep=0.5pt, outer sep = 0}, scale=0.2]
    \node (s) at (0,0) {};
    \node (p) at (1,0) {};
    \node (q) at (2,0) {};
    \node (r) at (3,0) {};
    \node (t) at (4,0) {};
    \path[draw] (s) to[bend left = 60] (p);
    \path[draw] (q) to[bend left = 60] (t);
\end{tikzpicture}
}

\newcommand{\abncde}{
\begin{tikzpicture}[every node/.style={fill=black, circle, inner sep=0.5pt, outer sep = 0}, scale=0.2]
    \node (s) at (0,0) {};
    \node (p) at (1,0) {};
    \node (q) at (2,0) {};
    \node (r) at (3,0) {};
    \node (t) at (4,0) {};
    \path[draw] (s) to[bend left = 60] (p);
    \path[draw] (q) to[bend left = 60] (r);
    \path[draw] (r) to[bend left = 60] (t);
\end{tikzpicture}
}

\newcommand{\aenbcd}{
\begin{tikzpicture}[every node/.style={fill=black, circle, inner sep=0.5pt, outer sep = 0}, scale=0.2]
    \node (s) at (0,0) {};
    \node (p) at (1,0) {};
    \node (q) at (2,0) {};
    \node (r) at (3,0) {};
    \node (t) at (4,0) {};
    \path[draw] (s) to[bend left = 60] (t);
    \path[draw] (p) to[bend left = 60] (q);
    \path[draw] (q) to[bend left = 60] (r);
\end{tikzpicture}
}

\newcommand{\aenbdc}{
\begin{tikzpicture}[every node/.style={fill=black, circle, inner sep=0.5pt, outer sep = 0}, scale=0.2]
    \node (s) at (0,0) {};
    \node (p) at (1,0) {};
    \node (q) at (2,0) {};
    \node (r) at (3,0) {};
    \node (t) at (4,0) {};
    \path[draw] (s) to[bend left = 60] (t);
    \path[draw] (p) to[bend left = 60] (r);
    \path[draw] (q) to[bend left = 60] (r);
\end{tikzpicture}
}

\newcommand{\acbnde}{
\begin{tikzpicture}[every node/.style={fill=black, circle, inner sep=0.5pt, outer sep = 0}, scale=0.2]
    \node (s) at (0,0) {};
    \node (p) at (1,0) {};
    \node (q) at (2,0) {};
    \node (r) at (3,0) {};
    \node (t) at (4,0) {};
    \path[draw] (s) to[bend left = 60] (q);
    \path[draw] (p) to[bend left = 60] (q);
    \path[draw] (r) to[bend left = 60] (t);
\end{tikzpicture}
}

\newcommand{\aednbc}{
\begin{tikzpicture}[every node/.style={fill=black, circle, inner sep=0.5pt, outer sep = 0}, scale=0.2]
    \node (s) at (0,0) {};
    \node (p) at (1,0) {};
    \node (q) at (2,0) {};
    \node (r) at (3,0) {};
    \node (t) at (4,0) {};
    \path[draw] (s) to[bend left = 60] (t);
    \path[draw] (r) to[bend left = 60] (t);
    \path[draw] (p) to[bend left = 60] (q);
\end{tikzpicture}
}

\newcommand{\abxba}{
\begin{tikzpicture}[every node/.style={fill=black, circle, inner sep=0.5pt, outer sep = 0}, scale=0.2]
    \node (s) at (0,0) {};
    \node (p) at (1,0) {};
    \node (q) at (2,0) {};
    \node (r) at (3,0) {};
    \node (t) at (4,0) {};
    \path[draw] (s) to[bend left = 60] (t);
    \path[draw] (p) to[bend left = 60] (r);
\end{tikzpicture}
}

\newcommand{\abbcca}{
\begin{tikzpicture}[every node/.style={fill=black, circle, inner sep=0.5pt, outer sep = 0}, scale=0.2]
    \node (s) at (0,0) {};
    \node (p) at (1,0) {};
    \node (q) at (2,0) {};
    \node (r) at (3,0) {};
    \node (t) at (4,0) {};
    \node (u) at (5,0) {};
    \path[draw] (s) to[bend left = 60] (u);
    \path[draw] (p) to[bend left = 60] (q);
    \path[draw] (r) to[bend left = 60] (t);
\end{tikzpicture}
}

\newcommand{\abccba}{
\begin{tikzpicture}[every node/.style={fill=black, circle, inner sep=0.5pt, outer sep = 0}, scale=0.2]
    \node (s) at (0,0) {};
    \node (p) at (1,0) {};
    \node (q) at (2,0) {};
    \node (r) at (3,0) {};
    \node (t) at (4,0) {};
    \node (u) at (5,0) {};
    \path[draw] (s) to[bend left = 60] (u);
    \path[draw] (p) to[bend left = 60] (t);
    \path[draw] (q) to[bend left = 60] (r);
\end{tikzpicture}
}

\newcommand{\abxxba}{
\begin{tikzpicture}[every node/.style={fill=black, circle, inner sep=0.5pt, outer sep = 0}, scale=0.2]
    \node (s) at (0,0) {};
    \node (p) at (1,0) {};
    \node (q) at (2,0) {};
    \node (r) at (3,0) {};
    \node (t) at (4,0) {};
    \node (u) at (5,0) {};
    \path[draw] (s) to[bend left = 60] (u);
    \path[draw] (p) to[bend left = 60] (t);
\end{tikzpicture}
}

\newcommand{\abcbca}{
\begin{tikzpicture}[every node/.style={fill=black, circle, inner sep=0.5pt, outer sep = 0}, scale=0.2]
    \node (s) at (0,0) {};
    \node (p) at (1,0) {};
    \node (q) at (2,0) {};
    \node (r) at (3,0) {};
    \node (t) at (4,0) {};
    \node (u) at (5,0) {};
    \path[draw] (s) to[bend left = 60] (u);
    \path[draw] (p) to[bend left = 60] (r);
    \path[draw] (q) to[bend left = 60] (t);
\end{tikzpicture}
}

\newcommand{\abccab}{
\begin{tikzpicture}[every node/.style={fill=black, circle, inner sep=0.5pt, outer sep = 0}, scale=0.2]
    \node (s) at (0,0) {};
    \node (p) at (1,0) {};
    \node (q) at (2,0) {};
    \node (r) at (3,0) {};
    \node (t) at (4,0) {};
    \node (u) at (5,0) {};
    \path[draw] (s) to[bend left = 60] (t);
    \path[draw] (p) to[bend left = 60] (u);
    \path[draw] (q) to[bend left = 60] (r);
\end{tikzpicture}
}

\newcommand{\abcabc}{
\begin{tikzpicture}[every node/.style={fill=black, circle, inner sep=0.5pt, outer sep = 0}, scale=0.2]
    \node (s) at (0,0) {};
    \node (p) at (1,0) {};
    \node (q) at (2,0) {};
    \node (r) at (3,0) {};
    \node (t) at (4,0) {};
    \node (u) at (5,0) {};
    \path[draw] (s) to[bend left = 60] (r);
    \path[draw] (p) to[bend left = 60] (t);
    \path[draw] (q) to[bend left = 60] (u);
\end{tikzpicture}
}

\newcommand{\aabbcc}{
\begin{tikzpicture}[every node/.style={fill=black, circle, inner sep=0.5pt, outer sep = 0}, scale=0.2]
    \node (s) at (0,0) {};
    \node (p) at (1,0) {};
    \node (q) at (2,0) {};
    \node (r) at (3,0) {};
    \node (t) at (4,0) {};
    \node (u) at (5,0) {};
    \path[draw] (s) to[bend left = 60] (p);
    \path[draw] (q) to[bend left = 60] (r);
    \path[draw] (t) to[bend left = 60] (u);
\end{tikzpicture}
}

\newcommand{\cadb}{
	\begin{tikzpicture}[every node/.style={fill=black, circle, inner sep=0.5pt, outer sep = 0}, scale=0.2]
		\node (s) at (0,0) {};
		\node (p) at (1,0) {};
		\node (q) at (2,0) {};
		\node (r) at (3,0) {};
		\path[draw] (s) to[bend left = 60] (q);
		\path[draw] (s) to[bend left = 60] (r);
		\path[draw] (p) to[bend left = 60] (r);
	\end{tikzpicture}
}

\newcommand{\adbc}{
	\begin{tikzpicture}[every node/.style={fill=black, circle, inner sep=0.5pt, outer sep = 0}, scale=0.2]
		\node (s) at (0,0) {};
		\node (p) at (1,0) {};
		\node (q) at (2,0) {};
		\node (r) at (3,0) {};
		\path[draw] (s) to[bend left = 60] (r);
		\path[draw] (p) to[bend left = 60] (r);
		\path[draw] (p) to[bend left = 60] (q);
	\end{tikzpicture}
}

\newcommand{\cabd}{
	\begin{tikzpicture}[every node/.style={fill=black, circle, inner sep=0.5pt, outer sep = 0}, scale=0.2]
		\node (s) at (0,0) {};
		\node (p) at (1,0) {};
		\node (q) at (2,0) {};
		\node (r) at (3,0) {};
		\path[draw] (s) to[bend left = 60] (q);
		\path[draw] (s) to[bend left = 60] (p);
		\path[draw] (p) to[bend left = 60] (r);
	\end{tikzpicture}
}

\newcommand{\abnced}{
	\begin{tikzpicture}[every node/.style={fill=black, circle, inner sep=0.5pt, outer sep = 0}, scale=0.2]
		\node (s) at (0,0) {};
		\node (p) at (1,0) {};
		\node (q) at (2,0) {};
		\node (r) at (3,0) {};
		\node (t) at (4,0) {};
		\path[draw] (s) to[bend left = 60] (p);
		\path[draw] (q) to[bend left = 60] (t);
		\path[draw] (r) to[bend left = 60] (t);
	\end{tikzpicture}
}

\newcommand{\adc}{
	\begin{tikzpicture}[every node/.style={fill=black, circle, inner sep=0.5pt, outer sep = 0}, scale=0.2]
		\node (a) at (0,0) {};
		\node (b) at (1,0) {};
		\node (c) at (2,0) {};
		\node (d) at (3,0) {};
		\path[draw] (a) to[bend left = 60] (d);
		\path[draw] (c) to[bend left = 60] (d);
	\end{tikzpicture}
}

\newcommand{\abcbac}{
	\begin{tikzpicture}[every node/.style={fill=black, circle, inner sep=0.5pt, outer sep = 0}, scale=0.2]
		\node (s) at (0,0) {};
		\node (p) at (1,0) {};
		\node (q) at (2,0) {};
		\node (r) at (3,0) {};
		\node (t) at (4,0) {};
		\node (u) at (5,0) {};
		\path[draw] (s) to[bend left = 60] (t);
		\path[draw] (p) to[bend left = 60] (r);
		\path[draw] (q) to[bend left = 60] (u);
	\end{tikzpicture}
}

\newcommand{\abbcac}{
	\begin{tikzpicture}[every node/.style={fill=black, circle, inner sep=0.5pt, outer sep = 0}, scale=0.2]
		\node (s) at (0,0) {};
		\node (p) at (1,0) {};
		\node (q) at (2,0) {};
		\node (r) at (3,0) {};
		\node (t) at (4,0) {};
		\node (u) at (5,0) {};
		\path[draw] (s) to[bend left = 60] (t);
		\path[draw] (p) to[bend left = 60] (q);
		\path[draw] (r) to[bend left = 60] (u);
	\end{tikzpicture}
}

\newcommand{\abacbc}{
	\begin{tikzpicture}[every node/.style={fill=black, circle, inner sep=0.5pt, outer sep = 0}, scale=0.2]
		\node (s) at (0,0) {};
		\node (p) at (1,0) {};
		\node (q) at (2,0) {};
		\node (r) at (3,0) {};
		\node (t) at (4,0) {};
		\node (u) at (5,0) {};
		\path[draw] (s) to[bend left = 60] (q);
		\path[draw] (p) to[bend left = 60] (t);
		\path[draw] (r) to[bend left = 60] (u);
	\end{tikzpicture}
}

\newcommand{\aabccb}{
	\begin{tikzpicture}[every node/.style={fill=black, circle, inner sep=0.5pt, outer sep = 0}, scale=0.2]
		\node (s) at (0,0) {};
		\node (p) at (1,0) {};
		\node (q) at (2,0) {};
		\node (r) at (3,0) {};
		\node (t) at (4,0) {};
		\node (u) at (5,0) {};
		\path[draw] (s) to[bend left = 60] (p);
		\path[draw] (q) to[bend left = 60] (u);
		\path[draw] (r) to[bend left = 60] (t);
	\end{tikzpicture}
}

\newcommand{\aabcbc}{
	\begin{tikzpicture}[every node/.style={fill=black, circle, inner sep=0.5pt, outer sep = 0}, scale=0.2]
		\node (s) at (0,0) {};
		\node (p) at (1,0) {};
		\node (q) at (2,0) {};
		\node (r) at (3,0) {};
		\node (t) at (4,0) {};
		\node (u) at (5,0) {};
		\path[draw] (s) to[bend left = 60] (p);
		\path[draw] (q) to[bend left = 60] (t);
		\path[draw] (r) to[bend left = 60] (u);
	\end{tikzpicture}
}

\newcommand{\ac}{
	\begin{tikzpicture}[every node/.style={fill=black, circle, inner sep=0.5pt, outer sep = 0}, scale=0.2]
		\node (a) at (0,0) {};
		\node (b) at (1,0) {};
		\node (c) at (2,0) {};
		\path[draw] (a) to[bend left = 60] (c);
	\end{tikzpicture}
}


%% file: intro.tex
In the \textsc{Max Independent Set} (\MIS) problem, we are given a graph $G$ and we ask for a largest \emph{independent set} in $G$, i.e., a largest set of pairwise nonadjacent vertices. The \textsc{Max Weight Independent Set} (\MWIS) problem is a generalization of \MIS, where we are given a graph $G$ together with a weight function $\wei: V(G)\to\mathbb{Q}_{>0}$, and we ask for an independent set of maximum total weight. Both problems are among the most fundamental and well-studied problems in algorithmic graph theory.
They are also known to be notoriously hard: \MIS is among Karp's 21 \NP-complete problems~\cite{DBLP:conf/coco/Karp72}, cannot be solved in subexponential time under the Exponential-Time Hypothesis (ETH)~\cite{DBLP:journals/jcss/ImpagliazzoPZ01}, is a canonical \textsf{W[1]}-hard problem~\cite{platypus}, and is hard to approximate~\cite{Hastad96cliqueis}, even in parameterized setting~\cite{DBLP:journals/siamcomp/ChalermsookCKLM20,DBLP:conf/focs/LinRSW23}.

A typical approach when dealing with such a hard problem is to consider restricted instances in order to understand the boundary of tractability.
Usually, one considers instances coming from certain \emph{hereditary} graph classes, i.e., classes of graphs closed under vertex deletion.
Every such class can be characterized by a family of forbidden induced subgraphs.
Thus, as a starting point, it is natural to ask for which graphs $H$ the \MWIS problem can be solved efficiently on \emph{$H$-free graphs}, i.e., graphs that do not contain $H$ as an induced subgraph.

The complexity study of \MWIS in $H$-free graphs is one of the central topics in algorithmic graph theory.
It turns out that the crucial role is played by the family of subcubic forests whose every component is a path or a \emph{subdivided claw}, i.e., a tree with at most one vertex of degree 3.
Let us denote by $\cS$ the family of such forests.

It was already observed by Alekseev~\cite{alekseev1982effect} that if $H \notin \cS$, then \MIS (and thus, \MWIS) is \NP-hard in $H$-free graphs.
Indeed, this observation follows from combining two known results. First, \MIS is \NP-hard for subcubic graphs~\cite{GAREY1976237}, and they in particular exclude any graph $H$ with a vertex of degree at least 4.
The second ingredient is the so-called `Poljak's subdivision trick'~\cite{Po74}:
if we subdivide one edge of a graph $G$ twice, i.e., we replace it by a four-vertex path, then the size of a largest independent set increases by exactly one.
Thus, if $H$ has a cycle or two vertices of degree 3 in a single component, then subdividing each edge of any instance of \MIS sufficiently many times yields an equivalent instance that is $H$-free.

It turns out that polynomial-time algorithms for \MWIS are only known for small graphs $H \in \cS$,
most notably, if $H$ is a path on at most 6 vertices~\cite{DBLP:journals/talg/GrzesikKPP22}, a graph obtained from a three-leaf star by subdividing one edge once~\cite{DBLP:journals/jda/LozinM08}, or a graph whose every component is a three-leaf star~\cite{DBLP:journals/dam/BrandstadtM18a}.
On the other hand, we know no \NP-hardness results for \MWIS in $H$-free graphs with $H \in \cS$ and the general belief is that the problem is polynomial-time solvable in all such cases.
This belief is supported by the fact that for every $H \in \cS$, the \MWIS problem can be solved in quasipolynomial time when restricted to $H$-free graphs~\cite{DBLP:conf/focs/GartlandL20,DBLP:conf/sosa/PilipczukPR21,DBLP:conf/stoc/GartlandLMPPR24}.
Note that this is a strong indication that the problem is not \NP-hard in any of these cases,
as otherwise every problem in \NP{} would be solvable in quasipolynomial time, which is considered unlikely.

In this paper we follow the line of research described above, but we focus on hereditary classes of \emph{ordered} graphs, i.e., graphs with a fixed linear order of their vertex set.
We aim to understand for which ordered graphs $H$ the \MWIS problem can be solved efficiently in $H$-free ordered graphs (where an ordered induced subgraph is an induced subgraph that preserves the relative order of its vertices).
We emphasize that the ordered setting can be seen as a refinement of the unordered one.
Indeed, forbidding a single \emph{unordered} graph is equivalent to forbidding \emph{all} of its possible orderings as ordered induced subgraphs.
Instead, we can forbid only some of the orderings, which allows us to trace the boundary of tractability more precisely.

Let us remark that the idea of looking at classical graph-theoretic problems in an ordered setting is not new.
For example, there is a rich (and still growing) line of research concerning Tur\'an-type~\cite{DBLP:conf/bcc/Tardos19,DBLP:journals/jct/KorandiTTW19,DBLP:journals/siamdm/BoskovicK23,Pach2006} and Ramsey-type~\cite{DBLP:journals/combinatorics/BalkoC0K20,DBLP:journals/jct/ConlonFLS17,DBLP:journals/combinatorics/DudekGR24} problems in ordered graphs. There are also some results on the chromatic number~\cite{DBLP:journals/combinatorica/AxenovichRU18,Walczakordered} or structural properties~\cite{DBLP:journals/jctb/PachT21,DBLP:journals/siamdm/ScottSS22} of certain classes of ordered graphs.
However, the algorithmic results seem to be much scarcer: we are only aware of recent work on the complexity of (\textsc{List}) $k$-\textsc{Coloring} in classes defined by excluding one induced ordered subgraph~\cite{DBLP:journals/siamdm/HajebiLS24,DBLP:conf/stacs/PiecykR26}.
In particular, to the best of our knowledge, the complexity of \MWIS in this setting was not studied before.
In this paper, we fill this gap by proving  the following theorem.

\begin{theorem}\label{thm:main}
    Let $H$ be an ordered graph with at least two vertices. 
    The \MWIS problem in ordered $H$-free graphs can be solved in:
    \begin{enumerate}[(1)]
        \item polynomial time, if $H$ is an induced subgraph of a graph in $\bigcup_{k \geq 0} \{\extpthree,\extchord,\extchordrev,\extoneedgek \}$, 
        \item quasipolynomial time, if $H$ is an induced subgraph of a graph in  $\bigcup_{k \geq 0} \{\extaakbb, \extababk \}$,
        \item subexponential time, if $H$ is an induced subgraph of a graph in  $\bigcup_{k \geq 0} \{\extabbak \}$.        
    \end{enumerate}
    In all other cases, even the \MIS problem is \NP-hard.
\end{theorem}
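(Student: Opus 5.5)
We prove the four items separately: the three algorithmic items via case-specific algorithms for the maximal patterns, and the final hardness clause by a combinatorial classification of the remaining $H$ together with a few hardness reductions. Since the class of $H$-free ordered graphs shrinks as $H$ shrinks, each algorithmic item only needs to be proved for the maximal graphs listed, with $k$ arbitrary but fixed. Reversing the vertex order is a symmetry of \MWIS. Adding $k$ isolated vertices at both ends of the pattern (the ``ext'' versions) can be handled by a generic reduction. We guess $k$ vertices at each end of the ordering that are nonadjacent to the relevant part; more precisely, in an ordered graph excluding the extended pattern, either the leftmost or rightmost stretch is forced to be small or dominated. We then branch on the first and last few vertices of a solution, each in $n^{\Oh(k)}$ ways, and recurse on the graph restricted to the interval between them. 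That interval excludes the core pattern with only the inner isolated vertices. This preserves polynomial, quasipolynomial and subexponential running times.

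For the cores of item (1):
\begin{itemize}
\item Excluding \pthree{} means that no vertex has both a left and a right neighbor that are adjacent in the right way. We expect a left-to-right dynamic program over prefixes in which the state is the last chosen vertex.
\item For \chord{} (and its reverse), the left neighborhood structure is laminar, which gives an interval-like DP.
\item For \oneedgek, every edge spans fewer than $k+1$ positions apart from edges whose interiors are nearly complete to them. Bounded-bandwidth DP then runs in $n^{\Oh(k)}$.
\end{itemize}

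For item (2), excluding \aakbb{} or \ababk, we would use a balanced-separator/branching argument in the style of Gartland--Lokshtanov. Choose the middle position $m$. Edges living entirely on one side, with all such pairs of edges separated by $k$ vertices, are forbidden, so one side is nearly edgeless or small. Alternatively, for crossing edges we branch on a heavy vertex: one whose neighborhood contains a constant fraction of the relevant crossing edges. Each branch either deletes many vertices or reduces a measure by a constant factor, giving $n^{\Oh(\log n)}$ recursion depth and hence quasipolynomial time.

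For item (3), nested edges \abbak, we would show that a graph with many edges has a vertex of degree at least $\sqrt{n}$ or a separator of size about $\sqrt{n}$ along the order. Branching on high-degree vertices then yields $2^{\Oh(\sqrt{n}\log n)}$.

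Hardness is the main obstacle. Any $H$ not covered either contains a cycle or a vertex of degree at least 3, or it is a linear forest whose ordering is not among the listed ones. For unordered-hard $H$, we use Poljak subdivisions, realized as ordered graphs whose vertex order we control. Then comes a finite case analysis of minimal non-covered ordered graphs on at most about 6 vertices (the icons \bad, \adb, \abd, \acbd, \abxba, \abcabc, \aabbcc, \dots). For each of these we would build a 3-SAT reduction in which the variable gadgets are long ordered paths, each laid out in a carefully interleaved order, and the clause gadgets are triangles. The key lemma is that a suitable ordering of a $2$-subdivision, or of a long subdivision, avoids each minimal pattern. Verifying that the family of minimal obstructions is exactly complementary to the three listed families, and that each admits such an ordering, is where most of the work lies.
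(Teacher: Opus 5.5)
\textbf{Item (1).} This item already contains concrete errors.

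For \pthree, a left-to-right DP whose state is the last chosen vertex cannot certify independence. The ordered graph $a\prec b\prec c$ with the single edge $ac$ (an induced \ac) is \pthree-free, yet your DP accepts $\{a,b,c\}$. This is not a local slip: every bipartite graph with one side placed first is \pthree-free (that example is one), so the algorithm must in particular solve bipartite \MWIS. The right observation is that $x\prec y\prec z$ with $xy,yz\in E$ forces $xz\in E$. Hence the order is a transitive orientation, $G$ is a comparability graph, and \MWIS becomes a maximum-weight antichain problem, solved by flows.

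For \chord, left neighborhoods need not be laminar: the edges $v_1v_3,v_2v_3,v_2v_4,v_3v_4$ give $N^-(v_3)=\{v_1,v_2\}$ and $N^-(v_4)=\{v_2,v_3\}$. What holds is that right neighborhoods are cliques, so the order is a perfect elimination ordering and $G$ is chordal.

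For \oneedgek with $k\geq 1$ there is no bandwidth bound, since cliques are \oneedgek-free. The DP that works stores the last $k$ chosen vertices. It is correct because an earlier chosen vertex adjacent to the new one would, together with those $k$ vertices, induce \oneedgek.

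\textbf{Items (2), (3) and hardness.} Items (2) and (3) rest on false structural claims. They also miss the mechanism the paper uses for \aakbb and \abbak: branch until what remains is \emph{bipartite}, then finish with matching or flow. For \aakbb this is unavoidable, since bipartite graphs with one side first are \aabb-free.

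In (2), the claim ``one side is nearly edgeless or small'' fails for $K_n$. The usable structure appears only after you guess the first $k$ solution vertices to the right of the split and delete their closed neighborhoods and everything in the right half before them. Those vertices then act as the $k$ isolated middle vertices of the pattern, so for every edge $uv$ on one side, $N[\{u,v\}]$ meets every edge on the other side. Branching on a heavy endpoint eventually makes one side independent; that side is absorbed into an independent frame, and the workspace halves. Your sketch does not engage with \ababk at all; the paper needs its chain-refinement lemma there.

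In (3), consider $a_1\prec\dots\prec a_m\prec b_1\prec\dots\prec b_m$ with edges $a_ib_i$. It is \abba-free with maximum degree $1$, yet every balanced cut of the order is crossed by at least $n/3$ disjoint edges. So after degree reduction you may have neither a high-degree vertex nor a small separator, and your plan has no next step. The paper instead partitions the order into segments spanned by roof edges $x_iy_i$. Then either a small middle segment is a balanced separator, or guessing the solution on $\Oh(n^{2/3})$ vertices leaves small outer parts and a bipartite middle part. Those vertices are the boundary segments plus the neighborhoods of the roofs and of $2k$ greedily chosen vertices per middle segment.

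The hardness part is a plan, not a proof. A single degree-3 vertex is not covered by Alekseev's unordered argument; it needs the ordered obstructions \adb and \abd. For \abxba it is unclear that any ordering of a subdivision works: naively swapping two adjacent tracks nests one path edge inside the other over a vertex of a third track, which is an induced \abxba. The paper handles \abxba and \abccab with a train/coupling framework and an avoidability lemma. For \abxba it additionally uses a permutation gadget that is not a subdivision but an interchangeable braided graph.
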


The notation used in \cref{thm:main} (and throughout the paper) should be self-explanatory. For example, \pthree is the ordered graph on vertices $1,2,3$ in natural order and edges $12$ and $23$.
The only non-obvious feature is that by $(k)$ we denote $k$ consecutive isolated vertices, so, e.g., 
\extabbak is the graph with vertices $1,\ldots,4k+4$ in natural order, and edges $(k+1)(3k+4)$ and $(2k+1)(2k+2)$.

Let us make a few remarks concerning the statement of \cref{thm:main}. First, notice that the case that $H= \extabbak$ for some $k$ is the only one that we cannot classify either as quasipolynomial-time solvable or \NP-hard.
Thus, our \cref{thm:main} gives an almost complete complexity dichotomy into cases solvable in quasipolynomial time and those that are \NP-hard.
Still, \extabbak-free and, in particular, \abba-free graphs admit some strong structural properties that allow us to solve \MWIS in subexponential time.

Interestingly, \abba-free graphs remain one of few open cases concerning the complexity of \textsc{List} 3-\textsc{Coloring}~\cite{DBLP:journals/siamdm/HajebiLS24,DBLP:conf/stacs/PiecykR26}, which is another indication that graphs from this class might be difficult to understand. Their structure does not seem restricted enough to allow for (simple) efficient algorithms, nor rich enough to allow for (simple) hardness results.

Another indication that \abba-free graphs have complicated structure is the fact that unordered graphs that admit an ordering that avoids \abba as a subgraph are precisely graphs of \emph{queue number} 1~\cite{HeathLeightonRosenberg1992}. Such graphs are \NP-hard to recognize~\cite{DBLP:journals/siamcomp/HeathR92} and thus, there is little hope for their simple structural characterization.
On the other hand, unordered graphs that admit an ordering that avoids \abab as a subgraph are known as graphs of \emph{stack number} 1~\cite{BernhartKainen1979} and they are precisely outerplanar graphs -- a rather simple class~\cite{ChartrandHarary1967}.
These two examples already illustrate that graphs defined by forbidding even a very simple ordered substructure might be quite complicated. Furthermore, the ordering of the vertices in a forbidden subgraph can have a huge impact on the structure of the graphs in the corresponding hereditary class.

Now, let us proceed to a high-level overview of the proof of \cref{thm:main}.

\paragraph{Technical overview.}
It is not hard to observe (see \cref{lem:pacman}) that if $H'$ is obtained from $H$ by adding some isolated vertices at the beginning or at the end of the ordering, then the \MWIS problem in $H'$-free graphs can be reduced to solving the same problem in the more restrictive class of $H$-free graphs.
We note that the above observation does not apply to isolated vertices added in the middle of the ordering, as illustrated by, e.g., cases of $H=\pthree$ and $H=\abd$. Indeed, \MWIS is polynomial-time solvable in \pthree-free graphs, but \NP-hard in \abd-free graphs.

Thus, from now on let us focus on the case that $H$ has no isolated vertices at the beginning or at the end of the ordering. In what follows, the input graph is always denoted by $G$ and its number of vertices is denoted by $n$.

\medskip
The cases listed in \cref{thm:main} (1) are actually quite simple.
One can readily verify that ordered \pthree-free graphs are co-comparability graphs and ordered \chord-free and ordered \chordrev-free graphs are chordal -- in both cases \MWIS can be solved in polynomial time by classical algorithms~\cite{DBLP:journals/siamcomp/Gavril72,Golumbic2004}. 
Interestingly, the above inclusions are in fact equivalences in the sense that every co-comparability graph admits an ordering that avoids \pthree and every chordal graph admits an ordering that avoids \chord (and the reverse ordering avoids \chordrev).
The last case from \cref{thm:main} (1), i.e., $H = \oneedgek$ for some $k$, can be handled by a simple dynamic programming algorithm.

\medskip
So, let us proceed to cases listed in \cref{thm:main} (2).
Let us start with $H = \aabb$.
We remark that later we will show an algorithm for the more general case $H = \aakbb$ for any fixed $k$,
but the algorithm for $H=\aabb$ is much simpler and has better running time.

Note that \aabb-free graphs contain all bipartite graphs: indeed, if we take any bipartite graph and order its vertices so that all vertices from one part come before all vertices from the other part, we get an \aabb-free ordered graph.
\MWIS in bipartite graphs can be solved in polynomial time by a reduction to the maximum matching problem~\cite{LovaszPlummer1986_MatchingTheory}. Therefore, when working with \aabb-free graphs, we should be aware that we are in particular solving the (unordered) bipartite case and thus, we should not hope for a simple dynamic programming or branching algorithm.

In fact, our approach is to start with an \aabb-free instance of \MWIS and reduce it, in quasipolynomial time, to a quasipolynomial number of bipartite instances, each of which can be solved in polynomial time. 
A \emph{seagull} is a triple of vertices $x \prec y \prec z$ such that $y$ is adjacent to both $x$ and $z$.
We observe that graphs that do not contain any seagull are bipartite.
Thus, we employ a branching scheme in order to destroy all seagulls.

First, we observe that every two seagulls in an \aabb-free graph share a vertex or are joined by an edge.
Thus, if $x,y,z$ is a seagull, then the closed neighborhood of one of $x,y,z$ intersects at least one third of all seagulls in the graph.
We branch on such a vertex $v$, i.e., we guess whether it is in the solution or not.
In the former case we remove $v$ and all its neighbors from the graph (which destroys a constant fraction of all seagulls) and in the latter one we remove just $v$ (which destroys at least one seagull -- the one formed by $x,y,z$).
As the number of all seagulls is $\Oh(n^3)$, this branching scheme produces at most $n^{\Oh(\log n)}$ instances, each of which has no seagulls and thus, can be solved in polynomial time.
Thus, the overall running time of the algorithm is $n^{\Oh(\log n)}$.

\medskip
Now let us discuss the more general case $H = \aakbb$ for any fixed $k$.
Here we also use a branching scheme, but it is more involved than in the previous case.
The algorithm maintains a more structured instance, where the vertex set is partitioned into three consecutive parts $X, Y, Z$, such that $X$ and $Z$ are independent, while $Y$ is the ``workspace'' that still requires handling. Initially, we take $X=Z=\emptyset$ and $Y = V(G)$.
If $Y=\emptyset$, then the graph is bipartite and the problem can be solved in polynomial time, so the goal is to shrink $Y$ until it becomes empty.

To this end, we split $Y$ into two halves, $Y_L$ and $Y_R$, and exhaustively guess the first $k$ vertices of the optimum solution that lie in $Y_R$.
If $X\cup Y_L$ or $Y_R \cup Z$, say the former one, is independent, then we can move $Y_L$ to $X$, thereby reducing the size of the workspace by at least half.
Thus, we may assume that both $X \cup Y_L$ and $Y_R \cup Z$ contain an edge; call such edges \emph{internal}.
Now the forbidden pattern \aakbb yields a key structural property: the closed neighborhood of every internal edge on one side of the split intersects every internal edge on the opposite side.
Consequently, if both sides still contain an internal edge, then the neighborhood of one endpoint of any edge on the sparser side (i.e., one with fewer internal edges) intersects a constant fraction of the internal edges on the denser side.

This leads to an internal branching procedure, analogous in spirit to the \aabb-case: we guess whether such a vertex belongs to the solution or not.
If we exclude it, we destroy at least one internal edge; if we include it, we delete its closed neighborhood and thus, destroy a constant fraction of all internal edges.
We continue until one side of the split becomes independent; the number of instances produced this way is $n^{\Oh(\log n)}$.
At that point, we absorb the independent side into $X$ or $Z$, thereby obtaining a new structured instance whose workspace has size at most half of the previous one.

Repeating this halving procedure recursively yields quasipolynomially many subinstances, each eventually reduced to a bipartite graph.
Thus, the recursion has two levels: the outer one halves the workspace, while the inner one destroys a constant fraction of internal edges. Altogether, this gives running time $n^{\Oh(\log^2 n)}$.

\medskip

Now, let us consider the case $H = \ababk$ for any fixed $k$.
For simplicity, let us discuss the case $k=0$, i.e., $H = \abab$ -- we just remark that the general case is slightly more complicated.
Here the general strategy is similar in spirit to the case of \aakbb, but the relevant structure of the considered instances is different (and the arguments are more involved).
We start by partitioning the vertex set into three consecutive blocks $A,B,C$ of roughly equal size and view them as a \emph{chain}, that is, a sequence of consecutive sets (called \emph{links}) in which edges may appear only inside one set, between two consecutive sets, or between the first and the last one.
The algorithm repeatedly \emph{refines} such a chain by splitting one selected link into two consecutive sublinks.

The main subroutine is a refinement lemma:
for any chosen link, in quasipolynomial time we can branch into quasipolynomially many instances in such a way that this link is replaced by two smaller consecutive links.
Again the crucial structural observation is that, since \abab is forbidden, there is always a vertex whose closed neighborhood intersects a constant fraction of the ``unwanted'' edges.
This again shows that in quasipolynomial time we can produce  $n^{\Oh(\log n)}$ branches, in each of which the chosen link can indeed be split into two parts.

Starting from the initial chain $(A,B,C)$, we repeatedly refine the smallest link of \emph{type $A$} (i.e., one that originates from $A$), the smallest link of \emph{type $B$}, and the smallest link of type $C$.
In each refinement step the size of a smallest link of the given type is halved, so after $\Oh(\log n)$ refinements for each type, we arrive at a chain containing an empty link of each of the three types.
These three empty links separate the graph into three pairwise nonadjacent subinstances, each missing one of sets $A,B,C$ and therefore having size at most roughly $2n/3$.
We solve these subinstances recursively and sum their optimum values.

Summarizing the above, the algorithm has three nested recursive layers.
One application of the refinement lemma costs $n^{\Oh(\log n)}$ branches.
Repeating refinements until an empty link of each type appears creates a tree of depth $\Oh(\log n)$ and hence the number of instances produced is $n^{\Oh(\log^2 n)}$.
Finally, each leaf instance splits into three pairwise nonadjacent subinstances of size at most roughly $2n/3$ each, which contributes the third logarithmic factor in the exponent.
Thus, the overall running time is $n^{\Oh(\log^3 n)}$.

\medskip

Now, let us move to \cref{thm:main}~(3) and discuss $H = \abbak$ for any fixed $k$.
Again, for simplicity, let us focus on the case $k=0$, i.e., $H = \abba$.
The first step is a standard branching on high-degree vertices~\cite{DBLP:journals/algorithmica/BacsoLMPTL19}.
Setting a threshold $\tau = n^{1/3}$, we branch on every vertex of degree at least $\tau$ until all remaining instances have maximum degree below $\tau$.
This already costs only subexponential time, so from now on we may assume that every vertex has degree at most $\tau$.

The next step is to partition the ordered vertex set into consecutive segments $Z_1,\ldots,Z_t$ so that each segment $Z_i$ comes with an edge $x_iy_i$ spanning it, and edges between different segments may appear only between consecutive ones.
Such a partition can be obtained by greedily picking, at each step, an edge leaving the current prefix whose right endpoint is as far to the right as possible.
Thus, after this partition, the graph behaves like a path of segments.
The forbidden pattern \abba implies that once the neighborhood of $\{x_i,y_i\}$ is removed, the rest of the segment  $Z_i$ becomes independent.

Now we look for a place where the instance can be split.
If some small segment (i.e., of size at most $n^{2/3}$) appears ``in the middle'' of the graph, then by branching on its intersection with the optimum solution we separate the graph into two independent subinstances, each of size at most roughly $2n/3$, and recurse on both sides.
Otherwise, every segment meeting the middle part is large, so there can be only few of them.
In this case we branch on a set of size $\Oh(n^{2/3})$ containing the two ``boundary'' small segments together with the neighborhoods of all $x_i,y_i$ for the middle segments $Z_i$.
After removing this set, the middle part becomes bipartite, while the remaining left and right parts are again multiplicatively smaller and can be solved recursively.

Thus, every branch either is solved directly or reduces the problem to instances of size at most about $2n/3$, while paying only a subexponential branching cost per level.
The overall running time is $n^{\Oh(n^{2/3}\log^3 n)}$.

\medskip
Finally, let us say a few words about the hardness counterpart of \cref{thm:main}.
We show several hardness reductions, each of which is tailored to a specific family of forbidden ordered graphs $H$.
Typically, we start with an unordered instance of \MIS, then we use the Poljak's subdivision trick to subdivide certain edges (possibly many times), and then we order the resulting graph in a way that ensures that $H$ is not an induced subgraph.
It is possibly worth mentioning that in vast majority of cases we can actually exclude certain graphs $H$ as a \emph{subgraph}, which is a stronger condition than excluding them as an induced subgraph.

\paragraph{Organization of the paper.}
In \cref{sec:prelim} we give present definitions and preliminary results.
In \cref{sec:algos} we present the algorithms for the positive cases of \cref{thm:main}.
In particular, statement (1) is proved in \cref{sec:poly};
statement (2) is proved in \cref{sec:aabb} (special case $H = \aabb$), \cref{sec:aacbb} (case $H = \aakbb$), and \cref{sec:ababk} (case $H = \ababk$);
and statement (3) is proved in \cref{sec:abbak}.
In \cref{sec:hardness} we prove the hardness results for the negative cases of \cref{thm:main}.
Finally, \cref{sec:outro} concludes with open problems and directions for future research.

%% file: prelim.tex
All logarithms in the paper are binary.
Throughout the paper, we consider vertex-weighted graphs.
We always assume that all computations on weights can be performed in constant time.
For a vertex-weighted graph $(G,\wei)$ and a set $X \subseteq V(G)$, we define $\wei(X) = \sum_{v \in X} \wei(v)$.

\paragraph{Ordered graphs.}
Let $G$ be a graph with a fixed linear ordering of its vertex set.
We write $v \prec u$ if $v$ is before $u$ in the ordering, and $v \preceq u$ if $v \prec u$ or $v=u$.
For a set of vertices $X$, we write $X \prec v$ if $x \prec v$ for every $x \in X$, and $v \prec X$ if $v \prec x$ for every $x \in X$. We also write, e.g., $x \prec Y$ for $\{x\} \prec Y$.
We define operators $\succ$ and $\succeq$ analogously.

The ordering of vertices induces the lexicographic ordering of edges: for distinct edges $uv, u'v'$, where $u \prec v$ and $u' \prec v'$, we write that $uv \prec u'v'$ if either $u \prec u'$ or $u=u'$ and $v \prec v'$.

By $N(v)$ we denote the neighborhood of $v$ in $G$, and by $N[v]$ we denote the closed neighborhood of $v$ in $G$, i.e., $N[v] = N(v) \cup \{v\}$.
We say that $u$ is a \emph{left neighbor} (resp., \emph{right neighbor}) of $v$ if $u \in N(v)$ and $u \prec v$ (resp., $v \prec u$).

The \emph{mirror image} of an ordered graph $G$ is the ordered graph obtained from $G$ by reversing the order of its vertices.
By the \emph{underlying graph} of an ordered graph $G$, we mean the unordered graph obtained from $G$ by forgetting the order of its vertices.

A \emph{segment} is a set of consecutive vertices in the ordering. A \emph{prefix} (resp., \emph{suffix}) is a segment that contains the first (resp.,  last) vertex in the ordering.
An edge $uv$ is a \emph{under} and edge $u'v'$ if $u' \leq u \prec v \prec v'$.
An edge $uv$ is \emph{maximal} if there is no other edge $u'v'$ such that $u' \preceq u \preceq v \preceq v'$.
For two edges \emph{cross} whenever they induce \abab.


\paragraph{\MWIS in ordered graphs.}

By $\alpha(G)$ we denote the independence number of a graph $G$, i.e., the maximum size of an independent set in $G$.
For a vertex-weighted graph $(G,\wei)$, by $\alpha(G,\wei)$ we denote the maximum weight of an independent set in $G$.

\begin{lemma}\label{lem:pacman}
    Let $H$ be an ordered graph, and let $H^*$ be the graph obtained from $H$ by adding an isolated vertex before all vertices of $H$, and an isolated vertex after all vertices of $H$. 
    Then, \MWIS in $H$-free graphs and \MWIS in $H^*$-free graphs are polynomially equivalent.
\end{lemma}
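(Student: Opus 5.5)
One direction is immediate. Every $H$-free ordered graph is also $H^*$-free, because any induced copy of $H^*$ contains an induced copy of $H$. So an algorithm for $H^*$-free graphs also solves \MWIS on $H$-free graphs.

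For the other direction, we reduce \MWIS on an $H^*$-free graph $G$ with weights $\wei$ to polynomially many instances on $H$-free graphs. We guess the first vertex $u$ and the last vertex $w$ (in the ordering $\prec$) of an optimum independent set $I$; there are $\Oh(n^2)$ choices, including the trivial cases $|I|\le 1$, which we handle directly. Fix a guess with $u \preceq w$ and $u,w$ nonadjacent. The rest of $I$ must lie in
\[ V' = \{ v : u \prec v \prec w \} \setminus (N(u)\cup N(w)). \]
Let $G' = G[V']$, ordered by the inherited order. The candidate value for this guess is $\wei(u)+\wei(w)+\alpha(G',\wei)$ (just $\wei(u)$ if $u=w$), and we return the maximum candidate over all guesses. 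Correctness holds in both directions. The guess matching an optimum $I$ gives a candidate at least $\wei(I)$. Conversely, every candidate is realized by an actual independent set, since $u$ and $w$ are nonadjacent and both are nonadjacent to all of $V'$.

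The key point is that $G'$ is $H$-free. Suppose $G'$ contained an induced copy of $H$ on a vertex set $S\subseteq V'$. Every vertex of $S$ lies strictly between $u$ and $w$, and $u$, $w$ have no neighbours in $S$ and are not adjacent to each other. Hence $S\cup\{u,w\}$ induces, in the inherited order, $H$ with an isolated vertex before it and an isolated vertex after it, that is, a copy of $H^*$ in $G$. This is a contradiction. When $u=w$ the candidate involves no subinstance, so this case needs no argument.

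In total we make $\Oh(n^2)$ calls to an \MWIS oracle for $H$-free graphs, with polynomial overhead, which gives polynomial equivalence. The only point needing care is to include the nonadjacency of $u$ and $w$ among the guessed conditions, because it is needed both for the candidate to be feasible and for the two added vertices to be isolated in the copy of $H^*$. Beyond that there is no real obstacle.
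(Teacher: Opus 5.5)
Your proposal is correct and matches the paper's proof: you guess the first and last vertices of an optimum solution (nonadjacent, possibly equal), keep only the vertices strictly between them that are not adjacent to either, and observe that an induced copy of $H$ there, together with the two guessed vertices, gives an induced $H^*$. As in the paper, this is a Turing reduction with $\Oh(n^2)$ calls to \MWIS on $H$-free graphs, which establishes the polynomial equivalence.
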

\begin{proof}
    Note that every graph that is $H$-free is also $H^*$-free, so a reduction from \MWIS in $H$-free graphs to \MWIS is $H^*$-free graphs is trivial.
    Thus, let us show a Turing reduction from \MWIS  in $H^*$-free graphs to \MWIS in $H$-free graphs.

    Let $(G,\wei)$ be an instance of \MWIS, where $G$ is $H^*$-free and has $n$ vertices.
    We exhaustively guess the first and the last vertex of some fixed optimum solution $I^*$; call these vertices, respectively, $x$ and $y$.    
    Clearly it only makes sense to consider choices when $xy \notin E(G)$. We note that if $|I^*|=1$, it might happen that $x=y$.

    This step results in at most $n^2$ branches; consider one of them.    
    We remove all vertices preceding $x$, all vertices following $y$, and the closed neighborhood of $\{x,y\}$.
    Let $G'$ be the resulting graph. Note that is it $H$-free as any induced copy of $H$, together with $x$ and $y$,
    gives an induced copy of $H^*$ in $G$.
    
    Now we observe that in the branch where $x,y$ were guessed correctly,
    $I^* \setminus \{x,y\}$ is an optimum solution for \MWIS in $(G',\wei)$.
    Thus, we reduced solving \MWIS on $(G,\wei)$ to solving \MWIS on $\Oh(n^2)$ instances, each of which is $H$-free and has at most $n$ vertices.
\end{proof}

Repeated application of \cref{lem:pacman} yields the following.
\begin{corollary}\label{cor:pacman}
    Let $H$ be an ordered graph and and let $k \geq 1$ be a fixed integer.
    Let $H^*$ be the graph obtained from $H$ by adding $k$ isolated vertices before all vertices of $H$, and another $k$ isolated vertices after all vertices of $H$. 
    Then, \MWIS in $H$-free graphs and \MWIS in $H^*$-free graphs are polynomially equivalent.
\end{corollary}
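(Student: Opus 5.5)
The plan is a straightforward induction on $k$ that applies \cref{lem:pacman} $k$ times. For $i = 0,1,\dots,k$, let $H_i$ be obtained from $H$ by adding $i$ isolated vertices before all vertices of $H$ and $i$ isolated vertices after them. Then $H_0 = H$ and $H_k = H^*$. Moreover, $H_{i+1}$ is exactly the graph obtained from $H_i$ by the operation in \cref{lem:pacman}: one new isolated vertex goes before everything, and one goes after everything.

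By \cref{lem:pacman}, \MWIS in $H_i$-free graphs and \MWIS in $H_{i+1}$-free graphs are polynomially equivalent for every $i<k$. Polynomial equivalence (via polynomial-time Turing reductions) is transitive. Chaining the $k$ equivalences therefore gives the claim. Since $k$ is a fixed constant, the chain has constant length.

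The only step needing care is that composing $k$ Turing reductions stays polynomial. For this I will use the explicit form of the reduction in the proof of \cref{lem:pacman}. An $n$-vertex $H_{i+1}$-free instance is reduced to $\Oh(n^2)$ instances that are $H_i$-free, each with at most $n$ vertices, plus polynomial overhead. Unrolling $k$ levels gives $\Oh(n^{2k})$ instances of \MWIS in $H$-free graphs, each with at most $n$ vertices. This is polynomial because $k$ is fixed. The trivial direction holds since every $H$-free graph is $H^*$-free.

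I do not expect a real obstacle: the only subtlety is that the bound $n^{\Oh(k)}$ relies on $k$ being fixed.
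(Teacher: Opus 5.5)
Your proposal is correct and matches the paper, which simply derives the corollary by repeated application of \cref{lem:pacman}. Unrolling the reduction into $\Oh(n^{2k})$ subinstances, each with at most $n$ vertices, is a valid way to make the constant-length chain of Turing reductions explicit.
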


%% file: algos.tex
In this section we present the algorithms for the positive cases of \cref{thm:main}.
By \cref{cor:pacman}    it suffices to present algorithms for the cases when $H$ is exactly \pthree, \chord, \oneedgek (for statement (1) of \cref{thm:main}), \aakbb, \ababk (for statement (2)), and \abbak (for statement (3)).

%% file: poly.tex
Let us start with proving \cref{thm:main} (1).

\begin{theorem}
   \MWIS in can be solved in polynomial time in \pthree-free graphs.
\end{theorem}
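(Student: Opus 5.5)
The plan is to show that the underlying graph of an ordered \pthree-free graph $G$ is a co-comparability graph. Equivalently, its complement $\overline{G}$ is a comparability graph with a transitive orientation read off directly from the ordering. Then we invoke the classical polynomial-time algorithm for \MWIS in co-comparability graphs. An independent set in $G$ is a clique in $\overline{G}$. In a comparability graph, maximum weight cliques correspond to maximum weight chains of the associated poset. These can be found in polynomial time by a longest-path computation in the acyclic transitive orientation, viewed as a DAG with vertex weights~\cite{Golumbic2004}.

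\textbf{Step 1 (the key structural claim).} Orient every non-edge $uv$ of $G$ with $u \prec v$ as $u \to v$. The resulting orientation is clearly acyclic, since it agrees with $\prec$. It remains to check transitivity. Suppose $u \prec v \prec w$ with $uv \notin E(G)$ and $vw \notin E(G)$; we must show $uw \notin E(G)$. If instead $uw \in E(G)$, then $u,v,w$ in this order induce a graph with exactly one edge, spanning from the first to the third vertex. That is \ac, not \pthree, so the claim fails as stated with this orientation. I therefore switch to the orientation that \pthree-freeness actually controls, which lives on the edges of $G$ rather than on the non-edges.

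\textbf{Step 2 (revised approach).} For $u \prec v \prec w$ with $uv, vw \in E(G)$, the triple $u,v,w$ is induced \pthree unless $uw \in E(G)$. Hence \pthree-freeness says exactly that orienting the edges of $G$ forward along $\prec$ is transitive. So $G$ itself is a comparability graph, with poset given by the edges oriented along $\prec$, and independent sets of $G$ are antichains of this poset. The statement in the overview about co-comparability therefore appears to have the roles swapped, but this is harmless for the algorithm. Maximum weight independent sets in comparability graphs, i.e.\ maximum weight antichains, are computable in polynomial time. By Dilworth/K\H{o}nig-type duality this reduces to a minimum flow or bipartite matching computation; it is also classical that comparability graphs are perfect, and \MWIS on them is polynomial~\cite{Golumbic2004}.

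\textbf{Step 3 (algorithm).} Given $(G,\wei)$, build the DAG $D$ on $V(G)$ with arcs $u\to v$ for $uv\in E(G)$, $u\prec v$; by Step~2, $D$ is transitive. Compute a maximum weight antichain of $D$. One way is the standard reduction: split each vertex $v$ into $v^-,v^+$ and solve a min-cost flow / minimum weight chain cover LP whose dual gives the antichain. The other way is simply to call a known polynomial algorithm for \MWIS in perfect graphs, or the specific comparability-graph algorithm. The main obstacle is only to cite the right classical result for weighted antichains; the structural step is a two-line check.
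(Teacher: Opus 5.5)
Your final argument in Steps~2--3 is correct and is essentially the paper's proof: \pthree-freeness makes the relation ``$u \prec v$ and $uv \in E(G)$'' transitive, so the underlying graph is a comparability graph, and \MWIS (a maximum weight antichain) is then polynomial-time solvable by the classical results in~\cite{Golumbic2004}. You are also right that the introduction's wording ``co-comparability'' has the roles swapped, since the paper's own proof uses comparability; your discarded Step~1 is therefore only a false start, not a gap.
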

\begin{proof}
    Let us observe that every  \pthree-free graph is a comparability graph.
    Indeed, if $G$ is an ordered \pthree-free graph, then (the reflexive closure of) the relation of being a right neighbor
    is a partial order.
    Now, the claim follows by recalling that \MWIS can be solved in polynomial time in comparability graphs~\cite{Golumbic2004}.
\end{proof}

\begin{theorem}
   \MWIS in can be solved in polynomial time in \chord-free and in \chordrev-free graphs.
\end{theorem}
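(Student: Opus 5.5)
We show that every ordered \chord-free graph is chordal, and then use the classical polynomial-time algorithm for \MWIS in chordal graphs~\cite{DBLP:journals/siamcomp/Gavril72,Golumbic2004}. The case of \chordrev-free graphs follows by symmetry: the mirror image of a \chordrev-free graph is \chord-free, and it has the same underlying graph, so it is chordal as well.

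First we pin down what \chord-freeness says. An induced copy of \chord consists of vertices $a \prec b \prec c$ with $ab, ac \in E(G)$ and $bc \notin E(G)$. Thus $G$ is \chord-free exactly when, for every vertex $a$, its right neighbors $N(a) \cap \{u : a \prec u\}$ form a clique.

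So the reversed ordering $v_n, v_{n-1}, \ldots, v_1$ is a perfect elimination ordering. When we eliminate $v_i$ after $v_n, \ldots, v_{i+1}$ are gone, its remaining neighbors are its left neighbors, which is the wrong set. Instead we eliminate in the original order $v_1, v_2, \ldots$. When $v_i$ is removed, its neighbors in the remaining graph are exactly its right neighbors, and these form a clique. Hence $v_1, \ldots, v_n$ is a perfect elimination ordering, and by the Fulkerson--Gross characterization a graph with a perfect elimination ordering is chordal. Alternatively, one can argue directly: take an induced cycle of length at least $4$ and let $a$ be its $\prec$-minimal vertex. Its two cycle-neighbors both lie to the right of $a$ and are nonadjacent, so together with $a$ they induce \chord.

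It then only remains to invoke Gavril's algorithm. Given a perfect elimination ordering, which here we already have for free, it computes a maximum weight independent set of a chordal graph in polynomial time; equivalently, one can use a greedy procedure along the ordering, or the clique-tree dynamic programming. No step here is a real obstacle. The only care needed is to check the direction of the elimination ordering relative to $\prec$, and that the weighted version is covered by the cited results, which it is, since chordal graphs are perfect and Gavril's algorithm handles weights.
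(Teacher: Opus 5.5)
Your proposal is correct and follows the paper's proof: \chord-freeness means the right neighbors of every vertex form a clique, so the given ordering is a perfect elimination ordering. Hence the graph is chordal, Gavril's algorithm applies, and the mirrored pattern reduces to this case by reversing the order. Two small fixes: delete the stray sentence claiming that the reversed ordering $v_n,\ldots,v_1$ is a perfect elimination ordering, since you immediately (and correctly) contradict it; and note that a plain greedy pass along the ordering only solves the unweighted case, so for weights rely on the cited algorithm (e.g., Frank's two-phase method) rather than simple greedy.
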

\begin{proof}
    As \chordrev-free graphs are exactly the mirror images of \chord-free graphs, it suffices to prove the claim for \chord-free graphs.
    Let us observe that every \chord-free graph is chordal:
    the ordering of $V(G)$ is a perfect elimination ordering (see also~\cite{DBLP:journals/siamdm/HajebiLS24}).
    As \MWIS is polynomial-time solvable in chordal graphs~\cite{DBLP:journals/siamcomp/Gavril72}, the claim follows.
\end{proof}

\begin{theorem}
   For every fixed $k$, \MWIS can be solved in polynomial time in \oneedgek-free graphs.
\end{theorem}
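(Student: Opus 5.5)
Here $H = \oneedgek$ is the ordered graph on $k+2$ vertices whose only edge joins the first and the last vertex. So $G$ is $H$-free exactly when every edge $uv$ with $u \prec v$ spans few vertices in an independent way: there are no $k$ pairwise nonadjacent vertices strictly between $u$ and $v$ that are also nonadjacent to both $u$ and $v$. I will exploit this with a left-to-right dynamic program. The state records a bounded-size portion of the partial solution, namely the last at most $k$ chosen vertices. This should be enough, because any edge back from the current vertex $w$ to a chosen vertex $u$ far to the left forces structure between them.

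The key structural claim is the following. Let $I$ be an independent set, let $w \notin I$, and suppose $w$ has a left neighbor in $I$. Then $w$ has a neighbor among the last $k$ vertices of $I$ preceding $w$. To see this, let $u \in I$ be a left neighbor of $w$, and suppose at least $k$ vertices of $I$ lie strictly between $u$ and $w$. None of them is adjacent to $w$ by assumption, and none is adjacent to $u$ because $I$ is independent. Any $k$ of them are pairwise nonadjacent, so together with $u$ and $w$ they induce a copy of $H$, a contradiction. Hence a left neighbor $u\in I$ of $w$ has fewer than $k$ vertices of $I$ between itself and $w$, so $u$ is itself among the last $k$ vertices of $I$ before $w$. (For $k=0$, $H$ is a single edge and the graph is edgeless, which is trivial.)

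Therefore, when scanning the vertices $v_1 \prec \dots \prec v_n$ and deciding whether to add $v_i$ to the solution, it suffices to check that $v_i$ is nonadjacent to the last $\min(k,|I|)$ chosen vertices. The DP table is indexed by tuples $(a_1 \prec \dots \prec a_j)$ with $j \le k$, which represent the last $j$ chosen vertices (with $j<k$ meaning these are all chosen vertices). Each entry stores the maximum weight of an independent set whose last $j$ elements are exactly this tuple. A transition appends a new vertex $v \succ a_j$ nonadjacent to all of $a_1,\dots,a_j$, and shifts the window. By the claim, every set built this way is independent, since adjacency to earlier chosen vertices is impossible. Conversely, every independent set arises this way. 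The number of states is $\Oh(n^k)$ and each has $\Oh(n)$ transitions, giving running time $n^{\Oh(k)}$, which is polynomial for fixed $k$. The answer is the maximum over all states.

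The main obstacle is making the claim exactly right, i.e., the induction in the DP correctness: the claim must be applied to the partial set built so far, which is independent by induction. One must also handle the boundary case where fewer than $k$ vertices have been chosen. In that case the window contains all chosen vertices, so the check is exact.
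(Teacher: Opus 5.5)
Your proposal is correct and is essentially the paper's proof: a left-to-right dynamic program whose state is the last (at most) $k$ chosen vertices, where correctness rests on the same observation that a chosen vertex lying before this window and adjacent to the newly added vertex would, together with the $k$ window vertices, induce a copy of the forbidden graph. One small fix in your key claim: take $u$ to be the \emph{rightmost} left neighbour of $w$ in $I$ (equivalently, argue with the last $k$ vertices of $I$ before $w$), since for an arbitrary left neighbour neither ``none of them is adjacent to $w$'' nor the conclusion that $u$ itself lies among the last $k$ vertices is justified.
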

\begin{proof}
    Let $G$ be a \oneedgek-free ordered graph with consecutive vertices $v_1,\ldots,v_n$.
    Let $\wei$ be the weight function on $V(G)$.
    
    The algorithm is based on dynamic programming, processing the subsequent prefixes of the vertex set.
    We introduce a dynamic programming table $\mathsf{Tab}[i, I]$, where $i \in \{0,\ldots,n\}$ and $I \subseteq \{v_1,\ldots,v_i\}$ is an independent set of size at most $k$. 
    
    We say that an independent set $I^*$ is \emph{compatible} with $(i,I)$ if $I^* \subseteq \{v_1,\ldots,v_i\}$ and
    \begin{description}
        \item[Case A: $|I|=k$:] the last $k$ vertices of $I^*$ are $I$, 
        \item[Case B: $|I|<k$:] $I^* = I$.
    \end{description}
    The intended value in $\mathsf{Tab}[i, I]$ is the maximum weight of an independent set compatible with $(i,I)$.
    Clearly, having computed all entries of $\mathsf{Tab}[i, I]$, we can return $\max_{I} \mathsf{Tab}[n, I]$.

    We initialize $\mathsf{Tab}[0, \emptyset] =0$ and all other values to $-\infty$.
    Suppose now that $i \geq 1$ and we have computed all entries $\mathsf{Tab}[i-1, \cdot]$.
    We aim to compute $\mathsf{Tab}[i, I]$ for some fixed independent set $I$ contained in $\{v_1,\ldots,v_i\}$.

    If $|I| < k$, we simply store the weight of $I$.
    Now suppose that $|I| = k$.
    If $v_i \notin I$, we set $\mathsf{Tab}[i, I] = \mathsf{Tab}[i-1, I]$.
    So consider the case $v_i \in I$.
    For an independent set $I' \subseteq \{v_1,\ldots,v_{i-1}\}$, we write $I' \sqsubset I$ if either $I' = I \setminus \{v_i\}$ or $I'$ is obtained from $I$ by removing $v_i$ and adding a vertex $u$ such that $u \prec I$ and $uv_i \notin E$.
    We set 
    \begin{equation}
        \mathsf{Tab}[i, I] = \max_{ I' \sqsubset I}  \mathsf{Tab}[i-1, I'] + \wei(v_i).         \label{setdynprog}
    \end{equation}
    It is straightforward to verify that the value of $\mathsf{Tab}[i, I]$ is at least its intended value.
    Let us argue that it is not larger, i.e., the value set in step \eqref{setdynprog} corresponds to the weight of some independent set compatible with $(i,I)$.
    Let $I' \sqsubset I$ such that $\mathsf{Tab}[i, I] = \mathsf{Tab}[i-1, I'] + \wei(v_i)$.
    If $|I'| < k$, then the claim is clear.
    So suppose that $|I'|=k$, i.e., $I' = I \setminus \{v_i\} \cup \{u\}$ for some $u \prec I$.

    Let $I^*$ be a maximum-weight independent set in $\{v_1,\ldots,v_{i-1} \}$ compatible with $I'$.
    Clearly, $\mathsf{Tab}[i, I] = \wei(I^* \cup \{v_i\})$. We claim that $I^* \cup \{v_i\}$ is independent.
    As each of $I^*, I$ is independent, suppose that there is $w \in I^*$ which is adjacent to $v_i$.
    As $v_iu \notin E$ and $I^*$ is compatible with $I'$, we must have $w \prec u$.
    However, now the set of vertices $\{w\} \cup I' \cup \{v_{i} \}$ induces a copy of \oneedgek in $G$, a contradiction.

    The number of all entries in $\mathsf{Tab}[\cdot,\cdot]$ is bounded by $n^{k+1}$, i.e., is polynomial in $n$.
    As each entry is computed in polynomial time, the overall running time is also polynomial.
\end{proof}

%% file: aabb.tex
Now let us proceed to quasipolynomial-time algorithms, i.e., the proof of \cref{thm:main} (2).
Before we show the algorithm for the case that $H = \aakbb$ for any fixed $k$, let us present
an algorithm for the case $k=0$, i.e., for $H = \aabb$.
This algorithm is significantly simpler than the one for general $k$ and has a lower running time, so it serves as a good warm-up for the more general case.

\begin{theorem}\label{thm:aabb}
    \MWIS in $n$-vertex \aabb-free graphs can be solved in time $n^{\Oh(\log n)}$.    
\end{theorem}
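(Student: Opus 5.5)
The plan is the seagull-destroying branching sketched in the introduction. We branch on single vertices until no seagull remains, and then solve each remaining instance as a bipartite \MWIS instance in polynomial time, via the reduction to maximum matching/flow. The first step is to check that a seagull-free ordered graph is bipartite. In a graph with no seagull, no vertex has both a left and a right neighbor. Let $L$ be the set of vertices having a right neighbor and $R$ the set of vertices having a left neighbor; isolated vertices may go to either side. Then $L \cap R = \emptyset$, and every edge $uv$ with $u \prec v$ has $u \in L$ and $v \in R$. So $(L, V \setminus L)$ is a bipartition.

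The key structural step is that in an \aabb-free graph any two seagulls $S=(x,y,z)$ and $S'=(x',y',z')$ either share a vertex or are joined by an edge. Suppose they are vertex-disjoint and anticomplete. Consider the leftmost vertex among the six; say it lies in $S$, so it is $x$. If $y \prec x'$, then $x,y$ together with the edge $x'y'$ induce \aabb, since $x \prec y \prec x' \prec y'$ and all four cross pairs are non-edges. Otherwise $x' \prec y$. If $z \prec y'$, then either $x'y'$ or $y'z'$ lies entirely to the right of $y$ and $z$, and with the edge $yz$ it induces \aabb. If instead $y' \prec z$, use the edges $xy$ and $y'z'$. Some small ordering case analysis is needed here, and I expect this to be the main (though routine) obstacle.

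The fact to extract is: for two disjoint anticomplete seagulls, one of the six available edges of the first lies entirely left of one edge of the second, or vice versa. The endpoints are four distinct vertices, so this gives an induced \aabb. I would argue this cleanly by taking the edge $e \in \{xy,yz\}$ of $S$ and $e' \in \{x'y',y'z'\}$ of $S'$ with the earliest right endpoints. If neither edge is fully to the left of the other, their spans interleave, and then the remaining edge of one seagull reaches beyond and is disjoint in order from an edge of the other.

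Given this property, fix any seagull $S=(x,y,z)$ in the current graph, and let $\mathcal{S}$ be the set of all seagulls. Each seagull meets $N[x]\cup N[y]\cup N[z]$, so some $v \in \{x,y,z\}$ has $N[v]$ meeting at least $|\mathcal{S}|/3$ seagulls. We branch on $v$:
\begin{itemize}
\item \emph{$v$ is in the solution:} delete $N[v]$. This destroys at least a third of the seagulls.
\item \emph{$v$ is not in the solution:} delete $v$. This destroys $S$, so the number of seagulls drops by at least one.
\end{itemize}
Both branches keep the graph \aabb-free, as induced subgraphs, and together they are exhaustive for the optimum.

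For the running time, let $s \le n^3$ be the number of seagulls. Along any root-to-leaf path of the branching tree, the number of "include" steps is $\Oh(\log s)=\Oh(\log n)$, because each one multiplies $s$ by at most $2/3$. The number of "exclude" steps is at most $n$. A leaf is therefore determined by choosing which of at most $n+\Oh(\log n)$ steps are include steps, so there are $n^{\Oh(\log n)}$ leaves. Each node takes polynomial time to process, since counting seagulls and neighborhood intersections is polynomial, and each leaf is bipartite and solved in polynomial time. This gives the bound $n^{\Oh(\log n)}$.
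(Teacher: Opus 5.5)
Your proposal is correct and follows the paper's proof essentially step by step: seagull-free graphs are bipartite, two seagulls in an \aabb-free graph share a vertex or are adjacent, some vertex of a fixed seagull has a closed neighborhood hitting a third of all seagulls, and include/exclude branching on that vertex gives $n^{\Oh(\log n)}$ leaves (your path-counting argument is just another way of solving the paper's recurrence $F(\mu)\le F(\mu-1)+F(2\mu/3)$). One small slip is in your first case split for the disjointness lemma: in the subcase $x'\prec y'\prec y$ under ``$y'\prec z$'' you need the edges $x'y'$ and $yz$, not $xy$ and $y'z'$; the paper avoids any case analysis by comparing the middle vertices $y$ and $y'$ (taking $\{x,y,y',z'\}$ if $y\prec y'$ and $\{x',y',y,z\}$ otherwise), which is exactly what your ``clean'' version amounts to.
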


Let us start with some observations.

Let $G$ be an ordered graph.
A \emph{seagull} is a triple $(x,y,z)$ of distinct vertices of $G$ such that $x \prec y \prec z$
and $xy, yz \in E(G)$. Thus, $G[\{x,y,z\}]$ is either a triangle or \pthree.

\begin{lemma}\label{lem:seagullbipartite}
    An ordered graph with no seagulls is bipartite.
\end{lemma}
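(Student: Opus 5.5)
The plan is to exhibit an explicit bipartition. If there are no seagulls, then no vertex has both a left neighbor and a right neighbor. Indeed, if $y$ had a left neighbor $x \prec y$ and a right neighbor $z \succ y$, then $(x,y,z)$ would be a seagull.

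Using this, I will partition $V(G)$ into two sets:
\begin{itemize}
\item $L$, the vertices that have at least one right neighbor;
\item $R = V(G)\setminus L$, the vertices with no right neighbor. By the observation above, every vertex of $L$ has no left neighbor.
\end{itemize}
Isolated vertices fall into $R$, which is harmless.

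It remains to check that both $L$ and $R$ are independent. Take any edge $uv$ with $u \prec v$. Then $u$ has a right neighbor, namely $v$, so $u \in L$. Also $v$ has a left neighbor, namely $u$, so $v$ has no right neighbor and therefore $v \in R$. Hence every edge has exactly one endpoint in $L$ and one in $R$, and $(L,R)$ is a bipartition of $G$.

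No step is a genuine obstacle. The only point needing care is that the seagull definition does not require $xz \notin E(G)$; this is exactly what makes "no seagulls" equivalent to "no vertex has neighbors on both sides". The argument also yields an ordering-compatible witness that the algorithm can compute in linear time, since membership in $L$ or $R$ depends only on the direction of each vertex's incident edges.
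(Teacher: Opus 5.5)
Your proof is correct and is essentially the paper's argument: the paper also uses the fact that, with no seagulls, no vertex has neighbors on both sides, and it splits the non-isolated vertices into those with all neighbors to the left and those with all neighbors to the right, each an independent set. The only difference is cosmetic: you place isolated vertices in $R$, while the paper first discards them.
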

\begin{proof}
    Suppose that $G$ is an ordered graph with no seagulls.
    We can assume that $G$ has no isolated vertices, as their existence does not influence bipartiteness.
    Thus, we can partition the vertices of $G$ into the following sets:
    \begin{align*}
        X =  \{ v ~|~ N(v) \prec v\} \quad \text{and} \quad
        Y =  \{ v ~|~ N(v) \succ v\}.
    \end{align*}
    We observe that each of $X$ and $Y$ is an independent set.
    Thus, $G$ is indeed bipartite.
\end{proof}

\begin{lemma}\label{lem:seagulldisjoint}
    Let $G$ be a \aabb-free ordered graph.
    Then, $G$ has no two seagulls that are disjoint and nonadjacent.
\end{lemma}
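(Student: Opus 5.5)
We argue by contradiction: suppose $(x,y,z)$ and $(x',y',z')$ are two seagulls with $\{x,y,z\}\cap\{x',y',z'\}=\emptyset$ and no edges between the two triples. The goal is to find two edges, one from each seagull, whose four endpoints appear in the order $a_1 \prec a_2 \prec b_1 \prec b_2$ with $a_1a_2$ and $b_1b_2$ edges. Since there are no edges between the triples, these four vertices then induce exactly \aabb, which contradicts $G$ being \aabb-free.

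By symmetry we may assume $y \prec y'$. Each seagull offers two edges: the left edge $xy$ with right endpoint $y$, and the right edge $yz$ with left endpoint $y$. Similarly $(x',y',z')$ has the left edge $x'y'$ and the right edge $y'z'$. The natural candidates are the left edge $xy$ of the first seagull and the right edge $y'z'$ of the second. We have $x \prec y$ and $y' \prec z'$. Since $y \prec y'$, the chain
\[
x \prec y \prec y' \prec z'
\]
holds, and all four vertices are distinct because the triples are disjoint. Thus $\{x,y,y',z'\}$ contains the edges $xy$ and $y'z'$ in the required order.

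It remains to check that there are no other edges among these four vertices. The pairs $xy'$, $xz'$, $yy'$ and $yz'$ each join a vertex of the first triple to a vertex of the second, so none of them is an edge by the nonadjacency assumption. The pairs $xy$ and $y'z'$ are the intended edges. Hence $G[\{x,y,y',z'\}]$ is an induced copy of \aabb, a contradiction.

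The only point needing care is choosing which edge to take from each seagull. The middle vertex of a seagull is the right endpoint of its left edge and the left endpoint of its right edge, so ordering the two middle vertices immediately yields the separated pair of edges. Beyond that, the argument is a direct check.
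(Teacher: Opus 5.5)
Your proof is correct and is essentially the paper's argument. The paper also compares the middle vertices $y$ and $y'$ and takes $\{x,y,y',z'\}$, or symmetrically $\{x',y',y,z\}$, obtaining an induced copy of \aabb because the two triples are disjoint and nonadjacent.
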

\begin{proof}
    For contradiction, suppose that $G$ has two disjoint and nonadjacent seagulls $(x,y,z)$ and $(x',y',z')$.
    If $y' \prec y$, then $\{x',y',y,z\}$ is an induced copy of \aabb in $G$.
    If $y' \succ y$, then $\{x,y,y',z'\}$ is an induced copy of \aabb in $G$.
    In both cases we reach a contradiction, which completes the proof.
\end{proof}

Combining \cref{lem:seagullbipartite} and \cref{lem:seagulldisjoint}, we obtain the following.

\begin{corollary}\label{cor:seagull}
    Let $G$ be a \aabb-free ordered graph that is not bipartite.
    Then there is $v \in V(G)$ that belongs to a seagull and such that $N[v]$ intersects one third of all seagulls.
    Furthermore, such $v$ can be found in polynomial time.
\end{corollary}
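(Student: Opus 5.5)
Since $G$ is not bipartite, \cref{lem:seagullbipartite} gives at least one seagull; fix one, say $(x,y,z)$. We take $v$ to be one of $x,y,z$, and the plan is an averaging argument using \cref{lem:seagulldisjoint}.

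Let $\cS$ be the set of all seagulls of $G$. Take any $(x',y',z') \in \cS$. By \cref{lem:seagulldisjoint}, the seagulls $(x,y,z)$ and $(x',y',z')$ either share a vertex or are joined by an edge. In both cases some vertex of $\{x',y',z'\}$ lies in $N[x] \cup N[y] \cup N[z]$. Hence
\[
\cS = \cS_x \cup \cS_y \cup \cS_z,
\]
where $\cS_w$ is the set of seagulls whose vertex set intersects $N[w]$. One of the three sets therefore has size at least $|\cS|/3$. The corresponding vertex $w \in \{x,y,z\}$ belongs to the seagull $(x,y,z)$, and $N[w]$ intersects at least one third of all seagulls, as required.

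For the algorithmic part, we enumerate all $\Oh(n^3)$ triples and list $\cS$ in polynomial time. If $\cS$ is empty, the graph is bipartite, which is excluded by assumption. Otherwise we pick any seagull $(x,y,z)$, compute $|\cS_x|, |\cS_y|, |\cS_z|$ directly, and output the vertex achieving the maximum. The whole procedure runs in polynomial time.

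There is no real obstacle here. The only care needed is to note that "joined by an edge" means some vertex of one seagull is adjacent to some vertex of the other, so it indeed lands in the closed neighborhood of one of $x,y,z$.
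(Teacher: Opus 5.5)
Your proposal is correct and matches the paper's proof step for step. You fix a seagull $(x,y,z)$ and use \cref{lem:seagulldisjoint} to see that $N[\{x,y,z\}]$ meets every seagull. Averaging over $x,y,z$ then gives the required vertex, and enumerating the $\Oh(n^3)$ seagulls gives the polynomial running time.
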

\begin{proof}
    Since $G$ is not bipartite, by \cref{lem:seagullbipartite} it contains a seagull $(x,y,z)$.
    By \cref{lem:seagulldisjoint} we know that $N[\{x,y,z\}]$ intersects all seagulls in $G$.
    Thus, by averaging,  there is $v \in \{x,y,z\}$ such that $N[v]$ intersects one third of all seagulls.    
    The running time bound follows from the fact that the number of all seagulls is $\Oh(n^3)$,
    so exhaustive enumeration works in polynomial time.
\end{proof}

Now we are ready to present our algorithm.

\begin{proof}[Proof of \cref{thm:aabb}]
    Let $(G,\wei)$ be an instance of \MWIS, where $G$ has $n$ vertices and is \aabb-free.

    Our algorithm is a recursive procedure.
    The base case is that the instance graph is bipartite.
    Then, we solve the problem in polynomial time using the K\H{o}nig-Egerv\'ary theorem and flow techniques~\cite{LovaszPlummer1986_MatchingTheory}. Note that this in particular covers the case that $n \leq 2$.

    Thus, assume that $G$ is not bipartite.
    Let $v$ be a vertex given by applying \cref{cor:seagull} on $G$.
    We branch on $v$, i.e., we call the algorithm recursively for the instance $G-v$ and for the instance $G-N[v]$.
    In the first branch we seek for independent sets that do not contain $v$, and in the second branch we seek for independent sets that contain $v$.
    We return the maximum of the value found in the first branch, and the value found in the second branch plus $\wei(v)$.

    The correctness of the algorithm is obvious. Let us argue about the running time.
    We will measure the potential of the instance by the number $\mu$ of seagulls.
    If $\mu = 0$, the graph is bipartite and thus the instance can be solved in polynomial time.
    The number of instances created by branching is given by the recursive inequality
    \[
        F(\mu) \leq F(\mu-1) + F(2\mu/3),
    \]
    which solves to $\mu^{\Oh(\log \mu)}$ (see, e.g.,~\cite{DBLP:journals/siamdm/DebskiPR22} for similar recursions and how to solve them).
    Recalling that $\mu \leq n^3$, we conclude that overall running time of the algorithm is bounded by $n^{\Oh(\log n)}$. 
\end{proof}

%% file: aacbb.tex
We are ready to show a quasipolynomial-time algorithm for the case $H = \aakbb$ for any fixed $k$.

\begin{theorem}\label{thm:aakbb}
For every fixed $k$, \MWIS in $n$-vertex \aakbb-free graphs can be solved in time $n^{\Oh(\log^2 n)}$.    
\end{theorem}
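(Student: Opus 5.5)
We follow the two-level branching scheme sketched in the technical overview. The algorithm works on \emph{structured instances} $(G,\wei,X,Y,Z)$, where $V(G)=X\cup Y\cup Z$ with $X\prec Y\prec Z$ and both $X$ and $Z$ independent; initially $X=Z=\emptyset$ and $Y=V(G)$. If $Y=\emptyset$, then $G$ is bipartite with parts $X$ and $Z$, and we solve it in polynomial time via the K\H{o}nig--Egerv\'ary theorem, exactly as in the proof of \cref{thm:aabb}. Otherwise we split $Y$ into consecutive halves $Y_L\prec Y_R$ and guess the first $k$ vertices $S$ of a fixed optimum solution $I^*$ inside $Y_R$, or fewer if $|I^*\cap Y_R|<k$. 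This gives $n^{\Oh(k)}$ branches. In each branch we take $S$ into the solution, delete $N[S]$, and delete every vertex of $Y_R$ preceding the last vertex of $S$ that is not in $S$. Call an edge \emph{left-internal} if it lies in $G[X\cup Y_L]$ and \emph{right-internal} if it lies in $G[(Y_R\setminus S)\cup Z]$ with both endpoints after $S$.

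The key structural claim is the following. Let $ab$ be left-internal and $cd$ right-internal, and suppose $N[\{a,b\}]\cap\{c,d\}=\emptyset$. Then the $k$ vertices of $S$ lie strictly between $b$ and $c$. Since $X$ is independent, $b\in Y_L$, and $b\prec S\prec c$. Moreover $S$ is independent and anticomplete to the surviving vertices, and in particular anticomplete to $c$ and $d$. If $S$ is also anticomplete to $\{a,b\}$, then $\{a,b\}\cup S\cup\{c,d\}$ induces \aakbb, a contradiction. So we must handle vertices of $N(S)$ on the left. I would take the surviving vertices of $Y_L$ to be only those nonadjacent to $S$. This is valid because $S\subseteq I^*$, so the true solution avoids $N(S)$. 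With that, every surviving left-internal edge is anticomplete to $S$, and the claim follows: for every left-internal edge $ab$ and every right-internal edge $cd$, the set $N[\{a,b\}]$ meets $\{c,d\}$. If $|I^*\cap Y_R|<k$, the argument is simpler: we delete $Y_R\setminus S$ altogether, which empties the right side.

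Inner branching works as follows. Let $\mu_L$ and $\mu_R$ be the numbers of left- and right-internal edges. While both are positive, suppose $\mu_L\le\mu_R$ (the other case is symmetric). Take any left-internal edge $ab$. By the claim, one of $a,b$ has a closed neighbourhood hitting at least half of the right-internal edges. Branch on that vertex $v$. Excluding $v$ destroys at least one internal edge. Including $v$ removes $N[v]$ and hence at least $\mu_R/2\ge(\mu_L+\mu_R)/4$ internal edges. With potential $\mu=\mu_L+\mu_R\le n^2$, the recursion $F(\mu)\le F(\mu-1)+F(3\mu/4)$ gives $n^{\Oh(\log n)}$ leaves, as in \cref{thm:aabb}. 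At a leaf, one side has no internal edges. If the left side is free of them, $X\cup Y_L$ is independent, and we form a new structured instance with $X'=X\cup Y_L$. If the right side is free of them, $(Y_R\setminus S)\cup Z$ restricted to the survivors is independent and becomes $Z'$. In both cases $|Y'|\le |Y|/2$.

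For the running time, the outer recursion halves $|Y|$, so it has depth $\Oh(\log n)$. Each level costs $n^{\Oh(k)}\cdot n^{\Oh(\log n)}$ branches, giving $n^{\Oh(\log^2 n)}$ in total. I expect the main obstacle to be the bookkeeping in the guessing step. We must verify that after deleting $N(S)$ and the skipped vertices of $Y_R$, the remaining instance still contains $I^*\setminus S$. We must also handle the boundary cases: internal edges with an endpoint in $X$ or $Z$, and the vertices of $Y_R$ lying before or among $S$. These cases must be handled so that the ``on opposite sides of $S$'' argument producing the induced \aakbb always applies.
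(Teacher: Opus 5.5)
Your proposal is correct and follows the paper's proof essentially step by step. It uses the same structured instances $(X,Y,Z)$ and the same guess of the first $k$ solution vertices in $Y_R$, with deletion of $N[S]$ and of the skipped part of $Y_R$; the same ``every left-internal edge is adjacent to every right-internal edge'' claim via an induced copy of the forbidden graph; the same branching on an endpoint of an edge from the sparser side with $F(\mu)\le F(\mu-1)+F(3\mu/4)$; and the same absorption of the edge-free side into $X$ or $Z$. The bookkeeping you flag as the main obstacle is already settled by those deletions: $I^*\setminus S$ avoids both $N[S]$ and the skipped vertices of $Y_R$, and every surviving right vertex lies after $S$.

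The only detail to tighten is that the new workspace has size at most $\lceil |Y|/2\rceil$, not $|Y|/2$. So a workspace of size $1$ needs separate handling, e.g., by branching on its single vertex as the paper does; otherwise the measure may fail to drop.
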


\begin{proof}
Without loss of generality, let us assume that $k \geq 1$.
Let $(G,\wei)$ be an instance of \MWIS such that $G=(V,E)$ is an $n$-vertex \aakbb-free ordered graph.
Both $G$ and $\wei$ remain unchanged throughout the course of the algorithm, and we will represent certain induced subgraphs of $G$ by their vertex sets.

\paragraph{Structured instances.}
Throughout the algorithm, we will work with instances of \MWIS equipped with some additional structure.
A \emph{structured instance} of \MWIS is a triple $(X,Y,Z)$, where:
\begin{itemize}
\item $X,Y,Z$ are pairwise disjoint subsets of $V$,
\item $X \prec Y \prec Z$, and
\item $X,Z$ are independent sets.
\end{itemize}
For a structured instance $\cI = (X,Y,Z)$, we define $\alpha(\cI):=\alpha(G[X\cup Y\cup Z],\wei)$.

The set $Y$ of a structured instance $(X,Y,Z)$ is called the \emph{workspace}. We will measure the size of a structured instance by the size of its workspace.
We can see the initial instance $(G,\wei)$ of \MWIS as the structured instance $\cI_0 := (\emptyset,V,\emptyset)$ with measure $n$.
On the other hand, a structured instance of measure 0 is bipartite, thus it can be solved in polynomial time.

For structured instances $\cI = (X,Y,Z)$ and $\cI' = (X',Y',Z')$,
we say that $\cI'$ is a \emph{subinstance} of $\cI$ if $X \cap (X' \cup Y' \cup Z') \subseteq X'$ and $Z \cap (X' \cup Y' \cup Z') \subseteq Z'$.
In other words, $\cI'$ was obtained from $\cI$ by deleting some vertices and moving some vertices from the workspace to the independent sets. Note that the measure of $\cI'$ does not exceed the measure of $\cI$.

The main idea of the algorithm is to reduce solving our original structured instance $(\emptyset,V,\emptyset)$ to solving a number of structured subinstances, each of measure 0. We will do this by branching on carefully chosen vertices.

\paragraph{Main branching.} The following claim encapsulates the main step of the algorithm.

\begin{claim}\label{clm:line-outer}
Given a structured instance $\cI = (X,Y,Z)$ of measure $\mu > 1$,
in time $\mu^{\Oh(\log \mu)}$ we can construct a family $\mathscr{I}$ of $\mu^{\Oh(\log \mu)}$ pairs $(\cI',\ell')$ such that:
\begin{enumerate}[(P1)]
	\item for every $(\cI',\ell') \in \mathscr{I}$, $\cI'$ is a structured subinstance of $\cI$ and $\ell' \in \mathbb{Q}_{\geq 0}$, \label{prop:subinst}
    \item for every $(\cI',\ell') \in \mathscr{I}$, the measure of $\cI'$ is at most $\lceil \mu/2 \rceil$, and \label{prop:measuredrops}
    \item $\alpha(\cI)=\max_{(\cI',\ell')\in \mathscr{I}} ( \alpha(\cI')+\ell' )$. \label{prop:solution}
\end{enumerate}
\end{claim}
Before we prove \cref{clm:line-outer}, let us first argue that it is sufficient to prove the theorem.
We construct the recursion tree whose nodes are labeled by pairs $(\cI,\ell)$, where $\cI$ is a subinstance of $\cI_0$ and $\ell \in \mathbb{Q}_{\geq 0}$.
The root of this tree is labeled by $(\cI_0,0)$.
Consider a node $d$ that has not been processed yet; let it be labeled with the pair $(\cI,\ell)$.
If $\cI$ has a measure of at most 1, we do not do anything; $d$ becomes a leaf of the recursion tree.
Otherwise, we apply \cref{clm:line-outer} to $\cI$, obtaining the family $\mathscr{I}$.
For each $(\cI',\ell') \in \mathscr{I}$, we create a new child node of $d$ and label it with $(\cI',\ell')$.
We perform this construction exhaustively.

Note that all leaves of the recursion tree correspond to instances of measure at most 1 and they can be solved in polynomial time.
Indeed, if the measure is 0, the instance is bipartite.
If the measure is 1, we can guess whether the unique vertex of the workspace belongs to the sought-for optimum solution and continue with an instance of measure 0.
By processing the tree from leaves to the root and using \ref{prop:solution} from \cref{clm:line-outer} at each step, we can then solve $\cI_0$ (equivalently, $(G,\wei)$).

Now let us argue about the running time.
The number of nodes in the recursion tree is proportional to the number of leaves,
which is given by the following inequality:
\[
	F(\mu) \leq \mu^{\Oh(\log \mu)} \ F(\lceil \mu/2 \rceil) = \mu^{\Oh(\log^2 \mu)}.
\]
As local computation in each node (of measure $\mu$) requires time $\mu^{\Oh(\log \mu)}$, and the measure of $\cI_0$ is $n$,
we obtain that the overall running time is bounded by $n^{\Oh(\log^2 n)}$, as claimed.

\paragraph{Proof of \cref{clm:line-outer}.} Thus, we are left with proving \cref{clm:line-outer}.
It will be convenient to define a refined variant of a structured instance.
A \emph{refined instance} is a quadruple $(X,Y_L,Y_R,Z)$, where $Y_L,Y_R$ are disjoint sets such that $Y_L \prec Y_R$ and $(X, Y_L \cup Y_R, Z)$ is a structured instance.
In other words, a refined instance is obtained from a structured instance $(X,Y,Z)$ by splitting $Y$ into a prefix $Y_L$ and a suffix $Y_R$. Having this in mind, we say that $(X,Y_L,Y_R,Z)$ is a \emph{split} of $(X,Y,Z)$.
We also define $\alpha((X,Y_L,Y_R,Z)) = \alpha((X,Y,Z))$.
We say that a refined instance $(X',Y'_L,Y'_R,Z')$ is a subinstance of a refined instance $(X,Y_L,Y_R,Z)$ if $X' \subseteq X$, $Y'_L \subseteq Y_L$, $Y'_R \subseteq Y_R$, and $Z' \subseteq Z$.

Let $\cI = (X,Y,Z)$ be a structured instance of measure $\mu > 0$, i.e., with $Y \neq \emptyset$.

\subparagraph{Creating a family $\mathscr{J}$ of splits.} 
Let $Y = \{y_1,\ldots,y_m\}$, where $y_1 \prec y_2 \prec \ldots \prec y_m$, and let $r=\lceil m/2 \rceil$.
Define $Y_L=\{y_1,\ldots,y_r\}$ and $Y_R=\{y_{r+1},\ldots,y_m\}$, and note that $\cJ_0= (X,Y_L,Y_R,Z)$ is a split of $\cI$.
As an intermediate step towards the construction of $\mathscr{I}$, we will construct an auxiliary family $\mathscr{J}$.
Each element of $\mathscr{J}$ is a pair $(\cJ,\ell)$, where $\cJ$ is a refined subinstance of $\cJ_0$, and $\ell \in \mathbb{Q}_{\geq 0}$.

First, we exhaustively guess the first $k$ vertices of the sought-after optimum solution $I^*$ that are in $Y_R$, or the whole set $I^* \cap Y_R$ if $|I^* \cap Y_R| < k$.
More formally, for every independent set $I \subseteq Y_R$ of size at most $k$, we define a refined instance $\cJ^I=(X^I,Y_L^I,Y_R^I,Z^I)$ which is a subinstance of $\cJ$, as follows:
\begin{itemize}
\item $X^I = X \setminus N(I)$, $Y_L^I = Y_L \setminus N(I)$, $Z^I = Z \setminus N(I)$,
\item if $|I| < k$, then $Y^I_R = \emptyset$,
\item if $|I| = k$ and $y^*$ is the last vertex of $I$, then $Y^I_R$ is obtained from $Y_R$ by removing  $N[I]$ and all vertices that precede $y^*$.
\end{itemize}
Furthermore, we set $\ell^I = \wei(I)$ and include the pair $(\cJ^I,\ell^I)$ into $\mathscr{J}$.
This concludes the construction of $\mathscr{J}$.
Clearly, the size of the constructed family is $\Oh(\mu^k)$, i.e., polynomial in $\mu$ as $k$ is a constant.
Furthermore, $\mathscr{J}$ can be constructed in polynomial time.

By the construction, it is clear that
\begin{equation}\label{eq:J}
	\alpha(\cI) = \alpha(\cJ_0)=\max_{ (\cJ,\ell) \in \mathscr{J} } \alpha(\cJ) + \ell.
\end{equation}

Now let us analyze the structure of instances in $\mathscr{J}$.
Fix any element $(\cJ,\ell) \in \mathscr{J}$, where $\cJ = (X',Y'_L,Y'_R, Z')$.
Since $\cJ$ is a subinstance of $\cJ_0$ and $k \geq 1$, we have that
\begin{equation}\label{eq:sizeY}
|Y'_L| \leq \lceil \mu/2 \rceil \quad \text{ and } \quad |Y'_R| \leq \lceil \mu/2 \rceil.
\end{equation}
Now, define $\Left=X' \cup Y'_L$ and $\Right=Y'_R \cup Z'$, and $E_L = E(G[\Left])$ and $E_R = E(G[\Right])$.
Let us consider the case that one of the sets $E_L,E_R$ is empty.
Suppose that $E_L = \emptyset$; the case $E_R = \emptyset$ is analogous.
In such a case $\cI' = (X' \cup Y'_L,Y'_R,Z')$ is a structured instance,
which is a subinstance of $\cI$.
Furthermore, as $|Y'_R| \leq  \lceil |Y|/2 \rceil$ by \eqref{eq:sizeY},
we see that the measure of this structured instance is at most $\lceil |Y|/2 \rceil =\lceil \mu/2 \rceil$.
Thus, the pair $(\cI',\ell')$ can be inserted to the family $\mathscr{I}$ that we construct.
We call $\cI'$ obtained from $\cJ$ that way a \emph{derived structure instance}.

Thus, the overall strategy is to branch on carefully selected vertices in order to make one of the sets $E_L,E_R$ empty.
This justifies measuring the size of a refined subinstance $\cJ$ by the value $\xi(\cJ) = |E_L| + |E_R|$. 

The crucial structural observation is as follows.

\begin{claim}\label{clm:dominating-left-right}
	Let $u,v\in \Left$ (resp.,  $\Right$) be such that $uv\in E$.
    Then, $\Right\setminus N[\{u,v\}]$ (resp.,  $\Left\setminus N[\{u,v\}]$) is an independent set.
\end{claim}
\begin{claimproof}
Let $I$ be the independent set for which the pair $(\cJ,\ell) \in \mathscr{J}$ was created.
Let $uv$ be an edge with both endpoints in $\Left$ (the other case is symmetric).
For contradiction, suppose that there are $u',v'\in \Right\setminus N[\{u,v\}]$ such that $u'v'\in E$.
Since $Z' \subseteq Z$ is independent, at least one of $u',v'$ must belong to $Y'_R$.
In particular, $Y'_R \neq \emptyset$, which means that $|I|=k$.

By the construction of $\cJ$, we have that $\Left \prec I \prec \Right$ and $I$ is nonadjacent to $\Left\cup \Right$.
Therefore, the set $\{u,v\}\cup I \cup \{u',v'\}$ induces a copy of \aakbb in $G$, a contradiction.
\end{claimproof}

\subparagraph{Internal branching.} 
For each $(\cJ,\ell) \in \mathscr{J}$, we perform branching described below.
Again, we can imagine this process as building a recursion tree $\mathsf{T}$, whose nodes are labeled with pairs $(\cJ',\ell')$,
where $\cJ'$ is a subinstance of $\cJ$ and $\ell' \in \mathbb{Q}_{\geq 0}$.
The root of the tree is labeled by $(\cJ,\ell)$.

Now, consider a node $d$ of $\mathsf{T}$ that has not been processed yet.
Let its label be $(\cJ',\ell')$, where $\cJ' = (X',Y'_L,Y'_R,Z')$.
We define $\Left,\Right,E_L,E_R$ for $\cJ'$ as before.
If $E_L = \emptyset$ or $E_R = \emptyset$, we do nothing and $d$ becomes a leaf of the recursion tree.

So suppose otherwise, and by symmetry, assume that $|E_L| \geq |E_R|$. This means that $|E_L| \geq \xi(\cJ')/2$.
Pick any edge $uv \in E_R$.
By \cref{clm:dominating-left-right}, we know that $N[\{u,v\}]$ intersects all edges in $E_L$.
Consequently, there is $x \in \{u,v\}$ such that $N[x]$ intersects at least half of the edges in $E_L$, i.e., at least one quarter of the edges in $E_L \cup E_R$.

We add two children to $d$ in $\mathsf{T}$.
One of them corresponds to \emph{not including} $x$ in the solution and is labeled by the instance $(X',Y'_L,Y'_R \setminus \{x\}, Z' \setminus \{x\})$ and number $\ell'$.
The second child corresponds to \emph{including} $x$ in the solution.
It is labeled by the instance $(X' \setminus N[x], Y'_L \setminus N[x], Y'_R \setminus N[x], Z' \setminus N[x])$, and the number $\ell' + \wei(x)$.

Note that the measure of both created instances drops compared to the measure of $\cJ'$.
On one hand, removing just $x$ decreases the measure by at least 1, as it destroys the edge $uv$.
On the other hand, removing $N[x]$ from the graph decreases the measure to at most $|E_L|/2 + |E_R| \leq 3\xi(\cJ')/4$.

We proceed in this manner exhaustively.
Now, having built $\mathsf{T}$, consider a leaf node and let it be labeled with $(\cJ',\ell')$, where $\cJ' = (X',Y'_L,Y'_R,Z')$. As either $X' \cup Y'_L$ or $Y'_R \cup Z'$ is independent, there is a structured instance  $\cI'$ derived from $\cJ'$. We include $(\cI',\ell')$ into $\mathscr{I}$.
This completes the definition of this family.

\subparagraph{Correctness.} 
Let us verify that $\mathscr{I}$ satisfies all properties listed in \cref{clm:line-outer}.
The size of $\mathscr{I}$ is equal to the number of leaves in all the recursion trees created for all members of $\mathscr{J}$.
The number of leaves of one such tree is given by a recursion
\[
	R(\xi) \leq R(\xi-1) + R(3\xi/4) = \xi^{\Oh(\log \xi)}.
\]
Since $\xi = \Oh(\mu^2)$, we obtain that the number of leaves on one recursion tree is $\mu^{\Oh(\log \mu)}$.
As the number of all trees is bounded by $|\mathscr{J}| = \Oh(\mu^k)$, we conclude that the size of $\mathcal{I}$ is $\mu^{\Oh(\log \mu)}$, as required. Furthermore, the family can clearly be enumerated in time $\mu^{\Oh(\log \mu)}$.

Property \ref{prop:subinst} from \cref{clm:line-outer} follows directly from the construction.
Property \ref{prop:measuredrops} follows from \eqref{eq:sizeY} and the definition of a derived structured instance.
Finally, \ref{prop:solution} follows easily from combining \eqref{eq:J} with a standard bottom-up induction on a recursion tree. This completes the proof.
\end{proof}

%% file: abab.tex
In this section, we show a quasipolynomial-time algorithm for the case $H = \ababk$ for any fixed $k$.
The general approach is somewhat similar to the one for the case $H = \aakbb$, but the details are significantly more involved.

\begin{theorem}\label{thm:abab}
   For every fixed $k$, \MWIS in $n$-vertex \ababk-free graphs can be solved in time $n^{\Oh(\log^3 n)}$.    
\end{theorem}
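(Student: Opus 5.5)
We follow the outline from the technical overview. We work with \emph{chains}. A chain is a sequence $(L_1,\ldots,L_t)$ of pairwise disjoint segments with $L_1\prec L_2\prec\cdots\prec L_t$. Edges of the current subgraph may join vertices only inside one link, between consecutive links, or between $L_1$ and $L_t$. Each link carries a type in $\{A,B,C\}$, recording which third of the original vertex set it came from. Initially $V$ is split into three consecutive blocks $A\prec B\prec C$ of sizes about $n/3$, and $(A,B,C)$ is trivially a chain, since every edge joins two cyclically consecutive links. As in the proof of \cref{thm:aakbb}, each branch carries a pair $(\text{instance},\ell)$ with $\ell$ the weight already committed, and correctness follows by bottom-up induction.

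\textbf{Refinement lemma.} The core step is this. Given a chain and a chosen link $L_i$, we split $L_i$ at its median into $L_i^1\prec L_i^2$. In time $n^{\Oh(\log n)}$ we then produce $n^{\Oh(\log n)}$ branches, in each of which $(\ldots,L_{i-1},L_i^1,L_i^2,L_{i+1},\ldots)$ is again a chain (after deletions). The unwanted edges are those that, after the split, jump over a link. Since $L_i$ borders only $L_{i-1}$ and $L_{i+1}$, they are of two kinds: edges from $L_{i-1}$ (or, cyclically, from $L_t$ when $i=1$) to $L_i^2$, and edges from $L_i^1$ to $L_{i+1}$.

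For $k=0$, take an unwanted edge $e$ of the first kind and one $f$ of the second kind. Their endpoints are ordered $L_{i-1}\prec L_i^1\prec L_i^2\prec L_{i+1}$. Hence $e=ab$ with $a\in L_{i-1}$, $b\in L_i^2$, and $f=cd$ with $c\in L_i^1$, $d\in L_{i+1}$, and $a\prec c\prec b\prec d$. If the two edges are disjoint and nonadjacent, they induce \abab. So every edge of one kind meets the closed neighbourhood of every edge of the other kind, exactly as in \cref{clm:dominating-left-right}. Now pick an edge of the sparser kind and let $x$ be one of its endpoints whose $N[x]$ hits at least a quarter of all unwanted edges. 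Branching on $x$ gives the recursion $R(\xi)\le R(\xi-1)+R(3\xi/4)$, where $\xi\le n^2$ counts unwanted edges, so the lemma produces $n^{\Oh(\log n)}$ branches. The branching stops once one kind is empty. If, say, no edge of the second kind remains, I would instead try to split $L_i$ at a point chosen by the remaining edges of the first kind, so that they land inside $L_{i-1}\cup L_i^1$. The point is that the new links must still roughly halve, and I expect to need care here. If that fails, I would use several rounds of the same branching, each time fixing one half.

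For general $k$, we first guess the $k$ vertices of the optimum solution that lie in the gaps between the relevant endpoints, as in \cref{thm:aakbb}. This makes the padding vertices available and costs a factor $n^{\Oh(k)}$ per refinement. I expect this to be the main technical obstacle: \ababk has three gaps, and the guessed vertices must sit between $a$ and $c$, between $c$ and $b$, and between $b$ and $d$, all at once.

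\textbf{Outer scheme.} We repeatedly refine the smallest link of type $A$, then the smallest of type $B$, then the smallest of type $C$. Each refinement halves the smallest link of its type. Therefore after $\Oh(\log n)$ rounds there are empty links $L_a,L_b,L_c$, one of each type. Because edges join only consecutive links or $L_1$ with $L_t$, deleting these empty links disconnects the cyclic chain into three arcs. Each arc avoids an entire original block, so it has at most about $2n/3$ vertices. The arcs are pairwise nonadjacent, so we solve each recursively and sum the values; the arc containing the wrap-around edge $L_tL_1$ is handled as a chain of its own. Finally, each recursive call re-initialises an $A,B,C$ partition of its own vertex set, which uses only heredity of \ababk-freeness.

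\textbf{Running time.} One refinement costs $n^{\Oh(\log n)}$ branches. There are $\Oh(\log n)$ refinements per level, giving $n^{\Oh(\log^2 n)}$ leaves per level. The recursion depth is $\Oh(\log n)$, since sizes shrink by the factor $2/3$. Altogether this yields $T(n)\le n^{\Oh(\log^2 n)}\cdot 3\,T(2n/3)=n^{\Oh(\log^3 n)}$.
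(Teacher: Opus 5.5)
Your outer scheme (one link type at a time, three empty links cutting the cyclic chain into arcs of size at most $2\lceil n/3\rceil$) and your time accounting match the paper. The gap is in the refinement lemma, which is the heart of the proof.

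\textbf{The termination problem.} With the cut fixed at the median, your branching can only empty \emph{one} kind of unwanted edge, because the hitting argument needs a surviving edge of the other kind as witness. Once, say, all edges from $L_i^1$ to $L_{i+1}$ are gone, the remaining edges from $L_{i-1}$ into $L_i^2$ still violate the chain property, and your potential gives nothing further. The fallback you sketch is in effect a binary search of $\Oh(\log n)$ rounds, each costing $n^{\Oh(\log n)}$ branches. Even if made to work, it gives $n^{\Oh(\log^2 n)}$ per refinement and $n^{\Oh(\log^4 n)}$ overall.

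The missing observation is that the cut need not be balanced. The outer scheme only needs the \emph{smallest} link of each type to halve. If $L_i$ is replaced by $P\prec Q$ with $P\cup Q\subseteq L_i$, one of $P,Q$ automatically has at most $|L_i|/2$ vertices, and an inserted empty link is even better.

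So the paper cuts adaptively. Let $u^*$ be the first vertex of $L_i$ with a neighbour in $L_{i+1}$, and $v^*$ the last vertex of $L_i$ with a neighbour in $L_{i-1}$. If either cross-edge set is empty, an empty link is inserted. If $v^*\prec u^*$, $L_i$ is cut right after $v^*$. Otherwise the measure is the number of edges between the window $S=\{y\in L_i : u^*\preceq y\preceq v^*\}$ and $L_{i-1}\cup L_{i+1}$.

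Instead of pairwise crossing, one uses a single pivot edge $u^*u'$ with $u'\in L_{i+1}$. It crosses every edge from $L_{i-1}$ into $S\setminus\{u^*\}$. So for $k=0$, $N[\{u^*,u'\}]$ meets all edges from $L_{i-1}$ into $S$; symmetrically $v^*v'$ serves the other side. Branching on a suitable vertex of the pivot then gives your recursion $R(\xi)\le R(\xi-1)+R(c\xi)$. Every leaf is a genuine refinement, because $S$ shrinks monotonically until one of the stopping conditions holds.

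\textbf{General $k$.} The three gaps are filled in two different places.
\begin{enumerate}[(a)]
\item Before branching, guess the first $k$ and the last $k$ optimum-solution vertices inside $L_i$. Delete their closed neighbourhood together with the vertices of $L_i$ outside the span they delimit. This fills the two outer gaps and rules out any induced \abkab with $a\in L_{i-1}$, $b,c\in L_i$, $d\in L_{i+1}$.
\item The middle gap is filled during the branching. Take an independent set $U$ of $k$ window vertices that are incident to measured edges, picked greedily from the left among those nonadjacent to $u^*$, $u'$ and the previously picked ones. Either the greedy stalls, and then $N[U\cup\{u^*,u'\}]$ covers all such window vertices. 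Or any edge from $L_{i-1}$ into the window avoiding $N[U\cup\{u^*,u'\}]$ lands after $U$ and completes a forbidden \abkab with $u^*u'$ and $U$.
\end{enumerate}
Hence some vertex of $U\cup\{u^*,u'\}$ hits a $\frac{1}{k+2}$ fraction of the larger side, i.e.\ a $\frac{1}{2k+4}$ fraction of all measured edges. The wrap-around links $i\in\{1,t\}$ need the mirrored version: guess the last $k$ solution vertices of $L_1$ and the first $k$ of $L_t$.
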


\paragraph{Chains.} 
Similarly to \cref{sec:aacbb}, we will work with instances equipped with some additional structure.
Let $(G,\wei)$ be an instance of \MWIS, where $G=(V,E)$ is an ordered \ababk-free graph.
A \emph{chain} (over $(G,\wei)$) is a sequence $\cX = (X_1,\ldots,X_r)$ of $r \geq 3$ pairwise disjoint subsets of $V$, called \emph{links},
such that
\begin{itemize}
\item $X_1 \prec X_2 \prec \ldots \prec X_r$,
\item if $uv$ is an edge of $G$ where $u \in X_i$ and $v \in X_j$ for $i < j$,
then either $j = i+1$ or $i=1$ and $j=r$.
\end{itemize}
We often identify chains with the unions of all links,
i.e., we write $G[\cX]$ for $G[\bigcup \cX]$, and for a set $Y \subseteq V(G)$,
we write $\cX - Y$ for $(X_1 \setminus Y,\ldots,X_r \setminus Y)$.
Furthermore, we define $\alpha(\cX) = \alpha(G[\cX],\wei)$.
Note that if $\cX$ is a chain, then every edge of $G[\cX]$ is either local (i.e., contained in one link),
or joins two cyclically consecutive links.

We say that a chain $\cX' = (X'_1,\ldots,X'_{r+1})$ is a \emph{refinement} of a chain $\cX = (X_1,\ldots,X_r)$ \emph{at} $j \in [r]$ if
\begin{itemize}
\item for all $i < j$, we have $X'_i \subseteq X_i$,
\item $X'_j,X'_{j+1} \subseteq X_j$, and
\item for all $i > j$, we have $X'_{i+1} \subseteq X_i$.
\end{itemize}
In other words, $\cX'$ was obtained by splitting the link $X_j$ into two sets and possibly removing some vertices.

We emphasize that we never require the links of a chain to be non-empty.
Actually, our approach is to start with a partition of $V$ into three segments (note that this is a chain)
and keep refining this chain until we obtain some empty links, and thus disconnect the graph.

The following  lemma encapsulates the main technical step of the algorithm:
that in quasipolynomial time, one can obtain a refinement of a given chain at any given link.

\begin{lemma}\label{lem:refine}
Let $k \geq 1$ be an integer and let $(G,\wei)$ be an instance of \MWIS such that $G$ is an $n$-vertex \ababk-free ordered graph.
Let $\cX = (X_1,\ldots,X_r)$ be a chain over $(G,\wei)$ and let $j \in [r]$.
In time $n^{\Oh(\log n)}$, we can construct a family $\mathscr{Z}$ of $n^{\Oh(\log n)}$ pairs $(\cZ,\ell)$, where $\cZ$ is a refinement of $\cX$ at $j$ and $\ell \in \mathbb{Q}_{\geq 0}$, such that
\[
\alpha(\cX)=\max_{(\cZ,\ell) \in \mathscr{Z}} \alpha(\cZ) + \ell.
\]
\end{lemma}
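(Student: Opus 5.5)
The plan is to treat the refinement as a "split the link $X_j$ into a prefix and a suffix" problem, in the spirit of the internal branching in the proof of \cref{thm:aakbb}. Write $X_{j^-}$ and $X_{j^+}$ for the cyclically preceding and following links. In the degenerate cases $j=1$ or $j=r$, one of these is the wrap-around link, and its position in the order must be handled with the mirror argument.

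First, determine exactly which edges forbid a split. For a split $X_j = P \cup S$ with $P \prec S$, the sequence obtained by replacing $X_j$ with $(P,S)$ is a chain precisely when:
\begin{itemize}
\item no vertex of $P$ has a neighbour in $X_{j^+}$;
\item no vertex of $S$ has a neighbour in $X_{j^-}$.
\end{itemize}
Call $b\in X_j$ \emph{left-attached} if it has a neighbour in $X_{j^-}$, and $a \in X_j$ \emph{right-attached} if it has a neighbour in $X_{j^+}$. A valid split exists, namely $P$ is everything up to the last left-attached vertex, if and only if there is no \emph{conflict}: a pair $a \preceq b$ with $a$ right-attached and $b$ left-attached.

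For $1<j<r$, a conflict with witnessing edges $w'b$ ($w'\in X_{j-1}$) and $aw$ ($w\in X_{j+1}$) gives $w' \prec a \prec b \prec w$, i.e.\ two crossing edges, a copy of \abab. The case $a=b$ is a seagull-like configuration that is handled the same way. Thus, for $k=0$, conflicts are exactly crossings across $X_j$.

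For general $k$, we mimic the guessing step from the proof of \cref{thm:aakbb}. We exhaustively guess the first $k$ vertices of an optimum solution in $X_{j^+}$, the last $k$ in $X_{j^-}$, and $k$ solution vertices inside $X_j$ around a chosen cut point. This gives $n^{\Oh(k)}$ branches, and in each we delete the closed neighbourhoods of the guessed vertices. Where the guessed set is shorter than $k$, the corresponding region can be emptied outright, as in the definition of $\cJ^I$. After this, the isolated-vertex gaps of \ababk are available between any surviving conflicting edges. The analogue of \cref{clm:dominating-left-right} I would aim for is then: for two conflicts $(w'b,aw)$ and $(w'_1b_1,a_1w_1)$, the closed neighbourhood of $\{w',b,a,w\}$ intersects some edge of the second conflict. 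Otherwise an appropriate pair of edges, one from each conflict, together with the guessed vertices, would induce \ababk.

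Given that property, we branch exactly as in \cref{thm:aabb}. The potential $\mu$ is the number of conflicting edge pairs, so $\mu\le n^4$. Some vertex among the four endpoints of a fixed conflict has closed neighbourhood meeting at least a quarter of all conflicts. Taking it into the solution removes its closed neighbourhood and destroys a constant fraction of conflicts; discarding it destroys at least one. The recursion $F(\mu)\le F(\mu-1)+F(3\mu/4)$ gives $n^{\Oh(\log n)}$ leaves. At each leaf there are no conflicts, so we output the split described above, with $\ell$ equal to the accumulated weight. Correctness follows by the standard bottom-up induction, exactly as \eqref{eq:J} was combined with the recursion tree in the proof of \cref{thm:aakbb}. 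Vertices removed as neighbours of chosen vertices are simply deleted, which keeps the result a refinement at $j$.

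The main obstacle is the intersection property above. The guessed solution vertices must sit in positions that serve as the $(k)$-blocks for \emph{every} candidate crossing. For two conflicts, the relevant crossing pair may be formed by mixing edges from both, and its four endpoints may interleave with the guessed vertices in several ways. I would first try to choose the guesses relative to the fixed conflict $(w'b,aw)$ itself: $k$ solution vertices right after $w'$, between $a$ and $b$, and right before $w$. I would then do a case analysis on where the second conflict's endpoints lie relative to $w',a,b,w$. If some interleaving cannot be covered, I would fall back on bounding the potential by conflicts with prescribed gaps, guessing the gap vertices separately at each branching step; this costs only a polynomial factor per level and keeps the overall $n^{\Oh(\log n)}$ bound. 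The wrap-around cases $j\in\{1,r\}$ need the same analysis with the roles of the order reversed on the wrap-around link, which I expect to be routine but tedious.
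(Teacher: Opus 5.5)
\textbf{Where the proposal stands.} Your architecture matches the paper's: guess boundary solution vertices so that the two outer $(k)$-blocks of \ababk{} are always present, then branch on single vertices with a measure counting what obstructs a split, and at a leaf cut $X_j$ after the last vertex with a neighbour in $X_{j-1}$. Several pieces are fine. Your conflict characterisation of valid splits is correct. Guessing in $X_{j-1}$ and $X_{j+1}$ instead of at both ends of $X_j$ is a legitimate way to get the outer blocks. For $k=0$ your intersection property holds: one of $a_1\prec b$, $a\prec b_1$ must occur (otherwise $b_1\prec a\preceq b\prec a_1\preceq b_1$), and either yields an induced \abab.

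\textbf{The gap.} It is exactly the step you flag, and no case analysis will close it, because for $k\ge 1$ the property is false. The middle $(k)$-block has to sit inside $X_j$ strictly between two attached vertices, and there may be no room. Let $c_1\prec c_2\prec c_3$ be consecutive in $X_j$, each $c_i$ adjacent to private neighbours $p_i\in X_{j-1}$ and $q_i\in X_{j+1}$, with $p_1\prec p_2\prec p_3$, $q_1\prec q_2\prec q_3$, and no other edges.
\begin{itemize}
\item In every crossing pair of these edges, two of the $c_i$ are consecutive vertices of the pattern, so one gap has at most one vertex. Hence for $k\ge 2$ the graph is \ababk-free, whatever isolated (e.g.\ guessed) vertices are added outside the segment from $c_1$ to $c_3$.
\item The conflicts are the pairs $(p_tc_t,c_sq_s)$ with $s\le t$. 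The closed neighbourhood of the endpoints of any one of them misses the conflict $(p_rc_r,c_rq_r)$ for $r\notin\{s,t\}$. So no choice of the fixed conflict makes your averaging step valid.
\end{itemize}
The fallback does not help either. Nothing lies between consecutive $c_i$, so such conflicts never acquire a gap, yet they still block the split. Moreover, an $n^{\Oh(k)}$ guessing factor paid at every node is not paid only $\Oh(\log n)$ times: a root-to-leaf path of $F(\mu)\le F(\mu-1)+F(3\mu/4)$ can contain $\Theta(\mu)$ exclude-steps.

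\textbf{The missing idea.} Realise the middle block by \emph{non-solution witness vertices}, and dominate with a set of $k+2$ vertices rather than the endpoints of one conflict. This is what the paper does.
\begin{itemize}
\item With the outer blocks guessed (the paper guesses the first and last $k$ solution vertices of $X_j$, see \cref{claim:abab1}), let $u^*$ be the first vertex of $X_j$ with a neighbour in $X_{j+1}$, and $v^*$ the last one with a neighbour in $X_{j-1}$. Let $S$ be the segment between them, and measure progress by the number of edges $E_1\cup E_2$ joining $S$ to $X_{j-1}\cup X_{j+1}$.
\item If $|E_1|\ge|E_2|$, fix a neighbour $u'\in X_{j+1}$ of $u^*$. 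Greedily pick, from the left, up to $k$ pairwise nonadjacent vertices $U$ of $S$ that are incident to $E_1\cup E_2$ and lie outside $N[\{u^*,u'\}]$.
\item If $|U|<k$, then $N[U\cup\{u^*,u'\}]$ contains every vertex of $S$ incident to $E_1\cup E_2$.
\item Otherwise, any edge $xy\in E_1$ avoiding $N[U\cup\{u^*,u'\}]$ satisfies $x\prec u^*\prec U\prec y\prec u'$ by the leftmost choice of $U$. This is an induced \abkab{}, which the guessed vertices extend to \ababk.
\end{itemize}
Hence some vertex of $U\cup\{u^*,u'\}$, each of which is an endpoint of an edge in $E_1\cup E_2$, meets a $\frac{1}{2k+4}$ fraction of $E_1\cup E_2$. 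This gives the $n^{\Oh(\log n)}$ bound. In the example above, $u^*=c_1$ and $U=\{c_2,c_3\}$.

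For $j\in\{1,r\}$ the same works, except that the two imbalance cases use a greedy choice leftwards from $v^*$ and rightwards from $u^*$, respectively.
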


In the proof of \cref{lem:refine} we distinguish two cases: (1) $1<j<r$, (2) $j \in \{1,r\}$.
The general outline of the argument in both cases is the same, but the details differ.
Thus, we present these constructions separately.

\begin{proof}[Proof of \cref{lem:refine} for $1 < j < r$.]
The proof has two main steps.

\paragraph{Construction of $\mathscr{Y}$.}
Let us first construct an auxiliary family $\mathscr{Y}$,
which corresponds to guessing the first $k$ and the last $k$ vertices of the solution in $X_j$.
More precisely, for each independent set $I \subseteq X_j$ of size at most $2k$, we construct a chain $\cX^I$ as follows.
Let $I_1$ be the set of the first $k$ vertices of $I$, or $I_1 = I$ if $|I| < k$.
Let $I_2 = I \setminus I_1$.
Let $v_1^*$ be the last vertex of $I_1$ (if it exists), and let $v_2^*$ be the first vertex of $I_2$ (if it exists).
Then,  $\cX^I$ is obtained from $\cX$ by
\begin{enumerate}[(i)]
\item removing the vertices of $N[I]$, and
\item if $|I|=2k$, then removing $\{ x\in X_j \ | \ x\prec v_1^* \lor x\succ v_2^*\}$,
and otherwise removing all vertices of $X_j$.
\end{enumerate}
That completes the construction of $\cX^I$.
We emphasize that the links of $\cX^I$ are in one-to-one correspondence with the links of $\cX$.

We add to $\mathscr{Y}$ the pair $(\cX^I,\wei(I))$.
Note that $|\mathscr{Y}| = \Oh(n^{2k})$, i.e., is polynomial in $n$.
Furthermore, as each pair in $\mathscr{Y}$ corresponds to exhaustively guessing (at most) $k$ first and (at most) $k$ last vertices of the solution that are in $X_j$, it clearly holds that
\[
\alpha(\cX)=\max_{(\cY,\ell)\in \mathscr{Y}} \alpha(\cY)+\ell.
\]

The following claim shows the crucial structural property of elements in $\mathscr{Y}$.

\begin{figure}[t]
    \centering
   
    \begin{tikzpicture}[scale=0.9,every node/.style={draw,circle,fill=white,inner sep=0pt,minimum size=8pt},every loop/.style={}]
    
\foreach \k in {1,2,3,4}
{
\node (i1\k) at (\k*0.5-0.5,0) {};
\node (i2\k) at (5.5+\k*0.5,0) {};
}

\node[draw=orange,label=below:\footnotesize{$a$}] (a) at (-1.3,0) {};
\node[draw=orange,label=below:\footnotesize{$b$}] (b) at (2.8,0) {};
\node[draw=orange,label=below:\footnotesize{$c$}] (c) at (4.7,0) {};
\node[draw=orange,label=below:\footnotesize{$d$}] (d) at (8.8,0) {};

\node[draw=orange] (e) at (3.3,0) {};
\node[draw=orange] (e) at (3.75,0) {};
\node[draw=orange] (e) at (4.2,0) {};
\draw [decorate,decoration = {calligraphic brace,mirror}] (3.2,-0.2) --  (4.3,-0.2);
\node[draw=none,fill=none] (e) at (3.75,-0.45) {\footnotesize{$k$}};

\draw (2.5,0.6)--(5,0.6)--(5,-0.6)--(2.5,-0.6)--(2.5,0.6);
\draw (-3.5,0.6)--(-1,0.6)--(-1,-0.6)--(-3.5,-0.6)--(-3.5,0.6);
\draw (8.5,0.6)--(11,0.6)--(11,-0.6)--(8.5,-0.6)--(8.5,0.6);
\draw (-2.25,0.6) to [bend left] (3.75,0.6);
\draw (3.75,0.6) to [bend left] (9.75,0.6);

\draw[color=orange] (a) to [bend left] (c);
\draw[color=orange] (b) to [bend left] (d);

\node[draw=none,fill=none] (x) at (-2.25,0) {$X^I_{j-1}$};
\node[draw=none,fill=none] (x) at (3.75,-1) {$X^I_{j}$};
\node[draw=none,fill=none] (x) at (9.75,0) {$X^I_{j+1}$};

\node[draw=none,fill=none] (i) at (0.75,-1.5) {$I_1$};
\node[draw=none,fill=none] (i) at (6.75,-1.5) {$I_2$};
\draw [decorate,decoration = {calligraphic brace,mirror}] (-0.2,-1) --  (1.7,-1);
\draw [decorate,decoration = {calligraphic brace,mirror}] (5.8,-1) --  (7.7,-1);
\end{tikzpicture}

     \caption{Sets $X^I_{j-1}$, $X^I_j$, $X^I_{j+1}$, and independent sets $I_1$, $I_2$ (each of size $k$) in \cref{claim:abab1}.
     The orange subgraph cannot be induced, as otherwise, together with $I_1$ and $I_2$, it creates the forbidden structure.}
    \label{fig:claim-abab1}
\end{figure}
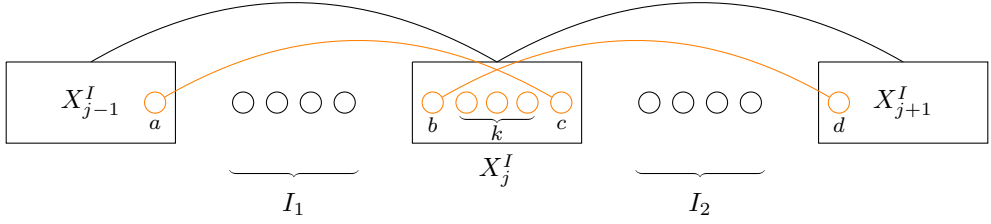

\begin{claim}\label{claim:abab1}
Let $(\cX^I,\ell^I)$, where $\cX^I= (X^I_1,\ldots,X^I_r)$, be a pair in $\mathscr{Y}$ constructed for an independent set $I$.
There is no induced copy of \abkab with edges $ac, bd$ such that $a \in X^I_{j-1}$ and $b,c \in X^I_j$, and $d \in X^I_{j+1}$.
\end{claim}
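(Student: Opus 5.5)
Suppose for contradiction that such an induced copy exists: $a \in X^I_{j-1}$, $b,c \in X^I_j$, $d \in X^I_{j+1}$, with $a \prec b \prec c \prec d$, $ac,bd \in E$, and $ab,ad,bc,cd \notin E$. The plan is to combine this with $I_1$ and $I_2$ to build an induced copy of \ababk, as in \cref{fig:claim-abab1}. That pattern has $k$ isolated vertices in each of its three gaps. The $k$ vertices between $b$ and $c$ come for free from the forbidden configuration \abkab. For the gap between $a$ and $b$ we use $I_1$, and for the gap between $c$ and $d$ we use $I_2$.

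First we check that $|I| = 2k$. Since $b,c \in X^I_j$, the link $X^I_j$ is nonempty, and by step (ii) of the construction this happens only when $|I|=2k$. Hence $|I_1|=|I_2|=k$ and both $v_1^*$ and $v_2^*$ exist.

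Next come the order relations. Every vertex in $X^I_j$ satisfies $v_1^* \preceq x \preceq v_2^*$. These vertices also avoid $N[I]$, so they differ from $v_1^*$ and $v_2^*$, giving $v_1^* \prec b \prec c \prec v_2^*$. Thus $I_1 \prec b$ and $c \prec I_2$. Since $I \subseteq X_j$ and $a \in X_{j-1} \prec X_j$, we get $a \prec I_1$. Likewise $I_2 \prec d$ because $d \in X_{j+1}$. Altogether the order is $a \prec I_1 \prec b \prec (k) \prec c \prec I_2 \prec d$, where $(k)$ denotes the $k$ isolated vertices of the \abkab copy lying between $b$ and $c$.

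Then we check the adjacencies. The set $I = I_1 \cup I_2$ is independent. The vertices $a,b,c,d$ survive in $\cX^I$, so none of them lies in $N[I]$, and hence $I$ has no edges to $\{a,b,c,d\}$. The middle $k$ vertices need more care. The \abkab copy is induced, so they are nonadjacent to $a,b,c,d$ and to each other. They are not automatically in $X^I_j$, however. They lie strictly between $b$ and $c$ in the order, and all vertices of $V$ in that range lie in link $X_j$ of the original chain, or in no link at all. The induced copy is taken in $G[\cX^I]$, so these vertices are in $X^I_j$ and therefore outside $N[I]$.

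This yields an induced \ababk in $G$, a contradiction. The only delicate step is this last one: confirming that the middle $k$ vertices are nonadjacent to $I$, which uses that they belong to $X^I_j$.
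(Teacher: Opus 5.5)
Your proof is correct and follows the paper's argument. You use $|I|=2k$ and the construction of $\cX^I$ to obtain $a \prec I_1 \prec b \prec I' \prec c \prec I_2 \prec d$, with $I$ having no edges to the copy, so the whole set induces \ababk. You are also more careful than the paper on a point it leaves implicit: the $k$ middle vertices lie in $X^I_j$, because the copy is taken in $G[\cX^I]$, and hence they avoid $N[I]$.
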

\begin{claimproof}
For contradiction, suppose that such a copy of \abkab with distinguished vertices $a,b,c,d$ exists.
Let $I'$, where $b \prec I' \prec c$, be the $k$-element independent set in this copy.

As $b,c \in X^I_j$, we have $|I| = 2k$.
Let $I_1$ (resp., $I_2$) be the first (resp., last) $k$ vertices of $I$.
Then we have $a\prec I_1 \prec b\prec I'\prec  c \prec I_2 \prec d$ and there are no edges between $I_1\cup I_2$ and $\{a,b,c,d\}\cup I'$.
Consequently, $\{a,b,c,d\} \cup I_1 \cup I' \cup I_2$ induces a copy of \ababk in $G$ (see \cref{fig:claim-abab1}), a contradiction.
\end{claimproof}

\paragraph{Construction of $\mathscr{Z}_{\cY}$.}
Fix some $(\cY,\ell) \in \mathscr{Y}$.
Now in time $n^{\Oh(\log n)}$ we will construct a family $\mathscr{Z}_{\cY}$ of pairs $(\cZ,\ell')$, where 
 $\cZ$ is a refinement of $\cY$ (and thus of $\cX$) at $j$ and $\ell' \in \mathbb{Q}_{\geq 0}$, such that
\[
\alpha(\cY)=\max_{(\cZ,\ell') \in \mathscr{Z}_{\cY}} \alpha(\cZ)+\ell'.
\]
Let us point out that proving this would conclude the proof of the lemma (in the considered case).
Indeed, the family 
\[
	\mathscr{Z} := \bigcup_{(\cY,\ell) \in \mathscr{Y}} \bigcup_{(\cZ,\ell') \in \mathscr{Z}_{\cY}} (\cZ,\ell + \ell')
\]
satisfies all properties listed in the lemma, and the overall running time and the size of $\mathscr{Z}$ are both bounded by $n^{\Oh(\log n)}$.

Thus, let us show how to construct  $\mathscr{Z}_{\cY}$ for $(\cY,\ell) \in \mathscr{Y}$.
The process can be again described as a recursion tree $\mathsf{T}$.
We start with the root node which is labeled with the pair $(\cY,0)$.
Next, we proceed exhaustively as follows.
Let $d$ be a node of $\mathsf{T}$ that was not processed yet,
and let it be labeled with $(\cY',\ell')$, where $\cY' = (Y'_1,\ldots,Y'_r)$.
If there are no edges from $Y'_{j-1}$ to $Y'_{j}$ or from $Y'_j$ to $Y'_{j+1}$, then we stop the recursion,
i.e., $d$ will be a leaf of $\mathsf{T}$.

Otherwise, let $u^*$ be the first vertex of $Y'_j$ that is adjacent to some vertex in $Y'_{j+1}$,
and let $v^*$ be the last vertex of $Y'_j$ that is adjacent to some vertex in $Y'_{j-1}$.
If $v^*\prec u^*$, then we again stop the recursion and $d$ becomes a leaf.

So now assume that $u^*\preceq v^*$, we emphasize that it is possible that $u^* = v^*$.
Let $S$ be the set of vertices $y\in Y'_j$ such that $u^*\preceq y \preceq v^*$, and let $E_1$ (resp., $E_2$) be the set of edges with one endpoint in $Y'_{j-1}$ (resp., $Y'_{j+1}$), and the other in $S$.
We also define $\widetilde{S}\subseteq S$ to be the set of the vertices adjacent to at least one edge of $E_1\cup E_2$ (see \cref{fig:claim-q1}).

\begin{figure}[t]
    \centering
   
    \begin{tikzpicture}[every node/.style={draw,circle,fill=white,inner sep=0pt,minimum size=8pt},every loop/.style={}]

\foreach \k in {0,1,2,3,4}
{
\node (a\k) at (\k*0.7,0) {};
}

\foreach \k in {0,1,2,3,4,5,6,7,8}
{
\node (b\k) at (5+\k*0.7,0) {};
}

\foreach \k in {0,1,2,3}
{
\node (c\k) at (13+\k*0.7,0) {};
}

\draw[color=orange] (a1) to [bend left] (b1);
\draw[color=orange] (a3) to [bend left] (b7);
\draw[color=orange] (a3) to [bend left] (b4);
\draw[color=orange] (a4) to [bend left] (b2);
\draw[color=orange] (a2) to [bend left] (b6);
\draw[color=green] (b1) to [bend left] (c2);
\draw[color=green] (b4) to [bend left] (c3);
\draw[color=green] (b5) to [bend left] (c0);
\draw[color=green] (b2) to [bend left] (c1);
\draw (a0) to [bend left] (b0);
\draw (a1) to [bend left] (b0);
\draw (b8) to [bend left] (c2);
\draw (b8) to [bend left] (c0);
\draw (b0) to [bend left] (b2);
\draw (b2) to [bend left] (b4);
\draw (b3) to [bend left] (b6);
\draw (b6) to [bend left] (b8);

\foreach \k in {2,5,6}
{
\node[draw=none,fill=blue,opacity=0.5] (u) at (5+\k*0.7,0) {};
}

\foreach \k in {1,2,4,5,6,7}
{
    \node[very thick,fill=none] (s) at (5+\k*0.7,0) {};
}

\node[draw=none,fill=none,label=below:\footnotesize{$u^*$}] (u) at (5.7,0) {};
\node[draw=none,fill=none,label=below:\footnotesize{$v^*$}] (v) at (9.9,0) {};

\draw [decorate,decoration = {calligraphic brace,mirror}] (-0.2,-0.5) --  (3,-0.5);
\draw [decorate,decoration = {calligraphic brace,mirror}] (4.8,-0.5) --  (10.6,-0.5);
\draw [decorate,decoration = {calligraphic brace,mirror}] (12.8,-0.5) --  (15.3,-0.5);
\node[draw=none,fill=none] (x) at (1.4,-1) {$Y'_{j-1}$};
\node[draw=none,fill=none] (x) at (7.9,-1) {$Y'_{j}$};
\node[draw=none,fill=none] (x) at (14.25,-1) {$Y'_{j+1}$};
\node[draw=none,fill=none] (e) at (6,1.6) {\textcolor{orange}{$E_1$}};
\node[draw=none,fill=none] (e) at (10,1.6) {\textcolor{green}{$E_2$}};

\end{tikzpicture}
     \caption{Consecutive sets $Y'_{j-1}$, $Y'_j$, and $Y'_{j+1}$ in the proof of \cref{claim:abab-q1}. Orange, green, and black edges denote, respectively, the edges of $E_1$, the edges of $E_2$, and the remaining edges. Thick nodes denote vertices of $\widetilde{S}$, and blue nodes denote vertices taken to $U$ for $k=3$.}
    \label{fig:claim-q1}
\end{figure}
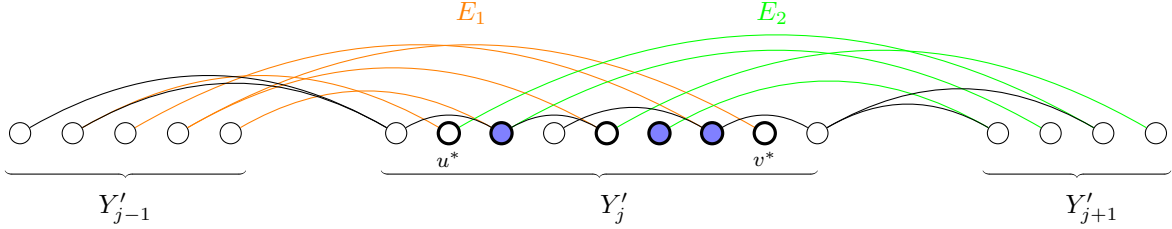

\begin{claim}\label{claim:abab-q1}
There exists a vertex $w$ which is an endpoint of some edge in $E_1\cup E_2$ and such that the set $N[w]$ intersects at least $\frac{1}{2k+4}$-fraction of the edges of $E_1\cup E_2$.
\end{claim}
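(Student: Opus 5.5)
The plan is to exhibit a set $W$ of at most $2k+4$ vertices, each an endpoint of some edge of $E_1\cup E_2$, such that every edge of $E_1\cup E_2$ meets $N[W]$. The claim then follows by averaging: some $w\in W$ has $N[w]$ meeting at least a $\frac{1}{2k+4}$-fraction of $E_1\cup E_2$. The engine for showing coverage is \cref{claim:abab1}. An edge $ac\in E_1$ and an edge $bd\in E_2$ with $a\in Y'_{j-1}$, $b\prec c$ in $S$, and $d\in Y'_{j+1}$, which are pairwise nonadjacent apart from $ac,bd$ and have $k$ vertices between $b$ and $c$ independent of and nonadjacent to $\{a,b,c,d\}$, would form a forbidden induced \abkab. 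Since $\cY'$ is a subinstance of the chain $\cY$ obtained from $\mathscr{Y}$, that claim still applies to it.

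\textbf{Step 1 (four anchor endpoints).} By definition of $u^*$ and $v^*$, there are edges $u^*d_0\in E_2$ with $d_0\in Y'_{j+1}$ and $a_0v^*\in E_1$ with $a_0\in Y'_{j-1}$. Put $u^*,d_0,a_0,v^*$ into $W$. Take an edge $xy\in E_1$ with $y\in S$ that misses $N[\{u^*,d_0\}]$. Then $u^*\prec y$, because $y\neq u^*$ and $u^*$ is the first vertex of $S$. Hence $x,u^*,y,d_0$ realize the pattern $a\prec b\prec c\prec d$ with edges $xy$ and $u^*d_0$ and no other adjacencies among them. The nonadjacency of $x$ and $d_0$ holds because $x\notin N[d_0]$; this is what handles the case $r=3$, where $Y'_{j-1}$ and $Y'_{j+1}$ may be adjacent. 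So $xy$ escapes only if there are no $k$ suitable isolated vertices between $u^*$ and $y$. Symmetrically, an edge of $E_2$ missing $N[\{a_0,v^*\}]$ escapes only if there are no $k$ suitable isolated vertices between it and $v^*$.

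\textbf{Step 2 (the $2k$ extra vertices).} Choose $U_L$ greedily in $\widetilde S$ scanning from $u^*$ to the right: repeatedly take the next vertex of $\widetilde S$ that is not in $N[W\cup U_L]$, stopping after $k$ vertices. Choose $U_R$ symmetrically scanning leftward from $v^*$, and add $U_L\cup U_R$ to $W$. Every vertex added lies in $\widetilde S$, so it is an endpoint of an edge of $E_1\cup E_2$, and $|W|\le 2k+4$.

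\textbf{Step 3 (coverage) and the main obstacle.} Suppose $xy\in E_1$ misses $N[W]$. If $U_L$ has fewer than $k$ vertices, the greedy scan exhausted $\widetilde S$, so $y$ itself would have been picked or would lie in $N[W]$, a contradiction. Otherwise, if all of $U_L$ lies strictly between $u^*$ and $y$, then $U_L$ is a $k$-element independent set inside the interval $(u^*,y)$. Its vertices are nonadjacent to $u^*$, $d_0$, $x$ and $y$: for $u^*$ and $d_0$ by the greedy choice, and for $x$ and $y$ because $xy$ misses $N[W]$. Together with $x,u^*,y,d_0$ this gives the forbidden \abkab, contradicting \cref{claim:abab1}.

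The remaining case, where some vertex of $U_L$ lies beyond $y$, is the main obstacle. Here I would not use $U_L$ but $U_R$ together with the anchors $a_0,v^*$, playing the symmetric role for $E_2$. The aim is an exchange argument: the greedy vertices lying beyond $y$ certify enough independent vertices of $\widetilde S$, which are endpoints of $E_2$-edges, to rebuild the forbidden pattern using $xy$ and a suitable $E_2$-edge in place of $u^*d_0$. If this turns out not to work cleanly, I would refine the greedy rule to pick vertices of $\widetilde S$ by alternating from both ends. Edges of $E_2$ are handled by the mirror argument.
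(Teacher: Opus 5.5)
\textbf{The gap: an unresolved case that is in fact vacuous.} You leave open the case you call the main obstacle, offering only a sketched exchange argument or a modified greedy. No new idea is needed there, because the case cannot occur, for the same reason you already used when $|U_L|<k$. Suppose $xy\in E_1$ misses $N[W]$. Then $y\in\widetilde S$, and at every stage of the left-to-right scan $y$ lies outside $N[\{u^*,d_0,a_0,v^*\}\cup U_L]$ for the current $U_L$, since these sets only grow and are contained in the final $W$. So $y$ is admissible throughout. A scan that always takes the next admissible vertex would therefore take $y$ if it ever reached it, putting $y$ into $W$, which is impossible. Hence all $k$ vertices of $U_L$ are taken before $y$. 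Together with $U_L\subseteq S\setminus N[u^*]$ this gives $u^*\prec U_L\prec y$, which is exactly how the paper obtains $x\prec u^*\prec U\prec y\prec u'$. Symmetrically, for an uncovered $bd\in E_2$ with $b\in S$, the right-to-left scan gives $b\prec U_R\prec v^*$, and then $a_0,b,U_R,v^*,d$ induce a forbidden \abkab. With these two observations your proof is complete, and the speculative remedies in your last paragraph should be dropped.

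\textbf{Comparison with the paper's route.} Apart from this, your route differs only mildly from the paper's. The paper assumes $|E_1|\ge|E_2|$ and covers only $E_1$ by the $k+2$ vertices $U\cup\{u^*,u'\}$, where its $u'$ is your $d_0$. This yields a $\frac{1}{k+2}$-fraction of $E_1$ and hence a $\frac{1}{2k+4}$-fraction of $E_1\cup E_2$. You instead cover all of $E_1\cup E_2$ by at most $2k+4$ vertices and average once, which avoids the case distinction at the same constant. Keep your choice to put $d_0$ into $W$ before running the greedy: it is what makes $U_L$ nonadjacent to $d_0$, as the induced copy of \abkab requires. The paper's greedy avoids only $N[U\cup\{u^*\}]$, so as written it does not guarantee that $U$ is nonadjacent to $u'$. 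It needs exactly your adjustment: fix $u'$ first and also avoid $N[u']$.
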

\begin{claimproof}
By symmetry, assume that $|E_1|\geq |E_2|$.
Let $U$ be an independent set defined as follows.
We initialize $U=\emptyset$.
As long as $|U|< k$, we add (if exists) to $U$ the first vertex of $\widetilde{S} \setminus N[U\cup\{u^*\}]$.
Note that if at the end of the procedure $|U|<k$,
then the set $N[U\cup \{u^*\}]$ contains all vertices of $\widetilde{S}$, and thus intersects all the edges of $E_1$.

So now suppose that $|U|=k$.
Let us fix a neighbor $u'$ of $u^*$ in $Y'_{j+1}$ -- by the choice of $u^*$ such a vertex exists.
Observe that the set $N[U\cup \{u^*,u'\}]$ intersects all edges of $E_1$.
Indeed, suppose otherwise, i.e., there is an edge $xy\in E_1$ with $x\in Y'_{j-1}$, $y\in Y'_{j}$ and such that $x,y$ are nonadjacent to $U\cup \{u^*,u'\}$.
Note that $x\prec u^*\prec U \prec y \prec u'$,
and thus the set $U\cup \{u^*,u',x,y\}$ induces in $G[\cY]$ a copy of \abkab with $x\in Y_{j-1}$, $u^*,y\in Y_j$,
and $u'\in Y_{j+1}$ (see \cref{fig:claim-q1}), which contradicts \cref{claim:abab1}.

Therefore, in any case, the set $N[U\cup \{u^*,u'\}]$ intersects all the edges of $E_1$,
and thus there is a vertex $w \in U\cup \{u^*,u'\}$
such that $N[w]$ intersects at least $\frac{1}{k+2}$-fraction of edges in $E_1$.
Since $|E_1|\geq |E_2|$, we obtain that the set $N[w]$ intersects at least $\frac{1}{2k+4}$-fraction of edges in $E_1\cup E_2$. Moreover, every vertex in $U\cup \{u^*,u'\}$ is an endpoint of some edge in $E_1\cup E_2$, which completes the proof of the claim.
\end{claimproof}

Let us remark that the vertex $w$ from \cref{claim:abab-q1} can be found in polynomial time.
Now we are ready to define child nodes for $d$; the nodes will correspond to taking $w$ to the solution or not, i.e., we add two child nodes of $d$ that are labeled respectively with:
\[
(\cY' - N[w],\ell'+\wei(w)) \quad \text{and} \quad (\cY' - \{w\},\ell').
\]

Clearly, the introduced pairs satisfy the following:
\[
\alpha(\cY')=\max \bigl(  \alpha(\cY'- N[w]) +\wei(w),\alpha(\cY' - \{w\}) \bigr).
\]
This concludes the construction of $\mathsf{T}$.

Now let us bound the number of leaves.
Let $\mu=|E_1|+|E_2|$.
Let us point out that the vertices $u^*$ and $v^*$ might change, but always the sets $S,E_1,E_2$ are subsets of the corresponding sets for the parent node.
By \cref{claim:abab-q1}, in the branch corresponding to taking $w$ to the solution, the number of edges in $E_1\cup E_2$ decreases at least by $\frac{\mu}{2k+4}$, and in the branch corresponding to not taking $w$ to the solution, we remove $w$ which is an endpoint of at least one edge of $E_1\cup E_2$.
Therefore, the bound on the number of leaves in $\mathsf{T}$ is given by the recursive inequality:
\[
F(\mu)\leq F\left( \frac{2k+3}{2k+4} \; \mu \right) + F(\mu-1),
\]
which solves to $\mu^{\Oh(\log \mu)}$ (recall that $k$ is a constant).
Applying $\mu\leq n^2$, the number of leaves in $\mathsf{T}$ is bounded by $(n^2)^{\Oh(\log n^2)}=n^{\Oh(\log n)}$.

Now we are ready to define the family $\mathscr{Z}_{\cY}$.
Consider a leaf node of $\mathsf{T}$, let it be labeled with $(\cY',\ell')$, and let $\cY' = (Y'_1,\ldots,Y'_r)$.
Then one of the following holds:
\begin{enumerate}[(i)]
\item there are no edges between $Y'_{j-1}$ and $Y'_j$,
\item there are no edges between $Y'_j$ and $Y'_{j+1}$,
\item for the first vertex $u^*\in Y'_j$ that is adjacent to some vertex in $Y'_{j+1}$,
and for the last vertex $v^*\in Y'_j$ that is adjacent to some vertex in $Y'_{j-1}$,
it holds that $v^*\prec u^*$.
\end{enumerate}
In each of these cases, we add to $\mathscr{Z}_{\cY}$ the pair $(\cZ,\ell')$, where $\cZ$ is defined as follows, depending on the case:
\begin{enumerate}[(i)]
\item $\cZ = (Y'_1,\ldots,Y'_{j-1},\emptyset,Y'_j,Y'_{j+1},\ldots,Y'_r)$,
\item $\cZ = (Y'_1,\ldots,Y'_{j-1},Y'_j,\emptyset,Y'_{j+1},\ldots,Y'_r)$,
\item $\cZ = (Y'_1,\ldots,Y'_{j-1},Y'_{j,\leftarrow},Y'_{j,\rightarrow},Y'_{j+1},\ldots,Y'_r)$,
where $Y'_{j,\leftarrow}=\{x\in Y'_j \ | \ x\preceq v^*\}$ and $Y'_{j,\rightarrow}=Y'_j\setminus Y'_{j,\leftarrow}$.
\end{enumerate}
We repeat this for every leaf node of $\mathsf{T}$. 
It is straightforward to verify that  $\mathscr{Z}_{\cY}$  satisfies all required properties.

This completes the proof of the lemma in the case if $1 < j < r$.
\end{proof}

\begin{proof}[Proof of \cref{lem:refine} for $j \in \{1,r\}$.]
Let us assume that $j=1$, as the case $j=r$ is symmetric.
The proof follows the same outline as the one for the previous case.

\paragraph{Construction of $\mathscr{Y}$.}
First, let us construct an auxiliary family $\mathscr{Y}$ which this time corresponds to guessing the last (at most) $k$ vertices in $X_1$ and the first (at most) $k$ vertices in $X_r$ that belong to the solution.

Formally, for every pair of sets $I_1\subseteq X_1$, $I_r\subseteq X_r$, each of size at most $k$,
such that $I:=I_1\cup I_2$ is independent, we create a pair $(\cX^I,\ell^I)$ as follows.
Let $v_1^*$ be the first vertex of $I_1$ and let $v_r^*$ be the last vertex of $I_r$.
Then, $\cX^I$ is obtained from $\cX$ by:
\begin{enumerate}[(i)]
\item removing the vertices of $N[I]$,
\item if $|I_1|=k$, then removing $\{ x \in X_1 \ | \ x \succ v_1^*\}$, and otherwise removing all vertices from $X_1$,
\item if $|I_r|=k$, then removing $\{ x \in X_r \ | \ x \prec v_r^*\}$, and otherwise removing all vertices from $X_r$.
\end{enumerate}
We add the pair $(\cX^I,\wei(I))$ to $\mathscr{Y}$.
Similarly as in the previous case, we have $|\mathscr{Y}|= \Oh(n^{2k})$, and 
\[
\alpha(\cX)=\max_{(\cX^I,\ell^I)\in \mathscr{Y}} \alpha(\cX^I)+\ell^I.
\]

Now, we have the following analogue of \cref{claim:abab1}.

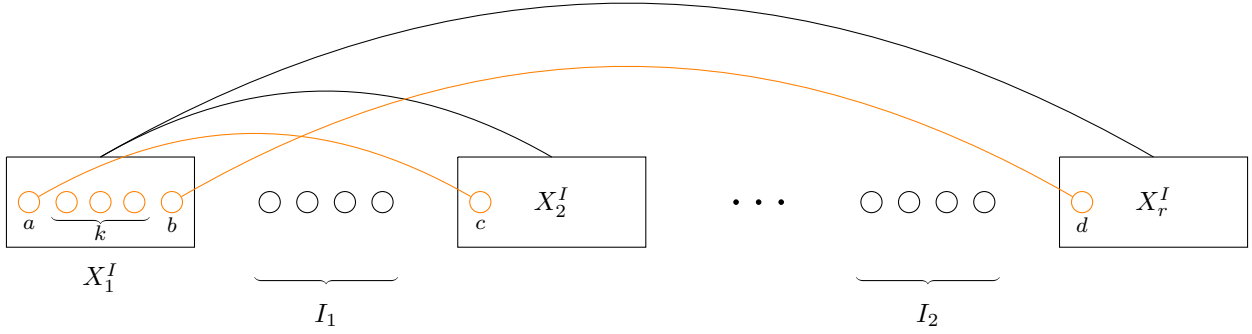
\begin{figure}[t]
    \centering
   
    \begin{tikzpicture}[every node/.style={draw,circle,fill=white,inner sep=0pt,minimum size=8pt},every loop/.style={}]
    
\foreach \k in {1,2,3,4}
{
\node (i1\k) at (\k*0.5-0.5,0) {};
\node (i2\k) at (7.5+\k*0.5,0) {};
}

\draw (-3.5,0.6)--(-1,0.6)--(-1,-0.6)--(-3.5,-0.6)--(-3.5,0.6);

\draw (2.5,0.6)--(5,0.6)--(5,-0.6)--(2.5,-0.6)--(2.5,0.6);

\draw (10.5,0.6)--(13,0.6)--(13,-0.6)--(10.5,-0.6)--(10.5,0.6);

\draw (-2.25,0.6) to [bend left] (3.75,0.6);
\draw (-2.25,0.6) to [bend left] (11.75,0.6);

\node[draw=orange,label=below:\footnotesize{$a$}] (a) at (-3.2,0) {};
\node[draw=orange,label=below:\footnotesize{$b$}] (b) at (-1.3,0) {};
\node[draw=orange,label=below:\footnotesize{$c$}] (c) at (2.8,0) {};
\node[draw=orange,label=below:\footnotesize{$d$}] (d) at (10.8,0) {};

\draw[color=orange] (a) to [bend left] (c);
\draw[color=orange] (b) to [bend left] (d);

\node[draw=orange] (e) at (-2.7,0) {};
\node[draw=orange] (e) at (-2.25,0) {};
\node[draw=orange] (e) at (-1.8,0) {};

\draw [decorate,decoration = {calligraphic brace,mirror}] (-2.9,-0.2) --  (-1.6,-0.2);
\node[draw=none,fill=none] (x) at (-2.25,-0.4) {\footnotesize{$k$}};

\node[draw=none,fill=none] (x) at (-2.25,-1) {$X^I_{1}$};
\node[draw=none,fill=none] (x) at (3.75,0) {$X^I_{2}$};
\node[draw=none,fill=none] (x) at (11.75,0) {$X^I_{r}$};

\node[draw=none,fill=none] (i) at (0.75,-1.5) {$I_1$};
\node[draw=none,fill=none] (i) at (8.75,-1.5) {$I_2$};
\draw [decorate,decoration = {calligraphic brace,mirror}] (-0.2,-1) --  (1.7,-1);
\draw [decorate,decoration = {calligraphic brace,mirror}] (7.8,-1) --  (9.7,-1);

\foreach \k in {6.2,6.5,6.8}
{
\draw[fill=black] (\k,0) circle (0.03);
}

\end{tikzpicture}
     \caption{Sets $X^I_1$, $X^I_2$, $X^I_r$, and independent sets $I_1$, $I_2$ in the proof of \cref{claim:abab2}. The orange subgraph cannot be induced, as otherwise, together with $I_1$ and $I_2$, it creates the forbidden structure.}
    \label{fig:claim-abab2}
\end{figure}

\begin{claim}\label{claim:abab2}
Let $(\cX^I,\ell^I)$, where $\cX^I= (X^I_1,\ldots,X^I_r)$, be a pair in $\mathscr{Y}$ constructed for an independent set $I$.
There is no induced copy of \akbab with edges $ac, bd$ such that $a,b \in X^I_1$ and $c \in X^I_2$, and $d \in X^I_{r}$.
\end{claim}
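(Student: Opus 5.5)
We argue by contradiction, in the same way as \cref{claim:abab1}. Suppose that $a,b,c,d$ together with a $k$-element independent set $I'$ induce a copy of \akbab with edges $ac$ and $bd$, where $a,b \in X^I_1$, $c \in X^I_2$ and $d \in X^I_r$. In \akbab the $k$ isolated vertices lie between $a$ and $b$, so $a \prec I' \prec b \prec c \prec d$. Because $X^I_1$ is a segment-like subset of $X_1$ and $a,b \in X^I_1$, we may also assume $I' \subseteq X_1$. More precisely, $I'$ lies between two vertices of $X_1$, and $X_1$ precedes all other links, so every vertex of $I'$ belongs to $X_1$ (or lies outside the chain, which is harmless because the copy lives in $G$).

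First we use that $X^I_1$ and $X^I_r$ are nonempty. Since $a \in X^I_1$, step (ii) of the construction did not delete all of $X_1$, so $|I_1|=k$. Since $d \in X^I_r$, step (iii) gives $|I_r|=k$. By (ii), every surviving vertex of $X_1$ satisfies $x \preceq v_1^*$, where $v_1^*$ is the first vertex of $I_1$. As $b \notin N[I]$, we get $b \neq v_1^*$, hence $b \prec v_1^* \preceq I_1$. By (iii), every surviving vertex of $X_r$ satisfies $x \succeq v_r^*$, where $v_r^*$ is the last vertex of $I_r$; hence $I_r \preceq v_r^* \prec d$. Moreover $I_1 \subseteq X_1 \prec X_2 \ni c$, and $I_r \subseteq X_r$ succeeds $c \in X_2$ (note $r \ge 3$). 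Altogether we obtain the order
\[
a \prec I' \prec b \prec I_1 \prec c \prec I_r \prec d .
\]

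Next we check inducedness. By step (i), all of $N[I]$ was removed, so $a,b,c,d$ are nonadjacent to $I=I_1\cup I_r$. The set $I$ is independent by the choice of the pair $(I_1,I_r)$.

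The remaining point, and the only possible obstacle, is the adjacency between $I'$ and $I$. Here we use the freedom to choose the witness for the gap between $a$ and $b$ in \ababk. Instead of using $I'$, we observe that the segment pattern \ababk only requires $k$ independent vertices in each gap, nonadjacent to everything else. In the gap between $a$ and $b$ we take $I'$, which is nonadjacent to $a,b,c,d$ because the copy of \akbab is induced. We must also verify that $I'$ is nonadjacent to $I_1 \cup I_r$.

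If the claimed structure is meant exactly as stated, this adjacency is the delicate step. I expect it is either handled by the statement implicitly counting $I'$ among vertices of $X^I_1$, which survived the deletion of $N[I]$ and are thus nonadjacent to $I$, or it should be interpreted that way. Under that reading, $I' \subseteq X^I_1$ because $a \prec I' \prec b$ with $a,b \in X^I_1$, and $I'$ is disjoint from $N[I]$.

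Then $\{a,b,c,d\}\cup I'\cup I_1\cup I_r$ induces \ababk with edges $ac$ and $bd$, contradicting that $G$ is \ababk-free.
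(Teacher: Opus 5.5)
Your argument is correct and is essentially the paper's. Since $a\in X^I_1$ and $d\in X^I_r$, you get $|I_1|=|I_r|=k$ and the order $a\prec I'\prec b\prec I_1\prec c\prec I_r\prec d$, so $\{a,b,c,d\}\cup I'\cup I_1\cup I_r$ induces \ababk.

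You are also right that the nonadjacency between $I'$ and $I$ requires the copy to lie in $G[\cX^I]$. This is how the claim is applied in \cref{claim:abab-q2}, and the paper leaves the point implicit.

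Delete the parenthetical saying that $I'$ lying outside the chain is ``harmless''. That is exactly the situation in which the statement, read literally in $G$, can fail: a vertex of $I'$ may lie in $N(I_1)$ and have been removed in step (i). So the reading $I'\subseteq X^I_1$ is necessary, not optional.
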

\begin{claimproof}
For contradiction, suppose that such a copy of \akbab exists.
Let $I'$ be the $k$-element independent set in that copy, where $a \prec I' \prec b$.
Let $I_1 = I \cap X_1$ and $I_r = I \cap X_r$.
As $a \in X^I_1$ and $d \in X^I_r$, we have that $|I_1|=|I_r|=k$.
Thus, the vertices $\{a,b,c,d\} \cup I' \cup I_1 \cup I_r$ induce a copy of \ababk in $G$ (see \cref{fig:claim-abab2}), a contradiction.
\end{claimproof}

\paragraph{Construction of $\mathscr{Z}_{\cY}$.}
Fix some $(\cY,\ell) \in \mathscr{Y}$, where $\cY = (Y_1,\ldots,Y_r)$.
Similarly as in the previous case, in time $n^{\Oh(\log n)}$ we will construct a family $\mathscr{Z}_{\cY}$ of pairs $(\cZ,\ell')$, where  $\cZ$ is a refinement of $\cY$ (and thus of $\cX$) at $1$ and $\ell' \in \mathbb{Q}_{\geq 0}$,
such that
\[
\alpha(\cY)=\max_{(\cZ,\ell') \in \mathscr{Z}_{\cY}} \alpha(\cZ)+\ell'.
\]
Recall that this will complete the proof, as the family
\[
	\mathscr{Z} := \bigcup_{(\cY,\ell) \in \mathscr{Y}} \bigcup_{(\cZ,\ell') \in \mathscr{Z}_{\cY}} (\cZ,\ell + \ell')
\]
satisfies all properties listed in the lemma.

Thus, let us show how to construct  $\mathscr{Z}_{\cY}$ for $(\cY,\ell) \in \mathscr{Y}$.
Again, we will build a recursion tree $\mathsf{T}$ whose root it labeled by $(\cY,0)$.
We proceed exhaustively as follows.
Let $d$ be a node of $\mathsf{T}$ that was not processed yet,
and let it be labeled with $(\cY',\ell')$, where $\cY' = (Y'_1,\ldots,Y'_r)$.

If there are no edges from $Y'_1$ to $Y'_{r}$ or from $Y'_1$ to $Y'_2$, then we stop the recursion,
i.e., $d$ will be a leaf of $\mathsf{T}$.
Otherwise, let $u^*$ be the first vertex of $Y'_1$ that is adjacent to some vertex in $Y'_2$,
and let $v^*$ be the last vertex of $Y'_1$ that is adjacent to some vertex in $Y'_r$.
If $v^*\prec u^*$, then we again stop the recursion and $d$ becomes a leaf.

So now assume that $u^*\preceq v^*$.
As before, let $S$ be the set of vertices $y\in Y'_1$ such that $u^*\preceq y \preceq v^*$, and let $E_1$ (resp., $E_2$) be the set of edges with one endpoint in $Y'_{2}$ (resp., $Y'_{r}$), and the other in $S$.
Finally, let $\widetilde{S}$ be the subset of $S$ that contains all the vertices adjacent to some edge in $E_1\cup E_2$ (see \cref{fig:claim-q2}).

\begin{figure}[t]
    \centering
   
    \begin{tikzpicture}[every node/.style={draw,circle,fill=white,inner sep=0pt,minimum size=8pt},every loop/.style={}]

\foreach \k in {0,1,2,3,4,5,6}
{
\node (a\k) at (\k*0.7,0) {};
}

\foreach \k in {0,1,2,3,4}
{
\node (b\k) at (6.5+\k*0.7,0) {};
}

\foreach \k in {0,1,2,3,4}
{
\node (c\k) at (13+\k*0.7,0) {};
}

\draw (a0) to [bend left] (c3);
\draw (a1) to [bend left] (c2);
\draw (a1) to [bend left] (c4);

\draw[color=orange] (a2) to [bend left] (c4);
\draw[color=orange] (a3) to [bend left] (c2);
\draw[color=orange] (a5) to [bend left] (c4);
\draw[color=orange] (a5) to [bend left] (c0);
\draw[color=orange] (a3) to [bend left] (c1);


\draw[color=green] (a2) to [bend left] (b3);
\draw[color=green] (a2) to [bend left] (b2);
\draw[color=green] (a3) to [bend left] (b4);
\draw[color=green] (a3) to [bend left] (b0);
\draw[color=green] (a5) to [bend left] (b1);
\draw[color=green] (a5) to [bend left] (b3);

\draw (a6) to [bend left] (b0);
\draw (a6) to [bend left] (b2);

\draw (a0) to [bend left] (a2);
\draw (a1) to [bend left] (a2);
\draw (a2) to [bend left] (a4);
\draw (a4) to [bend left] (a5);
\draw (a3) to [bend left] (a5);
\draw (a4) to [bend left] (a6);

\node[draw=none,fill=blue,opacity=0.5] (s) at (2*0.7,0) {};


\foreach \k in {2,3,5}
{
\node[very thick,fill=none] (s) at (\k*0.7,0) {};
}

\node[draw=none,fill=none,label=below:\footnotesize{$u^*$}] (u) at (1.4,0) {};
\node[draw=none,fill=none,label=below:\footnotesize{$v^*$}] (v) at (3.5,0) {};

\draw [decorate,decoration = {calligraphic brace,mirror}] (-0.2,-0.5) --  (4.4,-0.5);
\draw [decorate,decoration = {calligraphic brace,mirror}] (6.3,-0.5) --  (9.5,-0.5);
\draw [decorate,decoration = {calligraphic brace,mirror}] (12.8,-0.5) --  (16,-0.5);
\node[draw=none,fill=none] (x) at (2.1,-1) {$Y'_{1}$};
\node[draw=none,fill=none] (x) at (7.9,-1) {$Y'_{2}$};
\node[draw=none,fill=none] (x) at (14.4,-1) {$Y'_{r}$};

\foreach \k in {10.9,11.15,11.4}
{
\draw[fill=black] (\k,0) circle (0.03);
}

\end{tikzpicture}
     \caption{Sets $Y'_{1}$, $Y'_2$, and $Y'_{r}$ in the proof of \cref{claim:abab-q2}. Orange, green, and black edges denote, respectively, the edges of $E_1$, the edges of $E_2$, and the remaining edges. Thick nodes denote the vertices of $\widetilde{S}$, and blue ones denote the vertices of $U$ (in this example, for $k\geq 1$, we have $U=\{u^*\}$).}
    \label{fig:claim-q2}
\end{figure}
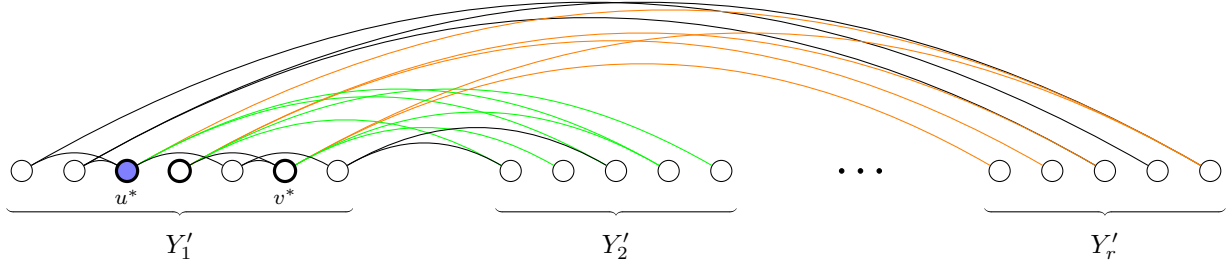

\begin{claim}\label{claim:abab-q2}
There exists a vertex $w$ which is an endpoint of some edge in $E_1\cup E_2$ and such that the set $N[w]$ intersects at least $\frac{1}{2k+4}$-fraction of the edges in $E_1\cup E_2$.
\end{claim}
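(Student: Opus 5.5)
We follow the proof of \cref{claim:abab-q1} closely, but use \cref{claim:abab2} as the forbidden configuration instead of \cref{claim:abab1}. Recall the shape of the forbidden \akbab: edges $ac$ and $bd$, with $a \prec I' \prec b \prec c \prec d$, $|I'|=k$, $a,b \in Y'_1$, $c \in Y'_2$, $d \in Y'_r$. Unlike the previous case, the two subcases $|E_1|\geq|E_2|$ and $|E_2|\geq|E_1|$ are not symmetric. In both, the two edges of the pattern have their left endpoints in $Y'_1$, and we must choose which edge plays the role of $ac$ (towards $Y'_2$) and which plays $bd$ (towards $Y'_r$).

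\emph{Case $|E_2|\geq|E_1|$.} Build $U$ greedily from the left, as in \cref{claim:abab-q1}: start with $U=\emptyset$ and, while $|U|<k$, add the first vertex of $\widetilde{S}\setminus N[U\cup\{u^*\}]$. If the procedure stops with $|U|<k$, then $N[U\cup\{u^*\}]\supseteq \widetilde{S}$, so it meets every edge of $E_2$. Otherwise $|U|=k$. Fix a neighbor $u'\in Y'_2$ of $u^*$, which exists by the choice of $u^*$. Suppose some edge $xy\in E_2$, with $y\in S$ and $x\in Y'_r$, avoids $N[U\cup\{u^*,u'\}]$. Then $y\in\widetilde{S}\setminus N[U\cup\{u^*\}]$. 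By the greedy rule $y$ comes after every vertex of $U$, and $U\subseteq S\setminus\{u^*\}$ lies after $u^*$. Hence $u^*\prec U\prec y\prec u'\prec x$. With $a=u^*$, $I'=U$, $b=y$, $c=u'$, $d=x$, the set $U\cup\{u^*,y,u',x\}$ induces \akbab: the edges are $u^*u'$ and $yx$. All other pairs are nonadjacent, either by the assumption that $xy$ avoids $N[U\cup\{u^*,u'\}]$, or because $U\cup\{u^*\}$ is independent with $U\cap N(u^*)=\emptyset$. This contradicts \cref{claim:abab2}.

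\emph{Case $|E_1|\geq|E_2|$.} Mirror the construction around $v^*$. Build $U$ greedily from the right, repeatedly adding the last vertex of $\widetilde{S}\setminus N[U\cup\{v^*\}]$, and fix a neighbor $v'\in Y'_r$ of $v^*$. If some edge $xy\in E_1$, with $y\in S$ and $x\in Y'_2$, avoids $N[U\cup\{v^*,v'\}]$, then $y\prec U\prec v^*\prec x\prec v'$. Taking $a=y$, $I'=U$, $b=v^*$, $c=x$, $d=v'$ again gives an induced \akbab forbidden by \cref{claim:abab2}.

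In either case, the at most $k+2$ vertices of $U\cup\{u^*,u'\}$ (respectively $U\cup\{v^*,v'\}$) have closed neighborhoods covering the larger of $E_1,E_2$. By averaging, one of them, $w$, has $N[w]$ meeting at least a $\frac1{k+2}$-fraction of that set, and hence at least a $\frac1{2k+4}$-fraction of $E_1\cup E_2$. Each of these vertices is an endpoint of an edge of $E_1\cup E_2$: the vertices of $U$ lie in $\widetilde{S}$, and $u^*u'\in E_1$, $v^*v'\in E_2$ because $u^*,v^*\in S$.

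The main point to check is that the nonadjacencies required for an \emph{induced} copy really hold. In particular, $u'$ and $x$ may be adjacent when $r=3$, since then $Y'_2$ and $Y'_r$ are consecutive links. This is exactly why $u'$ (respectively $v'$) is included in the covering set: an edge not met by $N[u']$ cannot have an endpoint adjacent to $u'$.
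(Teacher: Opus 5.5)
Your proof follows the paper's proof case for case: a greedy $U$ from the left around $u^*$ when $E_2$ dominates, a greedy $U$ from the right around $v^*$ when $E_1$ dominates, a contradiction with \cref{claim:abab2}, and averaging over $k+2$ vertices. However, the step asserting that $U\cup\{u^*,y,u',x\}$ induces \akbab fails as written, because nothing makes $U$ anticomplete to $u'$. Neither of your two justifications covers the pairs $zu'$ with $z\in U$. The assumption on $xy$ only constrains $x$ and $y$, and the greedy rule only keeps $U$ outside $N[u^*]$. Since $\widetilde S$ contains the $S$-endpoints of $E_1$-edges, vertices of $U$ may well be adjacent to $u'$.

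This breaks the covering assertion itself. Take $k=1$, $r=3$, and $X_1=\{u^*,z,y,p\}$, $X_2=\{u'\}$, $X_3=\{q,x_1,x_2,x_3\}$ in this order. The edges are $u^*u'$, $zu'$, $yx_1$, $yx_2$, $yx_3$, and $p,q$ are isolated. One checks that this graph is \ababk-free. Guessing $I_1=\{p\}$, $I_r=\{q\}$ yields the chain $(\{u^*,z,y\},\{u'\},\{x_1,x_2,x_3\})$, with $S=\widetilde S=\{u^*,z,y\}$ and $|E_2|=3\ge|E_1|=2$. Your greedy returns $U=\{z\}$, and $N[U\cup\{u^*,u'\}]=\{u^*,z,u'\}$ meets no edge of $E_2$. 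The would-be copy on $u^*,z,y,u',x_1$ is not induced because of the edge $zu'$. The mirror case fails in the same way with $U$ and $v'$. The claim itself survives in this example (e.g.\ $w=y$), so the problem is the argument, not the statement. The paper's written proof builds $U$ the same way before fixing $u'$ (resp.\ $v'$), so it passes over the same point.

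The repair is small. Fix $u'$ first, and let the greedy pick the first vertex of $\widetilde S\setminus N[U\cup\{u^*,u'\}]$. Symmetrically, fix $v'$ first and pick the last vertex of $\widetilde S\setminus N[U\cup\{v^*,v'\}]$.

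If the greedy stops with $|U|<k$, then $N[U\cup\{u^*,u'\}]\supseteq\widetilde S$, so it meets every edge of $E_1\cup E_2$.

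If $|U|=k$, any edge $xy\in E_2$ avoiding $N[U\cup\{u^*,u'\}]$ still has $y\in\widetilde S\setminus N[U\cup\{u^*,u'\}]$, hence $u^*\prec U\prec y$. Now $U$ is anticomplete to $u'$ by construction, so all required nonadjacencies hold. Your contradiction with \cref{claim:abab2}, the endpoint check, and the averaging step then go through unchanged.
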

\begin{claimproof}
Suppose first that $|E_1| \geq |E_2|$.
Let us define an independent set $U$ as follows. We start with $U = \emptyset$.
As long as $|U|<k$, we add (if exists) to $U$ the last vertex of $\widetilde{S} \setminus N[U\cup \{v^*\}]$.
Note that if at the end of the procedure $|U|<k$, then $N[U\cup \{v^*\}]$ contains all vertices of $\widetilde{S}$.
So assume now that $|U|=k$, and let $v'\in Y'_r$ be the neighbor of $v^*$ -- it exists by the choice of $v^*$.
We observe that the set $N[U\cup\{v^*,v'\}]$ intersects all the edges of $E_1$.
Indeed, suppose otherwise, and let $xy$ be an edge such that $x\in Y'_1$, $y\in Y'_2$ and $x,y$ are nonadjacent to $U\cup\{v^*,v'\}$.
Note that we have $x\prec U \prec v^*\prec y\prec v'$.
Then the set $U \cup\{v^*,v',x,y\}$ induces in $G[\cY]$ a copy of \akbab with $x,v^*\in Y_1$, $y\in Y_2$, and $v'\in Y_r$, which contradicts \cref{claim:abab2}.
Therefore, in both cases, the set $N[U\cup\{v^*,v'\}]$ intersects all the edges of $E_1$,
and by the pigeonhole principle, there is $w \in U\cup\{v^*,v'\}$ such that $N[w]$ intersects at least $\frac{1}{k+2}$-fraction of the edges in $E_1$.
Since $|E_1|\geq |E_2|$, the set $N[w]$ intersects at least $\frac{1}{2k+4}$-fraction of the edges in $E_1\cup E_2$, which completes the proof of the claim in this case.

So now suppose that $|E_1|<|E_2|$.
Again, we define an independent set $U$ as follows.
As long as $|U|<k$, we add (if exists) to $U$ the first vertex of $\widetilde{S}\setminus N[U\cup \{u^*\}]$.
If at the end of the procedure $|U|<k$, then $N[U\cup \{u^*\}]$ contains all vertices of $\widetilde{S}$, and thus, intersects all edges in $E_2$.
So we can assume that $|U|=k$.
Similarly as before, using \cref{claim:abab2}, we observe that the set $N[U \cup \{u^*,u'\}]$, where $u'$ is a neighbor of $u^*$ in $Y'_2$, intersects all edges in $E_2$.

Thus, again using the pigeonhole principle and the fact that $|E_2| > |E_1|$,
we conclude that some vertex in $U \cup \{u^*,u'\}$ satisfies the statement of the claim.
\end{claimproof}

Let us go back to the definition of the children of a node $d$ of the recursion tree $\mathsf{T}$.
For a vertex $w$ given by \cref{claim:abab-q2}, we create child nodes labeled, respectively, by
\[
(\cY' - N[w],\ell'+\wei(w)) \quad \text{and} \quad (\cY' - \{w\},\ell').
\]
Similarly as in the previous case we have
\[
\alpha(\cY')=\max \bigl(  \alpha(\cY'- N[w]) +\wei(w),\alpha(\cY' - \{w\}) \bigr).
\]
This concludes the construction of $\mathsf{T}$.
Again, we can bound the number of leaves $\mathsf{T}$ of this tree by $\mu^{\Oh(\log \mu)}$ which is bounded by $n^{\Oh(\log n)}$ as $\mu \leq n^2$.

Finally, we are ready to define $\mathscr{Z}_{\cY}$ in this case.
Consider a leaf node of $\mathsf{T}$, let it be labeled with $(\cY',\ell')$, and let $\cY' = (Y'_1,\ldots,Y'_r)$.
Then one of the following holds:
\begin{enumerate}[(i)]
\item there are no edges between $Y'_{1}$ and $Y'_r$,
\item there are no edges between $Y'_1$ and $Y'_{2}$,
\item for the first vertex $u^*\in Y'_1$ that is adjacent to some vertex in $Y'_{2}$, and for the last vertex $v^*\in Y'_1$ that is adjacent to some vertex in $Y'_{r}$, it holds that $v^*\prec u^*$.
\end{enumerate}
In each of these cases, we add to $\mathscr{Z}_{\cY}$ the pair $(\cZ,\ell')$, where $\cZ$ is defined as follows, depending on the case:
\begin{enumerate}[(i)]
\item $\cZ = (\emptyset,Y'_1,\ldots,Y'_r)$,
\item $\cZ = (Y'_1,\emptyset,Y'_2,\ldots,Y'_r)$,
\item $\cZ = (Y'_{1,\leftarrow},Y'_{1,\rightarrow},Y'_2,\ldots,Y'_r)$,
where $Y'_{1,\leftarrow}=\{x\in Y'_1 \ | \ x\preceq v^*\}$ and $Y'_{1,\rightarrow}=Y'_1\setminus Y'_{1,\leftarrow}$.
\end{enumerate}
We repeat this for every leaf node of $\mathsf{T}$. 
Again, it is straightforward to verify that  $\mathscr{Z}_{\cY}$  satisfies all required properties.
This completes the proof.
\end{proof}

Equipped with \cref{lem:refine}, we are ready to prove \cref{thm:abab}.

\begin{proof}[Proof of \cref{thm:abab}]
Without loss of generality assume that $k \geq 1$.
Let $(G,\wei)$ be an $n$-vertex instance of \MWIS such that $G=(V,E)$ is a \ababk-free ordered graph.
The algorithm is recursive, with the trivial base case that $n \leq 2$.
So assume that $n \geq 3$ and let $V = \{v_1,\ldots,v_n\}$ where $v_1 \prec v_2 \prec \ldots \prec v_n$.
Define:
\begin{align*}
 A= & \{v_1,\ldots,v_{\lceil\frac{n}{3}\rceil}\}, \\
 B= & \{v_{\lceil\frac{n}{3}\rceil+1},\ldots, v_{\lceil\frac{2n}{3}\rceil}\}, \\
 C= & \{v_{\lceil\frac{2n}{3}\rceil+1},\ldots,v_n\}.
\end{align*}
Note that $\cQ = (A,B,C)$ is a chain over $(G,\wei)$.
Furthermore, each its link is of size at most $\lceil\frac{n}{3}\rceil$.

We will keep refining $\cQ$, obtaining new chains.
Note that each link of such a chain originates either from $A$ or from $B$ or from $C$.
We will call such links, respectively, of type $A$, $B$, or $C$.

We recursively construct a labeled rooted tree $\mathsf{T}$ as follows.
We start with a root labeled with $(\cQ,0)$.
Now, consider a node $d$ of $\mathsf{T}$ that was not processed yet.
Let $d$ be labeled with $(\cQ',\ell')$, where $\cQ'=(Q'_1,\ldots,Q'_r)$ is a chain and $\ell \in \mathbb{Q}_{\geq 0}$.

For $\lambda \in \{A,B,C\}$, let $i_{\lambda}$ be an index of a smallest link of type $\lambda$,
where ties are resolved arbitrarily.
If $|Q'_{i_A}| = |Q'_{i_B}| = |Q'_{i_C}| = 0$, we stop the recursion and $d$ becomes a leaf of $\mathsf{T}$.
So suppose there is some $\lambda \in \{A,B,C\}$ such that $Q'_{i_\lambda} \neq \emptyset$.
We call \cref{lem:refine} for $\cQ'$ and $j=i_{\lambda}$,
obtaining the family $\mathscr{Z}$.
For each member $(\cZ,\ell)$ of this family, we create a child node for $d$ and label it with  $(\cZ,\ell)$.
This concludes the construction of $\mathsf{T}$.

Let us bound the number of leaves in the tree.
First note that each node of $\mathsf{T}$ has at most $n^{\Oh(\log n)}$ children.
Furthermore, note that if $d'$ is a child node of $d$, obtained by refining the chain at $i_{\lambda}$ for some $\lambda \in \{A,B,C\}$, then the smallest link of type $\lambda$ in the instance corresponding to $d'$
has size at most half of the size of the smallest link of type $\lambda$ in the instance corresponding to $d$.
This means that on each root-to-leaf path in $\mathsf{T}$, the same value of $\lambda$ is chosen at most $\log n$ times, and thus, the depth of $\mathsf{T}$ is at most $3\log n$.
Consequently, the number of leaves of $\mathsf{T}$ is at most $n^{\Oh(\log^2 n)}$.

Finally, by \cref{lem:refine},
we can compute $\alpha(G,\wei) = \alpha(\cQ)$ by solving the instances corresponding to leaves of $\mathsf{T}$.

So let us focus on analyzing the structure of such an instance $\cQ' = (Q'_1,\ldots,Q'_{r'})$.
Recall that since we are at a leaf of $\mathsf{T}$,
for each $\lambda \in \{A,B,C\}$ there is $i_\lambda$ such that $Q'_{i_\lambda}$ is of type $\lambda$ and $Q'_{i_\lambda}= \emptyset$. Clearly $i_A < i_B < i_C$.
Let us partition $\bigcup \cQ'$ into three sets (see \cref{fig:3-sets-abab}):
\begin{align*}
	V_1 = & \ Q'_1 \cup \ldots \cup Q'_{i_A-1} \cup Q'_{i_C+1} \ldots \cup Q'_{r'}\\
	V_2 = & \ Q'_{i_A+1} \cup \ldots \cup Q'_{i_B-1}\\	
	V_3 = & \ Q'_{i_B+1} \cup \ldots \cup Q'_{i_C-1}.
\end{align*}

Note that since $\cQ'$ is a chain, these three sets are pairwise nonadjacent.
Thus, we obtain that
\[
	\alpha(\cQ') =\alpha(G[V_1],\wei) + \alpha(G[V_2],\wei) + \alpha(G[V_3],\wei),
\]
and each summand can be computed independently by calling the algorithm recursively the corresponding induced subgraph of $G$.
Furthermore, note that $V_1$ is disjoint with $B$, $V_2$ is disjoint with $C$, and $V_3$ is disjoint with $A$.
Consequently, each of these sets is of size at most $2\lceil n/3 \rceil$.
This means that the depth of the recursion is bounded by $\Oh(\log n)$.
Summing up, the overall running time is upper-bounded by $n^{\Oh(\log^3 n)}$.
This completes the proof.
\end{proof}

\begin{figure}[t]
    \centering
   
    \input{figabab}
     \caption{Sets $Q'_1,\ldots,Q'_r$ forming a chain $Q'$ in the proof of \cref{thm:abab}. Dotted blocks denote empty sets. Blocks in orange, blue, and green represent, respectively, the sets that belong to $V_1$, $V_2$, and $V_3$.}
    \label{fig:3-sets-abab}
\end{figure}
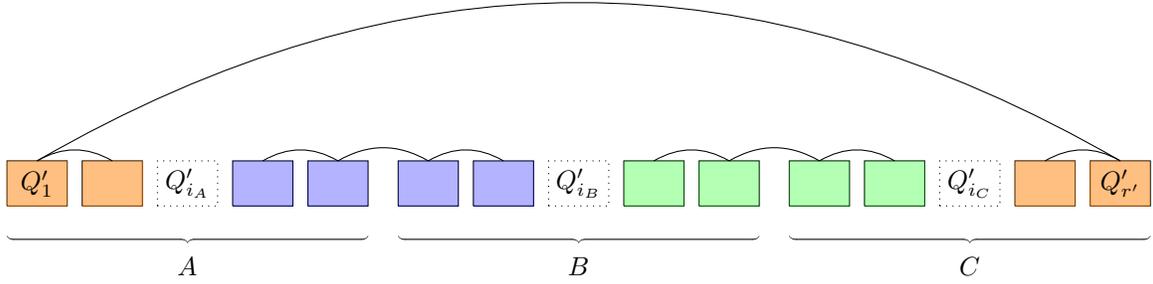

%% file: figabab.tex
\begin{tikzpicture}[every node/.style={draw,circle,fill=white,inner sep=0pt,minimum size=8pt},every loop/.style={}]

    \foreach \k in {0,1,3,4,5.2,6.2,8.2,9.2,10.4,11.4,13.4,14.4}
    {
    \draw (\k,0)--(\k+0.8,0)--(\k+0.8,0.6)--(\k,0.6)--(\k,0);
    }


    \foreach \k in {2,7.2,12.4}
    {
    \draw[dotted] (\k,0)--(\k+0.8,0)--(\k+0.8,0.6)--(\k,0.6)--(\k,0);
    }

    \foreach \k in {0,1,13.4,14.4}
    {
    \draw [draw=none,fill=orange, opacity=0.5] (\k,0) rectangle (\k+0.8,0.6);
    }

    \foreach \k in {3,4,5.2,6.2}
    {
    \draw [draw=none,fill=blue, opacity=0.3] (\k,0) rectangle (\k+0.8,0.6);
    }

    \foreach \k in {8.2,9.2,10.4,11.4}
    {
    \draw [draw=none,fill=green, opacity=0.3] (\k,0) rectangle (\k+0.8,0.6);
    }

    \node[draw=none,fill=none] (q) at (0.4,0.3) {$Q'_1$};
    \node[draw=none,fill=none] (q) at (2.4,0.3) {$Q'_{i_A}$};
    \node[draw=none,fill=none] (q) at (7.6,0.3) {$Q'_{i_B}$};
    \node[draw=none,fill=none] (q) at (12.8,0.3) {$Q'_{i_C}$};
    \node[draw=none,fill=none] (q) at (14.8,0.3) {$Q'_{r'}$};

    \foreach \k in {0,3,5.2,8.2,10.4,13.4}
    {
    \draw (\k+0.4,0.6) to [bend left] (\k+1.4,0.6);
    }
    
    \foreach \k in {4,9.2}
    {
    \draw (\k+0.4,0.6) to [bend left] (\k+1.6,0.6);
    }

    \draw (0.4,0.6) to [bend left] (14.8,0.6);


   \draw [decorate,decoration = {calligraphic brace,mirror}] (0,-0.4) --  (4.8,-0.4);
   \draw [decorate,decoration = {calligraphic brace,mirror}] (5.2,-0.4) --  (10,-0.4);
   \draw [decorate,decoration = {calligraphic brace,mirror}] (10.4,-0.4) --  (15.2,-0.4);

    \node[draw=none,fill=none] (a) at (2.4,-.8) {$A$};
    \node[draw=none,fill=none] (b) at (7.6,-.8) {$B$};
    \node[draw=none,fill=none] (c) at (12.8,-.8) {$C$};

\end{tikzpicture}

%% file: abba.tex
In this section, we prove \cref{thm:main} (3): for every fixed $k$, there is a \emph{subexponential-time} algorithm for \MWIS in \abbak-free graphs.
More precisely, we show the following result.

\begin{theorem}
   For every fixed $k$, \MWIS in $n$-vertex \abbak-free graphs can be solved in time $n^{\Oh(n^{2/3} \cdot \log^3n)}$.    
\end{theorem}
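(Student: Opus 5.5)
We follow the outline from the technical overview, working with a general fixed $k$. A recursive procedure takes an induced subgraph $G'$ of the input on $n' \le n$ vertices. Its running time is measured against the original $n$, i.e., all thresholds use $n$.

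\emph{Step 1: high-degree branching.} Set $\tau = n^{1/3}$. While some vertex $v$ has degree at least $\tau$, branch on $v$: recurse on $G'-v$ and on $G'-N[v]$. The second branch removes at least $\tau$ vertices, so along any root-to-leaf path it can occur at most $n/\tau=n^{2/3}$ times. The number of leaves is therefore at most $\binom{n}{\le n^{2/3}} = n^{\Oh(n^{2/3})}$. From now on we assume maximum degree below $\tau$.

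\emph{Step 2: the segment decomposition.} We greedily partition the vertex order into consecutive segments $Z_1 \prec \dots \prec Z_t$.
\begin{itemize}
\item Start from the first vertex and let $x_1y_1$ be an edge with $x_1$ in the current prefix and $y_1$ as far to the right as possible; the segment $Z_1$ ends at $y_1$.
\item Continue from the vertex after $y_1$, and treat isolated gaps as trivial segments.
\end{itemize}
Maximality of $y_i$ ensures that edges only join consecutive segments. For each $i$, the graph $Z_i \setminus N[\{x_i,y_i\}]$ should be nearly independent: an edge inside it lies under $x_iy_i$, giving \abba. With padding $k$, we first guess up to $k$ solution vertices on each side of that edge inside $Z_i$, in the spirit of \cref{claim:abab1}. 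This costs $n^{\Oh(k)}$ guesses per segment, which is harmless only for $\Oh(n^{2/3})$ segments. So we perform these guesses solely for the middle segments used below; for $k=0$ no guesses are needed. Since $|N[\{x_i,y_i\}]| \le 2\tau$, removing this neighbourhood leaves an independent remainder.

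\emph{Step 3: splitting.} Let $M$ be the middle third of the order, by vertex count. There are two cases.
\begin{itemize}
\item \emph{Some segment meeting $M$ has at most $n^{2/3}$ vertices.} Branch on its intersection with the solution, which gives $2^{n^{2/3}}$ choices. Deleting it disconnects the part before it from the part after it, each of size at most about $2n'/3$, and we recurse on both.
\item \emph{All segments meeting $M$ are larger.} Then at most $n^{1/3}+2$ of them meet $M$. Take the outermost segments $Z_a$ before $M$ and $Z_b$ after $M$ of size at most $n^{2/3}$ (if none exist, the large segments can be handled the same way). Branch on the solution inside $Z_a \cup Z_b \cup \bigcup_{a<i<b} N[\{x_i,y_i\}]$ together with the $\Oh(k)$ guessed vertices. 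This set has size $\Oh(n^{2/3} + n^{1/3}\tau)=\Oh(n^{2/3})$, so there are $2^{\Oh(n^{2/3})}$ branches.
\end{itemize}
In the second case, after deleting this set the middle part is a union of independent sets $Z_i'$ with edges only between consecutive ones. Hence it is bipartite: colour the segments by parity. It is also separated from the left and right remainders, which we recurse on. However, $Z_a$ may lie far from $M$, so the sides might not shrink. I would instead choose the split point more carefully: pick a window where a small segment exists within distance $\Oh(n^{2/3})$, or otherwise use the large-segment argument on the whole stretch between small segments. Making this balance work is the main obstacle. A cleaner fix is to treat every large segment as $\{x_i,y_i\}$-dominated, so that all large segments together contribute only $n^{1/3}\tau$ vertices. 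Each recursion level then costs $n^{\Oh(n^{2/3})}$ branches with depth $\Oh(\log n)$.

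\emph{Step 4: accounting.} This is where the $\log^3 n$ comes from. Each level multiplies the branch count by $n^{\Oh(n^{2/3}\log n)}$, coming from the degree branching and the separator guess. There are $\Oh(\log n)$ levels of $2/3$-shrinking, and an extra $\log$ factor is absorbed by the $n^{\Oh(k)}$ guesses and bookkeeping. Together these give $n^{\Oh(n^{2/3}\log^3 n)}$. Bipartite leaves are solved via \cite{LovaszPlummer1986_MatchingTheory}.
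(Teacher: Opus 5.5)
Your overall architecture is the paper's: branch on vertices of degree at least $n^{1/3}$, build the greedy furthest-reaching segment decomposition with edges only between consecutive $Z_i$, use the forbidden pattern to make each $Z_i$ independent after deleting a small dominated set, and split into cases on whether a small segment meets the middle third $M$. The gap is in the second case of Step~3. You leave it open yourself (``making this balance work is the main obstacle''), and you misdiagnose the difficulty. The outer pieces always shrink: if $Z_a\subseteq\mathsf{Left}$ and $Z_b\subseteq\mathsf{Right}$, then $Z_{<a}\subseteq\mathsf{Left}$ and $Z_{>b}\subseteq\mathsf{Right}$, so each has at most $n'/3$ vertices, no matter how far $Z_a$ lies from $M$. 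What actually breaks with the outermost choice is the middle piece. Small segments can survive strictly between $Z_a$ and $Z_b$, possibly $\Theta(n')$ of them. Dominating them all blows up the branching set, while leaving them undominated destroys bipartiteness. Your ``cleaner fix'' (dominate every large segment) only bounds the cost of the large segments and does not say which separators to use afterwards, so it does not close this. Likewise, ``at most $n^{1/3}+2$ segments meet $M$'' is not the quantity that matters; you need to bound the number of segments between the two separators.

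The missing idea is to take the \emph{innermost} small segments. Let $a$ be the largest index such that $Z_a$ is small and contained in $\mathsf{Left}$, and $b$ the smallest index such that $Z_b$ is small and contained in $\mathsf{Right}$ (with $a=0$ or $b=t+1$ if no such segment exists). Since no small segment meets $M$, every $Z_i$ with $a<i<b$ is large. Large segments are disjoint and each has more than $n^{2/3}$ vertices, so there are at most $n^{1/3}$ of them in total. Let $P_i$ denote the at most $2k$ padding vertices of $Z_i$. Then $S=Z_a\cup Z_b\cup N\bigl[\bigcup_{a<i<b}(\{x_i,y_i\}\cup P_i)\bigr]$ has $\Oh(n^{2/3})$ vertices. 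After guessing the solution on $S$, the middle piece is a sequence of independent sets with edges only between consecutive ones, hence bipartite; this is exactly the paper's Case~2.

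A few smaller points. For $k\geq 1$, deleting $N[\{x_i,y_i\}]$ alone does not leave an independent set; you also need to delete the neighbourhoods of the padding vertices. Your way of obtaining the padding is valid: guess the first and last $k$ solution vertices in $Z_i\setminus N[\{x_i,y_i\}]$ and delete the vertices outside that window. It costs only $n^{\Oh(kn^{1/3})}$ over at most $n^{1/3}$ middle segments. The paper avoids guessing altogether: it greedily picks any $k$ pairwise nonadjacent vertices from the left and $k$ from the right, and puts their closed neighbourhoods into $S$. Finally, your account of the $\log^3 n$ factor is not a derivation, but this is harmless: the recursion you set up gives at most $n^{\Oh(n^{2/3}\log n)}$, which implies the claimed bound.
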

\begin{proof}
    Let $k$ be a fixed integer and let $(G,\wei)$ be an instance of \MWIS, where $G$ is \abbak-free and has $n$ vertices.  
    Note that we do not update the value of $n$ throughout the run of the algorithm.
    Let $\tau = n^{1/3}$.

\paragraph{Step 1. Branching on high-degree vertices.}
    First, we perform a standard branching on high-degree vertices.
    If there is a vertex $v$ of degree at least $\tau$, we branch on including or not including $v$ in the solution.
    The number of instances created in this phase is given by the recursion
    \[
        F(n) \leq F(n-1) + F(n - \tau),    
    \]
    which is solved by $n^{\Oh(n/\tau)}=2^{\Oh(n^{2/3} \log n)}$.
    Each instance $(G',\wei)$ created in this phase has maximum degree at most $\tau$.
    Furthermore, it is an induced subgraph of $G$, so in particular $G'$ is \abbak-free.

\paragraph{Step 2. Partitioning $V(G')$ into segments.}    
    Consider one instance $(G',\wei)$ created in Step~1 and let $n' = |V(G')|$.
    If $n'=1$, the problem is obvious, so let us assume that $n' \geq 2$.
    We can also assume that $G'$ is connected; otherwise, we can call the algorithm for each component of $G'$ separately.

    In the next Claim, we will define some sequence $Z_1,\ldots,Z_t$ of subsets of $V(G')$.
    For $i \in [t]$, we introduce a notation $Z_{\leq i} := \bigcup_{i' \leq i} Z_{i'}$.
    We define $Z_{< i} $ and $Z_{> i}$ analogously.

    \begin{claim}\label{clm:partition}
    There is a sequence $(x_i,y_i,Z_i)_{i=1}^t$, for some $t \geq 1$, such that each $x_i,y_i \in V(G')$ and $Z_i \subseteq V(G')$,
    that satisfies the following properties.
    \begin{enumerate}[(P1)]              
        \item for each $i\in [t]$, the set $Z_{\leq i}$ is a prefix of $V(G')$;
        \item for each $i\in [t]$, we have $x_iy_i \in E(G')$,
        \item for each $i\in [t]$, it holds that $x_i \preceq Z_i \preceq y_i$ and $y_i \in Z_i$,                
        \item for each $i\in [t]$, the only vertices of $\bigcup_{i' \leq i} Z_{i'}$ that might have edges to $V(G') \setminus Z_{\leq i}$ are in $Z_i$,
        \item sets $Z_1,Z_2,\ldots,Z_t$ form a partition of $V(G')$.  
    \end{enumerate}    
    \end{claim}

    \begin{claimproof}
        We construct the sequence iteratively.
        Let $x_1$ be the first vertex of $G'$ and let $y_1$ be the last neighbor of $x_1$;
        it exists since $G'$ is connected and has at least two vertices.
        Let $Z_1$ denote the set of vertices $v$ such that $x_1 \preceq v \preceq y_1$.
        Clearly, properties (P1)-(P4) are met.
    
        Now, suppose we have defined $(x_1,y_1,Z_1),\ldots,(x_{i},y_{i},Z_{i})$ for some $i \geq 1$.
        If $y_{i}$ is the last vertex of $G'$, we finish the construction (i.e., $t=i$);
        note that property (P5) is met in this case.
        So, suppose otherwise. We aim to define $(x_{i+1},y_{i+1},Z_{i+1})$.
    
        Let $U = V(G') \setminus Z_{\leq i}$. By property (P1), $U$ consists of all vertices that are after $Z_i$.
        Since $G'$ is connected, there must be an edge from $Z_{\leq i}$ to $U$.
        Let $x_{i+1}y_{i+1} \in E(G')$ be an edge, so that:
        \begin{itemize}
            \item $x_{i+1} \in Z_{\leq i}$ and $y_{i+1} \in U$,
            \item $y_{i+1}$ is furthest possible.
        \end{itemize}
        By (P4), we have $x_{i+1} \in Z_i$.
        We define $Z_{i+1}$ to be the set of $v$ such that $Z_i \prec v \preceq y_{i+1}$.
        Properties (P1), (P2), (P3) are clearly satisfied.
        Property (P4) follows from the choice of $x_{i+1}y_{i+1}$.
        This completes the $(i+1)$-th step of the construction.
    \end{claimproof}

    Let $(x_i,y_i,Z_i)_{i=1}^t$ be as in \cref{clm:partition}.
    Note that property (P4) implies that
    \begin{itemize}
        \item[$(\star)$] edges between sets $Z_i,Z_j$, for $i<j$, might exist only if $j = i+1$. 
    \end{itemize}    
    We also denote $\mathsf{Roof} = \bigcup_{i=1}^t \{x_i,y_i\}$.

 \paragraph{Step 3. Picking independent sets inside $Z_i$.}            

    For each $i \in [t]$, we proceed as follows.
    Processing vertices from left to right, we greedily pick pairwise nonadjacent vertices in $Z_i \setminus N[\mathsf{Roof}]$.
    We finish once we pick $k$ vertices or when there are no more vertices in  $Z_i \setminus N[\mathsf{Roof}]$ that are nonadjacent to the selected ones.
    Let $\mathsf{First}_i$ denote the set of at most $k$ vertices picked in this step.
    
    Next, processing vertices from right to left, we greedily pick pairwise nonadjacent vertices in $Z_i \setminus(N[\mathsf{Roof}] \cup N[\mathsf{First}_i])$.
    Again, we stop once we pick $k$ vertices or once there are no more possible vertices to choose.
    Let $\mathsf{Last}_i$ denote the set of at most $k$ vertices picked in this step.
    Summing up, $\mathsf{First}_i \cup \mathsf{Last}_i$ is an independent set of size at most $2k$, and there are no edges between this set and $\{x_i,y_i\}$. 

    We also observe the following.

    \begin{claim}\label{clm:inZiindependent}
        For each $i \in [t]$, the set $Z_i \setminus N[ \{x_i,y_i\}  \cup \mathsf{First}_i \cup \mathsf{Last}_i]$ is independent.
    \end{claim}       
    \begin{claimproof}
         For contradiction, suppose there an edge $uv$ in $Z_i \setminus N[ \{x_i,y_i\}  \cup \mathsf{First}_i \cup \mathsf{Last}_i]$.
         Since  $Z_i \setminus N[ \{x_i,y_i\}  \cup \mathsf{First}_i \cup \mathsf{Last}_i]$ is non-empty, we have $|\mathsf{First}_i| = |\mathsf{Last}_i| = k$.
         Moreover, by the choice of $\mathsf{First}_i$ and $ \mathsf{Last}_i$, for every vertex $w\in Z_i \setminus N[ \{x_i,y_i\}  \cup \mathsf{First}_i \cup \mathsf{Last}_i]$, it holds $ \mathsf{First}_i \prec w \prec \mathsf{Last}_i$.
         Consequently, vertices $\{x_i,u,v,y_i\} \cup \mathsf{First}_i \cup \mathsf{Last}_i$ induce a copy of \abbak in $G'$, a contradiction.
    \end{claimproof}   

    \paragraph{Step 4. Some more branching to split the instance.}
    For $i \in [t]$, we say that $Z_i$ is \emph{large} if it has at least $n'/\tau \leq n^{2/3}$ vertices;
    otherwise $Z_i$ is \emph{small}.
    Let $\mathsf{Left}$ (resp.,  $\mathsf{Right}$) consist of the first (resp.,  last) $\lfloor n'/3 \rfloor$ vertices of $G'$,
    and let $\mathsf{Middle} = V(G') \setminus (\mathsf{Left} \cup \mathsf{Right})$.

    We consider two cases.

    
    
    

    \paragraph{Case 1: There is a small set intersecting $\mathsf{Middle}$.}    
        Let $i \in [t]$ be such that $Z_i$ is small and $Z_i \cap \mathsf{Middle}\neq\emptyset$.
        By $(\star)$, there are no edges between $Z_{<i}$ and $Z_{>i}$.
        Furthermore, $|Z_{<i}| \leq \frac{2}{3}n'$ and $|Z_{>i}| \leq \frac{2}{3}n'$.

        We exhaustively guess the intersection of a fixed optimum solution with $Z_i$;
        in each branch we remove $Z_i$ and all neighbors of vertices that are chosen to be in the solution.

        This breaks the instance into two independent subinstances, one contained in $Z_{<i}$ and another contained in $Z_{>i}$,
        each of size at most $\frac{2}{3}n'$. Thus, on each of these subinstances, we can call the algorithm recursively.

        The running time bound in this case is given by the recursive inequality
        \[
            T(n') \leq 2^{\Oh(n'/\tau)} T(2n'/3),        
        \]
        which is solved by 
        \[
            T(n') \leq 2^{\Oh(n' \log n' / \tau)} = 2^{\Oh( n^{2/3} \log n)}.
        \]
        
    \paragraph{Case 2: All small sets (if any) are contained in $\mathsf{Left} \cup \mathsf{Right}$.}
        Let $\ell$ be the largest value of $i$ such that $Z_i$ is small and $Z_i \subseteq \mathsf{Left}$, if such a value exists.
        Similarly, let $r$ be the smallest value of $i$ such that $Z_i$ is small and $Z_i \subseteq \mathsf{Right}$, if such a value exists.
        In what follows, let us assume that both $\ell$ and $r$ are as defined; the other cases are analogous (and slightly simpler).
        We proceed similarly as in Case 1.
        
        Note that removing $Z_\ell$ and $Z_r$ splits the graph into three parts: $Z_{< \ell}$, $\bigcup_{i = \ell+1}^{r-1} Z_{i}$, and $Z_{>r}$.
        Again, there are no edges between distinct parts.      
                
        The first and the last part are multiplicatively smaller than the original instance -- each has at most $n'/3$ vertices.
        On the other hand, the middle part can be arbitrarily large. In fact, it might contain almost all vertices of $G'$.
        However, by the choice of $\ell$ and $r$, we know that all sets $Z_i$ that end up in the middle part are large.
        Let $t' = (r-1) - (\ell+1) +1$, i.e., the number of sets $Z_i$ that belong to the middle part.
        Since $n' \geq \sum_{i=\ell+1}^{r-1} |Z_i| \geq t'n'/\tau$, we conclude that $t' \leq \tau$.

        Let $S = Z_\ell \cup Z_r \cup N_{G'}\left[\bigcup_{i=\ell+1}^{r-1} \left(\{x_i,y_i\} \cup \mathsf{First}_i \cup \mathsf{Last}_i \right) \right]$.
        Note that $|S| \leq 2 \cdot n/\tau + (2+2k)\tau \cdot \tau = \Oh( n^{2/3} )$.
        Again, we exhaustively guess the intersection of a fixed optimum solution with $S$.
        This results in at most $2^{|S|} = 2^{\Oh( n^{2/3} )}$ branches, in each of which we remove $S$ and the neighbors of vertices chosen to be included in the solution.

        Now, we obtain three instances that can be solved independently: two of them --- one contained in $\mathsf{Left}$ and the other contained in $\mathsf{Right}$ --- are small, i.e., have at most $2n'/3$ vertices.
        Let $G''$ be the middle instance.
        By \cref{clm:inZiindependent} and $(\star)$, $G''$ consists of a number of independent sets,
        where only possible edges between distinct sets might appear if sets are consecutive.
        This means that $G''$ is bipartite, and thus \MWIS on $G''$ can be solved in polynomial time.

        Summing up, the running time bound in this case is given by recursive inequality
        \[
            T(n') \leq 2^{\Oh(n^{2/3})} \left ( 2 \; T(2n'/3) + n'^{\Oh(1)} \right) \leq 2^{\Oh(n^{2/3})}  T(2/3 \cdot n'),        
        \]
        which is solved by 
        \[
            T(n') \leq 2^{\Oh(n^{2/3} \log n')} = 2^{\Oh( n^{2/3} \log n)}.
        \]
        This completes the proof.
\end{proof}

%% file: hardness-reduction-3sat.tex
In this section, we explore the \NP-hardness of \MIS in hereditary classes of ordered graphs. 
We utilize four types of reductions, each presented in a dedicated section.
The last section summarizes the results and shows how they combine to prove the hardness part of \cref{thm:main}.

\subsection{Direct reduction from \sat{3}}

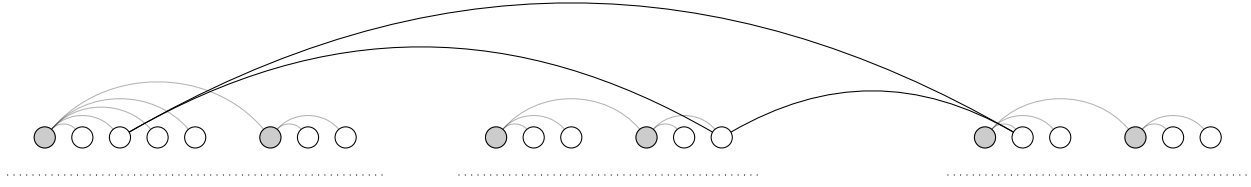
\begin{figure}[t]	
		\centering
		\input{fig3sat}
		\caption{The graph constructed in the proof of \cref{3sat}. Underscored segments indicate blocks, with the two literal vertices indicated by shading. Gray edges, contained within blocks, encode logical dependencies. Meanwhile the black ones join all occurrence vertices corresponding to some clause.}
		\label{fig:3sat}
\end{figure}

\begin{theorem}\label{3sat}
	\MIS is \NP-hard when restricted to \bad-free ordered graphs.
\end{theorem}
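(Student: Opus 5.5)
The plan is a direct reduction from \sat{3}, following the classical variable/clause gadget construction for \MIS. The construction is then laid out linearly in blocks (as in \cref{fig:3sat}) so that no vertex $s$ has two right neighbors $p \prec r$ with a vertex $q$ in between that is nonadjacent to all of $s,p,r$. Equivalently, the forbidden \bad{} says this: for every vertex $s$ and any two of its right neighbors $p \prec r$ with $pr \notin E$, every vertex strictly between $p$ and $r$ is adjacent to $s$, $p$ or $r$. The guiding design principle is therefore that every vertex has at most one ``far'' right neighbor. All its other right neighbors should sit immediately after it, or be separated from the far one only by vertices adjacent to $s$, $p$ or $r$.

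First, I normalize the formula so that every variable occurs a bounded number of times, say at most three. I then give each occurrence its own private copy of the literal. For a variable $x$ with $d$ occurrences, the variable gadget is an even cycle (or a $2$-subdivided structure, in the spirit of Poljak's trick) on literal vertices $x^1,\bar x^1,\dots,x^d,\bar x^d$. It has exactly two maximum independent sets, encoding $x=\true$ and $x=\false$. For every clause $C$ I create a triangle on three occurrence vertices. Each occurrence vertex of literal $\ell$ is joined to the private copy of $\bar\ell$. The target value is $\alpha = \sum_x d_x + m$, where $m$ is the number of clauses. The equivalence with satisfiability is the standard argument: a satisfying assignment picks the consistent half of each cycle plus one true occurrence per clause, and conversely an independent set of the target size must be tight on every gadget.

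Next, I fix the ordering. Each variable occupies one block. Inside the block, the literal copies are ordered so that every cycle vertex has its in-block right neighbors consecutive and immediately after it. The first cycle vertex, in particular, is placed directly before its two cycle neighbors. The occurrence vertex attached to a literal copy is placed right after that copy inside the same block, which makes each occurrence-to-literal edge short. Finally, the clause (black) edges join occurrence vertices in different blocks. For a clause with occurrences $a \prec b \prec c$, I would subdivide the edges $ab$ and $ac$ evenly via Poljak's trick, each subdivision raising $\alpha$ by exactly one so the threshold adjusts accordingly. I then route the subdivision paths so that the path from $a$ towards $c$ leaves $a$ by a single right edge. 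That way $a$ itself has only one far right neighbor, and every path vertex has its second right neighbor adjacent in the ordering.

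The main obstacle is this last verification: checking, case by case over the edge types (cycle edges, occurrence-to-literal edges, subdivided clause paths), that no vertex has two right neighbors separated by an ``uncovered'' vertex. I would first carry this out with all gadgets placed block by block and clause paths emanating leftward-to-rightward only. Only if some vertex is found with two scattered right neighbors would I add a further subdivision so that its right degree drops to one.
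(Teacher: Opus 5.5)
Your equivalence argument is fine (each Poljak double subdivision raises $\alpha$ by exactly one, so the threshold just shifts). But the whole content of the theorem is that the constructed ordered graph is \bad-free, and this is exactly the step you leave open as ``the main obstacle''. Your layout is described only by desiderata, and these pull against each other. A copy that carries an occurrence vertex $o$ and also has a cycle neighbour to its right has two nonadjacent right neighbours. Putting $o$ right after the copy, as you prescribe, leaves a gap before that cycle neighbour. Concretely, take the natural layout $v_1,v_2,v_{2d},v_3,v_{2d-1},\ldots$ of $C_{2d}$, with the first vertex directly before its two cycle neighbours, and suppose $v_{2d}$ carries $o$. Then $v_{2d}\prec o\prec v_3\prec v_{2d-1}$. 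Here $v_{2d}$ is adjacent to $o$ and $v_{2d-1}$, we have $ov_{2d-1}\notin E$, and for $d\ge 3$ the vertex $v_3$ is adjacent to none of the three. So these four vertices induce \bad.

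You also never fix where the subdivision vertices of the clause paths go. Any of them landing between two nonadjacent right neighbours of some vertex produces the same pattern. Your fallback (``subdivide until the right degree drops to one'') cannot be a general cure. The leftmost vertex of any cycle has two right neighbours, so if every vertex had right degree at most one the graph would be a forest, where \MIS is easy. Vertices with several right neighbours must remain, and placing them is the entire problem. (Minor point: by Tovey's theorem, formulas with exactly three variables per clause and at most three occurrences per variable are always satisfiable. Your normalization would therefore need $2$-clauses or four occurrences. In fact no occurrence bound is needed at all.)

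The missing idea is already in your own reformulation: \bad needs the two right neighbours $p,r$ of $s$ to be \emph{nonadjacent}. So a vertex whose right neighbourhood is a clique can never play the role of $s$. In particular, the first occurrence vertex of an unsubdivided clause triangle is harmless. Subdividing $ab$ and $ac$ is counterproductive, since it is precisely what makes the two right neighbours of $a$ nonadjacent.

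On the variable side, one vertex per literal suffices, provided its right neighbours form a segment. The paper orders the block of $x$ as $v(x)$, then all $v(\neg x,c)$, then $v(\neg x)$, then all $v(x,c)$, and concatenates the blocks. Each literal vertex then has its right neighbourhood as a segment directly after it, so any vertex lying between two of its right neighbours is itself adjacent to $s$. Each occurrence vertex has its right neighbourhood inside its clause triangle, which is a clique. The vertex set partitions into the $n$ edges $v(x)v(\neg x)$ and the $m$ clause triangles, which gives the equivalence with threshold $n+m$. No cycles, subdivisions, or occurrence bounds are needed.
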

\begin{proof}
	We reduce from the \sat{3} problem.
	Consider an instance $\Phi$ of \sat{3} with $n$ variables and $m$ clauses.
	Denote by $\cX$ and $\cC$, respectively, the set of variables and the set of clauses of $\Phi$.
	We can assume that each clause contains three distinct literals.
	We create an ordered graph $G$ in the following way -- see also \cref{fig:3sat}.
	
	For each variable $x$, we create \emph{literal} vertices $v(x)$ and $v(\neg x)$ representing, respectively, literals $x$ and $\neg x$.
	Next, for each clause $c$ and each literal $\ell$ occurring in $c$, we introduce an \emph{occurrence} vertex $v(\ell,c)$.
	
	Now let us define the edges of $G$ -- we split them into three types.
	First, for every variable $x \in \cX$, we add the edge $v(x)v(\neg x)$.
	Next, for every literal $\ell$ and every clause $c$ containing the literal $\neg \ell$, we add the edge $v(\ell)v(\neg\ell,c)$.
	Finally, for any clause $c$ and any two literals $\ell,\ell'$ in $c$, we add the edge $v(\ell,c)v(\ell',c)$.
	
	We order the vertices of $G$ as follows.
	For each variable $x\in \cX$, we create a block consisting of $v(x)$, $v(\neg x)$, and the vertices $v(x,c)$ or $v(\neg x,c)$ for every $c\in \cC$ that contains $x$.
	The vertices of the block for $x$ are ordered so that the first vertex is $v(x)$, followed by all the vertices $v(\neg x,c)$ for $c \in \cC$, followed by $v(\neg x)$, which is in turn followed by all $v(x,c)$'s for $c\in \cC$.
	The relative order of the vertices $v(\neg x,c)$ (resp., $v(x,c)$) is arbitrary.
	Finally, the blocks corresponding to distinct variables are put one after another in arbitrary order.
	This completes the construction of $G$, and it is clearly computable in polynomial time.
	
	Let us verify that $G$ has an independent set of size $n+m$ if and only if $\Phi$ is satisfiable.
	
	\begin{claim}
		If $\Phi$ is satisfiable, then $\alpha(G) \geq n+m$.
	\end{claim}
	
	\begin{claimproof}
		Suppose $\Phi$ is satisfiable and let $\vphi$ be a satisfying truth assignment.		
		We construct an independent set $I$ of size $n+m$ as follows.
		Whenever $\vphi(x)=\true$, we include $v(x)$ in the set $I$; otherwise, we include $v(\neg x)$.
		Since $\vphi$ satisfies $\Phi$, each clause contains at least one literal set by $\vphi$ to $\true$; we fix one such literal $\ell$ arbitrarily for each clause $c\in \cC$, and include the corresponding occurrence vertex $v(\ell,c)$ in $I$.
		Since such vertices are not adjacent to each other, nor to the literal vertices included previously, $I$ is indeed an independent set.
		Moreover, as $I$ includes one vertex for each variable and one vertex for every clause, $|I|=n+m$, as desired.
	\end{claimproof}
	
	\begin{claim}
		If $\alpha(G) \geq n+m$, then $\Phi$ is satisfiable.
	\end{claim}
	
	\begin{claimproof}
		First, let us observe that the vertices of $G$ can be partitioned into $n+m$ cliques, namely $n$ copies of $K_2$ of the form $\{v(x), v(\neg x)\}$ corresponding to the variables of $\Phi$, and $m$ triangles of the form $\{v(\ell_1, c), v(\ell_2,c), v(\ell_3, c)\}$, corresponding to the clauses of $\Phi$.
		Therefore, if $\alpha(G) \geq n+m$ and $I$ is an independent set of size $n+m$ in $G$, then $I$ has to contain precisely one vertex from each of these cliques.
		We define a truth assignment $\vphi$ satisfying $\Phi$ as follows.
		For each variable $x\in \cX$, we set $\vphi(x)=\true$ if $v(x) \in S$, otherwise we set $\vphi(x)=\false$. 
		Note that for every clause $c\in \cC$ and every literal $\ell$ in $c$, the occurrence vertex $v(\ell,c)$ is adjacent to the vertex $v(\neg \ell)$.
		Furthermore, $v(\neg \ell)\notin I$ if and only if $\vphi(\ell)=\true$,
		and thus the vertex $v(\ell,c)$ can be included in $I$ only if $\vphi(\ell)=\true$.
		Therefore, since for every clause $c\in \cC$, there must be a literal $\ell$ such that $v(\ell,c)\in I$, the assignment $\vphi$ satisfies $\Phi$.
        This completes the proof of the claim.
	\end{claimproof}

It only remains to verify that $G$ is \bad-free.
	\begin{claim}
		$G$ does not contain \bad as an induced subgraph.
	\end{claim}
	\begin{claimproof}
		Suppose $G$ contains an induced \bad; let $a, b, c, d$ be its consecutive vertices. 
		Consider the vertex $a$. Since $a$ is the leftmost vertex of the copy of \bad, only its right neighbors are relevant.
		First suppose that $a$ is a literal vertex.
		Then, all right neighbors of $a$ form a segment.
		In particular, there is no vertex between $b$ and $d$ which is nonadjacent to $a$, which contradicts the fact that $c$ is between $b$ and $d$, and is nonadjacent to $a$.
		So now suppose that $a$ is an occurrence vertex, say $a=v(\ell_1,c)$, for a literal $\ell_1$ and a clause $c$.
		Note that the only right neighbors of $a$ can be the vertices $v(\ell_2,c)$ and $v(\ell_3,c)$, where $c=\ell_1 \lor \ell_2 \lor \ell_3$.
		But there is an edge $v(\ell_2,c)v(\ell_3,c)$ in $G$, and $a$ has two right neighbors $b,d$ that are nonadjacent to each other, which is a contradiction.
	\end{claimproof}
	
	This completes the proof of the theorem.
\end{proof}

%% file: fig3sat.tex
\begin{tikzpicture}[
        every node/.style={draw, circle, inner sep=0, outer sep = 0, minimum size=8pt}, baseline= -0.5mm]

            \draw[dotted] (0,0) -- (5,0);
            \draw[dotted] (6,0) -- (10,0);
            \draw[dotted] (12.5, 0) -- (16.5,0);
            
            \node[fill, opacity=.2] (s') at (.5, .5) {};
            \node (s) at (.5, .5) {};
            \node (s1) at (1, .5) {};
            \node (s2) at (1.5, .5) {};
            \node (s3) at (2, .5) {};
            \node (s4) at (2.5, .5) {};

            \node[fill, opacity=.2] (-s') at (3.5, .5) {};
            \node (-s) at (3.5, .5) {};
            \node (-s1) at (4, .5) {};
            \node (-s2) at (4.5, .5) {};

            \node[fill, opacity=.2] (p') at (6.5, .5) {};
            \node (p) at (6.5, .5) {};
            \node (p1) at (7, .5) {};
            \node (p2) at (7.5, .5) {};

            \node[fill, opacity=.2] (-p') at (8.5, .5) {};
            \node (-p) at (8.5, .5) {};
            \node (-p1) at (9, .5) {};
            \node (-p2) at (9.5, .5) {};

            \node[fill, opacity=.2] (q') at (13, .5) {};
            \node (q) at (13, .5) {};
            \node (q1) at (13.5, .5) {};
            \node (q2) at (14, .5) {};

            \node[fill, opacity=.2] (-q') at (15, .5) {};
            \node (-q) at (15, .5) {};
            \node (-q1) at (15.5, .5) {};
            \node (-q2) at (16, .5) {};

            \path[draw, opacity=.3] (s) to[bend left = 50] (s1);
            \path[draw, opacity=.3] (s) to[bend left = 50] (s2);
            \path[draw, opacity=.3] (s) to[bend left = 50] (s3);
            \path[draw, opacity=.3] (s) to[bend left = 50] (s4);
            \path[draw, opacity=.3] (s) to[bend left = 50] (-s);
            \path[draw, opacity=.3] (-s) to[bend left = 50] (-s1);
            \path[draw, opacity=.3] (-s) to[bend left = 50] (-s2);

            \path[draw, opacity=.3] (p) to[bend left = 50] (p1);
            \path[draw, opacity=.3] (p) to[bend left = 50] (p2);
            \path[draw, opacity=.3] (p) to[bend left = 50] (-p);
            \path[draw, opacity=.3] (-p) to[bend left = 50] (-p1);
            \path[draw, opacity=.3] (-p) to[bend left = 50] (-p2);

            \path[draw, opacity=.3] (q) to[bend left = 50] (q1);
            \path[draw, opacity=.3] (q) to[bend left = 50] (q2);
            \path[draw, opacity=.3] (q) to[bend left = 50] (-q);
            \path[draw, opacity=.3] (-q) to[bend left = 50] (-q1);
            \path[draw, opacity=.3] (-q) to[bend left = 50] (-q2);

            \path[draw] (s2) to[bend left=30] (-p2);
            \path[draw] (s2) to[bend left=30] (q1);
            \path[draw] (-p2) to[bend left=30] (q1);

            
		\end{tikzpicture}

%% file: hardness-2-subdivision.tex
\subsection{2-subdivision of every edge}
In this section, we utilize Poljak's subdivision trick, mentioned already in the Introduction.
A \emph{2-subdivision of an edge $xy$} in an unordered graph $G$ is the operation of replacing $xy$ with a four-vertex path with endpoints $x$ and $y$ (the internal vertices are new).
If $G'$ is a graph obtained from $G$ by performing a 2-subdivision on a single edge, then $\alpha(G')=\alpha(G)+1$~\cite{Po74}.
Furthermore, if $G\tsd$ is obtained from $G$ by performing a 2-subdivision on each edge of $G$,
then $\alpha(G\tsd)=\alpha(G)+|E(G)|$.
We call $G\tsd$ a \emph{2-subdivision of $G$}.

\paragraph{Common outline of all reductions.}
All reductions in this section have the same outline.
We reduce from the \MIS problem in general unordered graphs.
We start with an arbitrary instance $(G, k)$ and fix an arbitrary ordering of $G$, making it an ordered graph.
Then, we consider the 2-subdivision $G\tsd$ of $G$ and equip it with an order defined in a specific way, depending on the reduction.
By the observation above, we have that $\alpha(G) \geq k$ if and only if $\alpha(G\tsd) \geq k+|E(G)|$.

In order to complete the reduction, it suffices to define an ordering of $G\tsd$ so that it belongs to the particular class.
In all reductions, the relative order of the vertices of $V(G\tsd) \cap V(G)$ (called \emph{core} vertices) is the same as in $G$.
Thus, in the proofs, we will only discuss the placement of the remaining vertices (called \emph{dummy} vertices).

Before we proceed to the reductions, let us introduce some terminology that will allow us to distinguish between the two dummy vertices corresponding to the same edge of $G$.
Let $e=xy$ be an edge of $G$ where $x \prec y$; note that their relative order is the same in $G$ and in $G\tsd$.
In the construction of $G\tsd$, we replace $e$ with the path $x-\ell_e-r_e-y$.
Now, the dummy vertex $\ell_e$ is called the \emph{left dummy} corresponding to $e$, while $r_e$ is called the \emph{right dummy} corresponding to $e$.

Let us also remark that in all reductions in this section we actually exclude certain graphs $H$ as \emph{subgraphs}, not only as induced subgraphs. Note that this is a stronger property, and thus the reductions are stronger as well. The graphs that exclude a graph $H$ as a subgraph is referred to as \emph{$H$-subgraph-free}.
We also write \emph{($H_1,H_2,\ldots,H_k$)-subgraph-free} to denote the class of ordered graphs that are $H_i$-subgraph-free for every $i\in \{1,2,\ldots,k\}$.

\begin{theorem}\label{xxxlrlrlr}
	\MIS is \NP-hard when restricted to (\adb, \abd, \abnce, \abncde, \abcd, \acbd, \adcb, \aenbcd)-subgraph-free ordered graphs.
\end{theorem}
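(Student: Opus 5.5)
We follow the common outline of this subsection: starting from an unordered instance $(G,k)$ of \MIS, we fix an arbitrary ordering of $V(G)$ and produce the 2-subdivision $G\tsd$. Since $\alpha(G\tsd)=\alpha(G)+|E(G)|$, it suffices to exhibit an ordering of $G\tsd$, computable in polynomial time, that avoids all eight listed patterns as (not necessarily induced) subgraphs. The ordering we propose is as follows. All core vertices come first, in their order from $G$. After them come the dummy vertices, grouped into consecutive pairs $\ell_e, r_e$ with $\ell_e$ immediately followed by $r_e$. The pairs are sorted by the left endpoint of $e$ in increasing order, with ties broken arbitrarily.

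First we record the local structure of this ordering, which drives every case. Core vertices have only right neighbours, namely dummies. A left dummy $\ell_e$ with $e=xy$, $x\prec y$, has exactly one left neighbour, $x$, and exactly one right neighbour, $r_e$, which is the very next vertex. A right dummy $r_e$ has no right neighbours and exactly two left neighbours, $y$ and $\ell_e$, and $\ell_e$ is immediately before it. Two further facts are used. There are no edges among core vertices. Every edge with a gap vertex strictly between its endpoints is a core–dummy edge.

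Next we check each pattern against these facts, always starting from the vertex of the pattern that has the most constrained neighbourhood.
\begin{itemize}
\item For \adb, the vertex $d$ has two left neighbours, so $d=r_e$ and $b=\ell_e$. Then nothing lies between $b$ and $d$, so the third vertex cannot be placed.
\item For \abd, the vertex $b$ has both a left and a right neighbour, so $b=\ell_e$. Its right neighbour is adjacent to it, so no vertex fits in between.
\item For \abcd, both middle vertices would need a left and a right neighbour. Only left dummies have this, and the right neighbour of a left dummy is a right dummy, which has no right neighbour.
\item For \acbd, the vertex $3$ has two left neighbours, so $3=r_e$ and $2=\ell_e$. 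But then $2$ would need two right neighbours, and it has only one.
\item For \abncde, the vertex $4$ is a left dummy and $3$ is its core neighbour. Then the edge $12$ would lie entirely among core vertices, which is impossible.
\item For \abnce, the edge $35$ has a gap vertex, so $3$ is core. The edge $12$ again lies among core vertices, which is impossible.
\item For \adcb, we get $3=\ell_e$, $4=r_e$ and $2=x$. The vertex $1$ is adjacent to $r_e$, so $1=y$ and $y\prec x$, contradicting the convention $x\prec y$.
\end{itemize}

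The delicate case is \aenbcd, which is exactly where the order of the pairs matters. As before, $2=x$ is core, $3=\ell_e$ and $4=r_e$. So $1$ is a core vertex with $1\prec x$, and $5$ is a dummy placed after the pair of $e$ and adjacent to $1$. If $5=\ell_f$, then the left endpoint of $f$ is $1\prec x$. If $5=r_f$, then $1$ is the right endpoint of $f$, so the left endpoint of $f$ precedes $1$ and hence also precedes $x$. In both cases the pair of $f$ has a smaller left endpoint than $e$, so it is placed before the pair of $e$, contradicting $4\prec 5$. I expect this case to be the main obstacle: it forces the pairs to be sorted by increasing left endpoint, and I would double-check it, together with \adcb, after fixing the ordering.

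Since the ordering is clearly polynomial-time computable, this completes the plan for the reduction.
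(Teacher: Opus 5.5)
Your proposal is correct and takes essentially the same route as the paper. The paper also places all core vertices first and then the consecutive pairs $\ell_e \prec r_e$, ordered lexicographically by edge, which is one particular tie-breaking of your ordering by left endpoint. It then rules out the eight patterns with the same local neighbourhood arguments, grouped into three claims, and only the pattern with edges $15$, $23$, $34$ depends on the order of the pairs; there it uses exactly your observation that every neighbour of a core vertex preceding $x$ comes before the pair of $e=xy$.
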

\begin{proof}
	Let us define an appropriate ordering of the vertices of $G\tsd$, see also \cref{fig:xxxlrlrlr}.
	All core vertices precede all dummy vertices, whereas dummy vertices follow the lexicographic order of their respective edges, with left dummy being immediately followed by the right dummy corresponding to the same edge.
	More formally, we stipulate that $\ell_e\prec r_e$ for every edge $e$ of $G$ and $r_e\prec \ell_f$ whenever $e\prec f$.
	
		We now demonstrate that $G\tsd$ does not contain any of the listed graphs as an ordered subgraph.
	
	\begin{claim}
		$G\tsd$ is (\adb, \acbd, \adcb)-subgraph-free.
	\end{claim}
	\begin{claimproof}
Suppose that $G\tsd$ contains a copy of one of the graphs listed above as a subgraph, and let $a, b, c$ be the consecutive vertices of \acb in that copy.
		
Since $c$ has at least two left neighbors, it must be a right dummy $r_{uv}$ of some edge $uv$ of $G$, with $u\prec v$.
		This means that its two (left) neighbors are precisely $v$ and $\ell_{uv}$, with $v\prec \ell_{uv}$.
		Therefore the vertices of \acb are $a=v, b=\ell_{uv}, c=r_{uv}$.
		In $G\tsd$, however, there is no vertex in between $\ell_{uv}$ and $r_{uv}$, so it cannot contain \adb as a subgraph.
		Moreover, the only neighbor of $\ell_{uv}$ other than $r_{uv}$ is the vertex $u\prec v$ -- thus, there can be no subgraph \acbd or \adcb in $G\tsd$.
	\end{claimproof}
	
	\begin{claim}
		$G\tsd$ is (\abd, \abnce, \abcd, \abncde)-subgraph-free.
	\end{claim}
	\begin{claimproof}
		Suppose $G\tsd$ contains a copy of one of the listed subgraphs, and let $a\prec b\preceq c \prec d \prec e$ be the consecutive vertices of that copy.
		Note that $b=c$ precisely when the subgraph has four vertices.
		
		Since $b$ has a left neighbor, it must be a dummy vertex.
		Therefore any vertex that follows $b$, i.e., $c, d, e$, is a dummy as well.
		However, the only edges between two dummy vertices in $G\tsd$ connect vertices that immediately follow one another.
		Thus $G\tsd$ cannot contain \abd nor \abnce as a subgraph.
		
		Moreover, if $G\tsd$ contains either \abcd or \abncde, then $d$ is adjacent to a dummy vertex on either side.
		But $G\tsd$ has no vertices with that property, a contradiction.
	\end{claimproof}
	
	\begin{claim}
		$G\tsd$ is \aenbcd-subgraph-free.
	\end{claim}
	\begin{claimproof}
		Suppose that $G\tsd$ contains a copy of \aenbcd, and let $a, b, c$ be the consecutive vertices of \abc in that copy.
		
		The vertex $b$, having both a left and a right neighbor, must be a left dummy of some edge $uv$ of $G$ with $u\prec v$.
		Consequently, its left neighbor is $u$.
		Consider any vertex $u' \prec u$.
		As it precedes a core vertex, it must itself be a core vertex.
		For this reason, the only neighbors of $u'$ in $G\tsd$ are dummy vertices corresponding to the edges of $G$ adjacent to $u'$.
		As $u' \prec u$, all of these edges precede $uv$ in lexicographic order.
		Thus all the neighbors of $u'$ in $G\tsd$ precede $\ell_{uv}$, meaning a \aenbcd subgraph cannot occur.
	\end{claimproof}
	
	This completes the proof.	
\end{proof}

	\begin{figure}[t]
		\centering
		\input{figxxxlrlrlr}
		\caption{The ordering of $G\tsd$ in the proof of \cref{xxxlrlrlr}. In all figures in this section we use the following convention. Gray box represents segment occupied by the core vertices. The dashed edge is present in $G$, but is replaced by a four-vertex path in $G\tsd$. This path is depicted in black, with some other such paths in grey. The dummy vertices fill segments represented by white boxes. Within any white box, they are ordered according to the lexicographic order of their corresponding edges of $G$. Flat edges represent connections between pairs of consecutive vertices.}
        \label{fig:xxxlrlrlr}
	\end{figure}
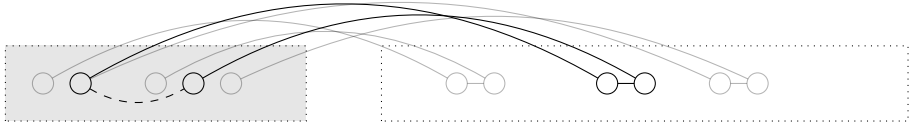

\begin{theorem}\label{xxxrlrlrl}
	\MIS is \NP-hard when restricted to (\adb, \acbd, \aenbdc, \abd, \abnce, \abcd, \abncde)-subgraph-free ordered graphs.
\end{theorem}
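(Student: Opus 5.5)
We take the same reduction outline as in the proof of \cref{xxxlrlrlr}, changing only the order of the two dummies of each edge. All core vertices precede all dummy vertices. The dummies are grouped by edge, with the groups in the lexicographic order of the edges of $G$. Within each group, the right dummy comes first and is \emph{immediately} followed by the left dummy. Formally, $r_e\prec \ell_e$ for every edge $e$, and $\ell_e\prec r_f$ whenever $e\prec f$.

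We record the neighborhoods this ordering produces, for an edge $e=xy$ with $x\prec y$ (replaced by the path $x-\ell_e-r_e-y$):
\begin{itemize}
\item the only left neighbor of $r_e$ is the core vertex $y$, and its only right neighbor is $\ell_e$;
\item the left neighbors of $\ell_e$ are $x$ and $r_e$, and $\ell_e$ has no right neighbor.
\end{itemize}
Thus every vertex with two left neighbors is some $\ell_e$, its two left neighbors are $x$ and $r_e$, and $r_e$ sits immediately before $\ell_e$. Also, every dummy has exactly one dummy neighbor, namely its partner, which is consecutive to it.

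The claims are grouped exactly as in the proof of \cref{xxxlrlrlr}.
\begin{itemize}
\item \textbf{(\adb, \acbd).} Let $d$ (resp.\ $c$) be the vertex with two left neighbors. Then it is $\ell_e$, and the two left neighbors are $x$ and $r_e$.
\begin{itemize}
\item For \adb, some vertex $c$ would have to lie strictly between $r_e$ and $\ell_e$, which is impossible.
\item For \acbd, the middle vertex $b=r_e$ would need a right neighbor other than $\ell_e$, but it has none.
\end{itemize}
\item \textbf{(\abd, \abnce, \abcd, \abncde).} The vertex $b$ has a left neighbor, so it is a dummy, and then every later vertex is a dummy. Dummy--dummy edges join only consecutive partners.
\begin{itemize}
\item This kills \abd and \abnce, since each requires a dummy--dummy edge that skips over a vertex.
\item It also kills \abcd and \abncde, since each requires some vertex ($c$ or $d$) to have two dummy neighbors.
\end{itemize}
\item \textbf{(\aenbdc).} Here $d$ has the two left neighbors $b,c$, so $d=\ell_e$, $b=x$ is core, and $c=r_e$. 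Then $a\prec x$ is a core vertex, and the edge $ae$ has $e\succ \ell_e$, so $e$ is a dummy of some edge $f$ incident to $a$. Since $a\prec x$, we have $f\prec e$ lexicographically. Hence all dummies of $f$ precede the group of $e$, and in particular precede $\ell_e$. This contradicts $e\succ \ell_e$.
\end{itemize}

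The only delicate step is this last \aenbdc case. It is the one place where we use that the groups follow the lexicographic edge order. It is handled just like the \aenbcd claim of \cref{xxxlrlrlr}.
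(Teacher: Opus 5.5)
Your proof is correct and follows essentially the same route as the paper. You use the same ordering ($r_e\prec\ell_e$, with edge groups in lexicographic order after all core vertices), and you rule out \adb, \acbd and \aenbdc by identifying the vertex with two left neighbors as some $\ell_e$; for \abd, \abnce, \abcd and \abncde you use the same observation that dummy--dummy edges join only consecutive partners. The only difference is cosmetic: the paper handles \aenbdc together with \adb and \acbd, since all three contain \acb. In the \aenbdc case you should avoid using $e$ both for the fifth pattern vertex and for the edge of $G$.
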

\begin{proof}
	Let us define an appropriate ordering of the vertices of $G\tsd$, see also \cref{fig:xxxrlrlrl}.
	The order of vertices is as follows: all core vertices in unchanged order precede all dummy vertices, whereas dummy vertices follow the lexicographic order of their respective edges, with right dummy being immediately followed by the left dummy of the same edge.
	More formally, we stipulate that $r_e\prec \ell_e$ for each edge $e$ and $\ell_e \prec r_f$ whenever $e\prec f$.
	
	Thus we have obtained an ordered graph $G\tsd$.
		We now prove it does not contain any of the subgraphs listed in the theorem.
	
	\begin{claim}
		$G\tsd$ is (\adb, \acbd, \aenbdc)-subgraph-free.
	\end{claim}
	\begin{claimproof}
		Suppose $G\tsd$ contains a copy of one of the graphs listed in the claim and let $a, b, c$ be the consecutive vertices of \acb in that copy.
		
		Since $c$ has at least two left neighbors, it must be a left dummy $\ell_{uv}$ of some edge $uv$ of $G$, with $u\prec v$.
		This means that its two (left) neighbors are precisely $u$ and $r_{uv}$, with $u \prec r_{uv}$.
		Therefore, the vertices of \acb are $a=u, b=r_{uv}, c=\ell_{uv}$.
		
		But in $G\tsd$ there is no vertex in between $r_{uv}$ and $\ell_{uv}$, so it cannot contain \adb as a subgraph.
		Moreover, the only neighbor of $r_{uv}$ other than $\ell_{uv}$ is the vertex $v\prec r_{uv}$ -- thus, there can be no subgraph \acbd.
		
		Lastly, take any vertex $u' \prec u$.
		As it precedes a core vertex, it must itself be a core vertex.
		The only neighbors of $u'$ in $G\tsd$ are, therefore, dummy vertices corresponding to its edges in $G$.
		As $u' \prec u$, all of these edges precede $uv$ in lexicographic order.
		Thus all the neighbors of $u'$ in $G\tsd$ precede $\ell_{uv}$, meaning a \aenbdc subgraph cannot occur.
	\end{claimproof}
	
	\begin{claim}
		$G\tsd$ is (\abd, \abnce, \abcd, \abncde)-subgraph-free.
	\end{claim}
	\begin{claimproof}
		Suppose $G\tsd$ contains a copy of one of the listed subgraphs, and let $a\prec b\preceq c \prec d \prec e$ be the consecutive vertices of that copy.
		Note that $b=c$ precisely when the subgraph has four vertices.
		
		Since $b$ has a left neighbor, it must be a dummy vertex.
		Therefore any vertex that follows $b$, i.e., $c, d, e$, is a dummy as well.
		However, the only edges between two dummy vertices in $G\tsd$ connect vertices that immediately follow one another.
		Thus $G\tsd$ cannot contain \abd nor \abnce as a subgraph.
		
		Moreover, if $G\tsd$ contains either \abcd or \abncde, then $d$ is adjacent to a dummy vertex on either side.
		But $G\tsd$ has no vertices with that property, a contradiction.
	\end{claimproof}

	This completes the proof.
\end{proof}

	\begin{figure}[t]	
		\centering		
	\input{figxxxrlrlrl}
		\caption{The ordering of $G\tsd$ in the proof of \cref{xxxrlrlrl}.}
		\label{fig:xxxrlrlrl}
	\end{figure}

\begin{theorem}\label{rrrxxxlll}
	\MIS is \NP-hard when restricted to (\aabbcc, \badc)-subgraph-free ordered graphs.
\end{theorem}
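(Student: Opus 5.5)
We follow the common outline of this section: reduce from \MIS in general graphs via the 2-subdivision $G\tsd$. It remains to order the dummy vertices so that $G\tsd$ contains neither \aabbcc nor \badc as a (not necessarily induced) subgraph. We split $V(G\tsd)$ into three consecutive segments. First come all right dummies $r_e$, then all core vertices in their original order, and finally all left dummies $\ell_e$. Within each dummy segment we use, say, the lexicographic order of the corresponding edges; we expect this choice to be irrelevant. For an edge $e=xy$ with $x\prec y$, the path is $x-\ell_e-r_e-y$. Hence every edge of $G\tsd$ joins two different segments: right dummy–core (namely $r_e y$), core–left dummy (namely $x\ell_e$), or right dummy–left dummy (namely $r_e\ell_e$). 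In particular, each of the three segments is an independent set. This is the key structural fact.

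\emph{No \aabbcc.} Suppose there are edges $ab, cd, ef$ with $a\prec b\prec c\prec d\prec e\prec f$. Since $ab$ is an edge, $b$ lies in the core or in the left-dummy segment.
\begin{itemize}
\item If $b$ is a left dummy, then $c,d$ are both left dummies. This contradicts the independence of that segment.
\item If $b$ is a core vertex, then $c,d$ lie in core $\cup$ left dummies. Since $cd$ is an edge and the core is independent, $d$ is a left dummy. Then $e,f$ are both left dummies, again a contradiction.
\end{itemize}

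\emph{No \badc.} Suppose there are edges $ab, cd, ad$ with $a\prec b\prec c\prec d$. We case on the segment containing $a$, which has two right neighbours $b$ and $d$.
\begin{itemize}
\item If $a$ is a left dummy, it has no right neighbours at all, which is impossible.
\item If $a$ is a core vertex, all its right neighbours are left dummies. Then $b\prec c\prec d$ forces $c$ to be a left dummy too, and the edge $cd$ lies inside the left-dummy segment, a contradiction.
\item If $a=r_e$ for $e=xy$ with $x\prec y$, its only neighbours are $y$ (core) and $\ell_e$ (left dummy). So $b=y$ and $d=\ell_e$. The left neighbours of $\ell_e$ are exactly $r_e$ and $x$, so $c=x$. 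But then $y=b\prec c=x$, contradicting $x\prec y$ in the core order.
\end{itemize}

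The only step requiring any care is the last case for \badc. There the order of endpoints of $e$ matters, and it is exactly why right dummies go first (attached to the larger endpoint) and left dummies last (attached to the smaller one). Everything else follows from the tripartite, segment-independent structure. Combined with $\alpha(G\tsd)=\alpha(G)+|E(G)|$, this gives the \NP-hardness.
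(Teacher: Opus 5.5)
Your proof is correct and is essentially the paper's: the same ordering (right dummies, then core vertices, then left dummies) and the same case analysis for \aabbcc{} and \badc{}, combined with $\alpha(G\tsd)=\alpha(G)+|E(G)|$. Your separate treatment of a core vertex $a$ in the \badc{} case is a small but real improvement. The paper simply asserts that a vertex with two right neighbors must be a right dummy, which is false for core vertices, and your segment-independence argument supplies exactly this missing step.
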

\begin{proof}
	Let us define an appropriate ordering of the vertices of $G\tsd$, see also \cref{fig:rrrxxxlll}.
	The order of vertices is as follows: all right dummies precede all core vertices, which in turn precede all left dummies. 
	The relative order of the dummy vertices can be arbitrary.
	
	Thus, we have obtained an ordered graph $G\tsd$. 	
	We proceed by showing it does not contain subgraphs \aabbcc or \badc.
	
	\begin{claim}
		$G\tsd$ is \aabbcc-subgraph-free.
	\end{claim}
	\begin{claimproof}
		Suppose that $G\tsd$ contains a copy of \aabbcc as a subgraph, and let $a, b, c, d, e, f$ be its consecutive vertices.
		
		The vertex $b$ has a left neighbor, so it must be either a core vertex or a left dummy.
		Thus, $c$, must also be a core vertex or a left dummy, because $b \prec c$.
		Then $d$ has either a core vertex or a left dummy as a left neighbor, meaning it could only be a left dummy.
		But in that case, both $e, f$ are left dummies, precluding an edge between them.
		This is a contradiction.
	\end{claimproof}
	
	\begin{claim}
		$G\tsd$ is \badc-subgraph-free.
	\end{claim}
	\begin{claimproof}
		Suppose that $G\tsd$ contains a copy of \badc as a subgraph, and let $a, b, c, d$ be its consecutive vertices.
		
		The vertex $a$, having at least two right neighbors, is a right dummy $r_{uv}$ for some $u\prec v$.
		Its neighbors are, therefore, only $v$ and $\ell_{uv}$, with $v \prec \ell_{uv}$, so $b=v$ and $d=\ell_{uv}$.
		Furthermore, the neighbors of $d=\ell_{uv}$ are precisely $r_{uv}$ and $u$, and thus $c=u$.
		This, however, is a contradiction, as $u\prec v$ by definition, whereas $b=v \prec u=c$ in \badc.
	\end{claimproof}

	This completes the proof.
\end{proof}

\begin{figure}[t]	
		\centering
		\input{figrrrxxxlll}
        \caption{The ordering of $G\tsd$ in the proof of \cref{rrrxxxlll}.}
		\label{fig:rrrxxxlll}
\end{figure}

\begin{theorem}\label{xxxlllrrr}
	\MIS is \NP-hard when restricted to (\abbcca, \acbnde, \aednbc, \aabbcc) -subgraph-free ordered graphs.
\end{theorem}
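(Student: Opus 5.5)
We follow the common outline of this section: reduce from \MIS on an arbitrary ordered graph $G$ and order the 2-subdivision $G\tsd$ suitably. We place all core vertices first, in their original order. They are followed by all left dummies and then by all right dummies. Within the left dummies, and within the right dummies, we order according to the lexicographic order of the corresponding edges of $G$. So $\ell_e \prec \ell_f$ and $r_e \prec r_f$ whenever $e \prec f$. Write $C$, $L$, $R$ for the three consecutive segments of core vertices, left dummies and right dummies.

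The key structural facts are the following. Each of $C$, $L$, $R$ is independent. For an edge $e = xy$ with $x \prec y$, the vertex $\ell_e$ has exactly one left neighbor, namely $x \in C$, and exactly one right neighbor, namely $r_e$. The vertex $r_e$ has only left neighbors, namely $y \in C$ and $\ell_e \in L$. Core vertices have only right neighbors. Consequently, a vertex with a left neighbor lies in $L \cup R$. A vertex with both a left and a right neighbor lies in $L$. A vertex with two left neighbors is some $r_e$, with those neighbors being $y$ and $\ell_e$. Every vertex following a vertex of $L \cup R$ is again in $L \cup R$.

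With these facts the four patterns are excluded as subgraphs, each by a short argument.
\begin{enumerate}[(a)]
\item For \aabbcc (edges $12,34,56$): vertex $2$, and hence $3$, lie in $L\cup R$. Since $4$ has the left neighbor $3 \in L \cup R$, we get $3\in L$ and $4 \in R$. Then $5,6 \in R$, so they cannot be adjacent.
\item For \acbnde (edges $13,23,45$): vertex $3$ has two left neighbors, so $3 \in R$. Then $4,5\in R$ and the edge $45$ is impossible.
\item For \abbcca (edges $16,23,45$): since $4 \succ 3 \in L\cup R$ and $4$ has a right neighbor, $4 = \ell_f \in L$ and $5 = r_f$. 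Also $3 \prec 4$ forces $3 = \ell_e \in L$, so its left neighbor $2 = x_e$ is a core vertex. Hence $1 \in C$ as well, and $6 = r_g \in R$ with $1 = y_g$. The intra-segment ordering gives $e \prec f$ (as $3 \prec 4$) and $f \prec g$ (as $5 \prec 6$), so $e \prec g$. On the other hand, $x_g \prec y_g = 1 \prec 2 = x_e$ gives $g \prec e$, a contradiction.
\item For \aednbc (edges $15,45,23$): vertex $5$ has two left neighbors, so $5 = r_e$, $1 = y_e \in C$ and $4 = \ell_e$. Vertex $3 \prec 4$ has a left neighbor, so $3 = \ell_f \in L$ and $2 = x_f \in C$. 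Now $\ell_f \prec \ell_e$ gives $f \prec e$, hence $x_f \preceq x_e \prec y_e = 1$. This contradicts $1 \prec 2 = x_f$.
\end{enumerate}

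The only delicate point is choosing the orders inside $L$ and $R$ so that cases (c) and (d) work. The lexicographic choice is exactly what makes the order of dummies consistent with the order of the left endpoints in $C$. Everything else is immediate from the three-segment structure. Since $\alpha(G\tsd) = \alpha(G) + |E(G)|$, this completes the reduction.
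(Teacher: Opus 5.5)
Your proposal is correct and follows the paper's proof: the same ordering (core vertices, then left dummies, then right dummies, each dummy block in lexicographic order of edges) and the same case-by-case exclusions of the four patterns. The only cosmetic difference is in the \abbcca{} case, where you identify the sixth vertex as some $r_g$ with $y_g$ the first vertex and compare $g$ with $e$ directly, whereas the paper shows that every neighbor of a core vertex preceding $u$ lies before $r_e$; both arguments rest on the same lexicographic ordering.
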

\begin{proof}
	Let us define an appropriate ordering of the vertices of $G\tsd$, see also \cref{fig:xxxlllrrr}.
	The order of vertices is as follows: all core vertices, precede all the left dummies, in the lexicographic order of their corresponding edges, which are in turn followed by all the right dummies, again in the lexicographic order of their respective edges.
	Thus we have obtained an ordered graph $G\tsd$.
	
	\begin{claim}
		$G\tsd$ is \abbcca-subgraph-free.
	\end{claim}
	\begin{claimproof}
		Suppose that $G\tsd$ contains a copy of \abbcca as a subgraph, and let $a, b, c, d$ be the consecutive vertices of \aabb in that copy.
		
		Since $b$ has a left neighbor, it is not a core vertex.
		Neither is $c$, as it follows $b$.
		Moreover, $c$ has a right neighbor, so it must be the left dummy $\ell_e$ of some edge $e$ of $G$.
		Therefore $d$ is the right dummy $r_e$.
		Observe that $b$ is followed by a left dummy, so it is not itself a right dummy, i.e., $b=\ell_f$ for some edge $f=uv$ with $u\prec v$.
		Furthermore, as $a$ is a left neighbor of $b$, it holds that $a=u$.
		Therefore, the vertices of \aabb are precisely $a=u, b=\ell_f, c=\ell_e, d=r_e$.
		Moreover, since $\ell_f \prec \ell_e$, it holds $f\prec e$.
		
		Now consider any vertex $u' \prec u$ and let $g$ be an edge incident to $u'$.
		Clearly, the left endpoint of $g$ precedes $u$, and thus $g \prec f \prec e$.
		Therefore $\ell_g \prec r_g \prec r_e$, so all the neighbors of $u'$ in $G\tsd$ precede $r_e$.
		Consequently, $G\tsd$ cannot contain \abbcca as a subgraph, which completes the proof of the claim.
	\end{claimproof}

    \begin{claim}
    $G\tsd$ is (\acbnde, \aednbc)-subgraph-free.
    \end{claim}
    \begin{claimproof}
       Suppose that $G\tsd$ contains a copy of one of the graphs listed in the claim as a subgraph, and let $a, b, c$ be the consecutive vertices of \acb in that copy.
        
       The vertex $c$, having two left neighbors, is a right dummy $r_{uv}$ of some edge $uv$ with $u\prec v$.
       Moreover, the remaining vertices of \acb are the only neighbors of $r_{uv}$, i.e., $a=v$ and $b=\ell_{uv}$.
       Because $r_{uv}$ can only be followed by other right dummies, and those form an independent set, a subgraph \acbnde cannot occur in $G\tsd$.
        
       Moreover, suppose that $G\tsd$ contains an edge $xy$ with $v\prec x\prec y \prec \ell_{uv}$.
       As there are no edges between core vertices or between left dummies, we have $y=\ell_{xv'}$, where $x \prec v'$, and $x$ and $v'$ are two core vertices.
       However, as the left dummies follow the lexicographic order of their corresponding edges and $y=\ell_{xv'} \prec \ell_{uv}$, we have $x \preceq u \prec v$, which is a contradiction with the choice of $xy$.
    \end{claimproof}
	
	\begin{claim}
		$G\tsd$ is \aabbcc-free.
	\end{claim}
	\begin{claimproof}
		Suppose that $G\tsd$ contains a copy of \aabbcc as a subgraph, and let $a, b, c, d, e, f$ be its consecutive vertices.
		
		The vertex $b$ has a left neighbor, so it must be a dummy.
		Thus $c$, must also be a dummy, because $b \prec c$.
		Then $d$ has a dummy as a left neighbor, meaning it could only be a right dummy.
		But in that case, both $e, f$ are right dummies, precluding an edge between them.
		This is a contradiction.
	\end{claimproof}

	This completes the proof.
\end{proof}

\begin{figure}[t]	
		\centering
		\input{figxxxlllrrr}
        \caption{The ordering of $G\tsd$ in the proof of \cref{xxxlllrrr}.}
		\label{fig:xxxlllrrr}
\end{figure}

%% file: figxxxlrlrlr.tex
		\begin{tikzpicture}[
        every node/.style={draw, circle, inner sep=0, outer sep = 0, minimum size=8pt}, baseline= -0.5mm]
			\fill[opacity=.1] (0,0) rectangle (4,1);
            \draw[dotted] (0,0) rectangle (4,1);
            \draw[dotted] (5,0) rectangle (12,1);
            
            \node (u) at (1, .5) {};
            \node (v) at (2.5, .5) {};
            \node (l) at (8, .5) {};
            \node (r) at (8.5, .5) {};
            
            \node[opacity=.3] (u') at (.5, .5) {};
            \node[opacity=.3] (v') at (2, .5) {};
            \node[opacity=.3] (v'') at (3, .5) {};
            \node[opacity=.3] (l') at (6, .5) {};
            \node[opacity=.3] (r') at (6.5, .5) {};
            \node[opacity=.3] (l'') at (9.5, .5) {};
            \node[opacity=.3] (r'') at (10, .5) {};
            
            \draw (l)--(r);
            \draw[opacity=.3] (l')--(r');
            \draw[opacity=.3] (l'')--(r'');
            
            \path[draw] (u) to[bend left = 30] (l);
            \path[draw] (v) to[bend left = 30] (r);
            \path[draw, dashed] (u) to[bend right = 30] (v);
            \path[draw, opacity=.3] (u') to[bend left = 30] (l');
            \path[draw, opacity=.3] (v') to[bend left = 30] (r');
            \path[draw, opacity=.3] (u) to[bend left = 25] (l'');
            \path[draw, opacity=.3] (v'') to[bend left = 25] (r'');
            
		\end{tikzpicture}

%% file: figxxxrlrlrl.tex
		\begin{tikzpicture}[
        every node/.style={draw, circle, inner sep=0, outer sep = 0, minimum size=8pt}, baseline= -0.5mm]
			\fill[opacity=.1] (0,0) rectangle (4,1);
            \draw[dotted] (0,0) rectangle (4,1);
            \draw[dotted] (5,0) rectangle (12,1);
            
            \node (u) at (1, .5) {};
            \node (v) at (2.5, .5) {};
            \node (r) at (8, .5) {};
            \node (l) at (8.5, .5) {};
            
            \node[opacity=.3] (u') at (.5, .5) {};
            \node[opacity=.3] (v') at (2, .5) {};
            \node[opacity=.3] (v'') at (3, .5) {};
            \node[opacity=.3] (r') at (6, .5) {};
            \node[opacity=.3] (l') at (6.5, .5) {};
            \node[opacity=.3] (r'') at (9.5, .5) {};
            \node[opacity=.3] (l'') at (10, .5) {};
            
            \draw (l)--(r);
            \draw[opacity=.3] (l')--(r');
            \draw[opacity=.3] (l'')--(r'');
            
            \path[draw] (u) to[bend left = 30] (l);
            \path[draw] (v) to[bend left = 30] (r);
            \path[draw, dashed] (u) to[bend right = 30] (v);
            \path[draw, opacity=.3] (u') to[bend left = 30] (l');
            \path[draw, opacity=.3] (v') to[bend left = 30] (r');
            \path[draw, opacity=.3] (u) to[bend left = 30] (l'');
            \path[draw, opacity=.3] (v'') to[bend left = 30] (r'');
            
		\end{tikzpicture}

%% file: figrrrxxxlll.tex
		\begin{tikzpicture}[
        every node/.style={draw, circle, inner sep=0, outer sep = 0, minimum size=8pt}, baseline= -0.5mm]
			\fill[opacity=.1] (5,0) rectangle (9,1);
            \draw[dotted] (0,0) rectangle (4,1);
            \draw[dotted] (5,0) rectangle (9,1);
            \draw[dotted] (10,0) rectangle (14,1);
            
            \node (u) at (6, .5) {};
            \node (v) at (7.5, .5) {};
            \node (l) at (11, .5) {};
            \node (r) at (1, .5) {};
            
            \node[opacity=.3] (u') at (5.5, .5) {};
            \node[opacity=.3] (v') at (7, .5) {};
            \node[opacity=.3] (v'') at (8, .5) {};
            \node[opacity=.3] (l') at (10.5, .5) {};
            \node[opacity=.3] (r') at (.5, .5) {};
            \node[opacity=.3] (l'') at (12.5, .5) {};
            \node[opacity=.3] (r'') at (2.5, .5) {};
            
            \path[draw] (u) to[bend left = 25] (l);
            \path[draw] (v) to[bend right = 25] (r);
            \path[draw] (l) to[bend right = 25] (r);
            \path[draw, dashed] (u) to[bend right = 25] (v);
            \path[draw, opacity=.3] (u') to[bend left = 25] (l');
            \path[draw, opacity=.3] (v') to[bend right = 25] (r');
            \path[draw, opacity=.3] (l') to[bend right = 25] (r');
            \path[draw, opacity=.3] (u) to[bend left = 25] (l'');
            \path[draw, opacity=.3] (v'') to[bend right = 25] (r'');
            \path[draw, opacity=.3] (l'') to[bend right = 25] (r'');
            
		\end{tikzpicture}

%% file: figxxxlllrrr.tex
        \begin{tikzpicture}[
        every node/.style={draw, circle, inner sep=0, outer sep = 0, minimum size=8pt}, baseline= -0.5mm]
			\fill[opacity=.1] (0,0) rectangle (4,1);
            \draw[dotted] (0,0) rectangle (4,1);
            \draw[dotted] (5,0) rectangle (9,1);
            \draw[dotted] (10,0) rectangle (14,1);
            
            \node (u) at (1, .5) {};
            \node (v) at (2.5, .5) {};
            \node (l) at (6, .5) {};
            \node (r) at (11, .5) {};
            
            \node[opacity=.3] (u') at (.5, .5) {};
            \node[opacity=.3] (v') at (2, .5) {};
            \node[opacity=.3] (v'') at (3, .5) {};
            \node[opacity=.3] (l') at (5.5, .5) {};
            \node[opacity=.3] (r') at (10.5, .5) {};
            \node[opacity=.3] (l'') at (7.5, .5) {};
            \node[opacity=.3] (r'') at (12.5, .5) {};
            
            \path[draw] (u) to[bend left = 30] (l);
            \path[draw] (v) to[bend left = 30] (r);
            \path[draw] (l) to[bend left = 30] (r);
            \path[draw, dashed] (u) to[bend right = 30] (v);
            \path[draw, opacity=.3] (u') to[bend left = 30] (l');
            \path[draw, opacity=.3] (v') to[bend left = 30] (r');
            \path[draw, opacity=.3] (l') to[bend left = 30] (r');
            \path[draw, opacity=.3] (u) to[bend left = 30] (l'');
            \path[draw, opacity=.3] (v'') to[bend left = 25] (r'');
            \path[draw, opacity=.3] (l'') to[bend left = 30] (r'');
            
		\end{tikzpicture}

%% file: hardness-long-subdivision.tex
\subsection{Long subdivision}\label{sec:longsubdiv}
In this section, we again take advantage of Poljak's subdivision trick, albeit in another way.

\begin{theorem}\label{squstr}
	\MIS is \NP-hard on (\abxxba, \abcbca) -subgraph-free ordered graphs and on \abcabc -subgraph-free ordered graphs.
\end{theorem}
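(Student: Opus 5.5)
We would follow the common outline of the previous subsection. Start from an instance $(G,k)$ of \MIS on unordered graphs and replace every edge by a long path with an even number of internal vertices. By repeated use of Poljak's trick, a path with $2s_e$ internal vertices raises $\alpha$ by exactly $s_e$. So $\alpha(G)\geq k$ if and only if the subdivided graph $G'$ has $\alpha(G')\geq k+\sum_e s_e$. The lengths $s_e$ may be polynomial in $|V(G)|$, which keeps the reduction polynomial. Everything then rests on ordering $G'$ well. Since the extra freedom is long paths rather than just two dummies per edge, we want most edges of $G'$ to join vertices at distance $1$ or $2$ in the ordering. Such edges can hardly take part in the forbidden patterns.

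\emph{The \abcabc case.} Here we would use book embeddings. It is classical that every graph has a subdivision admitting a $2$-page book embedding: fix an order of $V(G)$ on the spine, and let each edge cross the spine at subdivision vertices, going above and below it. In such an embedding $E(G')$ splits into two classes, and no two edges of the same class cross, i.e., induce \abab. Three pairwise crossing edges would put two crossing edges in the same class, so the ordered graph is \abcabc-subgraph-free. Parity is fixed by inserting extra subdivision vertices immediately next to each other on the spine. The new edges join consecutive vertices, so they cross nothing and can go to either page, and we can make every $2s_e$ even. We would verify that crossing pairs stay crossing under this local insertion, which is routine.

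\emph{The (\abxxba, \abcbca) case.} This case is dual, concerning nestings rather than crossings. We would lay out $G'$ so that every edge is either \emph{short}, joining vertices at distance at most $2$, or \emph{long}, where the long edges pairwise do not nest. This is a $1$-queue condition, and we would arrange it by sending each subdivided edge of $G$ through a zig-zag of short steps, with the long jumps placed in a first-in-first-out manner. A copy of \abxxba needs an inner edge with two vertices strictly between its endpoints, so the inner edge has length at least $3$ and is long. The outer edge is then longer, hence also long, and two long edges would nest, a contradiction. For \abcbca, the outer edge $su$ contains $p,q,r,t$ together with two inner edges $pr$ and $qt$, so it is long. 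If either inner edge were long it would nest inside $su$, so both must be short. They cannot both have length $1$ and cross, so one of them has length $2$.

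\emph{Main obstacle.} The step we expect to be hardest is excluding \abcbca, which comes down to designing the zig-zag so that the short edges of length $2$ never cross another short edge lying inside a long edge. The approach we would try first is to use length-$2$ edges only at places where the middle vertex has degree $2$ and both its neighbours are the endpoints of that length-$2$ edge. Then no other short edge can cross it, since any crossing short edge would have to be incident to the middle vertex. Once this local gadget is fixed, the remaining checks are mechanical. They consist of case analyses in the style of the claims of the previous subsection, based on which vertices have left and right neighbours.
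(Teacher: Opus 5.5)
Your proposal has two genuine gaps, one in each case.

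\textbf{The \abcabc\ case.} Your argument rests on a false claim: not every graph has a subdivision with a $2$-page book embedding. Suppose the edges of an ordered graph split into two classes with no crossing pair inside a class. Drawing one class above the spine and the other below gives a plane drawing, so the underlying graph is planar. A subdivision of a nonplanar graph is nonplanar, so the ordering of $G'$ you describe does not exist when $G$ is nonplanar. For general graphs one only gets $3$ pages, and three pages allow three pairwise crossing edges.

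This is repairable. Reduce from \MIS on planar graphs, which remains \NP-hard. Take a topological $2$-page book embedding; these exist, with at most one spine crossing per edge for a suitably chosen spine order, and are computable in polynomial time. Subdivide at the spine crossings and apply your parity fix. That would be a valid alternative route.

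The paper instead works for arbitrary $G$. In $G_2$ the order inside the layers alternates with parity, so long edges leaving the same layer pairwise nest and never cross. Short edges span at most one vertex, so none of them can be an edge of \abcabc. A short argument on which layers contain the six vertices then produces two crossing long edges leaving the same layer.

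\textbf{The (\abxxba, \abcbca) case.} You reduce the two patterns to three layout properties: short edges span at most one vertex, long edges pairwise do not nest, and the middle vertex of a short edge is not an endpoint of another short edge. That is exactly the paper's logic. But the layout itself is the whole difficulty, and you never construct it. ``Zig-zag with long jumps placed first-in-first-out'' does not say how the two halves of a subdivided edge are joined without nesting long edges or crossing short ones.

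The gadget you propose also cannot exist. If the middle vertex $z$ of a length-$2$ edge $xy$ has $N(z)=\{x,y\}$, then $xzy$ is a triangle, but the two neighbours of a subdivision vertex are never adjacent. What you actually need is weaker: $z$ must carry no other short edge, though it may carry long ones.

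The paper realizes this by layering:
\begin{enumerate}[(i)]
\item Enumerate $E(G)$ lexicographically.
\item Carry $e=uv$ as two strands $u_{i,e},v_{i,e}$ through layers $V_1,\ldots,V_{\pi(e)}$.
\item Order every layer by original endpoint and then by edge index, so that long edges never nest.
\item Close $e$ inside its own layer by a path whose new vertices are interleaved one by one with the layer vertices between $u_{\pi(e),e}$ and $v_{\pi(e),e}$, plus one extra vertex for parity.
\end{enumerate}
Each closing edge then spans at most one vertex, and that vertex is a layer vertex carrying only long edges. Since each layer closes exactly one edge of $G$, closing paths of different edges never interact.
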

\begin{proof}
We will provide two parallel reductions: one for (\abxxba, \abcbca)-subgraph-free case and another for \abcabc-subgraph-free case. The two reductions are very similar, and the main difference is in the way we order the vertices of the constructed graph.

Let $(G,k)$ be an instance of \MIS, and let us fix an arbitrary ordering on $V(G)$.
We will construct two ordered graphs $G_1$, $G_2$ and integers $k_1,k_2$ such that: (i) $\alpha(G_1)\geq k_1$ if and only if $\alpha(G)\geq k$, (ii) $\alpha(G_2)\geq k_2$ if and only if $\alpha(G)\geq k$, (iii) $G_1$ is (\abxxba, \abcbca)-subgraph-free, and (iv) $G_2$ is \abcabc-subgraph-free.
The construction of $G_1$ and $G_2$ is performed in two steps; most of the parts of the constructions are common to $G_1$ and $G_2$. See also \cref{fig:longsubdiv}.
    
\paragraph{Step 1.} As an intermediate step we construct two ordered graphs $G_1'$ and $G_2'$ -- these graphs have the same underlying graph, but different ordering.
Let us enumerate the edges of $G$ according to their lexicographic order; for an edge $e\in E(G)$, let $\pi(e)$ denote its position, i.e., the number of edges preceding $e$ plus one.
For every edge $e=uv\in E(G)$, where $u\prec v$, let us doubly subdivide it $\pi(e)$ times, replacing $e$ with the path $u-u_{1,e}-u_{2,e}-\ldots-u_{\pi(e), e}-v_{\pi(e), e} - \ldots - v_{2,e} - v_{1,e}-v$.
Next, we partition the expanded set of vertices into sets $V_0,V_1,V_2,\ldots,V_{|E(G)|}$ according to their indices, namely by stipulating that $V_0=V(G)$, and for $e=uv\in E(G)$, $i\in [|E(G)|]$,  we have $u_{i,e}, v_{i,e}\in V_i$.
We call the sets $V_0,V_1,V_2,\ldots,V_{|E(G)|}$ \emph{layers}.
    
Now the order of the vertices of $V_0\cup\ldots\cup V_{|E(G)|}$ is slightly different in $G_1'$ and $G_2'$. In both cases, we order the vertices so that for $i<j$, we have $V_i\prec V_j$.
Furthermore, in $G_1'$, the order among vertices in each $V_i$ is such that: $u_{i,e}\prec v_{i,f}$ whenever $u\prec v$ in $G$, and $u_{i,e}\prec u_{i,f}$ whenever $e\prec f$ lexicographically in $G$.
In $G_2'$, the order among vertices in each $V_i$ depends on the parity of $i$: (i) if $i$ is even, we stipulate that $u_{i,e}\prec v_{i,f}$ whenever $u\prec v$ in $G$ and $u_{i,e}\prec u_{i,f}$ whenever $e\prec f$ in $G$, and (ii) if $i$ is odd, we order the vertices in $V_i$ so that $u_{i,e}\prec v_{i,f}$ whenever $v \prec u$ in $G$ and $u_{i,e}\prec u_{i,f}$ whenever $f \prec e$ in $G$.
This completes the construction of $G_1'$ and $G_2'$.
We proceed to the second step.

\paragraph{Step 2.} Let $p\in \{1,2\}$. For every $e=uv\in E(G)$, where $u\prec v$, we proceed as follows.
Consider the segment of $G_p'$ with endpoints $u_{\pi(e), e}$ and $v_{\pi(e), e}$ -- we point out that by the construction, there is an edge $u_{\pi(e), e}v_{\pi(e), e}$.
We again aim to subdivide this edge an even number of times in order to avoid the forbidden subgraphs.

   Let $\{x_0,\ldots,x_k\}$ be the vertex set of the segment so that $u_{\pi(e), e}=x_0 \prec x_1\prec x_2 \prec \ldots \prec x_k=v_{\pi(e), e}$, where $k \in \mathbb{N}$ is the length of the segment.
    For each $i\in \{0,\ldots, k-1\}$, we create a vertex $\omega_{i,e}$, and we insert it in the order so that $x_i\prec \omega_{i,e}\prec x_{i+1}$.
    Additionally, if $k$ is odd, we create another vertex $\omega_{k,e}$ with $\omega_{k-1,e} \prec \omega_{k,e} \prec v_{\pi(e),e}$.
    Let $\Omega_e$ be the set of vertices created this way, together with $x_0$ and $x_k$.
    Note that we have ensured that $\Omega_e$ has even size.
   We replace the edge $x_0x_k$ with the path $x_0-\omega_{0,e}-\omega_{1,e}-\ldots -\omega_{k-1,e}-(\omega_{k,e}-)x_k$.
   This completes the construction of $G_p$.
    
\paragraph{Correctness.} Observe that $G_1$ and $G_2$ were constructed from $G$ by some numbers $\ell_1$ and $\ell_2$, respectively, of double subdivisions.
Let $k_1=k+\ell_1$ and $k_2=k+\ell_2$.
As we already mentioned in the previous section, it holds that $\alpha(G)\geq k$ if and only if $\alpha(G_1)\geq k_1$, and $\alpha(G)\geq k$ if and only if $\alpha(G_2)\geq k_2$, as desired.
Moreover, the number of double subdivisions performed in the first step is $\Oh(m^2)$, and $\Oh(n\cdot m)$ in the second, where $n,m$ are, respectively, the number of vertices and edges of $G$.
Therefore, the instances $(G_1,k_1)$ and $(G_2,k_2)$ are constructed in polynomial time.
It remains to prove that $G_1$ and $G_2$ do not contain forbidden subgraphs.
 
We first divide the edges of $G_1$ and $G_2$ into two types: the edges that join a vertex from some $V_i$ with a vertex in $V_{i+1}$ -- we will call those \emph{long} edges; the others that join pairs of vertices in some $\Omega_e$ -- those we will call \emph{short} edges.
The following observation about short edges in both $G_1$ and $G_2$ is immediate from the construction.
    
     \begin{claim}\label{squstrshort}
     If $xy$ is a short edge, then there is at most one vertex $z\neq x,y$ in the segment $xy$.
     Moreover, $z$ belongs to some $V_i$, and not to any $\Omega_e$, and thus $z$ is not an endpoint of another short edge.
    \end{claim}
  
  Furthermore, we have the following property of the long edges in $G_1$.  
    \begin{claim}\label{squstrlong}
    Let $xy$ and $x'y'$ be long edges of $G_1$ such that $x\prec y$, $x'\prec y'$, and $x\prec x'$, and all vertices $x,x',y,y'$ are distinct. Then $y\prec y'$.
    \end{claim}
    \begin{claimproof}
    Suppose that there are long edges $xy$, $x'y'$ such that $x\prec y$, $x'\prec y'$, and $x\prec x'$.
   Since $xy$ is long, there is $i\in\{0,\ldots,|E(G)|-1\}$ such that $x\in V_i$ and $y\in V_{i+1}$.
   Moreover, as $x'y'$ is long and we have $x\prec x'\prec y'\preceq y$, it must hold that $x'\in V_i$ and $y'\in V_{i+1}$.
   Recall that by the construction of $V_i$, there are $e=uv\in E(G)$ and $f=u'v'\in E(G)$ such that $x=u_{i,e}$, $y=u_{i+1,e}$, $x'=u'_{i,f}$, and $y'=u'_{i+1,f}$.
   Since $u_{i,e}\prec u'_{i,f}$, we either have $u\prec u'$ in $G$ or $u=u'$ and $e$ precedes $f$ in the lexicographic order.
   In both cases, this implies that $y\prec y'$, as desired.
    \end{claimproof}

Now we are ready to prove that $G_1$ does not contain forbidden subgraphs.
    \begin{claim}
    $G_1$ is \abxxba-subgraph-free.
    \end{claim}
    \begin{claimproof}
        Suppose there is a subgraph \abxxba in $G_1$.
	    By \cref{squstrshort}, both its edges are long.
	    But \cref{squstrlong} precludes long edges from inducing an \abba, a contradiction.
    \end{claimproof}

    \begin{claim}
        $G_1$ is \abcbca-subgraph-free.
    \end{claim}
    \begin{claimproof}
        Suppose there is such a subgraph in $G_1$.
        By \cref{squstrshort}, the edge joining the first vertex and the last is long, and then by \cref{squstrlong} the remaining two are short.
        But this contradicts the second part of \cref{squstrshort}, as short edges cannot induce an \abab.
    \end{claimproof}
    
    It remains to prove that $G_2$ is \abcabc-subgraph-free. We again first prove the following observation on the long edges in $G_2$.
    	\begin{claim}\label{squfliplong}
	 Let $xy$ and $x'y'$ be long edges of $G_2$ such that $x$ and $x'$ belong to the same layer, and $x\prec y$, $x'\prec y'$, and $x\prec x'$, and all vertices $x,x',y,y'$ are distinct. Then $y'\prec y$.
	\end{claim}
	\begin{claimproof}
	Let $i\in\{0,\ldots,|E(G)|-1\}$ be such that $x,x'\in V_i$, and suppose there are long edges $xy$ and $x'y'$ such that $x\prec y$, $x'\prec y'$, and $x\prec x'$, and all vertices $x,x',y,y'$ are distinct.
Since $xy$and  $x'y'$ are long, we have $y,y'\in V_{i+1}$.
	Recall that by the construction, there are $e=uv\in E(G)$ and $f=u'v'\in E(G)$ such that $x=u_{i,e}$, $y=u_{i+1,e}$, $x'=u'_{i,f}$, and $y'=u'_{i+1,f}$.
 Since $u_{i,e}\prec u'_{i,f}$, we either have $u\prec u'$ in $G$ or $u=u'$ and $e \prec f$ lexicographically.
 As $i$ and $i+1$ have opposite parity, in both cases this implies that $y' \prec y$, as desired.	
	\end{claimproof}
	
We are ready to prove that $G_2$ is \abcabc-subgraph-free.

	\begin{claim}
		$G_2$ is \abcabc-subgraph-free.
	\end{claim}
	\begin{claimproof}
		Suppose that $G_2$ contains a copy of \abcabc as a subgraph, and let $a, b, c, d, e, f$ be its consecutive vertices.
		By \cref{squstrshort}, all edges of \abcabc are long.
		Let $a\in V_i$.
		This implies $d\in V_{i+1}$.
		By \cref{squfliplong} applied to edges $ad$ and $be$, we have $b\notin V_i$.
		Furthermore, since $a \prec b \prec d$, it holds that $b\in V_{i+1}$.
		Now $c$, being both preceded and followed by some vertex from $V_{i+1}$, must itself belong to $V_{i+1}$.
		This, however, implies that the edges $be$ and $cf$ violate \cref{squfliplong}, a contradiction that proves the claim.
	\end{claimproof}

	This completes the proof.
\end{proof}
\begin{figure}[t]	
		\centering
		\input{figlong1}
        \input{figlong2}
		\caption{A layer $V_e$, containing the path $\Omega_e$, in $G_1$ (up) and $G_2$ (down) as in the proof of \cref{squstr}.}
		\label{fig:longsubdiv}
\end{figure}
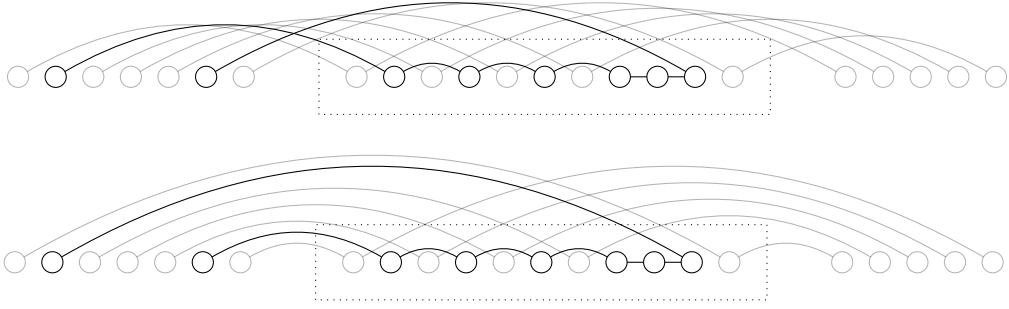

%% file: figlong1.tex
\begin{tikzpicture}[
        every node/.style={draw, circle, inner sep=0, outer sep = 0, minimum size=8pt}, baseline= -0.5mm]
            \draw[dotted] (4.5, 0) rectangle (10.5, 1);
        
            \node[opacity=.3] (a) at (.5,.5) {};
            \node (b) at (1, .5) {};
            \node[opacity=.3] (c) at (1.5, .5) {};
            \node[opacity=.3] (d) at (2,.5) {};
            \node[opacity=.3] (e) at (2.5, .5) {};
            \node (g) at (3,.5) {};
            \node[opacity=.3] (h) at (3.5, .5) {};

            \node[opacity=.3] (a1) at (5,.5) {};
            \node (b1) at (5.5, .5) {};
            \node[opacity=.3] (c1) at (6, .5) {};
            \node (c') at (6.5, .5) {};
            \node[opacity=.3] (d1) at (7,.5) {};
            \node (d') at (7.5, .5) {};
            \node[opacity=.3] (e1) at (8, .5) {};
            \node (e') at (8.5, .5) {};
            \node (t) at (9, .5) {};
            \node (g1) at (9.5,.5) {};
            \node[opacity=.3] (h1) at (10, .5) {};

            \node[opacity=.3] (a2) at (11.5,.5) {};
            \node[opacity=.3] (c2) at (12, .5) {};
            \node[opacity=.3] (d2) at (12.5,.5) {};
            \node[opacity=.3] (e2) at (13, .5) {};
            \node[opacity=.3] (h2) at (13.5, .5) {};

            \path[draw, opacity=.3] (a) to[bend left=30] (a1);
            \path[draw] (b) to[bend left=30] (b1);
            \path[draw, opacity=.3] (c) to[bend left=30] (c1);
            \path[draw, opacity=.3] (d) to[bend left=30] (d1);
            \path[draw, opacity=.3] (e) to[bend left=30] (e1);
            \path[draw] (g) to[bend left=30] (g1);
            \path[draw, opacity=.3] (h) to[bend left=30] (h1);

            \path[draw] (b1) to[bend left=30] (c');
            \path[draw] (c') to[bend left=30] (d');
            \path[draw] (d') to[bend left=30] (e');
            \draw (e') -- (t);
            \draw (t) -- (g1);
            
            \path[draw, opacity=.3] (a1) to[bend left=30] (a2);
            \path[draw, opacity=.3] (c1) to[bend left=30] (c2);
            \path[draw, opacity=.3] (d1) to[bend left=30] (d2);
            \path[draw, opacity=.3] (e1) to[bend left=30] (e2);
            \path[draw, opacity=.3] (h1) to[bend left=30] (h2);
                     
\end{tikzpicture}

%% file: figlong2.tex
\begin{tikzpicture}[
        every node/.style={draw, circle, inner sep=0, outer sep = 0, minimum size=8pt}, baseline= -0.5mm]
            \draw[dotted] (4.5, 0) rectangle (10.5, 1);
        
            \node[opacity=.3] (a) at (.5,.5) {};
            \node (b) at (1, .5) {};
            \node[opacity=.3] (c) at (1.5, .5) {};
            \node[opacity=.3] (d) at (2,.5) {};
            \node[opacity=.3] (e) at (2.5, .5) {};
            \node (g) at (3,.5) {};
            \node[opacity=.3] (h) at (3.5, .5) {};

            \node[opacity=.3] (h1) at (5,.5) {};
            \node (b1) at (5.5, .5) {};
            \node[opacity=.3] (e1) at (6, .5) {};
            \node (c') at (6.5, .5) {};
            \node[opacity=.3] (d1) at (7,.5) {};
            \node (d') at (7.5, .5) {};
            \node[opacity=.3] (c1) at (8, .5) {};
            \node (e') at (8.5, .5) {};
            \node (t) at (9, .5) {};
            \node (g1) at (9.5,.5) {};
            \node[opacity=.3] (a1) at (10, .5) {};

            \node[opacity=.3] (a2) at (11.5,.5) {};
            \node[opacity=.3] (c2) at (12, .5) {};
            \node[opacity=.3] (d2) at (12.5,.5) {};
            \node[opacity=.3] (e2) at (13, .5) {};
            \node[opacity=.3] (h2) at (13.5, .5) {};

            \path[draw, opacity=.3] (a) to[bend left=30] (a1);
            \path[draw] (b) to[bend left=30] (g1);
            \path[draw, opacity=.3] (c) to[bend left=30] (c1);
            \path[draw, opacity=.3] (d) to[bend left=30] (d1);
            \path[draw, opacity=.3] (e) to[bend left=30] (e1);
            \path[draw] (g) to[bend left=30] (b1);
            \path[draw, opacity=.3] (h) to[bend left=30] (h1);

            \path[draw] (b1) to[bend left=30] (c');
            \path[draw] (c') to[bend left=30] (d');
            \path[draw] (d') to[bend left=30] (e');
            \draw (e') -- (t);
            \draw (t) -- (g1);
            
            \path[draw, opacity=.3] (a1) to[bend left=30] (a2);
            \path[draw, opacity=.3] (c1) to[bend left=30] (c2);
            \path[draw, opacity=.3] (d1) to[bend left=30] (d2);
            \path[draw, opacity=.3] (e1) to[bend left=30] (e2);
            \path[draw, opacity=.3] (h1) to[bend left=30] (h2);
                     
\end{tikzpicture}

%% file: hardness-chain-reduction.tex
\subsection{Train reduction}\label{sec:chainreduction}

In this Section, we provide the final hardness results, namely concerning $H\in \{\abxba,\abccab\}$.
The constructions we use are more involved than those we have presented so far; they are inspired by the \NP-hardness proof for \textsc{List 4-Coloring} in \abba-free graphs~\cite{DBLP:conf/stacs/PiecykR26}.

We organize this Section into three parts.
Our aim is to reduce \MIS in general graphs to \MIS in $H$-free ordered graphs.
First, in \cref{sec:traindefs}, we introduce the main building blocks in our reductions -- \emph{trains}, \emph{coupling}, and \emph{permutation gadgets}.
Next, in \cref{sec:trainreduction} we show how to use these building blocks in order to obtain the desired reductions.
Finally, in \cref{sec:traingadgets} we present the construction of permutation gadgets, completing the proof.

\subsubsection{Building blocks}\label{sec:traindefs}

The main component of our constructions is a \emph{train}.

\begin{definition}[Train]\label{deflink}
	For integers $\ell,\ell' \geq 0$,
	a \emph{train} is a tuple
	$ \mathbi{T}=(T, (x_1, x_2, \ldots, x_\ell), (y_1, y_2, \ldots y_{\ell'})),$
	where $T$ is an ordered graph, such that
	\begin{enumerate}[(T1)]
		\item the vertices $x_1, x_2, \ldots, x_\ell,y_1, y_2, \ldots y_{\ell'}$ are pairwise distinct, \label{prop:link-distinct}
		\item $x_1, x_2, \ldots, x_\ell$ are, in that order, a prefix of $T$, \label{prop:link-prefix}
		\item $y_1, y_2, \ldots, y_{\ell'}$ are, in that order, a suffix of $T$, \label{prop:link-suffix}
		\item both $\{x_1, x_2, \ldots, x_\ell\}$ and $\{y_1, y_2, \ldots y_{\ell'}\}$ are independent sets. \label{prop:link-independent}
	\end{enumerate}
	We call $(x_1, x_2, \ldots, x_\ell)$ (resp., $(y_1, y_2, \ldots y_{\ell'})$'s) the \emph{input vertices} (resp., \emph{output vertices})  of the train $\mathbi{T}$.
	By the \emph{underlying graph} of $\mathbi{T}$ we mean that of $T$.
	For an ordered graph $H$, we say that $\mathbi{T}$ is $H$-free if $T$ is $H$-free.
\end{definition}

\paragraph{Coupling and avoidability.} Consider two trains 
\[\mathbi{T}^1=(T^1, (x_1^1, x_2^1, \ldots, x_\ell^1), (y_1^1, y_2^1, \ldots, y_{\ell'}^1)) \text{ \ and \ } \mathbi{T}^2=(T^2, (x_1^2, x_2^2, \ldots, x_{\ell'}^2), (y_1^2, y_2^2, \ldots, y_{\ell''}^2))\]
on disjoint sets of vertices.
Let $T$ be the ordered graph obtained from $T^1$ and $T^2$ by identifying $y_i^1$ with $x_i^2$ for each $i\in [\ell']$.
The train $\mathbi{T} = (T, (x_1^1, x_2^1, \ldots, x_\ell^1), (y_1^2, y_2^2, \ldots y_{\ell''}^2))$ will be called the \emph{coupling} of $\mathbi{T}^1$ and $\mathbi{T}^2$.
The same name will also denote the operation of obtaining $\mathbi{T}$ from its component trains.
Clearly, $\mathbi{T}$ satisfies properties \ref{prop:link-distinct}, \ref{prop:link-prefix}, \ref{prop:link-suffix}, and \ref{prop:link-independent} of \cref{deflink}.

However, it might happen that the coupling of two $H$-free trains is not $H$-free itself.
We say that an ordered graph $H$ is \emph{avoidable} if the coupling of any two $H$-free trains is $H$-free.

Let us introduce a criterion for avoidability.

\begin{lemma}\label{lem:avoidable}
	Suppose $H$ is an ordered graph containing (not necessarily distinct) vertices $a, b, c, d, e, f$, in that order, such that:
	\begin{itemize}
		\item[1.] $a$ is the first and $f$ the last vertex of $H$;
		\item[2.] $H$ contains edges $ae, bf, cd$.
	\end{itemize}
	Then $H$ is avoidable.
\end{lemma}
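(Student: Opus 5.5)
We argue by contradiction. Let $\mathbi{T}^1,\mathbi{T}^2$ be $H$-free trains and $\mathbi{T}$ their coupling, with ordered graph $T$. In $T$ all vertices of $T^1$ that are not output vertices come first, then the identified vertices $Y=\{y_1^1,\ldots,y_{\ell'}^1\}=\{x_1^2,\ldots,x_{\ell'}^2\}$, and then the non-input vertices of $T^2$. Since $Y$ is a suffix of $T^1$ and a prefix of $T^2$, the ordering of $T$ is well defined. Every edge of $T$ lies in $T^1$ or in $T^2$, and $Y$ is independent by \ref{prop:link-independent}. Set $A=V(T^1)\setminus Y$ and $B=V(T^2)\setminus Y$, so $A\prec Y\prec B$ and there are no edges between $A$ and $B$. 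Suppose $\phi$ is an (ordered, induced) embedding of $H$ into $T$. We will show that its image lies entirely in $V(T^1)$ or entirely in $V(T^2)$. Since $T^1$ and $T^2$ are induced subgraphs of $T$ and preserve order, this contradicts $H$-freeness of one of the trains.

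Suppose the image meets both $A$ and $B$. Because $a$ is the first vertex of $H$ and $f$ the last, we get $\phi(a)\in A$ and $\phi(f)\in B$. Otherwise all image vertices would be $\succeq \phi(a)\in Y\cup B$, so none lies in $A$; the case of $f$ is symmetric.

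\begin{itemize}
\item The edge $ae$ forces $\phi(e)\notin B$, since there are no $A$--$B$ edges. Hence $\phi(e)\in A\cup Y$.
\item Symmetrically, the edge $bf$ forces $\phi(b)\in Y\cup B$.
\item By the ordering $b\preceq c\preceq d\preceq e$, it follows that $\phi(c),\phi(d)$ lie between $\phi(b)\succeq Y$-start and $\phi(e)\preceq Y$-end. More precisely, $\phi(c)\succeq\phi(b)$, which is not in $A$, so $\phi(c)\in Y\cup B$. Likewise $\phi(d)\preceq\phi(e)$, so $\phi(d)\in A\cup Y$.
\end{itemize}

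Since $\phi(c)\preceq\phi(d)$, if $\phi(c)\in B$ then $\phi(d)\in B$ as well, contradicting $\phi(d)\in A\cup Y$. So $\phi(c)\in Y$, and symmetrically $\phi(d)\in Y$. But $cd$ is an edge of $H$, $c\neq d$ as endpoints of an edge, and $Y$ is independent, a contradiction.

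Therefore the image avoids $A$ or avoids $B$, and so lies in $V(T^2)$ or in $V(T^1)$, as required. The only delicate point is that the six vertices need not be distinct. The argument above uses only the non-strict order $a\preceq b\preceq\cdots\preceq f$, together with the facts that $a$ and $f$ are extremal and that each of $ae,bf,cd$ is an edge (so its endpoints are distinct). Thus coincidences cause no trouble. Note also that the case where the image meets $Y$ but only one of $A,B$ needs no separate argument: the image then lies in $A\cup Y=V(T^1)$ or in $Y\cup B=V(T^2)$.
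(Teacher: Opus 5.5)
Your proof is correct and follows essentially the same route as the paper. Both place $a$ (and hence $e$) in $V(T^1)$ and $f$ (and hence $b$) in $V(T^2)$, squeeze $c,d$ into the shared independent set $V(T^1)\cap V(T^2)$, and contradict the edge $cd$. Your write-up is slightly more explicit than the paper's on two points: $T^1$ and $T^2$ are induced subgraphs of the coupling because the identified vertices are independent in both trains, and coincidences among $a,\ldots,f$ do no harm.
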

\begin{proof}
	Let 
	$\mathbi{T}^1=(T^1, (x_1^1, x_2^1, \ldots, x_\ell^1), (y_1^1, y_2^1, \ldots, y_{\ell'}^1))$ and $\mathbi{T}^2=(T^2, (x_1^2, x_2^2, \ldots, x_{\ell'}^2), (y_1^2, y_2^2, \ldots, y_{\ell''}^2))$	be two $H$-free trains and let $\mathbi{T}$ be their coupling.
	For contradiction, suppose that $\mathbi{T}$ contains an induced copy of $H$.
	Since $\mathbi{T}^1$ is $H$-free, $f \in V(T^2) \setminus V(T^1)$.
	Because of the edge $bf$, we have $b\in V(T^2)$.
	Similarly, $a, e \in V(T^1)$.
	Now, as $b\preceq c \preceq d\preceq e$, it must hold that $c, d \in V(T^1)\cap V(T^2)$.
	However, the set $V(T^1)\cap V(T^2)$ is independent, contradicting the existence of the edge $cd$.
\end{proof}

Notably, this implies the avoidability of both \abxba and \abccab.
Indeed, letting $a=b$ and $e=f$ yields \abba, a subgraph of \abxba.
In case of \abccab we apply \cref{lem:avoidable} with all vertices distinct.

\paragraph{Interchangeability.}

Let us start with a definition of an operation that is similar in spirit to Poljak's subdivision trick, i.e., a local replacement of certain induced subgraph in a graph that preserves the independence number.
The definitions below concern unordered graphs.

A \emph{boundaried graph} $(G,B)$ is a graph with distinguished subset of vertices $B$ called the \emph{boundary}.
Two boundaried graphs $(G_1,B_1)$ and $(G_2,B_2)$ are \emph{compatible} if $V(G_1)\cap V(G_2) = B_1 = B_2$ and $G_1[B_1] = G_2[B_2]$. 
Let $G$ be a graph that contains a boundaried graph $(G_1,B)$ as an induced subgraph such that $N(V(G_1)\setminus B) \subseteq B$. 
By \emph{replacing} $(G_1,B)$ with a compatible boundaried graph $(G_2,B)$ we mean removing all vertices from $V(G_1) \setminus B$ and adding the vertices from $V(G_2)\setminus B$ together with all edges of $G_2$ incident to these vertices.

\begin{definition}[Interchangeability]\label{def:braiding}
Let $(G_1,B)$, $(G_2,B)$ be two compatible boundaried graphs.
We say that $(G_1,B)$ and $(G_2,B)$ are \emph{interchangeable} if for every graph $G$, the following holds.
	\begin{description}
	\item[($\star$)] Suppose that $G$ contains an induced copy of $G_1$ such that $N(V(G_1)\setminus B) \subseteq B$, and let $G'$ be obtained from $G$ by replacing $(G_1,B)$ with $(G_2,B)$.
	Then $\alpha(G)=\alpha(G')$.
	\end{description}
\end{definition}

\paragraph{Permutation gadgets.}
We are ready to introduce the crucial blocks (or rather, carriages) of our reductions in this section.

\begin{definition}\label{def:perm}
	Let $\ell \geq 1$ be an integer and let $\sigma:[\ell]\to [\ell]$ be a permutation.
	A \emph{permutation gadget for $\sigma$} is a pair $(\mathbi{P}, \overline{k})$,
	where $\mathbi{P}=(P, (x_1, x_2, \ldots, x_\ell), (y_1, y_2, \ldots, y_\ell))$ is a train
	and $\overline{k} = (k^1,\ldots,k^\ell)$ is a vector of $\ell$ positive integers, all of equal parity, such that the following holds:\\	
	Let $L$ be the (unordered) linear forest consisting of $\ell$ paths $L^1, L^2, \ldots, L^\ell$, where $L^i$ is a path with endpoints $x_i$ and $y_{\sigma(i)}$ and with $k^i$ vertices in total.	
	Then, $(L,\bigcup_{i=1}^\ell \{x_i,y_i\})$ is interchangeable with $(P, \bigcup_{i=1}^\ell \{x_i,y_i\})$,
	where in the latter we treat $P$ as an unordered graph.

	Whenever $\overline{k}$ is clear from the context, we will simply say that $\mathbi{P}$ is a permutation gadget for $\sigma$.
\end{definition}

In the following lemma we show that for $H\in \{\abxba, \abccab \}$, such $H$-free permutation gadgets exist.

\begin{lemma}\label{lem:permutation-gadgets}
	For every $\ell$, and for every permutation $\sigma:[\ell]\rightarrow [\ell]$ there exist:
	\begin{enumerate}[(1.)]
		\item a \abxba-free permutation gadget for $\sigma$;
		\item a \abccab-free permutation gadget for $\sigma$.
	\end{enumerate}
	Moreover, both gadgets can be constructed in time polynomial in $\ell$.
\end{lemma}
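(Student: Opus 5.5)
The plan is to reduce the construction to gadgets for adjacent transpositions, glue them by coupling, and build each transposition gadget directly as a suitably ordered linear forest. Write $\sigma$ as a product $\tau_1\tau_2\cdots\tau_s$ of adjacent transpositions $\tau_t=(i_t\ i_t{+}1)$, as in bubble sort, so $s=\Oh(\ell^2)$. For each $t$ I build an $H$-free train $\mathbi{P}_t$ with $\ell$ inputs and $\ell$ outputs whose underlying graph is itself a linear forest. In this forest, the paths from $x_{i_t}$ and $x_{i_t+1}$ end at $y_{i_t+1}$ and $y_{i_t}$ respectively, and every other path joins $x_j$ to $y_j$. Then I take $\mathbi{P}$ to be the iterated coupling of $\mathbi{P}_1,\ldots,\mathbi{P}_s$.

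Since both $\abxba$ and $\abccab$ are avoidable by \cref{lem:avoidable} (as remarked after that lemma), $\mathbi{P}$ is $H$-free. Its underlying graph is a concatenation of linear forests, identified at input/output vertices, so it is again a linear forest. That forest consists of $\ell$ paths, one joining $x_i$ to $y_{\sigma(i)}$ for each $i$. I still have to check that I compose the transpositions in the order matching the convention $x_i\mapsto y_{\sigma(i)}$. For interchangeability I will observe that if $P$ is literally the linear forest $L$, then the replacement in ($\star$) is the identity up to renaming, so $\alpha$ is preserved trivially.

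The remaining requirement is that all $k^i$ have equal parity. I will enforce it inside each $\mathbi{P}_t$ by giving every track an odd number of internal vertices (or an even number, throughout). If some track has the wrong parity, I pad it with one extra internal vertex placed immediately next to one of its vertices. This insertion only creates an edge between consecutive positions, and such an edge cannot serve as the inner edge of $\abxba$, which needs a vertex strictly inside it. Similarly, it cannot serve as the outer edges of $\abccab$. It could still play the role of the middle edge $cd$ of $\abccab$, so I will check that case separately.

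For $\mathbi{P}_t$ itself I intend a layered ordering. First come the inputs $x_1\prec\cdots\prec x_\ell$. Then comes a layer $p_1\prec\cdots\prec p_\ell$ with edges $x_jp_j$, which pairwise cross and never nest. Then comes a local crossing region, then a layer $q_1\prec\cdots\prec q_\ell$, and then the outputs $y_1\prec\cdots\prec y_\ell$ with edges $q_jy_j$. Between the $p$- and $q$-layers, every track $j\notin\{i_t,i_t+1\}$ is continued monotonically, so its edges keep crossing rather than nesting. The two swapped tracks are routed through a short zig-zag of new vertices placed contiguously around the gap between positions $i_t$ and $i_t{+}1$. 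This zig-zag should realize the swap using only edges between consecutive or almost-consecutive positions, so that any nesting occurs only around edges with no vertex inside the inner edge. That rules out $\abxba$. It also rules out $\abccab$, because the only candidates for the inner edge $cd$ are these short edges, and I will check that no pair of long crossing edges straddles a short edge.

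The main obstacle is exactly this step: designing the local crossing in the linear order. A naive swap $q_{i_t+1}\prec q_{i_t}$ nests the edge $p_{i_t+1}q_{i_t+1}$ inside $p_{i_t}q_{i_t}$, with other layer vertices strictly between its endpoints, which is precisely $\abxba$. To avoid this, I would first try splitting each layer transition into sub-layers so that between two consecutive sub-layers only the two swapped tracks move. The edges of the two swapped tracks then become consecutive in the order. The edges of the other tracks must jump over this region, and I must check case by case that such a jump never produces a copy of $\abxba$ or of $\abccab$ together with one of the short edges. If this fails for one of the two patterns, I would design separate orderings for $\abxba$ and for $\abccab$, which the lemma permits.
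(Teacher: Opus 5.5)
\paragraph{Where the proposal breaks.} Your reduction to adjacent transpositions, via coupling and the avoidability of \abxba and \abccab, is the same as the paper's. The entire content of the lemma, however, is the construction of the swap gadgets, and you leave that open. For $H=$ \abxba, the design you commit to cannot exist once $\ell\ge 3$. You take $P=L$ to be a linear forest, and you aim for every nested pair of edges to have an inner edge joining two consecutive vertices.

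To see why this fails, fix a gap of the ordering. List the edges crossing it by increasing left endpoint, and read off their tracks as a word in $\Gamma=\mathbb{Z}_2\ast\cdots\ast\mathbb{Z}_2$, with generator $t_i$ for the track starting at $x_i$. Under your hypothesis, the following holds when passing a vertex $v$:
\begin{itemize}
\item the letters removed at $v$ are a prefix of the word, since an earlier edge surviving past $v$ would nest over a non-consecutive edge ending at $v$;
\item in addition, possibly the last letter is removed, namely the consecutive edge entering $v$;
\item the new letters are appended at the end.
\end{itemize}
Checking the few cases for a vertex of degree $2$ shows that the conjugacy class of the word in $\Gamma$ never changes. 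Right after the inputs the word is $t_1t_2\cdots t_\ell$. Right before the outputs it is conjugate to $t_{\sigma^{-1}(1)}\cdots t_{\sigma^{-1}(\ell)}$. For an adjacent transposition with $\ell\ge 3$, these two cyclically reduced words are not cyclic shifts of each other, so they are not conjugate.

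Hence every linear-forest swap gadget contains a nested pair whose inner edge has a vertex strictly inside it. Such a pair is an induced \abxba unless every such inside vertex is adjacent to one of the four endpoints. With maximum degree $2$, your plan provides no mechanism for this. For $\ell=2$ a linear forest does suffice, which is why the trouble appears exactly when the other tracks must jump over the swap region, as you feared.

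\paragraph{The missing idea.} You need to abandon $P=L$ and use the fact that \abxba is excluded only as an \emph{induced} subgraph. The paper replaces the linear forest by an interchangeable boundaried graph (\cref{lem:braiding}). Its gadget routes track $j{+}1$ through a backward edge $x^1_{j+1}u$, which jumps over $v$ and $x^1_j$. It then adds two things:
\begin{itemize}
\item a vertex $v$ adjacent to $x_j,u,x^1_j,x^2_j$;
\item the edge $x^1_jx^1_{j+1}$.
\end{itemize}
Now every vertex inside $ux^1_{j+1}$ is adjacent to one of its endpoints. So the unavoidable nestings of the other tracks' edges over it ($x_ix^1_i$ for $i\ge j+2$, and $x^1_ix^2_i$ for $i\le j-1$) are not induced. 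Meanwhile \cref{lem:braiding} keeps $\alpha$ unchanged, so one still obtains a permutation gadget, with $k^i=4$ except $k^{j+1}=6$.

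For \abccab, by contrast, a linear forest is fine, and the paper's gadget is one. Even the single matching $\{x_iy_{\sigma(i)}\}$ with all $k^i=2$ works. A copy of \abccab in it would force the pattern $231$ in $\sigma$, which needs two inversions, whereas an adjacent transposition has only one. Your proposal does not carry out this case either.
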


\subsubsection{Hardness reduction}\label{sec:trainreduction}

We postpone the proof of \cref{lem:permutation-gadgets} to \cref{sec:traingadgets}, and now we show how it implies the desired hardness results.

\begin{theorem}\label{thm:chainred}
	\MIS is \NP-hard when restricted to \abxba-free ordered graphs and to \abccab-free ordered graphs.
\end{theorem}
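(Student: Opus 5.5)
We reduce from \MIS in general unordered graphs. Let $(G,k)$ be an instance with vertices $v_1,\ldots,v_n$ and edges $e_1,\ldots,e_m$, and fix $H\in\{\abxba,\abccab\}$. The idea is to build a long train as a coupling of many small $H$-free trains (``carriages''). Each carriage has $2m$ input and $2m$ output vertices, organised as $2m$ ``wires'', two per edge of $G$. By \cref{lem:avoidable}, $H$ is avoidable, so coupling $H$-free trains keeps the whole construction $H$-free. Therefore it suffices that every carriage is $H$-free on its own.

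\textbf{Wires and carriages.} Start with a first carriage $\mathbi{T}_0$ that contains the core vertices $v_1,\ldots,v_n$. For each edge $e_j=v_av_b$ it has two output wire-endpoints: one adjacent to $v_a$ and one adjacent to $v_b$. We order $\mathbi{T}_0$ so that it is $H$-free. For instance, put the core vertices first and the $2m$ outputs afterwards, so that every edge goes from a core vertex to a later output. Such a bipartite ordering with no edges inside either part avoids \abba-type local structure, but $H$ itself still has to be checked by hand. The goal is to route, for each $j$, the two wires of $e_j$ until they become adjacent. In the final carriage they should appear as consecutive output positions $2j-1,2j$, closed off by a single edge; each of the $m$ pairs then gets its own short closing carriage.

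In between, insert the permutation gadgets of \cref{lem:permutation-gadgets} for the permutation $\sigma$ that reorders the wires from the order they have after $\mathbi{T}_0$ into the paired order. Each gadget is interchangeable with a linear forest of paths of prescribed lengths $k^i$, all of equal parity. Wherever a wire has the wrong parity, we pad it with a simple path carriage of the right length, e.g. an identity train whose $i$-th input is joined to its $i$-th output. This makes every wire from $v_a$ to $v_b$ an odd path with an even number of internal vertices.

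\textbf{Correctness.} Replace each gadget by its linear forest. By \cref{def:perm}, applying \cref{def:braiding} repeatedly preserves $\alpha$. After these replacements the underlying graph is exactly $G$ with each edge $e_j$ subdivided an even number $2s_j$ of times. By Poljak's trick,
\[
\alpha = \alpha(G)+\sum_j s_j .
\]
Hence $\alpha(G)\ge k$ if and only if the constructed graph has an independent set of size $k+\sum_j s_j$. The whole construction has polynomial size because the gadgets are built in time polynomial in $\ell=2m$.

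\textbf{Main obstacle.} The delicate part is keeping the first and last carriages and the parity bookkeeping $H$-free and consistent with the train axioms. The inputs and the outputs must each form an independent prefix and suffix, respectively. The closing edges between consecutive outputs must be placed inside a carriage, not on the boundary. So I would make each closing carriage take inputs $x_{2j-1},x_{2j}$, add a path $x_{2j-1}-p-q-x_{2j}$ placed between them, and have no outputs. I then expect short checks, in the style of \cref{xxxlrlrlr}, showing that such one- or two-edge local pieces cannot contain \abba\ (within \abxba) or the three nested/crossing edges of \abccab.
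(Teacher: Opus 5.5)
Your overall architecture matches the paper's: a locomotive carrying the core vertices, a permutation gadget from \cref{lem:permutation-gadgets} that makes the two wires of each edge adjacent, a closing train, avoidability via \cref{lem:avoidable}, then interchangeability and Poljak's trick. However, you leave unresolved the step you yourself flag as the main obstacle, and for $H=$ \abccab your proposed closing fails.

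\textbf{The closing train.} Coupling identifies \emph{all} outputs of one train with the inputs of the next, so the $m$ closings cannot be separate carriages. They must all live in a single final train. By (T2) the inputs of that train form a prefix, so the new vertices $p_j,q_j$ cannot sit between $x_{2j-1}$ and $x_{2j}$ in the order; they must come after all inputs. Once they do, the closing paths of different pairs interact, and the short local checks you expect do not go through. Take the natural order $p_1,q_1,\ldots,p_m,q_m$ with edges $x_{2j-1}p_j$, $p_jq_j$, $q_jx_{2j}$, and $m\ge 3$. Then the vertices $x_3\prec x_5\prec p_1\prec q_1\prec p_2\prec p_3$ induce \abccab: the only edges among them are $x_3p_2$, $x_5p_3$, $p_1q_1$.

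This natural caboose is exactly what the paper uses for \abxba, where it is fine. Any edge with a vertex strictly between its endpoints goes from an input to a closing vertex, and such an edge is never nested under another edge.

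For \abccab the paper instead joins the $i$-th input to the $(2m+1-i)$-th closing vertex, with closing edges $z_{2j-1}z_{2j}$. Then all input-to-closing edges are pairwise nested, so the caboose contains no two crossing edges at all, and hence no \abccab. This reversal is the missing idea.

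\textbf{Two smaller points.} First, the locomotive needs a specific order of its outputs. Putting core vertices first and outputs afterwards does not by itself avoid \abba. Suppose $u\prec v$ are core vertices and an output $v_f$ of $v$ precedes an output $u_e$ of $u$. Then $u,v,v_f,u_e$ together with any core vertex $w\succ v$ induce \abxba, because outputs have a single neighbour and there are no core--core edges. The paper orders the outputs by their core endpoint ($u_e\prec v_f$ whenever $u\prec v$), which makes the locomotive \abba-free.

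Second, the parity padding is unnecessary. An identity carriage would not change any parity anyway, since it lengthens both wires of every edge by the same amount. More to the point, all entries of $\overline{k}$ have the same parity, so the path replacing an edge has $k^i+k^{i'}+4$ vertices, which is automatically even.
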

\begin{proof}
	Let $(G,k)$ be an instance of \MIS and let $m$ be the number of edges of $G$.

	We aim to construct ordered graphs $G_1$, $G_2$ and integers $k_1,k_2$ such that: (i) $G_1$ is \abxba-free, (ii) $G_2$ is \abccab-free, (iii) $\alpha(G_1)\geq k_1$ if and only if $\alpha(G)\geq k$, and (iv) $\alpha(G_2)\geq k_2$ if and only if $\alpha(G)\geq k$.
	Most parts of the construction of $G_1$ and $G_2$ will be common, so we will describe them together.
	
	\paragraph{Construction of $G_1$ and $G_2$.}
	Each of the graphs $G_1$ and $G_2$ is obtained by the coupling of three trains.

	The first one is the same for both $G_1$ and $G_2$ and it is called the \emph{locomotive} (see  \cref{fig:choo-choo}).
	We start with introducing the set $V(G)$, which we order the arbitrarily.
	Then, we define the set $V_E$ so that for every $v\in V(G)$, and for every $e\in E(G)$ such that $v$ is an endpoint of $e$, we add to $V_E$ the vertex $v_e$, adjacent to $v$.
	Note that the graph constructed so far is similar to the 2-subdivision of $G$ discussed in previous sections, except we do not add any edges between dummy vertices. 
	We place the vertices from $V_E$ after the set $V(G)$, and the order inside $V_E$ is such that $u_e\prec v_f$ whenever $u\prec v$ and the order between $v_e$ and $v_f$ is arbitrary.
	The input of the locomotive is empty, while the output consists of the vertices from $V_E$.
	Note that $|V_E| = 2m$.
	
	The second train is a permutation gadget for an appropriate permutation $\sigma$ of $[2m]$.
	For each $v_e \in V_E$, let $\pi(v_e) \in [2m]$ denote the position of $v_e$ in $V_E$. 
	Let $\sigma: [2m]\to [2m]$ be a permutation such that,
	for every $e=uv\in E(G)$,
	there is $i\in [2m-1]$ such that $\{\sigma(\pi(u_e)),\sigma(\pi(v_e))\}=\{i,i+1\}$.
	We invoke \cref{lem:permutation-gadgets} to construct a \abxba-free (in the case of $G_1$) or a \abccab-free (in the case of $G_2$) permutation gadget $(\cP, \overline{k})$ for $\sigma$,
	where $\cP=(P, (x_1, x_2, \ldots, x_{2m}), (y_1, y_2, \ldots, y_{2m}))$.
		
	Now let us proceed to the third (and final) train, called \emph{caboose} (see \cref{fig:choo-choo}).
	It contains vertices $y_1,\ldots,y_{2m}$ as the input.
	Next, we introduce the vertices $z_1,\ldots,z_{2m}$, which we place in that order after the vertex $y_{2m}$.
	For every $j\in [m]$, we add the edge $z_{2j-1}z_{2j}$.
	In the construction of $G_1$, we add the edge $y_{i}z_{i}$ for every $i\in [2m]$.
	In the construction of $G_2$, we add the edge $y_{i}z_{2m+1-i}$ instead.	
	The output of the caboose is empty.

	Now, $G_1$ and $G_2$ are both obtained by coupling of the locomotive,
	the permutation gadget $\cP$, and the caboose.
	This completes the construction of $G_1$ and $G_2$.
	Clearly, $G_1$ and $G_2$ are constructed in polynomial time.
	
	\paragraph{$H$-freeness: properties (i) and (ii).} 
	Let us verify that $G_1$ is \abxba-free and $G_2$ is \abccab-free.
	As both \abxba and \abccab are avoidable, it is sufficient to verify that each of the three trains we used in the construction of $G_1$ and $G_2$ is, respectively, \abxba-free or \abccab-free.
	
	It is straightforward to verify that the locomotive is even \abba-free,
	so in particular \abxba-free and \abccab-free. 
	The corresponding property of the permutation gadget follows directly from \cref{lem:permutation-gadgets}.

	Thus, let us focus on the caboose.
	For $G_1$, notice that the middle edge of every copy of \abba in this train has to be contained in $\{z_1,\ldots,z_{2m}\}$. This means that the endvertices of this edge must be consecutive so, in particular, this train is \abxba-free.
	In the case of $G_2$, the caboose does not even contain \abab, so it is in particular \abccab-free.
	
	Thus, $G_1$ is \abxba-free and $G_2$ is \abccab-free, as claimed.
	
	\paragraph{Equivalence of instances: properties (iii) and (iv).}
	It remains to define $k_1$ and $k_2$ and prove the equivalence of the instances.
	We show the case of $G_1$; the case of $G_2$ is analogous.
	Consider the permutation gadget $(\cP, \overline{k})$ used in the construction, where $\overline{k} = (k^1, \ldots, k^{2m})$.
	Let us define $B = \bigcup_{i=1}^{2m} \{x_i, y_i\}$.
	
	Let $L$ be the linear forest, as \cref{def:perm}, and denote by $L^1, L^2, \ldots, L^{2m}$ its paths, so that $L^i$ has $k^i$ vertices and joins $x_i$ with $y_{\sigma(i)}$.
	By the definition of $\cP$ and $L$, we know that that $(P, B)$ is interchangeable with $(L, B)$.
	Let $G'_1$ be the unordered graph obtained from the underlying graph of $G_1$ by replacing $(P, B)$ with $(L, B)$.
	Since $(P,B)$ and $(L,B)$ are interchangeable, we have $\alpha(G_1)=\alpha(G_1')$.

	We claim that $G_1'$ can be equivalently obtained from $G$ by subdividing each edge some even number of times.
	Consider an edge $e=uv$ of $G$.
	Let $i = \pi(u_e)$ and $j=\pi(v_e)$, and by symmetry assume that $\sigma(i)<\sigma(j)$, i.e., $\sigma(j) = \sigma(i)+1$.
	In $G'_1$, the edge $e$ is corresponds to the path:
	\[
	 u-\underbrace{u_{e}-\ldots-y_{\sigma(i)}}_{\text{$L^i$, i.e., $k^i$ vertices}}-z_{\sigma(i)}-z_{\sigma(j)}-\underbrace{y_{\sigma(j)}-\ldots-v_{e}}_{_{\text{$L^j$, i.e., $k^j$ vertices}}}-v,
	\]
	In total, this path has $k^i+k^j+4$ vertices, which is an even number as $k^i$ and $k^j$ have the same parity.
	Thus, in the construction of $G_1'$ from $G$, the edge was doubly subdivided $k_e := (k^i+k^j+2)/2$ times.
	
	Since the edge $uv$ was chosen arbitrarily, we conclude that $G'_1$ was obtained by applying Poljak's subdivision trick to $G$ the total of
	$
	\sum_{e\in E(G)} k_e,
	$
	times, and hence  we have $\alpha(G_1)=\alpha(G_1')=\alpha(G) + \sum_{e\in E(G)} k_e$.
	Setting $k_1 = k + \sum_{e\in E(G)} k_e$  proves (iii).
	The arguments for (iv) are analogous, so the proof is complete.
\end{proof}

\begin{figure}[t]	
	\centering
	\input{figopen1}
    
	\input{figclose1}
    
	\input{figclose2}
	\caption{The locomotive and two cabooses used in the proof of \cref{thm:chainred}. In this and subsequent figures, underscoring indicates the input and output vertices.}
	\label{fig:choo-choo}
\end{figure}
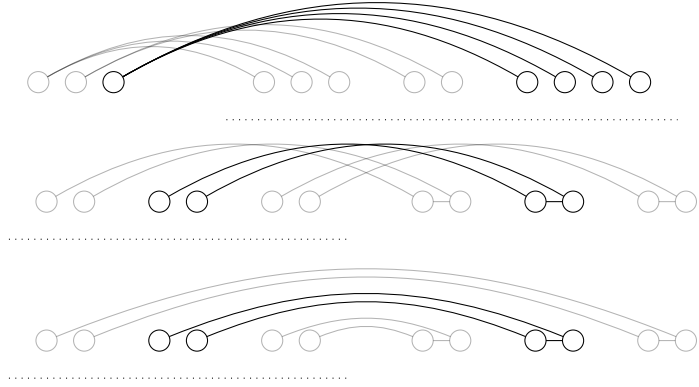

\subsubsection{Constructing permutation gadgets}\label{sec:traingadgets}

It remains to prove \cref{lem:permutation-gadgets}.
We start with the following construction (see \cref{fig:braiding}).

\begin{lemma}\label{lem:braiding}
Let $G_1$ and $G_2$ be graphs defined as follows:
\begin{align*}
& V(G_1)=\{a, b_1, c, d, e_1, f_1, g\}, & E(G_1)=\{ab_1, b_1c, de_1, e_1f_1, f_1g\}, & \\
& V(G_2) = \{a, b_2, b_2', c, d, e_2, f_2, g\}, & E(G_2)=\{ab_2, b_2c, ab_2', b_2'c, b_2b_2', de_2, e_2f_2, f_2g, b_2e_2, b_2'f_2\}, &
\end{align*}
and let $B=V(G_1)\cap V(G_2)=\{a, c, d, g\}$.
Then, $(G_1,B)$ and $(G_2,B)$ are interchangeable.
\end{lemma}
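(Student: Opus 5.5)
The plan is to reduce interchangeability to a local computation on the boundary. Both gadgets are attached to the rest of the host graph only through $B=\{a,c,d,g\}$. So the maximum weight of an independent set decomposes according to its trace $S$ on $B$.

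Concretely, let $G$ contain an induced copy of $G_1$ with $N(V(G_1)\setminus B)\subseteq B$, and let $G'$ be obtained by replacing $(G_1,B)$ with $(G_2,B)$. Write $R=V(G)\setminus (V(G_1)\setminus B)$ for the common part of $G$ and $G'$. For $p\in\{1,2\}$ and $S\subseteq B$ independent, let $f_p(S)$ denote the maximum size of an independent set in $G_p-B-N(S)$.

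First, every independent set $I$ of $G$ splits as $I\cap R$ together with an independent subset of $V(G_1)\setminus B$ avoiding $N(I\cap B)$. Conversely, any such pair glues to an independent set, because interior vertices have no neighbours outside $B$. This gives
\[
\alpha(G)=\max_{S}\bigl(\alpha_S(R)+f_1(S)\bigr),
\]
where $\alpha_S(R)$ is the maximum size of an independent set $J$ of $G[R]$ with $J\cap B=S$. The same identity holds for $G'$ with $f_2$. Hence it suffices to prove $f_1(S)=f_2(S)$ for every $S\subseteq B$; note that $B$ is independent in both gadgets.

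Second, compute $f_1$. The vertex $b_1$ can be taken iff $a,c\notin S$. The path $d-e_1-f_1-g$ contributes $1$ unless both $d,g\in S$, in which case it contributes $0$. Therefore
\[
f_1(S)=[a,c\notin S]+1-[d,g\in S].
\]

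Third, compute $f_2$. Its interior $\{b_2,b_2',e_2,f_2\}$ induces the $4$-cycle $b_2 b_2' f_2 e_2$, whose maximum independent sets are $\{b_2,f_2\}$ and $\{b_2',e_2\}$.
\begin{itemize}
\item If $a\in S$ or $c\in S$, then $b_2,b_2'$ are excluded. Only the edge $e_2f_2$ remains, giving $1$, or $0$ when $d,g\in S$.
\item If $a,c\notin S$ and neither $d$ nor $g$ is in $S$, we get $2$.
\item If $a,c\notin S$ and only $d\in S$, then $\{b_2,f_2\}$ still works, giving $2$.
\item If $a,c\notin S$ and only $g\in S$, then $\{b_2',e_2\}$ still works, giving $2$.
\item If $a,c\notin S$ and both $d,g\in S$, only $\{b_2,b_2'\}$ survives, which is an edge, giving $1$.
\end{itemize}
In every case $f_2(S)=f_1(S)$, so $\alpha(G)=\alpha(G')$.

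The only delicate point is the case analysis for $f_2$. The case where exactly one of $d,g$ lies in $S$ is what forces the crossing edges $b_2e_2$ and $b_2'f_2$ to be arranged so that the complementary diagonal pair of the $4$-cycle survives. I would double-check this case directly against the edge list of $G_2$.
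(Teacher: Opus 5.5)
Your proof is correct, and it takes a different route from the paper. You split every independent set according to its trace $S$ on $B$, which reduces the lemma to the identity $f_1(S)=f_2(S)$ for all $S\subseteq B$. You then verify this by a case analysis on the $4$-cycle $b_2b_2'f_2e_2$. The cases check out against the edge list, including the two where exactly one of $d,g$ lies in $S$: there $b_2f_2$ and $b_2'e_2$ are indeed non-edges.

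The paper never computes boundary profiles. It notes that deleting $b_2'$ and the edge $b_2e_2$ from $G_2$ leaves a copy of $G_1$. Hence $G$ embeds into $G'$ as a subgraph via a map $\phi$ that fixes everything except $b_1,e_1,f_1\mapsto b_2,e_2,f_2$. It then transfers independent sets in both directions, with one local repair in each: replace $b_2'$ by $b_1$ when going from $G'$ to $G$, and replace $\{b_1,e_1\}$ by $\{b_2',e_2\}$ when going from $G$ to $G'$.

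The paper's argument is shorter, maps solutions to solutions explicitly, and explains the gadget as $G_1$ plus one extra vertex. Yours is more mechanical but fully systematic. It proves a property of the two boundaried graphs alone, independent of the host graph. The same template would certify any candidate gadget by a finite check over boundary traces, without having to guess the right exchange.
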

\begin{proof}
	Let $G$ be an arbitrary graph containing an induced copy of $G_1$, with $N(V(G_1)\setminus B) \subseteq B$, and let $G'$ be the graph obtained by replacing this copy with $G_2$.

    First, observe that $G_2$ contains a subgraph isomorphic to $G_1$:
	indeed, removing the vertex $b_2'$ and, additionally, the edge $b_2e_2$, yields a graph isomorphic to $G_1$.
    Therefore, $G'$ also contains a copy of $G$ as a subgraph, and the corresponding mapping $\phi: V(G)\to V(G')$ of vertices is as follows:
	\[
		\phi(v) =
		\begin{cases}				
		v & \text{if } v\in V(G)\setminus \{b_1,e_1,f_1\} \\
		b_2 & \text{if } v=b_1 \\
		e_2 & \text{if } v=e_1 \\
		f_2 & \text{if } v=f_1.
		\end{cases}
	\]

	\begin{claim}
		It holds that $\alpha(G) \geq \alpha(G')$.\label{claim:braiding-geq}	
	\end{claim}
	\begin{claimproof}
		Let $I'$ be an independent set in $G'$, we will find an independent set $I$ in $G$ of the same size.
		If $b_2' \notin I'$, we set $I= \{ \phi^{-1}(v) ~|~ v \in I' \}$.
		Thus, suppose that $b_2' \in I'$. In particular, this means that $a,b_2,c \notin I'$.
		We can we set $I = \{ \phi^{-1}(v) ~|~ v \in I'\setminus \{b_2'\} \} \cup \{b_2\}$.
		It is straightforward to verify that in both cases $I$ is independent in $G$ and that $|I|=|I'|$, as desired.
	\end{claimproof}
    
	\begin{claim}\label{claim:braiding-leq}
	    It holds that $\alpha(G) \leq \alpha(G')$.
	\end{claim}
    \begin{claimproof}
		Let $I$ be an independent set in $G$, we will find an independent set $I'$ in $G'$ of the same size.
		If $I$ contains at most one of $b_1,e_1$, then we set $I' = \{ \phi(v) ~|~ v \in I \}$.
		Thus, suppose that $I$ contains both $b_1$ and $e_1$. In particular, this means that $a,c,d,f_1 \notin I$.
		We can we set $I' = \{ \phi(v) ~|~ v \in I\setminus \{b_1,e_1\} \} \cup \{b_2', e_2\}$.
		Again, it is straightforward to verify that in both cases $I'$ is independent in $G'$ and that $|I'|=|I|$, as desired.
    \end{claimproof}

Combining \cref{claim:braiding-leq} and \cref{claim:braiding-geq}, we obtain the desired equality, which completes the proof.
\end{proof}

\begin{figure}[t]	
	\centering
	\input{figbraid}
	\caption{An interchangeable pair described in \cref{lem:braiding}. $G_2$ can be viewed, up to isomorphism, as $G_1$ augmented by the orange vertex and edges. The boundary vertices are $a,c,d,g$ and only they might have neighbors outside the depicted subgraph.}
	\label{fig:braiding}
\end{figure}
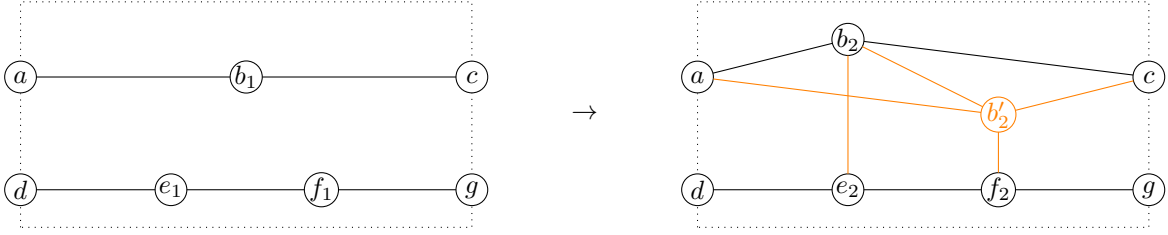

Now we proceed to construct permutation gadgets.
The following observation is straightforward.

\begin{observation}\label{obs:combine-permutations}
	Let $\ell \geq 1$ and let $\sigma, \sigma'$ be permutations of $[\ell]$.	
	Let $(\cP_\sigma, \overline{k}_\sigma)$ and $(\cP_{\sigma'}, \overline{k}_{\sigma'})$ be, respectively, permutation gadgets for $\sigma$ and $\sigma'$.
	Let $\cP$ be the coupling of $\cP_\sigma$ and $\cP_{\sigma'}$
	and let $\overline{k}=(k_{\sigma}^i+k_{\sigma'}^{\sigma(i)}-1)_{i=1}^\ell$. 
	Then, $(\cP,\overline{k})$ is a permutation gadget for $\sigma' \circ \sigma$.
\end{observation}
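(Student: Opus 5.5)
The plan is to check the two conditions of \cref{def:perm} for $(\cP,\overline{k})$ directly. First, $\cP$ is a train by the remark after the definition of coupling. Second, the entries of $\overline{k}$ must be positive integers of equal parity. Third, the interchangeability condition must hold. I will get the last one by two successive replacements, one for each constituent gadget.

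Write $\cP_\sigma=(P_\sigma,(x_1,\ldots,x_\ell),(y_1,\ldots,y_\ell))$ and $\cP_{\sigma'}=(P_{\sigma'},(x'_1,\ldots,x'_\ell),(y'_1,\ldots,y'_\ell))$. In the coupling $y_i=x'_i$ for every $i$, and the boundary of $\cP$ is $B=\{x_i\}\cup\{y'_i\}$.

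For parity, all $k_\sigma^i$ have some common parity $p$ and all $k_{\sigma'}^j$ have some common parity $q$. Hence every $k_\sigma^i+k_{\sigma'}^{\sigma(i)}-1$ has parity $p+q-1$. Each entry is at least $1$ because both summands are at least $1$.

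For interchangeability, let $G$ contain an induced copy of $P$ with $N(V(P)\setminus B)\subseteq B$.
\begin{enumerate}[(1)]
\item Consider the copy of $P_{\sigma'}$ inside it, which is induced since $P$ is. Its non-boundary vertices are non-boundary vertices of $P$, so their neighbours lie in $B\cup V(P)$. A neighbour in $V(P)\setminus V(P_{\sigma'})$ would have to lie in $V(P_\sigma)\setminus\{y_i\}$, and no edge of the coupling joins such a vertex to the interior of $P_{\sigma'}$. Hence $N(V(P_{\sigma'})\setminus B')\subseteq B'$ with $B'=\{x'_i,y'_i\}$.
\item By the interchangeability in \cref{def:perm} for $\cP_{\sigma'}$, replacing $P_{\sigma'}$ by its linear forest $L'$ preserves $\alpha$. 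Here $L'$ consists of paths from $x'_j$ to $y'_{\sigma'(j)}$ with $k_{\sigma'}^j$ vertices.
\item In the resulting graph, the copy of $P_\sigma$ is still induced and its interior still only sees its boundary $\{x_i,y_i\}$. This is because the replacement touched only vertices outside $V(P_\sigma)\setminus\{y_i\}$.
\item Replacing $P_\sigma$ by its linear forest $L$ therefore again preserves $\alpha$.
\end{enumerate}

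The final graph is $G$ with $P$ replaced by the union of the paths $x_i\to y_{\sigma(i)}=x'_{\sigma(i)}\to y'_{\sigma'(\sigma(i))}$. The $i$-th such path has $k_\sigma^i+k_{\sigma'}^{\sigma(i)}-1$ vertices, since the shared vertex is counted once. This is exactly the linear forest prescribed for $\sigma'\circ\sigma$ and $\overline{k}$.

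The only delicate point is the bookkeeping that each intermediate replacement is legal. Concretely, the sub-gadget must be an induced subgraph whose interior sees only its own boundary, and the boundaried graphs must be compatible, i.e.\ agree on the boundary. Compatibility holds because each boundary is independent by \ref{prop:link-independent}, and it remains independent in the linear forests apart from degenerate one- or two-vertex paths. I would handle those degenerate paths by noting that the same identification already occurs in the constituent gadgets.
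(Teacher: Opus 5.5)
Your proposal is correct and is the argument the paper has in mind when it calls this observation straightforward (the paper gives no proof). That argument is the parity check on $\overline{k}$, followed by the two successive replacements $P_{\sigma'}\to L'$ and $P_\sigma\to L$; each replacement is legal because a constituent of a coupling is an induced subgraph whose interior sees only its own boundary.

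One slip is in your compatibility remark. (T4) only makes the inputs and the outputs separately independent. Here $\{x_i\}\cup\{y'_i\}$ is edgeless in $P$ for a different reason: every edge of the coupling lies inside a single constituent, and no constituent contains both an $x_i$ and a $y'_j$. Also, no degenerate paths can occur, since (T1) forces $k_\sigma^i,k_{\sigma'}^j\ge 2$. So every combined path has at least three vertices, its endpoints are nonadjacent, and $(L,B)$ is compatible with $(P,B)$.
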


By \cref{lem:avoidable} and \cref{obs:combine-permutations}, it is sufficient to construct gadgets for some basic permutations and combine them via coupling.
We say that a permutation $\sigma: [\ell] \to [\ell]$ is an \emph{elementary swap} if it exchanges two consecutive elements, i.e., there is $i\in [\ell-1]$ such that for $j\in [\ell]\setminus \{i,i+1\}$, we have $\sigma(j)=j$, and $\sigma(i)=i+1$ and $\sigma(i+1)=i$.

It is well-known that any permutation can be obtained as a composition of $\Oh(\ell^2)$ elementary swaps -- this corresponds to the fact that the worst-case complexity of bubble sort is $\Oh(\ell^2)$.
Thus, for $H\in \{\abxba,\abccab\}$, it is sufficient to construct $H$-free permutation gadgets for elementary swaps.

\begin{proof}[Proof of \cref{lem:permutation-gadgets} for elementary swaps] \textit{Case: $H =$ \abxba.}
Let $\ell \geq 2$ and let $j \in [\ell-1]$ and $\sigma$ be an elementary swap that exchanges elements $j$ with $j+1$.

We will construct a \abxba-free permutation gadget $(\cP, \overline{k})$,
where $\cP = (P, (x_1, x_2, \ldots, x_\ell), (y_1, y_2, \ldots, y_\ell))$ is a train.
The vertex set of $P$ consists of the following four \emph{canonical} segments of vertices, in the described order:

\begin{enumerate}
\item[$V_0$] containing vertices $x_1, x_2, \ldots, x_\ell$ (which are the input of the gadget), 
\item[$V_1$] consisting of vertices $x_1^1, x_2^1, \ldots, x_{j-1}^1, u, v, x_j^1, x^1_{j+1}, \ldots, x_\ell^1$,  
\item[$V_2$] consisting of vertices $x_1^2, x_2^2, \ldots, x_{j-1}^2, w, x_{j+1}^2, x_j^2, x_{j+2}^2, \ldots, x_\ell^2$,
\item[$V_3$] consisting of vertices $x_1^3, x_2^3, \ldots, x_{j-1}^3, x_{j+1}^3, x_j^3, x_{j+2}^3,$ $\ldots, x_\ell^3$ (which are the output of the gadget, we identify them with corresponding vertices $y_i$).
  \end{enumerate}
 We connect the vertices by creating paths $x_i-x_i^1-x_i^2-x_i^3$ for each $i \neq j+1$, the path $x_{j+1}-x_{j+1}^1-u-w-x_{j+1}^2-x_{j+1}^3$, connecting $v$ to $x_j, u, x_j^1, x_j^2$, and adding the edge $x^1_jx^1_{j+1}$.
 This completes the construction of $\cP$.
 We also define $\overline{k}$ as a vectors that has 4 on every coordinate except for the $(j+1)$-th one, which is equal to 6. In particular, all values of $\overline{k}$ are even, as required.
 
 Let us verify that $(\cP, \overline{k})$ is indeed a \abxba-free permutation gadget for $\sigma$.
	
	\begin{claim}
		$(\cP, \overline{k})$ is a permutation gadget for $\sigma$.
	\end{claim}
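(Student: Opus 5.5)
The plan is to check \cref{def:perm} directly: show that $(P,B)$, with $B=\bigcup_{i=1}^\ell\{x_i,y_i\}$, is obtained from the linear forest $(L,B)$ by a single replacement as in \cref{lem:braiding}.

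First, read off the paths of $P$ together with the identification of outputs. In $V_3$ the vertex $x_{j+1}^3$ precedes $x_j^3$, and the outputs are these vertices in order. Hence $y_j=x_{j+1}^3$, $y_{j+1}=x_j^3$, and $y_i=x_i^3$ for $i\notin\{j,j+1\}$. This gives the following paths:
\begin{itemize}
\item for $i\notin\{j,j+1\}$, the path $x_i-x_i^1-x_i^2-x_i^3$ joins $x_i$ to $y_i=y_{\sigma(i)}$ and has $4$ vertices;
\item the path $x_j-x_j^1-x_j^2-x_j^3$ joins $x_j$ to $y_{j+1}=y_{\sigma(j)}$ and has $4$ vertices;
\item the path $x_{j+1}-x_{j+1}^1-u-w-x_{j+1}^2-x_{j+1}^3$ joins $x_{j+1}$ to $y_j=y_{\sigma(j+1)}$ and has $6$ vertices.
\end{itemize}
These are exactly the paths $L^1,\ldots,L^\ell$ with the vertex counts prescribed by $\overline{k}$. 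Therefore $P$ equals $L$ plus the extra vertex $v$ and the extra edges $vx_j$, $vu$, $vx_j^1$, $vx_j^2$ and $x_j^1x_{j+1}^1$. Compatibility of $(L,B)$ and $(P,B)$ is clear: $B$ is independent in $P$ by \ref{prop:link-independent}, and it is independent in $L$ because every path has at least $4$ vertices.

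Next, match this local difference with \cref{lem:braiding} via
\[
a=x_j,\quad b_2=x_j^1,\quad b_2'=v,\quad c=x_j^2,\quad d=x_{j+1},\quad e_2=x_{j+1}^1,\quad f_2=u,\quad g=w.
\]
Under this correspondence all edges of $G_2$ appear in $P$:
\begin{itemize}
\item $ab_2$, $b_2c$ lie on the $j$-th path;
\item $ab_2'$, $b_2'c$, $b_2b_2'$, $b_2'f_2$ are the edges of $v$;
\item $de_2$, $e_2f_2$, $f_2g$ lie on the $(j+1)$-th path;
\item $b_2e_2$ is the edge $x_j^1x_{j+1}^1$.
\end{itemize}
Deleting $v$ and the edge $x_j^1x_{j+1}^1$ leaves exactly $G_1$, namely the subpaths $x_j-x_j^1-x_j^2$ and $x_{j+1}-x_{j+1}^1-u-w$ of $L$. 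So $P$ arises from $L$ by replacing this copy of $(G_1,\{a,c,d,g\})$ with $(G_2,\{a,c,d,g\})$.

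Finally, verify the interchangeability condition ($\star$). Let $G$ contain an induced copy of $L$ with $N(V(L)\setminus B)\subseteq B$, and let $G'$ arise by replacing $(L,B)$ with $(P,B)$. The interior vertices $x_j^1,x_{j+1}^1,u$ of the copy of $G_1$ are not in $B$. Their neighbors in $G$ therefore lie in $V(L)$, and in $L$ they lie on their own paths, inside $\{a,c,d,g\}$. So the copy of $G_1$ is induced and satisfies $N(V(G_1)\setminus\{a,c,d,g\})\subseteq\{a,c,d,g\}$, and \cref{lem:braiding} gives $\alpha(G)=\alpha(G')$. For the converse direction, where $G$ contains an induced copy of $P$ that is replaced by $L$, use the same argument read backwards. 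The interior vertices $x_j^1,x_{j+1}^1,u,v$ of the copy of $G_2$ have neighbors only inside the copy. Since ($\star$) asserts an equality of independence numbers for a pair of graphs related by one replacement, it is symmetric, so this case also follows from \cref{lem:braiding}.

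The main obstacle is simply getting the identification of the outputs and the vertex labels right; once the correspondence is fixed, every remaining step is a direct check.
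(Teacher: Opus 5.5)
Your proposal is correct and follows the paper's proof. Like the paper, you obtain $P$ from the linear forest $L$ by one application of \cref{lem:braiding} with the same correspondence ($a=x_j$, $b_2=x_j^1$, $b_2'=v$, $c=x_j^2$, $d=x_{j+1}$, $e_2=x_{j+1}^1$, $f_2=u$, $g=w$), and you check the path endpoints and lengths; you additionally spell out the localization to the copy of $G_1$ and the symmetry of ($\star$), which the paper leaves implicit. One small nit: independence of $B$ in $P$ also needs the absence of input--output edges, which holds by construction rather than by (T4) alone.
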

	\begin{claimproof}
		Notice that $P$ is obtained from a linear forest $L$ by applying \cref{lem:braiding} with:
		\begin{align*}
		& a=x_j && b_1=b_2=x^1_j \\
		& d=x_{j+1}  && b'_2=v \\
		& c=x^2_j && e_1=e_2=x^1_{j+1} \\
		& g=w && f_1=f_2=u.
		\end{align*}
		Conversely, the desired linear forest $L$ can be obtained from $P$, up to isomorphism, by the removal of the vertex $v$ and edge $x_j^1 x^1_{j+1}$.
		It is straightforward to verify that all of its components are paths with endpoints $x_i, x_{\sigma(i)}$ for $i\in [\ell]$.
		Moreover, each such path has 4 vertices, except for the one for $i=j+1$, which has 6 vertices.
		(See also \cref{fig:perms}).
		\end{claimproof}

	\begin{claim}
		$\cP$ is \abxba-free.
	\end{claim}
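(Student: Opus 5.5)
The plan is to argue directly from the construction of $P$. Recall that $\abxba$ is the ordered graph on five vertices $a\prec b\prec c\prec d\prec e$ in which the only edges are $ae$ and $bd$. I will show that any pair of edges $ae$, $bd$ in $P$ with $a\prec b\prec c\prec d\prec e$ for some vertex $c$ leads to a contradiction, unless the induced copy fails for some other reason. I sort the edges of $P$ into two kinds.

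\emph{Long} edges join two consecutive canonical segments $V_s,V_{s+1}$. They are $x_i x_i^1$ for all $i$ and $v x_j$ between $V_0$ and $V_1$. Between $V_1$ and $V_2$ they are $x_i^1x_i^2$ for $i\neq j+1$, together with $uw$ and $vx_j^2$. Between $V_2$ and $V_3$ they are $x_i^2x_i^3$ for all $i$.

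\emph{Local} edges lie inside a single segment. In $V_1$ these are $x_{j+1}^1u$, $uv$, $vx_j^1$ and $x_j^1x_{j+1}^1$. In $V_2$ there is only $wx_{j+1}^2$.

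I will stress in the write-up that $x_{j+1}^1$ is \emph{not} adjacent to $x_{j+1}^2$, since the path for index $j+1$ detours through $u$ and $w$. In particular, no edge joins two non-consecutive segments.

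\emph{Step 1: local edges cannot play either role.}
\begin{itemize}
\item All local edges of $V_1$ live inside the four consecutive vertices $u,v,x_j^1,x_{j+1}^1$.
\item The local edge of $V_2$ joins the consecutive vertices $w,x_{j+1}^2$.
\item Hence the only local edge with a vertex strictly between its endpoints is $ux_{j+1}^1$.
\item For that edge, every intermediate vertex lies in $\{v,x_j^1\}$, and each of them is adjacent to $u$ or to $x_{j+1}^1$. So $ux_{j+1}^1$ cannot serve as the inner edge $bd$ with a nonadjacent middle vertex $c$.
\item A local edge also cannot be the outer edge $ae$, since it would have to span three further vertices including an edge, which is impossible inside a block of at most four vertices.
\end{itemize}
Consequently both $ae$ and $bd$ are long. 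Because $b,d$ lie between $a$ and $e$, and long edges only join consecutive segments, both edges go between the same pair $V_s,V_{s+1}$.

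\emph{Step 2: no two long edges between the same segments are properly nested.} Here properly nested means four distinct endpoints with $a\prec b$ and $d\prec e$. I check the three segment pairs separately using the explicit orders.
\begin{itemize}
\item For $(V_0,V_1)$ and $(V_2,V_3)$, the edges $x_ix_i^{s}$ preserve the order of indices. So the only candidate for nesting is the extra edge $vx_j$.
\item For $vx_j$: any edge starting after $x_j$ is some $x_i x_i^1$ with $i\ge j+1$, and it ends at or after $x_{j+1}^1$. That endpoint lies after $v$, so there is no nesting.
\item For $(V_1,V_2)$, compare the orders $\dots,u,v,x_j^1,x_{j+1}^1,\dots$ and $\dots,w,x_{j+1}^2,x_j^2,\dots$.
\item The edge $uw$ cannot contain another edge, because $w$ is the first vertex of $V_2$ after $x_{j-1}^2$ and all earlier $V_2$-vertices are matched to $V_1$-vertices before $u$.
\item An edge inside $vx_j^2$ would need its left end after $v$ and its right end before $x_j^2$. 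The only possible right ends are $x_i^2$ with $i<j$, $w$, and $x_{j+1}^2$. Their partners are $x_i^1$ (before $u$), $u$, and none respectively, since $x_{j+1}^2$ carries no long edge. So nothing nests.
\item The remaining edges $x_i^1x_i^2$ with $i\neq j+1$ preserve order among themselves and relative to $uw$ and $vx_j^2$, except that $x_j^1x_j^2$ shares the endpoint $x_j^2$ with $vx_j^2$. Shared endpoints are irrelevant, because $\abxba$ needs four distinct endpoints.
\end{itemize}

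Steps 1 and 2 together exclude any (even non-induced) copy of $\abxba$ with both edges long, so $\cP$ is $\abxba$-free. The main obstacle is the bookkeeping in the $(V_1,V_2)$ case around the swap. In particular, one must carefully confirm that the index $j+1$ has no long edge $x_{j+1}^1x_{j+1}^2$; otherwise $vx_j^2$ would contain it and the argument would break. I would verify this step vertex by vertex against the definition of $V_1$ and $V_2$.
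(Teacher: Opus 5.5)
Your proof is correct and follows essentially the same route as the paper. Local edges cannot host an induced \ac, so the inner edge (and therefore also the outer one) must join two consecutive canonical segments, and such edges never properly nest. The paper compresses the last step into the claim that all inter-segment edges are maximal, which holds only up to shared endpoints (e.g.\ $x_jv$ lies under $x_jx_j^1$, and $x_j^1x_j^2$ lies under $vx_j^2$); your four-distinct-endpoints formulation handles this explicitly.

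The one case you leave implicit is $vx_j$ as the inner edge in $(V_0,V_1)$. It is immediate: every edge starting before $x_j$ ends at some $x_i^1$ with $i<j$, hence before $u\prec v$.
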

		\begin{claimproof}
		Suppose, to the contrary, that some vertices $a, b, c, d, e$, in that order, induce \abxba.
		Consider the edge $bd$. It cannot be an edge between two canonical segments, as such edges are all maximal, contradicting the existence of edge $ae$.				
		The only remaining edges are $uv, ux_{j+1}^1, vx_j^1, x_j^1x_{j+1}^1$, and $wx_{j+1}^2$, and one can readily check that none of them participate in any induced \ac.
	\end{claimproof}
	
	This completes the proof of the lemma for $H = \abxba$.
\end{proof}

\begin{figure}[t]	
	\centering
	\input{figperm1}
	\input{figperm2}
	\caption{Permutation gadgets for elementary swaps used to prove \cref{lem:permutation-gadgets}.
		Boxes indicate canonical segments other than the input and output.
		Removal of the orange vertex and edges, by \cref{lem:braiding} yields a linear forest as in \cref{def:perm}. See also \cref{fig:braiding}.}
	\label{fig:perms}
\end{figure}
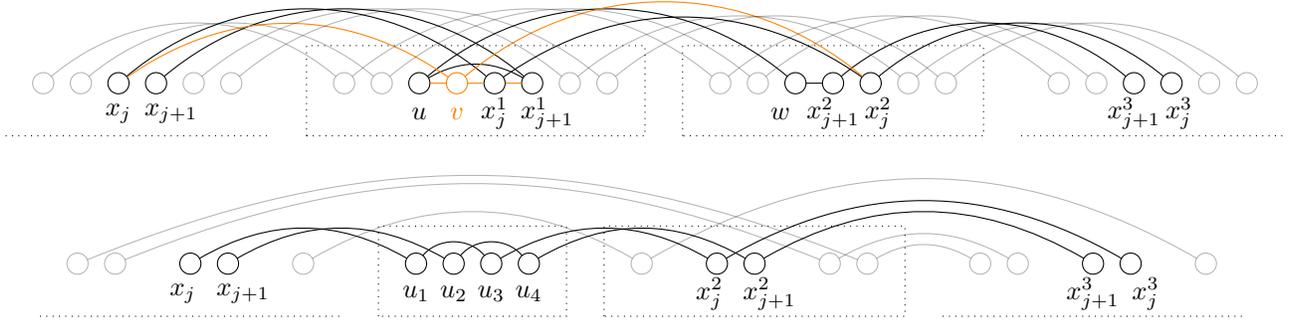

\begin{proof}[Proof of \cref{lem:permutation-gadgets} for elementary swaps] \textit{Case: $H =$ \abccab.}
Let $\ell \geq 2$ and let $j \in [\ell-1]$ and $\sigma$ be an elementary swap that exchanges elements $j$ with $j+1$.

We will construct a \abccab-free permutation gadget $(\cP, \overline{k})$,
where $\cP = (P, (x_1, x_2, \ldots, x_\ell), (y_1, y_2, \ldots, y_\ell))$ is a train.
Again, the vertex set of $P$ consists of the following four \emph{canonical} segments of vertices, in the described order:

\begin{enumerate}
\item[$V_0$] consisting of vertices $x_1, x_2, \ldots, x_\ell$ (which are input of $\cP$),
\item[$V_1$] consisting of vertices $u_1, u_2, u_3, u_4$,  
\item[$V_2$] consisting of vertices $x_\ell^2, x_{\ell-1}^2, \ldots, x_{j+2}^2, x_j^2, x_{j+1}^2, x_{j-1}^2, \ldots, x_1^2$,
\item[$V_3$] consisting of vertices $x_1^3, x_2^3, \ldots, x_{j-1}^3, x_{j+1}^3, x_j^3, x_{j+2}^3, \ldots, x_\ell^3$ (which are the output of $\cP$, we identify them with corresponding vertices $y_i$).
\end{enumerate}
The vertices are connected via paths $x_i-x_i^2-x_i^3$ for $i\neq j, j+1$, path $x_j-u_1-u_3-x_j^2-x_j^3$, and $x_{j+1}-u_2-u_4-x_{j+1}^2-x_{j+1}^3$.
This completes the construction of $\cP$.
We set $\overline{k}$ to the vector that has 3 on every coordinate except for the $j$-th and $(j+1)$-th one, which are equal to 5. In particular, all values of $\overline{k}$ are odd, as required.

 Let us verify that $(\cP, \overline{k})$ is indeed a \abccab-free permutation gadget for $\sigma$.
 It is clear that it is a permutation gadget, as it is already a linear forest of the required form. So, let us we focus on verifying that it is \abccab-free. We will actually prove a stronger statement, namely, that $\cP$ does not contains even \abccab as a subgraph.
	
	\begin{claim}
		$\cP$ is \abccab-subgraph-free.
	\end{claim}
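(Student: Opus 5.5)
The plan is to exploit the very rigid edge structure of $P$. By construction, every edge of $P$ is of one of the following kinds:
\begin{itemize}
\item a $V_0$–$V_1$ edge, namely $x_ju_1$ or $x_{j+1}u_2$;
\item a $V_1$–$V_1$ edge, namely $u_1u_3$ or $u_2u_4$;
\item a $V_1$–$V_2$ edge, namely $u_3x_j^2$ or $u_4x_{j+1}^2$;
\item a $V_0$–$V_2$ edge $x_ix_i^2$ for $i\neq j,j+1$;
\item a $V_2$–$V_3$ edge $x_i^2x_i^3$ for every $i$.
\end{itemize}
In particular, $V_0$, $V_2$ and $V_3$ are independent, and there are no $V_0$–$V_3$ or $V_1$–$V_3$ edges.

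The first key observation concerns the order in $V_2$. The index sequence of $V_2$ is $(\ell,\ldots,j+2,j,j+1,j-1,\ldots,1)$, which is exactly the reverse of the index sequence $(1,\ldots,j-1,j+1,j,j+2,\ldots,\ell)$ of $V_3$. Hence the $V_2$–$V_3$ edges form a family of pairwise nested edges, and in particular no two of them cross. The same holds for the $V_0$–$V_2$ edges: $V_0$ is increasing and $V_2$ is decreasing in index, and removing $j,j+1$ preserves this. So crossings among long edges can only involve the constantly many edges incident to $V_1$.

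Now suppose that $a\prec b\prec c\prec d\prec e\prec f$ span a subgraph copy of \abccab with edges $ae$, $bf$, $cd$; thus $ae$ and $bf$ cross and $cd$ lies strictly between $b$ and $e$. I will do a case analysis on the segment containing $e$.

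If $e\in V_3$, then $f\in V_3$ as well, so both $ae$ and $bf$ are $V_2$–$V_3$ edges. These two edges cross, contradicting the nesting property above.

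If $e\in V_2$, then $a\in V_0\cup V_1$ and $f\in V_2\cup V_3$. Since $V_2$ is independent and $b\prec e$, the edge $bf$ is either a $V_2$–$V_3$ edge with $b\in V_2$, or an edge from $V_0\cup V_1$ to $V_2$. The edge $cd$ has $d\prec e$, so it is one of the edges inside $V_0\cup V_1\cup V_2$ located left of $e$. Here I will use that the $V_0$–$V_2$ edges are nested, which forces at least one of the edges $ae$, $bf$ to be incident to $V_1$. What remains is a bounded check involving the four vertices $u_1,\ldots,u_4$ and their six incident edges, verifying that no such edge $cd$ fits strictly between $b$ and $e$. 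For example, if $b\in V_2$, then $c\in V_2$ and $d\in V_2$, which is impossible because $V_2$ is independent.

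If $e\in V_1$, then $a\in V_0\cup V_1$, and $cd$ must be an edge with both endpoints in $V_0\cup V_1$ strictly inside the interval from $b$ to $e$. The only candidates are $x_ju_1$, $x_{j+1}u_2$, $u_1u_3$ and $u_2u_4$, so a direct check on $u_1\prec u_2\prec u_3\prec u_4$ finishes this case. Finally, $e\in V_0$ is impossible, since $V_0$ is independent and precedes everything.

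I expect the main obstacle to be the $e\in V_2$ case. There the crossing pair may mix a $V_0$–$V_2$ edge with an edge through $V_1$, and one has to verify carefully, using the exact positions of $x_j^2$ and $x_{j+1}^2$ in $V_2$, that no edge $cd$ fits between $b$ and $e$. I would first try to reduce this case to the statement that $c\in V_0\cup V_1$ and $d\in V_1\cup V_2$ with $d\prec e$, and then enumerate the few possibilities.
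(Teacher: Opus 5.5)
Your proposal is correct, and the finite checks you defer do go through. Your route differs from the paper's, though.

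The paper cases on the middle edge $cd$ rather than on the position of $e$. It observes that every edge of $P$ with both endpoints in $V_0\cup V_1\cup V_2$ lies strictly under only $V_0$--$V_2$ edges $x_ix_i^2$, and every $V_2$--$V_3$ edge lies strictly under only $V_2$--$V_3$ edges. Both families are nested (the same key observation you make), so no edge of $P$ can sit under a crossing pair.

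Casing on $e$ forces you to confront the crossings that genuinely exist in the swap gadget: $u_3x_j^2$ with $u_4x_{j+1}^2$, and $x_ix_i^2$ for $i>j+1$ with $u_3x_j^2$ or with $u_4x_{j+1}^2$. You must show that nothing fits beneath them, and this works:
\begin{itemize}
\item In every such pair, $b\in\{u_3,u_4\}$ and $e\in V_2$.
\item So the only vertex strictly between $b$ and $e$ outside the independent set $V_2$ is $u_4$. This happens only when $b=u_3$ and $e=x_i^2$ with $i>j+1$.
\item The right neighbor $x_{j+1}^2$ of $u_4$ lies after $e$, so no edge $cd$ fits.
\end{itemize}
Note that the step ``nesting forces one of $ae,bf$ to touch $V_1$'' is valid only after the subcase $b\in V_2$ has been handled. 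In that subcase $bf$ may be a $V_2$--$V_3$ edge. You do handle it separately, so the argument stands, but the order of presentation should reflect this.

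The case $e\in V_1$ is equally routine:
\begin{itemize}
\item For $e=u_1$, everything strictly between $a=x_j$ and $e$ lies in the independent set $V_0$.
\item For $e\in\{u_3,u_4\}$, only one vertex lies strictly between $a$ and $e$, too few to host $b,c,d$.
\item For $e=u_2$, we get $a=x_{j+1}$ and $cd=x_ju_1$, contradicting $a\prec c$.
\end{itemize}

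The paper's version is shorter and more uniform: two cases, no enumeration of crossings. Yours has the minor merit of showing which crossings the gadget must contain to realize the swap, and why they are harmless.
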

	\begin{claimproof}		
		Suppose that the gadget contains \abccab as a subgraph, and let $a, b, c, d, e, f$ be its consecutive vertices.
		Consider the edge $cd$ and note that it is under a pair of crossing edges $ab$ and $ef$.
		However, any edge containedin $V_0 \cup V_1 \cup V_2$ can only be under some edges between $V_0$ and $V_2$, but such edges are pairwise non-crossing.
		Similarly, an edge between $V_2$ and $V_3$ can only be under some other edges between $V_2$ and $V_3$, which are likewise pairwise non-crossing.
		As these are all edges of $\cP$, we conclude that the edge $cd$ cannot exist, contradicting the existence of \abccab as a subgraph.
	\end{claimproof}
	
This completes the proof of the lemma for $H = \abccab$.
\end{proof}

With this, the proof of \cref{thm:chainred} is complete.

%% file: figopen1.tex
\begin{tikzpicture}[
        every node/.style={draw, circle, inner sep=0, outer sep = 0, minimum size=8pt}, baseline= -0.5mm]
            \draw[dotted] (3,0) -- (9,0);
            
            \node[opacity=.3] (s) at (.5, .5) {};
            \node[opacity=.3] (p) at (1, .5) {};
            \node (q) at (1.5, .5) {};

            \node[opacity=.3] (s1) at (3.5, .5) {};
            \node[opacity=.3] (s2) at (4, .5) {};
            \node[opacity=.3] (s3) at (4.5, .5) {};
            \node[opacity=.3] (p1) at (5.5, .5) {};
            \node[opacity=.3] (p2) at (6, .5) {};
            \node (q1) at (7, .5) {};
            \node (q2) at (7.5, .5) {};
            \node (q3) at (8, .5) {};
            \node (q4) at (8.5, .5) {};
            
            \path[draw, opacity=.3] (s) to[bend left = 30] (s1);
            \path[draw, opacity=.3] (s) to[bend left = 30] (s2);
            \path[draw, opacity=.3] (s) to[bend left = 30] (s3);

            \path[draw, opacity=.3] (p) to[bend left = 30] (p1);
            \path[draw, opacity=.3] (p) to[bend left = 30] (p2);

            \path[draw] (q) to[bend left = 30] (q1);
            \path[draw] (q) to[bend left = 30] (q2);
            \path[draw] (q) to[bend left = 30] (q3);
            \path[draw] (q) to[bend left = 30] (q4);
            
		\end{tikzpicture}

%% file: figclose1.tex
    \begin{tikzpicture}[
        every node/.style={draw, circle, inner sep=0, outer sep = 0, minimum size=8pt}, baseline= -0.5mm]
            \draw[dotted] (0,0) -- (4.5,0);
            
            \node[opacity=.3] (a) at (.5,.5) {};
            \node[opacity=.3] (b) at (1,.5) {};
            \node (c) at (2,.5) {};
            \node (d) at (2.5,.5) {};
            \node[opacity=.3] (e) at (3.5,.5) {};
            \node[opacity=.3] (f) at (4,.5) {};

            \node[opacity=.3] (a') at (5.5,.5) {};
            \node[opacity=.3] (b') at (6,.5) {};
            \node (c') at (7,.5) {};
            \node (d') at (7.5,.5) {};
            \node[opacity=.3] (e') at (8.5,.5) {};
            \node[opacity=.3] (f') at (9,.5) {};
            
            \path[draw, opacity=.3] (a) to[bend left = 30] (a');
            \path[draw, opacity=.3] (b) to[bend left = 30] (b');
            \path[draw] (c) to[bend left = 30] (c');
            \path[draw] (d) to[bend left = 30] (d');
            \path[draw, opacity=.3] (e) to[bend left = 30] (e');
            \path[draw, opacity=.3] (f) to[bend left = 30] (f');

            \draw[opacity=.3] (a')--(b');
            \draw (c')--(d');
            \draw[opacity=.3] (e')--(f');
            
		\end{tikzpicture}

%% file: figclose2.tex
        \begin{tikzpicture}[
        every node/.style={draw, circle, inner sep=0, outer sep = 0, minimum size=8pt}, baseline= -0.5mm]
            \draw[dotted] (0,0) -- (4.5,0);
            
            \node[opacity=.3] (a) at (.5,.5) {};
            \node[opacity=.3] (b) at (1,.5) {};
            \node (c) at (2,.5) {};
            \node (d) at (2.5,.5) {};
            \node[opacity=.3] (e) at (3.5,.5) {};
            \node[opacity=.3] (f) at (4,.5) {};

            \node[opacity=.3] (f') at (5.5,.5) {};
            \node[opacity=.3] (e') at (6,.5) {};
            \node (d') at (7,.5) {};
            \node (c') at (7.5,.5) {};
            \node[opacity=.3] (b') at (8.5,.5) {};
            \node[opacity=.3] (a') at (9,.5) {};
            
            \path[draw, opacity=.3] (a) to[bend left = 22] (a');
            \path[draw, opacity=.3] (b) to[bend left = 22] (b');
            \path[draw] (c) to[bend left = 22] (c');
            \path[draw] (d) to[bend left = 22] (d');
            \path[draw, opacity=.3] (e) to[bend left = 22] (e');
            \path[draw, opacity=.3] (f) to[bend left = 22] (f');

            \draw[opacity=.3] (a')--(b');
            \draw (c')--(d');
            \draw[opacity=.3] (e')--(f');
            
		\end{tikzpicture}

%% file: figbraid.tex
\begin{tikzpicture}[
	every node/.style={draw, circle, inner sep=0, outer sep = 0, minimum size=12pt}, baseline= -0.5mm]

    \draw[dotted] (0, -0.5) rectangle (6, 2.5);
    
	\node[fill, white] (a) at (0, 1.5) {};
	\node (a') at (0, 1.5) {$a$};
	\node (b) at (3, 1.5) {$b_1$};
	\node[fill, white] (c) at (6, 1.5) {};
	\node (c') at (6, 1.5) {$c$};
	\node[fill, white] (d) at (0, 0) {};
	\node (d') at (0, 0) {$d$};
	\node (e) at (2, 0) {$e_1$};
	\node (f) at (4,0) {$f_1$};
	\node[fill, white] (g) at (6,0) {};
	\node (g') at (6,0) {$g$};
	
	\draw (a)--(b)--(c);
	\draw (d)--(e)--(f)--(g);
	
	\node[draw=none] (x) at (7.5, 1) {$\rightarrow$};

    \draw[dotted] (9, -.5) rectangle (15, 2.5);
	
	\node[fill, white] (a1) at (9, 1.5) {};
	\node (a1') at (9, 1.5) {$a$};
	\node (b1) at (11, 2) {$b_2$};
	\node[orange] (b1') at (13, 1) {$b_2'$};
	\node[fill, white] (c1) at (15, 1.5) {};
	\node (c1') at (15, 1.5) {$c$};
	\node[fill, white] (d1) at (9, 0) {};
	\node (d1') at (9, 0) {$d$};
	\node (e1) at (11, 0) {$e_2$};
	\node (f1) at (13,0) {$f_2$};
	\node[fill, white] (g1) at (15,0) {};
	\node (g1') at (15,0) {$g$};
	
	\draw (a1)--(b1)--(c1);
	\draw (d1)--(e1)--(f1)--(g1);
	\draw[orange] (a1)--(b1')--(c1);
	\draw[orange] (f1)--(b1')--(b1)--(e1);
	
\end{tikzpicture}

%% file: figperm1.tex
\begin{tikzpicture}[
        every node/.style={draw, circle, inner sep=0, outer sep = 0, minimum size=8pt}, baseline= -0.5mm]
            \draw[dotted] (0, -.2) -- (3.5, -.2);
            \draw[dotted] (4, -.2) rectangle (8.5,1);
            \draw[dotted] (9, -.2) rectangle (13,1);
            \draw[dotted] (13.5, -.2) -- (17,-.2);
            
            \node[opacity=.3] (a1) at (.5, .5) {};
            \node[opacity=.3] (b1) at (1, .5) {};
            \node (c1) at (1.5, .5) {};
            \node[draw=none,fill=none] (c1') at (1.5, .1) {$x_j$};
            \node (d1) at (2, .5) {};
            \node[draw=none,fill=none] (d1') at (2.2, .1) {$x_{j+1}$};
            \node[opacity=.3] (e1) at (2.5, .5) {};
            \node[opacity=.3] (f1) at (3, .5) {};
            
            \node[opacity=.3] (a2) at (4.5, .5) {};
            \node[opacity=.3] (b2) at (5, .5) {};
            \node (s) at (5.5, .5) {};
            \node[draw=none,fill=none] (s') at (5.5, .1) {$u$};
            \node[orange] (p) at (6, .5) {};
            \node[draw=none,fill=none,orange] (p') at (6, .1) {$v$};
            \node (q) at (6.5, .5) {};
            \node[draw=none,fill=none] (q') at (6.5, .1) {$x_j^1$};
            \node (r) at (7, .5) {};
            \node[draw=none,fill=none] (r') at (7.2, .1) {$x_{j+1}^1$};
            \node[opacity=.3] (e2) at (7.5, .5) {};
            \node[opacity=.3] (f2) at (8, .5) {};
            
            \node[opacity=.3] (a3) at (9.5, .5) {};
            \node[opacity=.3] (b3) at (10, .5) {};
            \node (t) at (10.5, .5) {};
            \node[draw=none,fill=none] (r') at (10.3, .1) {$w$};
            \node (c3) at (11, .5) {};
            \node[draw=none,fill=none] (r') at (11, .1) {$x_{j+1}^2$};
            \node (d3) at (11.5, .5) {};
            \node[draw=none,fill=none] (r') at (11.6, .1) {$x_{j}^2$};
            \node[opacity=.3] (e3) at (12, .5) {};
            \node[opacity=.3] (f3) at (12.5, .5) {};

            \node[opacity=.3] (a4) at (14, .5) {};
            \node[opacity=.3] (b4) at (14.5, .5) {};
            \node (c4) at (15, .5) {};
            \node[draw=none,fill=none] (r') at (15, .1) {$x_{j+1}^3$};
            \node (d4) at (15.5, .5) {};
            \node[draw=none,fill=none] (r') at (15.6, .1) {$x_{j}^3$};
            \node[opacity=.3] (e4) at (16, .5) {};
            \node[opacity=.3] (f4) at (16.5, .5) {};
            
            \draw[orange] (s)--(p)--(q)--(r);
            \draw (t)--(c3);
            
            \path[draw] (c1) to[bend left = 40] (q);
            \path[draw] (q) to[bend left = 35] (d3);
            \path[draw] (d3) to[bend left = 40] (d4);
            \path[draw] (c3) to[bend left = 40] (c4);
            \path[draw] (d1) to[bend left = 40] (r);
            \path[draw] (r) to[bend right = 30] (s);
            \path[draw] (s) to[bend left = 40] (t);
            
            \path[draw, orange] (c1) to[bend left = 35] (p);
            \path[draw, orange] (p) to[bend left = 40] (d3);
            
            \path[draw, opacity=.3] (a1) to[bend left = 40] (a2);
            \path[draw, opacity=.3] (a2) to[bend left = 40] (a3);
            \path[draw, opacity=.3] (a3) to[bend left = 40] (a4);

            \path[draw, opacity=.3] (b1) to[bend left = 40] (b2);
            \path[draw, opacity=.3] (b2) to[bend left = 40] (b3);
            \path[draw, opacity=.3] (b3) to[bend left = 40] (b4);

            \path[draw, opacity=.3] (e1) to[bend left = 40] (e2);
            \path[draw, opacity=.3] (e2) to[bend left = 40] (e3);
            \path[draw, opacity=.3] (e3) to[bend left = 40] (e4);

            \path[draw, opacity=.3] (f1) to[bend left = 40] (f2);
            \path[draw, opacity=.3] (f2) to[bend left = 40] (f3);
            \path[draw, opacity=.3] (f3) to[bend left = 40] (f4);
            
\end{tikzpicture}

%% file: figperm2.tex
\begin{tikzpicture}[
        every node/.style={draw, circle, inner sep=0, outer sep = 0, minimum size=8pt}, baseline= -0.5mm]
            \draw[dotted] (0,-.2) -- (4,-.2);
            \draw[dotted] (12,-.2) -- (16,-.2);
            \draw[dotted] (4.5,-.2) rectangle (7,1);
            \draw[dotted] (7.5,-.2) rectangle (11.5,1);
            
            \node[opacity=.3] (a) at (.5, .5) {};
            \node[opacity=.3] (b) at (1, .5) {};
            \node (c) at (2, .5) {};
            \node[draw=none,fill=none] (c') at (1.9, .1) {$x_j$};
            \node (d) at (2.5, .5) {};
            \node[draw=none,fill=none] (d') at (2.7, .1) {$x_{j+1}$};
            \node[opacity=.3] (e) at (3.5, .5) {};

            \node (s) at (5, .5) {};
            \node[draw=none,fill=none] (s') at (5, .1) {$u_1$};
            \node (p) at (5.5, .5) {};
            \node[draw=none,fill=none] (p') at (5.5, .1) {$u_2$};
            \node (q) at (6, .5) {};
            \node[draw=none,fill=none] (q') at (6, .1) {$u_3$};
            \node (r) at (6.5, .5) {};
            \node[draw=none,fill=none] (r') at (6.5, .1) {$u_4$};

            \node[opacity=.3] (e1) at (8, .5) {};
            \node (c1) at (9, .5) {};
            \node[draw=none,fill=none] (c1') at (8.9, .1) {$x_j^2$};
            \node (d1) at (9.5, .5) {};
            \node[draw=none,fill=none] (d1') at (9.7, .1) {$x_{j+1}^2$};
            \node[opacity=.3] (b1) at (10.5, .5) {};
            \node[opacity=.3] (a1) at (11, .5) {};

            \node[opacity=.3] (a2) at (12.5, .5) {};
            \node[opacity=.3] (b2) at (13, .5) {};
            \node (d2) at (14, .5) {};
            \node[draw=none,fill=none] (d2') at (14, .1) {$x_{j+1}^3$};
            \node (c2) at (14.5, .5) {};
            \node[draw=none,fill=none] (c2') at (14.7, .1) {$x_j^3$};
            \node[opacity=.3] (e2) at (15.5, .5) {};
            
            \path[draw, opacity=.3] (a) to[bend left = 22] (a1);
            \path[draw, opacity=.3] (b) to[bend left = 22] (b1);
            \path[draw, opacity=.3] (e) to[bend left = 30] (e1);

            \path[draw, opacity=.3] (a1) to[bend left = 30] (a2);
            \path[draw, opacity=.3] (b1) to[bend left = 30] (b2);
            \path[draw, opacity=.3] (e1) to[bend left = 30] (e2);

            \path[draw] (c) to[bend left = 30] (s);
            \path[draw] (d) to[bend left = 30] (p);
            \path[draw] (q) to[bend left = 30] (c1);
            \path[draw] (r) to[bend left = 30] (d1);

            \path[draw] (s) to[bend left = 50] (q);
            \path[draw] (p) to[bend left = 50] (r);
            \path[draw] (c1) to[bend left =30] (c2);
            \path[draw] (d1) to[bend left =30] (d2);
            
		\end{tikzpicture}

%% file: hardness-summary.tex
\subsection{Summary: excluding a single induced subgraph}

Our next goal is to summarize the hardness results achieved throughout the chapter.

\begin{theorem}\label{thm:hard}
	\MIS is \NP-hard in $H$-free ordered graphs if $H$ has at least three edges or if it contains \abxba,\ \bad, \adb, \abd, \abnce, or the mirror image of any of these as an induced subgraph.
\end{theorem}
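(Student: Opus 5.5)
The plan is to combine the reductions above using two monotonicity facts. First, if $H'$ is an induced subgraph of $H$, then every $H'$-free ordered graph is $H$-free, so \NP-hardness of \MIS in $H'$-free graphs transfers to $H$-free graphs. Second, if $G$ excludes $F$ as a (not necessarily induced) subgraph and $F$ is a spanning subgraph of $H$ on the same ordered vertex set, then $G$ is $H$-free: an induced copy of $H$ would contain a copy of $F$ as a subgraph. Mirror images are handled by reversing the order of every constructed instance, which keeps the independence number unchanged.

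The containment cases are then immediate.
\begin{itemize}
\item For \abxba we use \cref{thm:chainred}.
\item For \bad we use \cref{3sat}.
\item For \adb, \abd, \abnce we use \cref{xxxlrlrlr}, since its instances exclude these graphs even as subgraphs.
\end{itemize}

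For $H$ with at least three edges, I would fix three edges $e_1,e_2,e_3$ of $H$ and let $F$ be the ordered graph formed by their endpoints (at most six vertices), with exactly these three edges. By the second fact it suffices to find, for each possible $F$, a hardness construction whose instances are $F$-subgraph-free. I would classify $F$ by the shape of its underlying graph: a triangle, a star $K_{1,3}$, a path $P_4$, $P_3$ plus a disjoint edge, or a perfect matching $3K_2$. Within each shape I would split further by the order type.

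If $F$ has a vertex with two neighbours on the same side, then $F$, hence $H$, already contains \adb, \bad or a mirror image among certain vertices. If those vertices do not induce such a pattern, I instead look for a subgraph pattern excluded by one of \cref{xxxlrlrlr,xxxrlrlrl}; those lists cover \abcd, \acbd, \adcb, \aenbcd, \aenbdc, \abncde and the patterns with a vertex having a left and a right neighbour. For connected $F$, I would check that every order type either contains a vertex with two left neighbours, two right neighbours, or a left and a right neighbour. The first two cases are routed to \adb, \bad and their mirrors; the third is routed to \abd, \abcd and related patterns covered by \cref{xxxlrlrlr,xxxrlrlrl}. For $P_3$ plus an edge, \abnce, \abncde, \acbnde, \aednbc and \cref{xxxlllrrr} cover the cases.

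The matchings are the main obstacle. There are fifteen ordered perfect matchings on six vertices, and each must be excluded as a subgraph by some construction.
\begin{itemize}
\item \aabbcc is handled by \cref{rrrxxxlll} or \cref{xxxlllrrr}.
\item \abbcca is handled by \cref{xxxlllrrr}.
\item \abxxba and \abcbca are handled by the first reduction of \cref{squstr}, and \abcabc by the second.
\item \abccab is handled by the second part of \cref{thm:chainred}.
\end{itemize}
For the remaining ones, such as \abccba, \abcbac, \abbcac, \abacbc, \aabccb, \aabcbc, I would first see whether two of the three edges plus a vertex in between already contain \abxba induced. Otherwise I would check directly that one of the constructions above (the layered graphs $G_1,G_2$ of \cref{squstr}, or the orderings in \cref{rrrxxxlll,xxxlllrrr}) avoids the pattern, using \cref{squstrshort,squstrlong,squfliplong}-type structural facts. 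For example, any nested pair of long edges in $G_1$ forces an \abba induced, and $G_1$ has none. Organising this finite case check so that every matching type is assigned to some construction is where I expect the real work, and the risk of a missed case, to lie.
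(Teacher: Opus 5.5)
\textbf{There is a genuine gap: $F$ need not be induced in $H$.} Your three edges are arbitrary, so $F$ need not be an induced subgraph of $H$. You also cannot stay entirely within subgraph exclusions. For some shapes, the only available hardness is for an \emph{induced} pattern sitting inside $F$, and $H$ inherits that pattern only if it has no further edges on those vertices.

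Take the $P_3+K_2$ ordering $u\prec x\prec w\prec v\prec y$ with edges $uv,wv,xy$. It occurs as a subgraph in the ordering of \cref{xxxlrlrlr}, as $b\prec c\prec \ell_{ab}\prec r_{ab}\prec \ell_{cd}$ for edges $ab,cd$ of $G$ with $a\prec b\prec c\prec d$. It likewise occurs in the orderings of \cref{xxxrlrlrl} and \cref{xxxlllrrr}. The paper covers this ordering only through the induced \adc (the mirror of \bad) on $\{u,x,w,v\}$, and \cref{3sat} excludes \bad only as an \emph{induced} subgraph. Similarly, \abcbac is covered only through its induced \abxba, and \cref{thm:chainred} excludes \abxba only as an induced subgraph.

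Now suppose $H$ also has the edge $xw$. Then $\{u,x,w,v\}$ induces \adcb, not \adc, and your claim that ``$F$, hence $H$, already contains \adb, \bad or a mirror image'' is false. Likewise, for \abcbac with an extra edge, your fallback ``check directly that one of the constructions avoids the pattern'' has no construction in the paper to point to.

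\textbf{The fix is to make $F$ induced.} There are two ways to do this.
\begin{itemize}
\item Follow the paper: first apply Alekseev's observation so that $H$ is a forest, then argue with induced substructures. In a forest, a $3$-edge subtree of a component is automatically induced. Otherwise every component has at most two edges, and edges from distinct components give an induced $P_3+K_2$ or $3K_2$.
\item Alternatively, observe that any extra edge of $H$ on $V(F)$, for $F\in\{P_3+K_2,3K_2\}$, creates a triangle, a $P_4$ or a $K_{1,3}$ as a subgraph of $H$. Every ordering of these is excluded as a subgraph by one of the 2-subdivision constructions.
\end{itemize}

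\textbf{Your routing for connected $F$ also has holes.} A vertex with two neighbours on the same side does not by itself yield \adb, \bad or a mirror image.
\begin{itemize}
\item An ordered triangle contains none of the patterns you list. It is handled because 2-subdivisions have girth at least $9$ (or by Alekseev), not by the routing you describe.
\item The ordering \badc contains neither \adb nor \cad as a subgraph, and it contains no induced \bad. It therefore needs \cref{rrrxxxlll}, which you invoke only for \aabbcc.
\end{itemize}
You acknowledge that this finite check is where a case could be missed. These two cases, together with the non-induced $F$ issue above, are exactly where the plan as written breaks.
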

\begin{proof}
	Observe that if $H$ contains \abxba, \bad, \adb, \abd, \abnce or the mirror image of any of these, the conclusion follows, respectively, from \cref{thm:chainred}, \cref{3sat} and \cref{xxxlrlrlr} for the last three.
	What remains is to prove \MIS is \NP-hard in $H$-free ordered graphs whenever $H$ contains at least three edges.
	 
	Let us fix any ordered graph $H$ and suppose that \MIS is not \NP-hard in the class of $H$-free ordered graphs.
	We will prove that $H$ contains at most two edges, thus proving the theorem.
	
	First recall from \cref{sec:intro} that if the underlying graph of $H$ is not in the family $\cS$ (forests of subdivided claws), then \MIS is \NP-hard already in the unordered case and thus also in the ordered case~\cite{alekseev1982effect}.
	Thus, we might assume otherwise. In particular, $H$ is a subcubic forest.

	\begin{claim}\label{claim:linear-forest}
		$H$ is a linear forest, i.e., every component of $H$ is a path.
	\end{claim}
	\begin{claimproof}
    Suppose $H$ has a vertex $v$ of degree 3.
    Since $H$ is a forest, the neighbors of $v$ form an independent set, and thus $N[v]$ induces either  \adbp, or \abdp, or the mirror image of one of these.
 	Then $H$ contains as a subgraph either \adb or \abd (or the mirror image thereof), which, by \cref{xxxlrlrlr}, contradicts the choice of $H$.
	\end{claimproof}

	\begin{claim}\label{claim:summary-components}
		Every component of $H$ has at most 3 vertices.
	\end{claim}
   \begin{claimproof}
		Suppose otherwise. By \cref{claim:linear-forest}, this means that $H$ contains an induced path with 4 vertices.
	   	Such a path can be ordered in 12 ways (up to isomorphism), and disregarding mirror images, there are 8 possible orderings to consider: \badc, \abcd, \acbd, \adcb, \cadb, \adbc, \cabd, and \abdc.
	   	Observe that $H$ cannot contain:
	   	\begin{itemize}
			\item \badc, explicitly by \cref{rrrxxxlll},
			\item \abcd, \acbd, \adcb, explicitly by \cref{xxxlrlrlr},
			\item \cadb nor \adbc by \cref{xxxlrlrlr}, as they both contain \adb as a subgraph,
			\item \cabd nor \abdc by \cref{xxxlrlrlr}, as they both contain \abd as a subgraph.
	   	\end{itemize}
	   	Since we excluded all possible orderings of a $4$-vertex path, this completes the proof of the claim.
   \end{claimproof}
    
    \begin{claim}\label{niema3k2}
    	The underlying graph of $H$ does not contain an induced three-edge matching.
    \end{claim}
    \begin{claimproof}
    	We eliminate all possibilities. The three-edge matching can be ordered in 15 ways (up to isomorphism), and disregarding mirror images, only 11 remain. Observe that $H$ cannot contain:
    	\begin{itemize}
    		\item \abbcac, \abacbc, \aabccb and \aabcbc by \cref{xxxlrlrlr} as they all contain \abnce as a subgraph,
    		\item \abccba, \abcbca nor \abcbac by \cref{thm:chainred} as they all contain an induced \abxba,
    		\item \abbcca nor \aabbcc explicitly by \cref{xxxlllrrr},
    		\item \abccab explicitly by \cref{thm:chainred},
    		\item \abcabc explicitly by \cref{squstr}.
    	\end{itemize}
    	It is easy (yet a bit tedious) to verify that the list above is complete.
		Thus, this completes the proof of the claim.
    \end{claimproof}
    
    \begin{claim}\label{niemap3k2}
    	The underlying graph of $H$ does not contain an induced $P_3+K_2$, i. e., the graph consisting of two components, one being the three-vertex path $P_3$ and the other a $K_2$.
    \end{claim}
    
    \begin{claimproof}
    Suppose that $H$ contains a copy of $P_3+K_2$ as an induced subgraph, and let $u,v,w,x,y$ be its vertices such that there are edges $uv,vw,xy$.
    We consider the cases depending on the ordering of $u,v,w$: (i) $u\prec v\prec w$ and (ii) $u\prec w \prec v$ -- all the remaining cases are symmetric.
    First suppose that $u\prec v\prec w$.
    If any of $x,y$ is between $u$ and $v$ or between $v$ and $w$, then $H$ contains an induced copy of \abd or its mirror image,  which is a contradiction by \cref{xxxlrlrlr}.
   Therefore, the set $\{u,v,w,x,y\}$ induces either \aenbcd, or \abncde, or its mirror image, which is again a contradiction by \cref{xxxlrlrlr}.
   So now suppose that $u\prec w \prec v$.
   If any of $x,y$ is between $u$ and $w$, then $H$ contains an induced \adc, which is a contradiction by \cref{3sat}.
   Furthermore, if any of $x,y$ is between $w$ and $v$, then $H$ contains an induced \adb, which is a contradiction by \cref{xxxlrlrlr}.
   Therefore, the set $\{u,v,w,x,y\}$ induces either \aenbdc or \acbnde or \abnced, which contradicts, respectively, \cref{xxxrlrlrl}, \cref{xxxlllrrr}, and \cref{xxxlrlrlr} as \abnced contains \abnce as a subgraph.
    \end{claimproof} 
    
  So now suppose that $H$ has at least three edges.
  Recall that by \cref{claim:linear-forest} and \cref{claim:summary-components}, $H$ is a linear forest, whose every component has at most $3$ vertices.
  By \cref{niema3k2}, there are at most two components that are not isolated vertices.
 So it is only possible when one component induces a $P_3$, and the other induces a $K_2$.
 However, this contradicts \cref{niemap3k2}, which completes the proof.    
\end{proof}

To conclude, let us argue that \cref{thm:hard} implies the hardness part of \cref{thm:main}.
This is equivalent to the following statement (note that \extoneedgek is an induced subgraph of, say, \extababk).

\begin{theorem}[Hardness part of \cref{thm:main}, restated]
  Let $H$ be an ordered graph.
  If $H$ is not an induced subgraph of any graph in
  \[
  \bigcup_{k \geq 0} \{ \extpthree, \extchord, \extchordrev, \extaakbb, \extababk, \extabbak \},
   \]
  then \MIS is \NP-hard in $H$-free ordered graphs.  
\end{theorem}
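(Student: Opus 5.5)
We argue by contraposition, using \cref{thm:hard}. Suppose that \MIS is not \NP-hard in $H$-free ordered graphs. By \cref{thm:hard}, $H$ then has at most two edges, and it contains none of \abxba, \bad, \adb, \abd, \abnce, or their mirror images, as an induced subgraph. We must show that $H$ is an induced subgraph of some graph in the listed family. Throughout we use that each family member is made of a core pattern (\pthree, \chord, \chordrev, \aakbb, \ababk, or \abbak) padded with arbitrarily many isolated vertices at both ends. Since $k$ is arbitrary, every isolated vertex of $H$ to the left or right of all edges can be absorbed into that padding. The case analysis is by the number of edges of $H$.

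If $H$ has no edges, it is an induced subgraph of \extoneedgek, which is an induced subgraph of \extababk. If $H$ has exactly one edge $uv$, any number of isolated vertices may lie before, between, or after $u$ and $v$. Such an $H$ embeds into \extababk: map the edge to $a_1a_2$, the isolated vertices before and after to the outer padding, and the isolated vertices between $u$ and $v$ to the $(k)$-block between $a_1$ and $b_1$, while omitting $b_1$ and $b_2$.

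If $H$ has two edges, they either share a vertex or form an induced $2K_2$.

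In the first case the edges form a $P_3$ in one of three orderings: \pthree, \chord, or \chordrev. Any isolated vertex strictly inside the span of the $P_3$ yields a forbidden pattern:
\begin{itemize}
\item in \pthree it gives \abd or its mirror image;
\item in \chord it gives \bad, or \adc (the mirror image of \abd);
\item in \chordrev it gives the mirror image of \bad, or \abd.
\end{itemize}
This is a short check of each gap. Hence all isolated vertices lie outside the span, and $H$ is an induced subgraph of \extpthree, \extchord, or \extchordrev.

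In the second case the two edges are in one of the configurations \aabb, \abab, or \abba, with isolated vertices in the gaps. We treat each in turn.
\begin{itemize}
\item For \aabb, an isolated vertex inside either edge gives \abnce or its mirror image. Isolated vertices may, however, sit in the middle gap, so $H$ embeds in \extaakbb.
\item For \abab, any of the three gaps may contain isolated vertices. Taking $k$ at least the maximum gap size, $H$ embeds in \extababk, using the fact that the $(k)$-blocks are independent and isolated from everything.
\item For \abba, an isolated vertex between $b_1$ and $b_2$ gives \abxba, which is excluded. The gaps $a_1$--$b_1$ and $b_2$--$a_2$ are unrestricted. If these gaps have sizes $p$ and $q$, then $H$ embeds in \extabbak with $k=\max(p,q)$, dropping surplus padding vertices.
\end{itemize}

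The only delicate point is checking that the excluded patterns are really induced in the bad configurations, for instance that \bad arises inside \chord: the isolated vertex together with the three path vertices gives edges from the first vertex to the second and to the fourth, and nothing else. Beyond that, unequal gap sizes are handled by choosing $k$ large and deleting surplus isolated vertices, which is allowed because induced subgraphs may drop isolated vertices. Since these cases are exhaustive, the contrapositive is established.
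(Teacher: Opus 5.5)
Your proof follows the same route as the paper: contraposition through \cref{thm:hard}, absorbing outer isolated vertices into the padding, and a case split on whether the two edges form a $P_3$ (\pthree, \chord, \chordrev) or a $2K_2$ (\aabb, \abab, \abba). The \pthree, \aabb, \abab and \abba checks are right.

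The one place where your verification is actually wrong is the $P_3$ case with the centre at an end.
\begin{itemize}
\item In \chord (edges $12,13$), an isolated vertex between the first two vertices yields edges $13,14$, which is \cad, not \adc. Moreover, \adc (edges $14,34$) is the mirror image of \bad, not of \abd, whose mirror has edges $13,34$. In any case, a pattern whose centre is last cannot arise from \chord, because inserting a vertex inside the span keeps the centre first.
\item Symmetrically, in \chordrev an isolated vertex between the last two vertices yields edges $14,24$, i.e.\ \adb, not \abd.
\end{itemize}
The conclusion is unaffected, since \adb is excluded by \cref{thm:hard} and \cad is its mirror image. This is exactly the paper's ``\cad or \bad'' case. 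As written, however, those two checks cite patterns that do not occur there.

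A minor point: in the \abba case, $k$ must also be at least the number of isolated vertices before and after all edges, not just $\max(p,q)$. Your general remark about choosing $k$ large covers this, but the explicit formula does not.
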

\begin{proof}
Let $k = |V(H)|$.
If $H$ has at most one edge, it is an induced subgraph of \extoneedgek and thus of \extababk.

Next, consider the case that $H$ has exactly two edges and let $H'$ be the subgraph of $H$ induced by non-isolated vertices.
Note that the underlying graph of $H'$ is either $P_3$ or $2K_2$.
Consequently, $H'$ is isomorphic to one of the following ordered graphs: \abc, \acb, \bac, \aabb, \abab, \abba.
We consider these cases one by one, all containments are with respect to the induced subgraph relation.
\begin{itemize}
\item If $H$ contains \abc  but is not contained in \extpthree, then it must contain \abd or its mirror image.
\item If $H$ contains \bac, but is not contained in \extchord, it must contain either \cad or \bad.
\item The case that $H$ contains \acb is symmetric to the previous one.
\item If $H$ contains \aabb, but is not contained in \extaakbb, then it must contain \abnce or its mirror image.
\item If $H$ contains \abab, then it is contained in \extababk.
\item If $H$ contains \abba, but is not contained in \extabbak, then it must contain \abxba.
\end{itemize}
In each of these cases \cref{thm:hard} implies that \MIS is \NP-hard in $H$-free ordered graphs.

Lastly, whenever $H$ has at least three edges, then hardness in $H$-free graphs  follows directly from \cref{thm:hard}.
This completes the proof.
\end{proof}

%% file: outro.tex
Let us conclude the paper by pointing out some directions for future research.
Note that the hardness part of \cref{thm:main} is close to a dichotomy into cases solvable in quasipolynomial time and those that are \NP-hard.
Indeed, the (quasi)polynomial-time algorithms given by \cref{thm:main}~(1) and (2) are a strong indication that the corresponding cases of \MWIS in $H$-free ordered graphs are not \NP-hard, since otherwise every problem in \NP would admit a (quasi)polynomial-time algorithm.
By contrast, the subexponential-time algorithm for $H = \extabbak$ (for $k \geq 0$) from \cref{thm:main}~(3) does not provide comparable evidence against \NP-hardness.
Thus, it would be very interesting to decide whether \MWIS can be solved in quasipolynomial time, or even in polynomial time, in \abbak-free ordered graphs, or whether the problem is \NP-hard in this case.
We conjecture that the algorithmic side holds.

Another natural direction is to strengthen the hardness side to ETH-lower bounds that exclude subexponential-time algorithms.
At present, the reductions in \cref{sec:longsubdiv,sec:chainreduction} (for $H \in \{ \abxba, \abcabc, \abccab,  \abcbca\}$) introduce a super-linear blow-up in the size of the constructed instance.
It would be interesting either to replace them with tighter reductions or to design subexponential-time algorithms for those cases.
We conjecture that the former is possible.

Finally, we believe that the ordered setting is promising also for problems beyond independent set and coloring.
To the best of our knowledge, apart from the current paper and recent work on (\textsc{List}) $k$-\textsc{Coloring}~\cite{DBLP:conf/stacs/PiecykR26,DBLP:journals/siamdm/HajebiLS24} such algorithmic questions in hereditary classes of ordered graphs have not been considered so far.
In particular, it would be interesting to investigate \textsc{Feedback Vertex Set} and \textsc{Odd Cycle Transversal} in hereditary classes of ordered graphs.
Both problems can be seen as generalizations of \MWIS.
\textsc{Feedback Vertex Set} can be equivalently stated as finding a maximum size (or weight) induced forest, i.e., an induced subgraph of treewidth at most 1 (while an independent set is an induced subgraph of treewidth 0).
\textsc{Odd Cycle Transversal} is equivalent to finding a maximum size (or weight) induced bipartite subgraph, i.e., an induced subgraph of chromatic number at most 2 (while an independent set is an induced subgraph of chromatic number 1).
Both problems are among the best studied in hereditary classes of unordered graphs~\cite{Tara,DBLP:conf/stoc/GartlandLPPR21,DBLP:conf/soda/ChudnovskyMPPR24,DBLP:journals/siamdm/PaesaniPR22,DBLP:journals/algorithmica/DabrowskiFJPPR20,DBLP:journals/talg/0001LLR0S25,DBLP:journals/siamdm/ChudnovskyKPRS21,DBLP:conf/esa/GalbyLMN25}, so they seem like natural next targets in the ordered setting.


%% file: main.bib
@article{DBLP:journals/jcss/ImpagliazzoPZ01,
  author       = {Russell Impagliazzo and
                  Ramamohan Paturi and
                  Francis Zane},
  title        = {Which Problems Have Strongly Exponential Complexity?},
  journal      = {J. Comput. Syst. Sci.},
  volume       = {63},
  number       = {4},
  pages        = {512--530},
  year         = {2001},
  url          = {https://doi.org/10.1006/jcss.2001.1774},
  doi          = {10.1006/JCSS.2001.1774},
  bibsource    = {dblp computer science bibliography, https://dblp.org}
}

@book{Golumbic2004,
  author    = {Golumbic, Martin C.},
  title     = {Algorithmic Graph Theory and Perfect Graphs},
  publisher = {Elsevier/North-Holland},
  year      = {2004},
  edition   = {2nd},
  isbn      = {978-0-444-51530-8}, 
  series    = {Annals of Discrete Mathematics},
  volume    = {57},
}

@article{DBLP:journals/siamcomp/Gavril72,
  author       = {Fanica Gavril},
  title        = {Algorithms for Minimum Coloring, Maximum Clique, Minimum Covering
                  by Cliques, and Maximum Independent Set of a Chordal Graph},
  journal      = {{SIAM} J. Comput.},
  volume       = {1},
  number       = {2},
  pages        = {180--187},
  year         = {1972},
  url          = {https://doi.org/10.1137/0201013},
  doi          = {10.1137/0201013},
  bibsource    = {dblp computer science bibliography, https://dblp.org}
}

@article{DBLP:journals/siamdm/HajebiLS24,
  author       = {Sepehr Hajebi and
                  Yanjia Li and
                  Sophie Spirkl},
  title        = {{List-3-Coloring} Ordered Graphs with a Forbidden Induced Subgraph},
  journal      = {{SIAM} J. Discret. Math.},
  volume       = {38},
  number       = {1},
  pages        = {1158--1190},
  year         = {2024},
  url          = {https://doi.org/10.1137/22m1515768},
  doi          = {10.1137/22M1515768},
  bibsource    = {dblp computer science bibliography, https://dblp.org}
}

@inproceedings{DBLP:conf/sosa/PilipczukPR21,
  author       = {Marcin Pilipczuk and
                  Micha\l{} Pilipczuk and
                  Pawe\l{} Rz\k{a}\.ewski},
  editor       = {Hung Viet Le and
                  Valerie King},
  title        = {Quasi-polynomial-time algorithm for Independent Set in {$P_t$}-free
                  graphs via shrinking the space of induced paths},
  booktitle    = {4th Symposium on Simplicity in Algorithms, {SOSA} 2021, Virtual Conference,
                  January 11-12, 2021},
  pages        = {204--209},
  publisher    = {{SIAM}},
  year         = {2021},
  url          = {https://doi.org/10.1137/1.9781611976496.23},
  doi          = {10.1137/1.9781611976496.23},
  bibsource    = {dblp computer science bibliography, https://dblp.org}
}

@book{LovaszPlummer1986_MatchingTheory,
  author    = {László Lovász and Michael D. Plummer},
  title     = {Matching Theory},
  publisher = {North–Holland},
  year      = {1986},
  series    = {North–Holland Mathematics Studies},
  volume    = {121},
  pages     = {544},
  address   = {Amsterdam – New York – Oxford – Tokyo},
  isbn      = {9780444879165},
  language  = {English}
}

@article{DBLP:journals/siamdm/DebskiPR22,
  author       = {Micha\l{} D\k{e}bski and
                  Marta Piecyk and
                  Pawe\l{} Rz\k{a}\.zewski},
  title        = {Faster {3-Coloring} of Small-Diameter Graphs},
  journal      = {{SIAM} J. Discret. Math.},
  volume       = {36},
  number       = {3},
  pages        = {2205--2224},
  year         = {2022},
  url          = {https://doi.org/10.1137/21m1447714},
  doi          = {10.1137/21M1447714},
  bibsource    = {dblp computer science bibliography, https://dblp.org}
}

@article{Po74,
author={Svatopluk Poljak},
title={A note on stable sets and colorings of graphs},
journal={Commentationes Mathematicae Universitatis Carolinae},
volume={15},
issue={2},
pages={307--309},
year={1974}
}

@article{GAREY1976237,
title = "Some simplified {NP}-complete graph problems",
journal = "Theoretical Computer Science",
volume = "1",
number = "3",
pages = "237--267",
year = "1976",
issn = "0304-3975",
doi = "https://doi.org/10.1016/0304-3975(76)90059-1",
url = "http://www.sciencedirect.com/science/article/pii/0304397576900591",
author = "Michael R. Garey and David S. Johnson and Larry J. Stockmeyer",
}

@misc{Walczakordered,
        title = {Coloring ordered graphs with excluded induced ordered matchings},
        year = {2022},
        howpublished = {Banff International Research Station. 22w5009: Extremal Combinatorics and Geometry. \url{https://www.birs.ca/events/2022/5-day-workshops/22w5009/videos/watch/202208161436-Walczak.html}},
        author = {Marcin Briański and James Davies and Bartosz Walczak},
    }

@article{DBLP:journals/combinatorica/AxenovichRU18,
  author       = {Maria Axenovich and
                  Jonathan Rollin and
                  Torsten Ueckerdt},
  title        = {Chromatic Number of Ordered Graphs with Forbidden Ordered Subgraphs},
  journal      = {Comb.},
  volume       = {38},
  number       = {5},
  pages        = {1021--1043},
  year         = {2018},
  url          = {https://doi.org/10.1007/s00493-017-3593-0},
  doi          = {10.1007/S00493-017-3593-0},
  bibsource    = {dblp computer science bibliography, https://dblp.org}
}

@article{DBLP:journals/jctb/PachT21,
  author       = {J{\'{a}}nos Pach and
                  Istv{\'{a}}n Tomon},
  title        = {{Erd{\H{o}}s-Hajnal}-type results for monotone paths},
  journal      = {J. Comb. Theory {B}},
  volume       = {151},
  pages        = {21--37},
  year         = {2021},
  url          = {https://doi.org/10.1016/j.jctb.2021.05.004},
  doi          = {10.1016/J.JCTB.2021.05.004},
  bibsource    = {dblp computer science bibliography, https://dblp.org}
}

@inproceedings{DBLP:conf/stacs/PiecykR26,
  author       = {Marta Piecyk and
                  Pawe\l{} Rz\k{a}\.zewski},
  editor       = {Meena Mahajan and
                  Florin Manea and
                  Annabelle McIver and
                  Kim Thang Nguyen},
  title        = {List Coloring Ordered Graphs with Forbidden Induced Subgraphs},
  booktitle    = {43rd International Symposium on Theoretical Aspects of Computer Science,
                  {STACS} 2026, Grenoble, France, March 9-13, 2026},
  series       = {LIPIcs},
  pages        = {74:1--74:17},
  publisher    = {Schloss Dagstuhl - Leibniz-Zentrum f{\"{u}}r Informatik},
  year         = {2026},
  url          = {https://doi.org/10.4230/LIPIcs.STACS.2026.74},
  doi          = {10.4230/LIPICS.STACS.2026.74},
  bibsource    = {dblp computer science bibliography, https://dblp.org}
}

@article{DBLP:journals/siamcomp/HeathR92,
  author       = {Lenwood S. Heath and
                  Arnold L. Rosenberg},
  title        = {Laying out Graphs Using Queues},
  journal      = {{SIAM} J. Comput.},
  volume       = {21},
  number       = {5},
  pages        = {927--958},
  year         = {1992},
  url          = {https://doi.org/10.1137/0221055},
  doi          = {10.1137/0221055},
  bibsource    = {dblp computer science bibliography, https://dblp.org}
}

@article{ChartrandHarary1967,
  author    = {Gary Chartrand and Frank Harary},
  title     = {Planar permutation graphs},
  journal   = {Annales de l'institut Henri Poincar{\'e}},
  volume    = {6},
  number    = {4},
  pages     = {433--438},
  year      = {1967}
}

@article{BernhartKainen1979,
  author    = {Frank R. Bernhart and Paul C. Kainen},
  title     = {The book thickness of a graph},
  journal   = {Journal of Combinatorial Theory, Series B},
  volume    = {27},
  number    = {3},
  pages     = {320--331},
  year      = {1979},
  doi       = {10.1016/0095-8956(79)90021-2}
}

@article{DBLP:journals/algorithmica/BacsoLMPTL19,
  author       = {G{\'{a}}bor Bacs{\'{o}} and
                  Daniel Lokshtanov and
                  D{\'{a}}niel Marx and
                  Marcin Pilipczuk and
                  Zsolt Tuza and
                  Erik Jan van Leeuwen},
  title        = {Subexponential-Time Algorithms for Maximum Independent Set in {$P_t$}-Free and Broom-Free Graphs},
  journal      = {Algorithmica},
  volume       = {81},
  number       = {2},
  pages        = {421--438},
  year         = {2019},
  url          = {https://doi.org/10.1007/s00453-018-0479-5},
  doi          = {10.1007/S00453-018-0479-5},
  bibsource    = {dblp computer science bibliography, https://dblp.org}
}

@book{platypus,
  author       = {Marek Cygan and
                  Fedor V. Fomin and
                  Lukasz Kowalik and
                  Daniel Lokshtanov and
                  D{\'{a}}niel Marx and
                  Marcin Pilipczuk and
                  Michal Pilipczuk and
                  Saket Saurabh},
  title        = {Parameterized Algorithms},
  publisher    = {Springer},
  year         = {2015},
  url          = {https://doi.org/10.1007/978-3-319-21275-3},
  doi          = {10.1007/978-3-319-21275-3},
  isbn         = {978-3-319-21274-6},
  bibsource    = {dblp computer science bibliography, https://dblp.org}
}

@article{DBLP:journals/talg/GrzesikKPP22,
  author       = {Andrzej Grzesik and
                  Tereza Klimo\v{s}ov{\'{a}} and
                  Marcin Pilipczuk and
                  Micha\l{} Pilipczuk},
  title        = {Polynomial-time Algorithm for Maximum Weight Independent Set on {$P_6$}-free
                  Graphs},
  journal      = {{ACM} Trans. Algorithms},
  volume       = {18},
  number       = {1},
  pages        = {4:1--4:57},
  year         = {2022},
  url          = {https://doi.org/10.1145/3414473},
  doi          = {10.1145/3414473},
  bibsource    = {dblp computer science bibliography, https://dblp.org}
}

@article{DBLP:journals/jda/LozinM08,
  author       = {Vadim V. Lozin and
                  Martin Milani\v{c}},
  title        = {A polynomial algorithm to find an independent set of maximum weight
                  in a fork-free graph},
  journal      = {J. Discrete Algorithms},
  volume       = {6},
  number       = {4},
  pages        = {595--604},
  year         = {2008},
  url          = {https://doi.org/10.1016/j.jda.2008.04.001},
  doi          = {10.1016/J.JDA.2008.04.001},
  bibsource    = {dblp computer science bibliography, https://dblp.org}
}

@inproceedings{DBLP:conf/coco/Karp72,
  author       = {Richard M. Karp},
  editor       = {Raymond E. Miller and
                  James W. Thatcher},
  title        = {Reducibility Among Combinatorial Problems},
  booktitle    = {Proceedings of a symposium on the Complexity of Computer Computations,
                  held March 20-22, 1972, at the {IBM} Thomas J. Watson Research Center,
                  Yorktown Heights, New York, {USA}},
  series       = {The {IBM} Research Symposia Series},
  pages        = {85--103},
  publisher    = {Plenum Press, New York},
  year         = {1972},
  url          = {https://doi.org/10.1007/978-1-4684-2001-2\_9},
  doi          = {10.1007/978-1-4684-2001-2\_9},
  bibsource    = {dblp computer science bibliography, https://dblp.org}
}

@article{Hastad96cliqueis,
  author       = {Johan H{\aa}stad},
  title        = {Clique is hard to approximate within $n^{{(1-\epsilon)}}$},
  journal      = {Electron. Colloquium Comput. Complex.},
  volume       = {{TR97}},
  eid          = {{TR97-038}},
  year         = {1997},
  url          = {https://eccc.weizmann.ac.il/eccc-reports/1997/TR97-038/index.html},
  eprinttype   = {ECCC},
  eprint       = {TR97-038},
  bibsource    = {dblp computer science bibliography, https://dblp.org}
}

@article{DBLP:journals/siamcomp/ChalermsookCKLM20,
  author       = {Parinya Chalermsook and
                  Marek Cygan and
                  Guy Kortsarz and
                  Bundit Laekhanukit and
                  Pasin Manurangsi and
                  Danupon Nanongkai and
                  Luca Trevisan},
  title        = {From {Gap-Exponential Time Hypothesis} to Fixed Parameter Tractable
                  Inapproximability: Clique, Dominating Set, and More},
  journal      = {{SIAM} J. Comput.},
  volume       = {49},
  number       = {4},
  pages        = {772--810},
  year         = {2020},
  url          = {https://doi.org/10.1137/18M1166869},
  doi          = {10.1137/18M1166869},
  bibsource    = {dblp computer science bibliography, https://dblp.org}
}

@inproceedings{DBLP:conf/focs/LinRSW23,
  author       = {Bingkai Lin and
                  Xuandi Ren and
                  Yican Sun and
                  Xiuhan Wang},
  title        = {Improved Hardness of Approximating $k$-{Clique} under {ETH}},
  booktitle    = {64th {IEEE} Annual Symposium on Foundations of Computer Science, {FOCS}
                  2023, Santa Cruz, CA, USA, November 6-9, 2023},
  pages        = {285--306},
  publisher    = {{IEEE}},
  year         = {2023},
  url          = {https://doi.org/10.1109/FOCS57990.2023.00025},
  doi          = {10.1109/FOCS57990.2023.00025},
  bibsource    = {dblp computer science bibliography, https://dblp.org}
}

@article{alekseev1982effect,
  title={The effect of local constraints on the complexity of determination of the graph independence number},
  author={Alekseev, Vladimir E.},
  journal={Combinatorial-algebraic methods in applied mathematics},
  pages={3--13},
  year={1982}
}

@article{DBLP:journals/dam/BrandstadtM18a,
  author       = {Andreas Brandst{\"{a}}dt and
                  Raffaele Mosca},
  title        = {Maximum weight independent set for {$\ell$}claw-free graphs
                  in polynomial time},
  journal      = {Discret. Appl. Math.},
  volume       = {237},
  pages        = {57--64},
  year         = {2018},
  url          = {https://doi.org/10.1016/j.dam.2017.11.029},
  doi          = {10.1016/J.DAM.2017.11.029},
  bibsource    = {dblp computer science bibliography, https://dblp.org}
}

@article{Pach2006,
  title = {Forbidden paths and cycles in ordered graphs and matrices},
  volume = {155},
  ISSN = {1565-8511},
  url = {http://dx.doi.org/10.1007/BF02773960},
  DOI = {10.1007/bf02773960},
  number = {1},
  journal = {Israel Journal of Mathematics},
  publisher = {Springer Science and Business Media LLC},
  author = {Pach,  János and Tardos,  Gábor},
  year = {2006},
  month = dec,
  pages = {359–380}
}

@article{DBLP:journals/combinatorics/BalkoC0K20,
  author       = {Martin Balko and
                  Josef Cibulka and
                  Karel Kr{\'{a}}l and
                  Jan Kyncl},
  title        = {Ramsey Numbers of Ordered Graphs},
  journal      = {Electron. J. Comb.},
  volume       = {27},
  number       = {1},
  pages        = {1},
  year         = {2020},
  url          = {https://doi.org/10.37236/7816},
  doi          = {10.37236/7816},
  bibsource    = {dblp computer science bibliography, https://dblp.org}
}

@article{DBLP:journals/jct/ConlonFLS17,
  author       = {David Conlon and
                  Jacob Fox and
                  Choongbum Lee and
                  Benny Sudakov},
  title        = {Ordered {Ramsey} numbers},
  journal      = {J. Comb. Theory {B}},
  volume       = {122},
  pages        = {353--383},
  year         = {2017},
  url          = {https://doi.org/10.1016/j.jctb.2016.06.007},
  doi          = {10.1016/J.JCTB.2016.06.007},
  bibsource    = {dblp computer science bibliography, https://dblp.org}
}

@inproceedings{DBLP:conf/bcc/Tardos19,
  author       = {G{\'{a}}bor Tardos},
  editor       = {Allan Lo and
                  Richard Mycroft and
                  Guillem Perarnau and
                  Andrew Treglown},
  title        = {Extremal theory of vertex or edge ordered graphs},
  booktitle    = {Surveys in Combinatorics, 2019: Invited lectures from the 27th British
                  Combinatorial Conference, Birmingham, UK, July 29 - August 2, 2019},
  pages        = {221--236},
  publisher    = {Cambridge University Press},
  year         = {2019},
  url          = {https://doi.org/10.1017/9781108649094.008},
  doi          = {10.1017/9781108649094.008},
  bibsource    = {dblp computer science bibliography, https://dblp.org}
}

@article{DBLP:journals/jct/KorandiTTW19,
  author       = {D{\'{a}}niel Kor{\'{a}}ndi and
                  G{\'{a}}bor Tardos and
                  Istv{\'{a}}n Tomon and
                  Craig Weidert},
  title        = {On the {Tur{\'{a}}n} number of ordered forests},
  journal      = {J. Comb. Theory {A}},
  volume       = {165},
  pages        = {32--43},
  year         = {2019},
  url          = {https://doi.org/10.1016/j.jcta.2019.01.006},
  doi          = {10.1016/J.JCTA.2019.01.006},
  bibsource    = {dblp computer science bibliography, https://dblp.org}
}

@article{DBLP:journals/combinatorics/DudekGR24,
  author       = {Andrzej Dudek and
                  Jaros\l{}aw Grytczuk and
                  Andrzej Ruci\'nski},
  title        = {Ordered Unavoidable Sub-Structures in Matchings and Random Matchings},
  journal      = {Electron. J. Comb.},
  volume       = {31},
  number       = {2},
  year         = {2024},
  url          = {https://doi.org/10.37236/11932},
  doi          = {10.37236/11932},
  bibsource    = {dblp computer science bibliography, https://dblp.org}
}

@article{DBLP:journals/siamdm/BoskovicK23,
  author       = {Vladimir Bo\v{s}kovi\'c and
                  Bal{\'{a}}zs Keszegh},
  title        = {Saturation of Ordered Graphs},
  journal      = {{SIAM} J. Discret. Math.},
  volume       = {37},
  number       = {2},
  pages        = {1118--1141},
  year         = {2023},
  url          = {https://doi.org/10.1137/22m1485735},
  doi          = {10.1137/22M1485735},
  bibsource    = {dblp computer science bibliography, https://dblp.org}
}

@article{DBLP:journals/siamdm/ScottSS22,
  author       = {Alex Scott and
                  Paul D. Seymour and
                  Sophie Spirkl},
  title        = {Pure Pairs {VI: Excluding} an Ordered Tree},
  journal      = {{SIAM} J. Discret. Math.},
  volume       = {36},
  number       = {1},
  pages        = {170--187},
  year         = {2022},
  url          = {https://doi.org/10.1137/20m1368331},
  doi          = {10.1137/20M1368331},
  bibsource    = {dblp computer science bibliography, https://dblp.org}
}

@article{Tara,
  author       = {Tara Abrishami and
                  Maria Chudnovsky and
                  Marcin Pilipczuk and
                  Paweł Rzążewski and
                  Paul D. Seymour},
  title        = {Induced Subgraphs of Bounded Treewidth and the Container Method},
  journal      = {{SIAM} J. Comput.},
  volume       = {53},
  number       = {3},
  pages        = {624--647},
  year         = {2024},
  url          = {https://doi.org/10.1137/20m1383732},
  doi          = {10.1137/20M1383732},
  bibsource    = {dblp computer science bibliography, https://dblp.org}
}

@inproceedings{DBLP:conf/stoc/GartlandLPPR21,
  author       = {Peter Gartland and
                  Daniel Lokshtanov and
                  Marcin Pilipczuk and
                  Michał Pilipczuk and
                  Paweł Rzążewski},
  editor       = {Samir Khuller and
                  Virginia Vassilevska Williams},
  title        = {Finding large induced sparse subgraphs in {$C_{>t}$}-free graphs in quasipolynomial time},
  booktitle    = {{STOC} '21: 53rd Annual {ACM} {SIGACT} Symposium on Theory of Computing,
                  Virtual Event, Italy, June 21-25, 2021},
  pages        = {330--341},
  publisher    = {{ACM}},
  year         = {2021},
  url          = {https://doi.org/10.1145/3406325.3451034},
  doi          = {10.1145/3406325.3451034},
  bibsource    = {dblp computer science bibliography, https://dblp.org}
}

@inproceedings{DBLP:conf/soda/ChudnovskyMPPR24,
  author       = {Maria Chudnovsky and
                  Rose McCarty and
                  Marcin Pilipczuk and
                  Michał Pilipczuk and
                  Paweł Rzążewski},
  editor       = {David P. Woodruff},
  title        = {Sparse induced subgraphs in {$P_6$}-free graphs},
  booktitle    = {Proceedings of the 2024 {ACM-SIAM} Symposium on Discrete Algorithms,
                  {SODA} 2024, Alexandria, VA, USA, January 7-10, 2024},
  pages        = {5291--5299},
  publisher    = {{SIAM}},
  year         = {2024},
  url          = {https://doi.org/10.1137/1.9781611977912.190},
  doi          = {10.1137/1.9781611977912.190},
  bibsource    = {dblp computer science bibliography, https://dblp.org}
}

@article{DBLP:journals/siamdm/PaesaniPR22,
  author       = {Giacomo Paesani and
                  Dani{\"{e}}l Paulusma and
                  Pawe\l{} Rz\k{a}\.zewski},
  title        = {{Feedback Vertex Set} and {Even Cycle Transversal} for {$H$}-Free
                  Graphs: Finding Large Block Graphs},
  journal      = {{SIAM} J. Discret. Math.},
  volume       = {36},
  number       = {4},
  pages        = {2453--2472},
  year         = {2022},
  url          = {https://doi.org/10.1137/22m1468864},
  doi          = {10.1137/22m1468864},
  bibsource    = {dblp computer science bibliography, https://dblp.org}
}

@article{DBLP:journals/algorithmica/DabrowskiFJPPR20,
  author       = {Konrad K. Dabrowski and
                  Carl Feghali and
                  Matthew Johnson and
                  Giacomo Paesani and
                  Dani{\"{e}}l Paulusma and
                  Pawe\l{} Rz\k{a}\.zewski},
  title        = {On Cycle Transversals and Their Connected Variants in the Absence
                  of a Small Linear Forest},
  journal      = {Algorithmica},
  volume       = {82},
  number       = {10},
  pages        = {2841--2866},
  year         = {2020},
  url          = {https://doi.org/10.1007/s00453-020-00706-6},
  doi          = {10.1007/S00453-020-00706-6},
  bibsource    = {dblp computer science bibliography, https://dblp.org}
}

@article{DBLP:journals/talg/0001LLR0S25,
  author       = {Akanksha Agrawal and
                  Paloma T. Lima and
                  Daniel Lokshtanov and
                  Pawe\l{} Rz\k{a}\.zewski and
                  Saket Saurabh and
                  Roohani Sharma},
  title        = {{Odd Cycle Transversal} on {$P_5$}-free Graphs in
                  Polynomial Time},
  journal      = {{ACM} Trans. Algorithms},
  volume       = {21},
  number       = {2},
  pages        = {16:1--16:14},
  year         = {2025},
  url          = {https://doi.org/10.1145/3708544},
  doi          = {10.1145/3708544},
  bibsource    = {dblp computer science bibliography, https://dblp.org}
}

@article{DBLP:journals/siamdm/ChudnovskyKPRS21,
  author       = {Maria Chudnovsky and
                  Jason King and
                  Michal Pilipczuk and
                  Pawe\l{} Rz\k{a}\.zewski and
                  Sophie Spirkl},
  title        = {Finding Large {$H$}-Colorable Subgraphs in Hereditary Graph Classes},
  journal      = {{SIAM} J. Discret. Math.},
  volume       = {35},
  number       = {4},
  pages        = {2357--2386},
  year         = {2021},
  url          = {https://doi.org/10.1137/20M1367660},
  doi          = {10.1137/20M1367660},
  bibsource    = {dblp computer science bibliography, https://dblp.org}
}

@inproceedings{DBLP:conf/esa/GalbyLMN25,
  author       = {Esther Galby and
                  Paloma T. Lima and
                  Andrea Munaro and
                  Amir Nikabadi},
  editor       = {Anne Benoit and
                  Haim Kaplan and
                  Sebastian Wild and
                  Grzegorz Herman},
  title        = {{Maximum List $r$-Colorable Induced Subgraphs} in {$kP_3$}-Free
                  Graphs},
  booktitle    = {33rd Annual European Symposium on Algorithms, {ESA} 2025, Warsaw,
                  Poland, September 15-17, 2025},
  series       = {LIPIcs},
  pages        = {40:1--40:13},
  publisher    = {Schloss Dagstuhl - Leibniz-Zentrum f{\"{u}}r Informatik},
  year         = {2025},
  url          = {https://doi.org/10.4230/LIPIcs.ESA.2025.40},
  doi          = {10.4230/LIPICS.ESA.2025.40},
  bibsource    = {dblp computer science bibliography, https://dblp.org}
}

@article{HeathLeightonRosenberg1992,
  author    = {Lenwood S. Heath and F. Thomson Leighton and Arnold L. Rosenberg},
  title     = {Comparing queues and stacks as mechanisms for laying out graphs},
  journal   = {SIAM Journal on Discrete Mathematics},
  volume    = {5},
  number    = {3},
  pages     = {398--412},
  year      = {1992},
  doi       = {10.1137/0405031}
}

@inproceedings{DBLP:conf/focs/GartlandL20,
  author       = {Peter Gartland and
                  Daniel Lokshtanov},
  editor       = {Sandy Irani},
  title        = {Independent Set on {$P_k$}-Free
                  Graphs in Quasi-Polynomial Time},
  booktitle    = {61st {IEEE} Annual Symposium on Foundations of Computer Science, {FOCS}
                  2020, Durham, NC, USA, November 16-19, 2020},
  pages        = {613--624},
  publisher    = {{IEEE}},
  year         = {2020},
  url          = {https://doi.org/10.1109/FOCS46700.2020.00063},
  doi          = {10.1109/FOCS46700.2020.00063},
  bibsource    = {dblp computer science bibliography, https://dblp.org}
}

@inproceedings{DBLP:conf/stoc/GartlandLMPPR24,
  author       = {Peter Gartland and
                  Daniel Lokshtanov and
                  Tom{\'{a}}s Masa\v{r}{\'{\i}}k and
                  Marcin Pilipczuk and
                  Michał Pilipczuk and
                  Paweł Rzążewski},
  editor       = {Bojan Mohar and
                  Igor Shinkar and
                  Ryan O'Donnell},
  title        = {Maximum Weight Independent Set in Graphs with no Long Claws in Quasi-Polynomial
                  Time},
  booktitle    = {Proceedings of the 56th Annual {ACM} Symposium on Theory of Computing,
                  {STOC} 2024, Vancouver, BC, Canada, June 24-28, 2024},
  pages        = {683--691},
  publisher    = {{ACM}},
  year         = {2024},
  url          = {https://doi.org/10.1145/3618260.3649791},
  doi          = {10.1145/3618260.3649791},
  bibsource    = {dblp computer science bibliography, https://dblp.org}
}
